%% file: main.tex
\RequirePackage{etex}
\documentclass[12pt]{article}
\usepackage[margin=.95in]{geometry} 
\usepackage[T1]{fontenc}    % use 8-bit T1 fonts
\usepackage{lmodern}
\usepackage{nicefrac}       % compact symbols for 1/2, etc.

\usepackage{amssymb}
\usepackage[LGR]{fontenc}
\usepackage[greek,english]{babel}

\usepackage{tcolorbox}

\usepackage{fullpage}
\usepackage{authblk}
\usepackage{complexity}
\usepackage{nicefrac}
\usepackage{gensymb}

\usepackage{enumitem}
\usepackage{microtype}
\usepackage{graphicx}
\usepackage{subfigure}
\usepackage{booktabs} % for professional tables
\usepackage{pifont}

\usepackage{tikz}
\usetikzlibrary{positioning}

\usepackage{bbm}

\usepackage[numbers]{natbib}

\renewcommand{\PPAD}{\textsf{PPAD}\xspace}

\newcommand{\BR}{\operatorname{BR}}

\newcommand{\calB}{\mathcal B}
\newcommand{\X}{\mathcal X}
\newcommand{\Y}{\mathcal Y}

\newcommand{\Z}{\mathbb Z}
\newcommand{\eps}{\varepsilon}

\newcommand{\one}{\mathbf{1}}
\newcommand{\rank}{\operatorname{rank}}

\usepackage{url}
\usepackage{amsmath}
\usepackage{amssymb}
\usepackage{amsthm}
\usepackage{bbm}
\usepackage{algorithm}
\usepackage[noend]{algorithmic}
\usepackage[algo2e,vlined,ruled,linesnumbered]{algorithm2e}
\usepackage{footmisc}
\usepackage[table]{xcolor}
\makeatletter
\newcommand\footnoteref[1]{\protected@xdef\@thefnmark{\ref{#1}}\@footnotemark}
\makeatother
\usepackage{arydshln}

\SetAlFnt{\small}
\SetAlCapFnt{\small}
\SetAlCapNameFnt{\small}
\SetAlCapHSkip{0pt}
\IncMargin{-\parindent}

\usepackage[hidelinks, colorlinks = true, linkcolor=red, citecolor=red]{hyperref}
\usepackage{cleveref}
\newcommand{\declarecolor}[2]{\definecolor{#1}{RGB}{#2}\expandafter\newcommand\csname #1\endcsname[1]{\textcolor{#1}{##1}}}
\declarecolor{White}{255, 255, 255}
\declarecolor{Black}{0, 0, 0}
\declarecolor{Maroon}{128, 0, 0}
\declarecolor{Coral}{255, 127, 80}
\declarecolor{Red}{182, 21, 21}
\declarecolor{LimeGreen}{50, 205, 50}
\declarecolor{DarkGreen}{0, 80, 0}
\declarecolor{Purple}{146, 42, 158}
\declarecolor{Navy}{0, 0, 128}
\declarecolor{LightBlue}{84, 101, 202}
\definecolor{mydarkblue}{rgb}{0,0.08,0.45}
\usepackage{tcolorbox}

\usepackage{xcolor}	% for color

\definecolor{CardinalRed}{HTML}{C41E3A}
\definecolor{Dartmouth}{HTML}{00693E}
\definecolor{SapphireBlue}{HTML}{0F52BA}

\colorlet{MyRed}{CardinalRed}
\colorlet{MyGreen}{Dartmouth}

\colorlet{MyLightRed}{MyRed!25}
\colorlet{MyLightGreen}{MyGreen!25}

\colorlet{AlertColor}{MyRed}	% for alerts
\colorlet{BadColor}{MyRed}	% for good
\colorlet{FocusColor}{MyRed}	% for alerts
\colorlet{GoodColor}{MyGreen}	% for good
\colorlet{MacroColor}{MyRed}	% for macros

\renewcommand{\R}{\mathbb{R}}	% for set of reals

\hypersetup{ %
    pdftitle={},
    pdfkeywords={},
    pdfborder=0 0 0,
    pdfpagemode=UseNone,
    colorlinks=true,
    linkcolor=MyRed,
    citecolor=MyGreen,
    filecolor=Purple,
    urlcolor=blue,
}

\usepackage[%
linewidth=2pt,
linecolor=gray,
middlelinecolor= black,
middlelinewidth=0.4pt,
roundcorner=1pt,
topline = false,
rightline = false,
bottomline = false,
rightmargin=0pt,
skipabove=0pt,
skipbelow=0pt,
leftmargin=0pt,
innerleftmargin=4pt,
innerrightmargin=0pt,
innertopmargin=0pt,
innerbottommargin=0pt,
]{mdframed}

\usepackage{tcolorbox}
\definecolor{TheoremBlueFrame}{RGB}{25,92,170}
\definecolor{LemmaGreenFrame}{RGB}{46,125,50}
\definecolor{TheoremBlueBack}{RGB}{245,249,255}
\definecolor{LemmaGreenBack}{RGB}{247,252,247}

\tcbuselibrary{breakable}

\newenvironment{definitionbox}
  {\begin{tcolorbox}[
    breakable,
    colframe=MyRed,
    colback=red!5,
    top=2pt,
    left=2pt,
    right=2pt,
    bottom=2pt
  ]}
  {\end{tcolorbox}}
\newenvironment{statementbox}
  {\begin{tcolorbox}[
    colframe=TheoremBlueFrame,
    colback=TheoremBlueBack,
    top=2pt,
    left=2pt,
    right=2pt,
    bottom=2pt
  ]}
  {\end{tcolorbox}}

\let\oldnl\nl% Store \nl in \oldnl
\newcommand{\nonl}{\renewcommand{\nl}{\let\nl\oldnl}}% Remove line number for one line

\newcounter{protocol}


\usepackage{array}
\usepackage[utf8]{inputenc} % allow utf-8 input
\usepackage[T1]{fontenc}    % use 8-bit T1 fonts
\usepackage{booktabs}       % professional-quality tables
\usepackage{amsfonts}       % blackboard math symbols
\usepackage{nicefrac}       % compact symbols for 1/2, etc.
\usepackage{microtype}      % microtypography
\usepackage{amsmath}
\usepackage{amssymb}
\usepackage{mathtools}
\usepackage{amsthm}
\usepackage{enumitem}

\usepackage{multirow}

\theoremstyle{plain}
\newtheorem*{theorem*}{Theorem}
\newtheorem{theorem}{Theorem}[section]
\newtheorem{proposition}[theorem]{Proposition}
\newtheorem{lemma}[theorem]{Lemma}

\newtheorem*{lemma*}{Lemma}
\newtheorem{corollary}[theorem]{Corollary}
\newtheorem*{corollary*}{Corollary}
\theoremstyle{definition}
\newtheorem{definition}[theorem]{Definition}
\newtheorem*{definition*}{Definition}

\newtheorem{remark}[theorem]{Remark}
\theoremstyle{observation}

\usepackage[textsize=tiny]{todonotes}

\usepackage{dsfont}

\allowdisplaybreaks

\title{\fontsize{18}{22}\selectfont The Complexity of Computing Nash Equilibria \\ in Colonel Blotto Games}

\author[1,2]{Vasilis Pollatos\textsuperscript{*}}
\author[2,3]{Andreas Kontogiannis\textsuperscript{*}}
\affil[1]{National and Kapodistrian University of Athens}
\affil[2]{Archimedes, Athena Research Center, Greece}
\affil[3]{National Technical University of Athens}

\begin{document}

\maketitle
\renewcommand{\thefootnote}{\fnsymbol{footnote}}
\footnotetext[1]{Equal contribution. Corresponding authors: vaspoll@math.uoa.gr, andreaskontogiannis@mail.ntua.gr.}

\setcounter{footnote}{0}
\renewcommand{\thefootnote}{\arabic{footnote}}

\input{abstract}

\newpage

\tableofcontents

\newpage

\input{intro}

\input{prelims}

\input{ppad-membership}

\input{hardness-2-battles}

\input{hardness-single-threshold}

\input{fixed_rank}

\input{wta_frontier_start_and_positive}
\input{wta_equal_splitting_hardness_only}

\input{conclusion}

\subsection*{Acknowledgements}
The authors acknowledge the financial support of the Archimedes Research Unit (Athena RC) during their PhD. 
The authors would also like to thank Ioannis Panageas, Gabriele Farina, and Panayotis Mertikopoulos for helpful conversations.
Initial versions of the proofs were developed with the assistance of ChatGPT 5.6 Sol and Astra. The authors subsequently streamlined and revised the proofs substantially.

\clearpage

\input{appendix}

\bibliographystyle{alpha}  
\bibliography{main}

\end{document}

%% file: abstract.tex
\begin{abstract}

We study the complexity of Nash equilibrium computation in discrete Colonel Blotto games. For two-player general-sum Colonel Blotto with monotonic piecewise-constant battlefield payoffs, we show that computing an inverse-polynomial approximate Nash equilibrium is \PPAD-hard, even with only two battlefields and a constant number of pieces per battlefield, thereby resolving an open question of \cite{kontogiannis2025}. Allowing more battlefields, the hardness persists when every battlefield payoff is $2$-piecewise constant with respect to either player's allocation. A general \PPAD-membership theorem for multiplayer Blotto with arbitrary local reward functions then implies \PPAD-completeness for both hardness results.

Motivated by fixed-rank bimatrix games, we then identify a tractable frontier within two-player general-sum Colonel Blotto. If the social payoff has rank-$r$ structure, then, for every fixed $r$, an $\varepsilon$-Nash equilibrium can be computed in time polynomial in $B_1,B_2,k$, and $1/\varepsilon$. Thus constant-rank two-player Colonel Blotto remains tractable despite its exponentially large pure strategy spaces.

Finally, we study multiplayer winner-takes-all Colonel Blotto with player-specific battlefield values and uncover a sharp tie-breaking frontier. If every highest bidder receives her full battlefield value, an exact pure Nash equilibrium can be computed in polynomial time for arbitrary binary-encoded budgets. In contrast, under ordinary equal splitting among highest bidders, we prove that computing an inverse-polynomial-accuracy Nash equilibrium is \PPAD-hard even with only five players. This resolves an open question posed in concurrent work by Bichler and Ghosh~\cite{bichlerghosh2026}, who established \PPAD-hardness in the same player-specific equal-splitting setting when the number of players grows with the instance and asked whether hardness persists for a constant number of players.

\end{abstract}

%% file: intro.tex
\section{Introduction}

Colonel Blotto is a classic two-player game introduced by Borel in 1921 (see the English translation \cite{borel1953theory}), and is widely used to model competitive resource allocation \cite{roberson2006colonel, hart2008discrete, kovenock2021generalizations, kroer}. 
In discrete Colonel Blotto games, each player distributes a budget of soldiers among battlefields, and the total payoff is obtained by aggregating the payoffs across battlefields.  
The classical model awards a positive battlefield payoff only to the player who allocates more resources on that battlefield and the total payoff (utility) of each player is the sum of her awarded battlefield payoffs.
Over the years, Colonel Blotto games have been used to model a variety of competitive resource-allocation settings, including electoral competition and security allocation \cite{myerson1993incentives,laslier2002distributive,schwartz2014heterogeneous}. The model has also been studied experimentally as a benchmark for strategic allocation behavior \cite{arad2012multi}.

Computing equilibria in Colonel Blotto games has received significant attention.
Already in the discrete setting, a player with budget $B$ and $k$ battlefields has $\binom{B+k-1}{k-1}$ pure allocation strategies, so the induced bimatrix game may be exponentially large.
Despite this combinatorial explosion, efficient Nash-equilibrium algorithms are known in the classical two-player zero-/constant-sum discrete model \cite{duels, behnezhad2017faster, vu2018simple}. The continuous Blotto literature likewise contains explicit equilibrium characterizations and computational methods for several classical, heterogeneous, and asymmetric variants \cite{roberson2006colonel,kovenock2021generalizations,perchet2022algorithmic}. Behnezhad et al.~\cite{behnezhad2019optimal} also prove complexity results for a substantially more general Blotto-like maximin problem; this is a different objective and model from the ordinary Nash-computation problem considered here. Recent positive results show that other structured extensions can remain tractable: Ka{\'z}mierowski~\cite{kazmierowski2025costs} reduces Colonel Blotto with resource and allocation costs to a zero-sum Colonel Blotto instance with one additional battlefield, while Afiouni et al.~\cite{kroer} study equilibrium computation when individual battlefields themselves host allocation-dependent zero-sum games. By contrast, general-sum bimatrix games are computationally hard \cite{daskalakis2009complexity, chen2009settling}, and the computational status of Nash equilibrium in natural \emph{discrete} general-sum Blotto classes has remained much less understood.

For weaker equilibrium notions, Beaglehole et al.~\cite{beaglehole2023sampling} give efficient no-regret learning for structured Blotto games, yielding approximate coarse correlated equilibria in general-sum and multiplayer settings, while Kontogiannis et al.~\cite{kontogiannis2025} develop efficient CCE learning in a broader class of polyhedral games that includes Colonel Blotto. Related online-learning formulations for Colonel Blotto are studied by \cite{vu2020path, kontogiannis2026efficient}. These results do not, however, resolve the complexity of Nash equilibrium.

In this paper, we first study a two-player competitively monotone specialization of the discrete general-sum framework of \cite{beaglehole2023sampling}: each battlefield payoff is nondecreasing in the player's own allocation, nonincreasing in the opponent's allocation, and piecewise constant in the player's own allocation once the opponent's allocation is fixed. Thus, we impose the additional competitive monotonicity in the opponent's allocation rather than attributing it to the full model of \cite{beaglehole2023sampling}.
This setting is of particular interest, as it can naturally model the classical ``\textit{winner-takes-all}'' competition \cite{kovenock2010conflicts}, by allowing every battlefield payoff to be $2$-piecewise constant, as well as competition scenarios where the magnitude of the utility depends not only on whether a player wins or loses a battlefield, but also on the extent of her allocation (dis)advantage, through a finite collection of \textit{allocation thresholds} and \textit{payoff levels}.
Crucially, Kontogiannis et al. \cite{kontogiannis2025} highlighted the following open question:

\begin{quote}
    \textbf{Question 1: }\textit{Can we efficiently compute $\eps$-Nash equilibria in discrete Colonel Blotto games with monotonic, piecewise constant battlefield payoffs?}
\end{quote}

\paragraph{Hardness results.} We answer the above question in the negative. We show that computing an inverse-polynomial $\varepsilon$-Nash equilibrium in discrete general-sum Colonel Blotto with monotonic, piecewise constant battlefield payoffs is \PPAD-hard. Notably, our hardness continues to hold even for two battlefields and $4$ pieces per battlefield payoff. 
We prove the following theorem:

\begin{statementbox}
\begin{theorem}[Abridged version of Theorem \ref{thm:two-main}]
It is \PPAD-hard to compute an inverse-polynomial $\eps$-Nash equilibrium in Colonel Blotto with exactly two battlefields and monotonic, $4$-piecewise constant battlefield payoffs.
\end{theorem}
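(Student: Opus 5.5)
We reduce from the problem of computing an inverse-polynomial $\eps$-Nash equilibrium of a general bimatrix game, which is \PPAD-hard by \cite{daskalakis2009complexity, chen2009settling}. We may assume the bimatrix game is $n\times n$ with payoffs $A,B\in[0,1]^{n\times n}$; we may also normalize by adding constants so that both matrices are positive and suitably bounded. The construction uses two battlefields with budgets $B_1=B_2=n-1$ (or a polynomial multiple thereof). A pure allocation of player $i$ is then $(x,n-1-x)$, with $x\in\{0,\dots,n-1\}$ playing the role of a pure strategy of the bimatrix game. The difficulty is that a battlefield payoff $u_i^{(j)}(x_1^j,x_2^j)$ is only a function of the two allocations to that battlefield, and is monotone and piecewise constant with $4$ pieces. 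An arbitrary $n\times n$ matrix cannot be encoded directly, so we do not aim for arbitrary matrices. Instead we target a \PPAD-hard subclass whose structure fits.

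The plan is as follows. First, observe that with two battlefields and full spending, each player's total payoff is $f_i(x_1,x_2)=u_i^{(1)}(x_1,x_2)+u_i^{(2)}(n-1-x_1,n-1-x_2)$. The first battlefield is monotone in the orientation (own up, opponent down). The second battlefield reverses both orientations in terms of $x$. So $f_i$ is a sum of a function that increases in own index and decreases in the opponent's index, and one with the opposite monotonicity. Each summand is a ``threshold'' function of the pair: piecewise constant in own allocation with breakpoints that may depend on the opponent's allocation. With $4$ pieces, one can realize payoffs like $\mathbbm{1}[x_i\ge x_{-i}+a]$ plus a few such indicators at offsets. Second, we identify a known \PPAD-hard family expressible through such difference-based payoffs. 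My first choice is a reduction from a generalized-circuit / $\eps$-GCircuit instance in the style of Rubinstein and Daskalakis et al. Here each player's mixed strategy over $\{0,\dots,n-1\}$ encodes gate values, and payoffs depending on $x_i - x_{-i}$ relative to thresholds implement comparisons. Concretely, I would mimic the imitation/``matching pennies with offsets'' gadgets. Player 2 is rewarded for landing close to player 1's position, plus shifted targets. Player 1 is rewarded for specific relative positions encoding the circuit. Combining the two battlefields gives a two-sided window indicator $\mathbbm{1}[x_{-i}+a\le x_i\le x_{-i}+b]$ that is non-monotone overall but is a sum of two monotone $2$- or $4$-piece pieces, one per battlefield.

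Third, we verify the equilibrium correspondence. Given an $\eps$-NE of the Blotto game, we read off distributions over $x$ and show they yield an $\eps'$-approximate solution of the source instance, with $\eps'$ polynomially related to $\eps$. Monotonicity must hold globally, including at boundary allocations. To ensure this, we pad with dummy soldiers and offsets, and add large constant baselines so that all values are valid.

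The main obstacle is the second step. The payoff on each battlefield depends on the pair only through a monotone $4$-piece structure, effectively through a few offsets of $x_1-x_2$. The source \PPAD-hard game must therefore be expressible using window functions of the difference of strategy indices. If generalized-circuit encoding proves too rich, my fallback is a reduction from \PPAD-hard ``polymatrix-like'' or ``win-lose imitation'' games. Specifically, I would use Lemke–Howson-style imitation games where player 2 simply copies player 1 and player 1's payoff is a banded (Toeplitz-like) matrix. I would then argue hardness for banded/threshold bimatrix games first, possibly by embedding strategy indices sparsely, spaced by a polynomial gap, so that different offsets select different matrix entries.
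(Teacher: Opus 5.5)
There is a genuine gap. Your second step, which you yourself flag as the main obstacle, is the entire content of the reduction, and you leave it open. You never fix a \PPAD-hard source family together with an embedding of it. The generalized-circuit and ``banded imitation game'' routes would each need a new hardness theorem for bimatrix games whose payoffs depend on $x_1-x_2$ through a few thresholds, and neither you nor the literature you cite supplies one. Spacing strategy indices sparsely would also create unrepresented allocations that you would then have to control.

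You are pushed into this corner by misreading the constraint. Definition~\ref{def:q-piecewise-constant} only counts changes along each own-allocation slice for a \emph{fixed} opponent allocation. The breakpoints may move arbitrarily with the opponent's allocation, so there is no need to restrict to functions of $x_1-x_2$. Monotonicity in the opponent's coordinate can be obtained for free: add a steep linear term $-3j$ on battlefield~1 and $+3j$ on battlefield~2, and these cancel in the total payoff.

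What actually constrains the source is the number of changes along each column of $A$ (and each row of $B$). Each own-slice of the total is a nondecreasing sequence plus a nonincreasing one, each with at most four pieces. Sparse win--lose games provide exactly this structure, and they are \PPAD-hard \cite{Batziou2026}. The paper decomposes each column $a$ of a $3$-sparse win--lose $A$ as $a=p+q$:
\begin{itemize}
\item $p_i$ counts the runs of ones begun by index $i$, so $p$ is nondecreasing, takes values in $[0,3]$, and has at most three jumps;
\item $q=a-p$ is nonincreasing, takes values in $[-3,0]$, and changes only at $1\to0$ transitions.
\end{itemize}
It then sets $F^A_{ij}=P^A_{ij}-3j+3(n-1)$ and $G^A_{ij}=Q^A_{ij}+3j+3$, with $u^1_1(r,s)=F^A_{rs}$ and $u^1_2(r,s)=G^A_{B_1-r,B_2-s}$, and does the same rowwise for $B$. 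This gives nonnegative, competitively monotone, $4$-piecewise constant payoffs with $U_1(x^i,y^j)=A_{ij}+3n$ and $U_2(x^i,y^j)=B_{ij}+3m$ exactly. With $B_1=m-1$ and $B_2=n-1$, every allocation is a source action, so unilateral gains transfer exactly and the only loss is the normalization factor $3n+1$ or $3m+1$.

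Your skeleton matches this: two battlefields, budgets $n-1$, and the total payoff as a sum of two oppositely oriented monotone parts. The missing ingredients are the sparse source, the run-counting decomposition, and the cancelling linear shifts that restore monotonicity in the opponent's allocation.
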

\end{statementbox}

The general membership theorem of Section~\ref{sec:ppad-membership} places this approximation problem in \PPAD, so the result is \PPAD-complete. This should be contrasted with the canonical two-battlefield WTA model, for which Macdonell and Mastronardi~\cite{macdonell2015waging} give a complete equilibrium characterization: our hardness with only two battlefields is driven by the richer, though still monotone and constant-complexity, local payoff functions.

By allowing more battlefields, we strengthen the above result showing \PPAD-hardness for $2$-piecewise constant battlefield payoffs for both allocations, in the sense that a battlefield payoff is $2$-piecewise constant \textit{both} with respect to the player's own allocation and with respect to the other player's allocation. 
We prove the following theorem:

\begin{statementbox}
\begin{theorem}[Abridged version of Theorem \ref{thm:single-threshold-main}]
It is \PPAD-hard to compute an inverse-polynomial $\eps$-Nash equilibrium in Colonel Blotto with monotonic battlefield payoffs that are $2$-piecewise constant with respect to both players' allocations.
\end{theorem}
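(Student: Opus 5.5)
We prove the statement by a polynomial-time reduction from a \PPAD-hard approximate-Nash problem with a bounded, structured payoff pattern: $\eps$-Nash in a two-player $n\times n$ bimatrix game $(A,B)$ with entries in $[0,1]$ and $\eps=1/\mathrm{poly}(n)$, which is \PPAD-hard by \cite{chen2009settling}. It is convenient to start from a sparse or win--lose variant (each payoff a $\{0,1\}$ indicator, or each row and column having constantly many nonzero entries), since a battlefield that is $2$-piecewise constant in both allocations can only express a single indicator $\one[x\ge a]\cdot\one[y< b]$-type interaction. The reduction must therefore be built from many \emph{threshold gadgets}, one per relevant pair of strategies. Each gadget contributes a payoff that switches once in a player's own allocation and once in the opponent's allocation.

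\textbf{Construction.} Each player $p\in\{1,2\}$ receives a budget and an encoding of pure strategies $i\in[n]$ as allocation vectors. Player $p$ puts a large amount $M$ on a ``selector'' battlefield $S^p_i$ and nothing on the others $S^p_{i'}$. Each selector battlefield pays a rewarding bonus $K$ when it receives at least $M$. Choose $K$ and the budget so that every near-best response concentrates almost all probability on such canonical allocations. Using $\eps$-Nash with $\eps$ small relative to $K$, allocations that split the budget or waste soldiers lose at least a constant fraction of $K$, and hence carry only small mass. For each pair $(i,j)$ with $A_{ij}\ne 0$, add a battlefield $F_{ij}$ whose payoff to player 1 is $A_{ij}\cdot\one[x\ge M]\cdot\one[y< M]$-style or $\one[y\ge M]$-style. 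These must be monotone: nondecreasing in one's own allocation and nonincreasing in the opponent's. Payoffs that \emph{increase} when the opponent plays $j$ are therefore not directly expressible. We handle this by writing $A_{ij}=c-(c-A_{ij})$ and encoding the penalty $(c-A_{ij})\one[y_j\ge M]$, which is nonincreasing in $y$. A row-dependent shift can be absorbed by an additional own-allocation term on $S^1_i$, and a column-dependent constant does not affect player 1's incentives. To share selector allocations across gadgets, player 1's choice of $i$ is realized by placing soldiers on the battlefields $F_{ij}$ for all $j$ simultaneously. The budget and allocation unit are calibrated so that the ``row $i$'' allocation is exactly $M$ on every $F_{ij}$, with $j$ ranging over that row's support. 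Bounded row support keeps the budgets polynomial.

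\textbf{Correctness.} Let $(\sigma_1,\sigma_2)$ be an $\eps'$-Nash equilibrium of the Blotto game. Map it back to $(\hat x,\hat y)$ by reading off the canonical-strategy marginals, renormalizing, and discarding the non-canonical mass, which is $O(\eps'/K)$. For canonical profiles the Blotto payoff equals $x^\top A y$ plus terms that do not depend on the player's own choice. The regret in the bimatrix game is therefore bounded by $\eps'$ plus $O(\eps'/K)$ times the payoff range, which gives an $\eps$-Nash equilibrium for $\eps=\mathrm{poly}(\eps')$. We will then check that every battlefield payoff is monotone and $2$-piecewise constant in each coordinate.

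\textbf{Main obstacle.} The hard step is controlling non-canonical deviations. A player could hedge by spreading soldiers so that she partially reaches several thresholds, exploiting the fact that every gadget has only one switch. We first plan to make the selector bonus superadditive so that such hedging is dominated. Concretely, thresholds on the $F_{ij}$ would be tied to full-row completion through a separate ``commitment'' battlefield per row with a large bonus. The budget would be set so that exactly one row can be completed, which forces any strategy not committing to a single row to lose $\Omega(K)$.
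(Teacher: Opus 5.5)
\textbf{There is a genuine gap, and it sits exactly at the step you call the main obstacle.}

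Blotto payoffs are additively separable: $U_1(x,y)=\sum_b u^1_b(x_b,y_b)$. So no battlefield can pay a bonus that depends on whether \emph{other} battlefields are filled. A per-row ``commitment'' battlefield therefore cannot tie the thresholds on the $F_{ij}$ to full-row completion, and a superadditive bonus across battlefields cannot be expressed at all.

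In your design every $F_{ij}$ has the same threshold $M$. Consider an allocation that puts $M$ on the same number of $F$-battlefields but draws them from different rows, and completes a selector if there is one. Whether the large reward sits on a separate selector $S^1_i$ or is spread uniformly over the $F_{ij}$, this allocation uses the same budget and earns the same activation reward as a canonical row. Hence your claim that non-canonical allocations lose a constant fraction of $K$ is false. Worse, player~1 can pick her $F$-battlefields one at a time against player~2's marginals, which decouples her declared row from the source payoff she collects. Without pointwise dominance of canonical allocations, the $O(\eps'/K)$ bound on non-canonical mass has no foundation, and your correctness paragraph falls with it.

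The sparsity motivation also does not fit your penalty encoding. Writing $A_{ij}=c-(c-A_{ij})$ requires a gadget wherever $A_{ij}\neq c$. For a sparse win--lose game with $c=1$ that is almost every pair. In addition, the $B_{ij}$ must be carried on the same intersection battlefields.

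\textbf{The missing idea is to enforce consistency purely through threshold arithmetic and the budget.} The paper does this in two steps.

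\begin{enumerate}[label=(\roman*)]
\item Every battlefield a player may use in a codeword pays an activation reward proportional to its own threshold, $Hc_b\one\{s\ge c_b\}$. The activation part of the payoff is then $H\Theta_p(x)$ with $\Theta_p(x)=\sum_b c_b\one\{x_b\ge c_b\}\le\sum_b x_b=\mathcal B$. Equality forces every soldier to sit exactly on a reached threshold, and forces the reached thresholds to sum to exactly $\mathcal B$.
\item The thresholds are chosen so that only codewords can achieve this equality.
\begin{itemize}
\item The pair thresholds $c_{ij}=2\bigl(1+L(\gamma_i+M\gamma_j)\bigr)$, with $\gamma_i=i+Ki^2$ and polynomially large separating constants (not your $K,M$), store the count and the first two moments of the row and column indices in non-interacting digits. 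As a result, the complete row $i$ (resp.\ column $j$) is the only subset of pair battlefields whose thresholds sum to $R_i$ (resp.\ $C_j$).
\item Each row (resp.\ column) also gets a balancing battlefield with odd threshold $\mathcal B-R_i>\mathcal B/2$ (resp.\ $\mathcal B-C_j$). Since all $c_{ij}$ are even and $\mathcal B$ is odd, parity plus the half-budget bound force exactly one balancer to be reached.
\end{itemize}
\end{enumerate}

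By integrality, $\Theta_p\le\mathcal B-1$ off the codewords. With $H>Q_p$ this gives pointwise dominance by a margin of one, which is what bounds the non-canonical mass by $\eps Z_p$. Your penalty trick (player~1 receives $1-A_{ij}$ on $(i,j)$ only when the opponent stays below threshold) and your row-equalizing shift do match the paper's payoff encoding. They only work, however, once this arithmetic code is in place.
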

\end{statementbox}

Again, Theorem~\ref{thm:general-blotto-membership} supplies \PPAD membership, so this stronger hardness result is also a \PPAD-completeness result.

Notably, our hardness results stand in contrast to two-player zero-/constant-sum regimes, where efficient algorithms for computing Nash equilibria in discrete Blotto games are known \cite{duels, behnezhad2017faster,vu2018simple}. The hardness already arises under strongly structured local interactions: the battlefield utilities in our reductions are monotone and piecewise constant, and Theorem~\ref{thm:single-threshold-main} strengthens this to a single threshold in each allocation coordinate, i.e., $2$-piecewise constant payoffs with respect to both players' allocations.

\paragraph{Positive result under fixed-rank structure.} 

The above results leave open whether additional global structure can recover tractability inside the two-player general-sum regime. Motivated by positive results for low-rank bimatrix games \cite{kannan2010games, mehta_rank1, patris2024learning} and, more directly, for fixed-rank bilinear games over compact polytopes \cite{garg2011bilinear}, we are naturally led to the following question:

\begin{quote}
    \textbf{Question 2: }\textit{Can we efficiently compute $\eps$-Nash equilibria in Colonel Blotto games with low-rank structure, despite the exponentially large strategy sets?}
\end{quote}

We answer the above question affirmatively, establishing a tractability frontier within two-player general-sum Blotto:

\begin{statementbox}
\begin{theorem}[Abridged version of Theorem \ref{thm:rank-newmain}]
For every fixed $r\ge0$, an
$\eps$-Nash equilibrium of a normalized rank-$r$ Colonel Blotto
game with $k$ battlefields and budgets $B_1,B_2$ can be computed in time $\poly(k,B_1,B_2)\,
\poly_r(1/\eps)$.
Consequently, for every fixed $r$, the algorithm is polynomial in the game parameters and $1/\eps$.
\end{theorem}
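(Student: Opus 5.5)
The plan is to follow the fixed-rank bilinear game approach of Garg et al.~\cite{garg2011bilinear}, transplanted to the Blotto polytope. I work in the marginal (flow) representation rather than the exponentially large simplex. A mixed strategy of player $i$ is summarized by its battlefield marginals $x_i=(x_{i,j,b})_{j\le k,\,0\le b\le B_i}$, where $x_{i,j,b}$ is the probability that player $i$ puts exactly $b$ soldiers on battlefield $j$. The set $\X_i$ of realizable marginals is the projection of the path polytope of a layered DAG (states: battlefields processed so far and soldiers used). It therefore has a polynomial-size extended formulation, and linear optimization over it is a shortest-path dynamic program. Since utilities are sums of battlefield payoffs, $u_1=x_1^\top A x_2$ and $u_2=x_1^\top B x_2$ are bilinear in the marginals. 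Here $A,B$ are block-diagonal with blocks of size $(B_1+1)\times(B_2+1)$. The normalized rank-$r$ hypothesis means $A+B=\sum_{\ell=1}^r a_\ell c_\ell^\top$ with entries bounded by $1$. Any pair of marginals can be decomposed back into a mixed strategy over allocations with polynomial support, by flow decomposition.

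\textbf{Step 1: reduction to a parametrized zero-sum game.} Write $B=-A+\sum_\ell a_\ell c_\ell^\top$. For a parameter vector $\lambda\in\R^r$, consider the zero-sum bilinear game $G_\lambda$ on $\X_1\times\X_2$. In it, player~1 maximizes $x_1^\top A x_2+\sum_\ell \lambda_\ell\, c_\ell^\top x_2$, a term player~1 cannot affect. Player~2 maximizes $x_1^\top(-A)x_2+\sum_\ell \lambda_\ell c_\ell^\top x_2$. The $\lambda$-term enters player~2's objective linearly, with an additive term player~2 controls. The key observation, as in the rank-$r$ bimatrix case, is this: if $(x_1,x_2)$ is an equilibrium of $G_\lambda$ and $\lambda_\ell=a_\ell^\top x_1$ for all $\ell$, then player~2's payoff in $G_\lambda$ coincides with $x_1^\top B x_2$ as a function of $x_2$. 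So $(x_1,x_2)$ is a Nash equilibrium of the original game. Each $G_\lambda$ is a zero-sum bilinear game over polytopes with polynomial extended formulations. Its equilibria are computable by a single LP, via LP duality on the inner best response, exactly as for zero-sum Blotto~\cite{behnezhad2017faster}.

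\textbf{Step 2: fixed point on a grid of $\lambda$.} The map $\lambda\mapsto (a_\ell^\top x_1(\lambda))_\ell$ is a correspondence from $[-1,1]^r$ to itself, by normalization. It is upper hemicontinuous with convex values, namely the projections of the LP-optimal face. Rather than a fixed-point computation, I discretize $\lambda$ on a grid of mesh $\delta=\Theta(\eps)$, giving $(1/\delta)^r=\poly_r(1/\eps)$ points. For each grid point I need to decide whether some equilibrium of $G_\lambda$ has $a^\top x_1$ within $\delta$ of $\lambda$. This is an LP feasibility question: equilibrium strategies of a zero-sum game form a polytope given by the primal--dual optimality constraints, and I add the linear constraints $|a_\ell^\top x_1-\lambda_\ell|\le\delta$. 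Kakutani applied to the correspondence guarantees an exact fixed point $\lambda^*$. Rounding $\lambda^*$ to the grid, I claim some grid point passes the test. Any passing pair is an $O(\delta r)$-Nash equilibrium, because player~2's payoff in $G_\lambda$ differs from $x_1^\top Bx_2$ by at most $\sum_\ell|\lambda_\ell-a_\ell^\top x_1|\,|c_\ell^\top x_2|$.

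\textbf{Main obstacle.} The delicate point is the rounding claim in Step 2. The optimal face of $G_\lambda$ can jump as $\lambda$ moves, so I cannot simply use continuity at the grid point nearest $\lambda^*$. My first attempt is to avoid needing it. I relax the test to ``exists $(x_1,x_2)$ that is a $\delta$-approximate equilibrium of $G_\lambda$ with $\|a^\top x_1-\lambda\|_\infty\le\delta$''. Since $\delta$-approximate saddle points of a zero-sum bilinear game are still described by linear constraints (value gaps at most $\delta$, using the dual LPs of both best responses), this remains an LP. The exact equilibrium at $\lambda^*$ is then a $O(r\delta)$-approximate equilibrium of $G_{\lambda}$ for the rounded $\lambda$, because the objectives differ by at most $r\delta$ in sup norm. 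So that grid point passes with tolerance $O(r\delta)$. Setting $\delta=\eps/(Cr)$ gives total time $\poly(k,B_1,B_2)\cdot(Cr/\eps)^r$, which is the claimed bound. The final step is converting the marginals to an explicit mixed strategy via flow decomposition.
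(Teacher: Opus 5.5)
\textbf{The gap: an unjustified bound on the factors.} The parametrized zero-sum reduction, the relaxed LP test, and the rounding argument are structurally sound. An exact equilibrium $(x_1^*,x_2^*)$ with $\lambda^*=a^\top x_1^*$ does pass the relaxed test at the nearest grid point, and any passing pair is an approximate equilibrium. However, every quantitative step rests on $|a_\ell^\top x_1|\le1$ and $|c_\ell^\top x_2|\le1$ on the marginal polytopes, and the hypothesis gives you nothing of the kind.

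Normalization bounds only $U_1,U_2\in[0,1]$. This bounds the interaction form $\bar C(x_1,x_2)=C(x_1,x_2)-C(x_1,x_2^0)-C(x_1^0,x_2)+C(x_1^0,x_2^0)$ by $4$, but says nothing about the individual factors of a rank-$r$ representation; even entrywise bounds would only give $|a_\ell^\top x_1|\le k$. For $r\ge2$, minimal representations can be arbitrarily ill-conditioned, e.g. $\phi_1\psi_1+\phi_2\psi_2=(\phi_1+\gamma\phi_2)\psi_1+\phi_2(\psi_2-\gamma\psi_1)$. A factorization obtained by exact rational elimination, as in the paper's certificate theorem, carries no control on this.

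Your grid has roughly $\prod_\ell R_\ell R'_\ell\, r/\eps$ points, where $R_\ell=\sup|a_\ell^\top x_1|$ and $R'_\ell=\sup|c_\ell^\top x_2|$. This product is invariant under $a_\ell\mapsto t a_\ell$, $c_\ell\mapsto c_\ell/t$, so it cannot be rescaled away and is not bounded by any function of $r$ and $\eps$. The paper flags exactly this obstruction in its comparison with Garg, Jiang, and Mehta, and sidesteps it. Its gap objective $\Phi=w_1+w_2-U_1-U_2$ has range at most $2$ thanks to the cap $w_1+w_2\le C_{\max}$, so Vavasis' range-relative guarantee yields an absolute $\eps$ however large the factors are.

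Separately, rank-$r$ in the paper is intrinsic, so $A+B$ itself is usually not low rank: constant-sum games have rank $0$ but $S\neq0$. You must therefore compute a decomposition with separable terms $\rho_1,\rho_2$ and add $\rho_2(x_2)$ to player~2's objective in $G_\lambda$. That part is harmless, since $G_\lambda$ stays strategically zero-sum.

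\textbf{A repair.} Your route works once you add an explicit conditioning step. The functions $\bar C(x,\cdot)$, $x\in\mathcal M_1$, span an $r$-dimensional space. Compute pure allocations $x^{(1)},\dots,x^{(r)}$ forming a $2$-approximate barycentric spanner of this set; the Awerbuch--Kleinberg procedure needs only linear optimization over $\mathcal M_1$, i.e.\ the best-response dynamic program. Then write $\bar C(x,y)=\sum_\ell\alpha_\ell(x)\,\bar C(x^{(\ell)},y)$ with $|\alpha_\ell|\le2$ and $|\bar C(x^{(\ell)},y)|\le4$. With this factorization, your grid over $[-2,2]^r$ with mesh $\Theta(\eps/r)$ and your LP test go through verbatim. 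The result is $(O(r)/\eps)^r$ linear programs, each of size polynomial in $k,B_1,B_2$. This is an LP-only alternative to the paper's single fixed-rank QP call, with the $r$-dependence confined to the number of grid points, at the cost of the spanner computation that the paper's range-bounded formulation avoids.
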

\end{statementbox}

Theorem~\ref{thm:rank-newmain} places Colonel Blotto within, and sharpens for this setting, the fixed-rank bilinear paradigm of Garg, Jiang, and Mehta~\cite{garg2011bilinear}. Their FPTAS applies to compact polytopal games of fixed ambient rank $\rank(A+B)$. Our analysis identifies a finer structural parameter: the \emph{intrinsic social-interaction rank}, obtained after restricting the payoff interaction to feasible Blotto marginal directions and removing player-separable terms. We then couple this structural reduction with a normalized equilibrium-gap formulation whose objective range is at most $2$, yielding an absolute $\eps$-Nash guarantee in time $\poly(k,B_1,B_2)\poly_r(1/\eps)$ for fixed intrinsic rank, without the representation-dependent scale factor present in the generic bilinear guarantee. Thus tractability is governed by the effective strategic interaction carried by feasible Blotto marginals, rather than by the rank or numerical scaling of an arbitrary ambient bilinear encoding. Together with the preceding hardness results, this yields a sharp two-player frontier: simple high-rank interactions already support \PPAD-hardness, whereas bounded intrinsic interaction rank restores efficient approximate equilibrium computation.

\paragraph{Multiplayer Winner-Takes-All Colonel Blotto.}
We finally return to the classical winner-takes-all (WTA) structure itself. Classical two-player WTA lies in the tractable constant-sum regime. What happens when this classical WTA structure is extended to multiple players? This leads to our final question:
\begin{quote}
    \textbf{Question 3: }\textit{What is the computational complexity of computing Nash equilibria in multiplayer winner-takes-all Colonel Blotto?}
\end{quote}
The equal-splitting extension has a direct precedent in the multiplayer Blotto literature: Boix-Adser\`a, Edelman, and Jayanti~\cite{boix2021multiplayer} study continuous multiplayer Blotto with common battlefield values and split each battlefield value evenly among the players tied for the highest bid. We study the discrete setting with player-specific battlefield values and place this established equal-splitting convention alongside another natural extension, in which every highest bidder receives her full player-specific value.

Perhaps surprisingly, there is no single answer to the broader WTA question: the complexity depends sharply on how ties are resolved. The two rules lead to opposite computational regimes. We prove the following theorem.

\begin{statementbox}
\begin{theorem}[Abridged version of Theorem~\ref{thm:wta-frontier}]
Consider discrete multiplayer WTA Blotto with nonnegative player-specific battlefield values.
\begin{enumerate}[label=(\alph*)]
    \item If every highest bidder receives her full battlefield value, then an exact pure Nash equilibrium can be computed in time polynomial in the binary input length for arbitrary integer budgets.
    \item If instead each player's battlefield value is split equally among all highest bidders, with all-zero battlefields shared among all players, then computing an inverse-polynomial-accuracy Nash equilibrium is \PPAD-hard even with only five players.
\end{enumerate}
\end{theorem}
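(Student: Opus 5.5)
For part (a), the plan is to construct a pure Nash equilibrium greedily and check it by a direct best-response argument. Under the full-value rule, a player's payoff depends on the others only through whether she ties or beats the maximum bid on each battlefield. So I would look for a profile in which every player concentrates her budget on battlefields where no other player can profitably outbid her.

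First, I would remove the degenerate cases. Battlefields with all-zero bids are shared by everyone, so a player with zero value there never wants to move. If the number of battlefields $k$ is small relative to the number of players, each player can simply win every battlefield by matching the maximum bid.

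The concrete candidate is the following. Sort players by budget, let the maximum budget be $B^\ast$, and let every player whose budget equals $B^\ast$ put her whole budget on one common battlefield. This is not obviously right, so the actual approach is an iterative best-response procedure. Process players in decreasing budget order. Each player computes a best response to the current maximum-bid profile $m_j$. That best response is a knapsack in which battlefield $j$ costs $m_j$ and has value $v_{ij}$. Knapsack is NP-hard in binary encoding, so I would instead exploit ties: a player can win a battlefield by matching the current maximum, and a player bidding $0$ wins every all-zero battlefield for free.

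This suggests the equilibrium in which all players bid $0$ everywhere except that each player spends her budget on battlefields she wants and which are already "claimed" at level $0$. Indeed, the all-zero profile may itself be an equilibrium. Every player already receives her full value on every battlefield, since all players are highest bidders everywhere. No deviation can raise her payoff beyond $\sum_j v_{ij}$, which is the maximum possible. So the all-zero profile is an exact pure NE computable trivially. I would double-check that this matches the tie convention (every highest bidder receives her full value, including at zero bids). If the model forbids unspent budget, I would have everyone spread budget identically, for example all players put their entire budget on battlefield $1$ with equal budgets. With unequal budgets this fails, and the hard step is then a matching/greedy argument in which each player saturates a battlefield only she values. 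That step is the main obstacle.

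For part (b), I would reduce from a \PPAD-hard generalized-circuit or polymatrix problem, following the structure of Theorem~\ref{thm:two-main}. The five players are split as two "encoding" players plus gadget players whose battlefield values are tuned so that equal splitting produces piecewise-constant, monotone local payoffs emulating those of the two-battlefield hardness. Membership follows from Theorem~\ref{thm:general-blotto-membership}. The main obstacle in (b) is simulating general payoff shapes with only five players' tie-splitting, which needs careful gadget calibration so that errors stay inverse-polynomial.
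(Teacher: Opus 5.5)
\textbf{Part (a).} Your main candidate, the all-zero profile, is not a strategy profile at all. Definition~\ref{def:colonel-blotto} requires $\sum_f a_{if}=B_i$ exactly. Moreover, part~(a) is asserted under either all-zero convention (Theorem~\ref{thm:wta-frontier}), and under the convention that all-zero battlefields award nobody, that profile would pay everyone zero anyway.

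Conversely, the fallback you discarded is correct for the co-winner convention, even with unequal budgets. Suppose everyone puts her whole budget on battlefield~$1$. A player with $B_i<B^*$ can never reach $B^*$ there, and she already receives her full value on every other battlefield, so no deviation helps her. Players with $B_i=B^*$ already receive $\sum_f v_{if}$.

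The genuine gap is the award-nobody convention. There you correctly see that best responses are knapsack problems. However, your iterative procedure gives no argument that earlier players still best-respond after later ones move, and its best-response step is not tractable. The missing idea is to \emph{choose} the prices so that the knapsack becomes trivial.
If $B^*<k$, each player puts one unit on each of her $B_i$ most valuable battlefields. Every positive bid is then $1$, so she co-wins all of them, and no allocation can have more than $B_i$ positive bids.
If $B^*\ge k$, one maximum-budget player $\ell$ plays $p=(B^*-k+1,1,\ldots,1)$. Because co-winners keep their full value, $\ell$ receives everything as long as the others bid at most $p$ componentwise. Every other player then faces a knapsack with only two distinct item costs. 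It is solved by sorting battlefields $2,\ldots,k$ and comparing the best set that skips battlefield~$1$ with the best set that takes it. Leftover troops can be parked below $p$, since $B_i\le B^*=\sum_f p_f$ and values are nonnegative (Lemma~\ref{lem:full-reward-thresholds}).

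\textbf{Part (b).} Here you name a reduction rather than give one: there are no budgets, values, or approximate-equilibrium analysis, and the proposed mechanism does not work as stated. Theorem~\ref{thm:two-main} relies on freely designed local payoff tables. In equal-splitting WTA Blotto, by contrast, every local payoff is fixed by the values $v_{if}$ and the ranking of bids. Any richer effective shape must therefore come from the auxiliary players' \emph{equilibrium} behavior. You would have to show two things, and your sketch offers no mechanism for either:
first, that in every approximate equilibrium these strategic players behave essentially as intended;
second, that the primaries, whose unrestricted allocation spaces are exponentially large, place almost all mass on represented allocations.

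This is the core of the paper's proof, which works as follows.
\begin{enumerate}[label=(\roman*)]
\item It starts from a normalized bimatrix game (Theorem~\ref{thm:s-general-source}), so only two source strategies need encoding.
\item It represents source actions as codewords over pair fields $f_{ab}$ and balancing fields whose completion bids form a moment code. Any endpoint pattern of nearly the right cost then decodes uniquely (Lemma~\ref{wta5:lem:code}).
\item It uses two stabilizers whose contest on each encoding field has a stable, essentially unique mixed equilibrium (Proposition~\ref{wta5:clm:local}). A partial bid of $j$ troops therefore wins at most about $(j/c_f)^2$ of the field, and a global rounding argument forces exact codewords.
\item It uses an inspector on the participation fields, first to make activation rare so that the stabilizer contests are barely perturbed, and then to calibrate activation to $h$.
\end{enumerate}
After calibration, the balancing terms cancel the row-dependent part of the completion bids. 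The ties on the pair fields then leave $x_C\eta(A\tau)_a$ as the $a$-dependent part of row code $a$'s payoff, up to small error (Lemma~\ref{wta5:lem:simulation}).
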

\end{statementbox}

The general membership result of Section~\ref{sec:ppad-membership} also applies to the equal-splitting instances, so part~(b) is \PPAD-complete. Thus multiplayer WTA Colonel Blotto exhibits a sharp computational frontier: full rewards to co-winners preserve tractability, whereas the natural equal-splitting rule makes approximate Nash computation \PPAD-complete already with five players.

\paragraph{Concurrent work.}
Concurrent work by Bichler and Ghosh~\cite{bichlerghosh2026} studies the same
multiplayer WTA Blotto setting with player-specific battlefield values and
standard equal splitting among highest bidders.  They prove \PPAD-hardness of
computing a constant-approximate well-supported Nash equilibrium even when
every player has only three resources, and derive \PPAD-hardness of computing
a $(c/n)$-approximate Nash equilibrium, for some constant $c>0$, where the
number of players $n$ grows with the instance.  In particular, their
construction uses one primary player for each variable of a
\textsc{Pure-Circuit} instance, together with several auxiliary players that
implement the circuit gates.  They explicitly leave open whether hardness with
player-specific battlefield values persists when the number of players is
constant. Our five-player result resolves this question affirmatively for inverse-polynomial-accuracy Nash equilibrium: computing such an equilibrium under equal splitting remains \(\PPAD\)-hard with only five players.

The two hardness proofs follow substantially different technical routes.
Bichler and Ghosh~\cite{bichlerghosh2026} reduce from
\textsc{Pure-Circuit}: circuit variables are represented by different primary
players, while additional local players and battlefields enforce the gate
constraints. Consequently, the number of players grows with the size of the
source instance. Our reduction instead starts from a two-player normal-form
game and keeps the number of Blotto players fixed. Source actions are represented
by structured allocations of the two primary players, while the remaining three
players enforce these encodings and regulate how often they are used. Thus the
two reductions place the complexity in different parts of the Blotto instance:
the concurrent work distributes the source computation across a growing
population of players, whereas our
reduction represents the strategic interaction of the source game within the
allocation spaces of only five players. At a high level, both proofs
exploit local WTA interactions and auxiliary reserve-type battlefields to control
unintended behavior, but the source problems and the mechanisms used to encode
them are different.

Bichler and Ghosh also obtain \PPAD-hardness for constant-approximate Nash
equilibrium under non-uniform monotone tie-breaking, even when every player
has a single resource and all players have identical battlefield values.  In
that construction the required payoff differences are generated by the
tie-breaking rule itself.  This result complements the computational frontier
with respect to tie-breaking developed here.  We show that standard equal
splitting is already hard with only five players, whereas under the
full-highest-bidder rule, in which every highest bidder receives the full
battlefield reward, equilibrium computation admits the positive result
described above.  Together, these results highlight the tie-breaking rule as
a central determinant of the computational complexity of multiplayer WTA
Blotto.

%% file: prelims.tex
\section{Preliminaries}
\label{sec:preliminaries}

\paragraph{Notation and Background.}

All numerical inputs are rational and encoded in binary. Polynomial running-time bounds suppress polynomial dependence on their encoding lengths.

For any $t\in \mathbb{Z}_{>0}$, we write $[t]_0=\{0,\ldots,t-1\}$.
For a finite set $S$, we denote by
$\Delta(S)$ the probability simplex over $S$.

% \begin{definition}[Finite normal-form game]
% \label{def:normal-form}
% A finite normal-form game is a tuple
% \[
% G=
% \left(
% N,\,(A_p)_{p\in N},\,(u_p)_{p\in N}
% \right),
% \]
% where $N$ is a finite set of players, $A_p$ is a finite nonempty
% action set for player $p$, and
% \[
% u_p:\prod_{q\in N}A_q\to\mathbb{Q}
% \]
% is player $p$'s payoff function.

% A mixed strategy of player $p$ is a point
% \[
% \sigma_p\in\Delta(A_p),
% \]
% and a mixed-strategy profile is a point in the product of simplices
% \[
% \sigma=(\sigma_p)_{p\in N}
% \in
% \prod_{p\in N}\Delta(A_p).
% \]
% The expected payoff $u_p(\sigma)$ is the multilinear extension of
% $u_p$ under the independent product distribution induced by
% $\sigma$.
% \end{definition}

\begin{definition}[Bimatrix game]
\label{def:bimatrix}
A two-player bimatrix game is specified by payoff matrices
$A,B\in\mathbb{Q}^{m\times n}$.
Player $1$ chooses a row $i\in[m]_0$, player $2$ chooses a column
$j\in[n]_0$, and their payoffs are $A_{ij}$ and $B_{ij}$,
respectively.
The mixed-strategy spaces are the simplices $\Delta([m]_0)$ and
$\Delta([n]_0)$.
Thus, for mixed strategies
$\sigma\in\Delta([m]_0)$ and $\tau\in\Delta([n]_0)$, the expected payoffs are
$u_1(\sigma,\tau)=\sigma^\top A \tau$ and $u_2(\sigma,\tau)=\sigma^\top B \tau$.
\end{definition}

\begin{definition}[$\varepsilon$-Nash equilibrium]
\label{def:approx-nash}
A mixed-strategy profile $(\sigma,\tau)$ is an $\varepsilon$-Nash equilibrium (NE) if, for every mixed strategy
$\sigma'\in\Delta([m]_0)$ and $\tau'\in\Delta([n]_0)$,
\[
u_1(\sigma,\tau)
\ge
u_1(\sigma',\tau)-\varepsilon, \quad u_2(\sigma,\tau)
\ge
u_2(\sigma,\tau')-\varepsilon.
\]
\end{definition}

\begin{theorem}[Positively normalized bimatrix source hardness \cite{chen2009settling}]
\label{thm:s-general-source}
The following search problem is \textsf{PPAD}-hard: given a rational
$d\times d$ bimatrix game $(A,B)$ with
$A,B\in[0,1]^{d\times d}$, compute a $1/d$-Nash equilibrium.
\end{theorem}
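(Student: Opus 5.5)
This statement is the standard source hardness result of Chen, Deng, and Teng~\cite{chen2009settling}. The plan is to recall their theorem in the form we need and then adjust the dimension and accuracy parameters by a padding argument.

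\textbf{The input theorem.} The main theorem of~\cite{chen2009settling} states that for every constant $c>0$ it is \PPAD-hard to compute an $n^{-c}$-Nash equilibrium of an $n\times n$ bimatrix game. Their reduction from \textsc{Generalized-Circuit} produces games whose payoff entries are rationals in $[0,1]$ after their normalization step. If some version states payoffs in $[-1,1]$, I will apply the affine map $x\mapsto (x+1)/2$ to every entry. This map preserves best responses and halves all deviation gains, so an $\eps$-Nash equilibrium of the normalized game is a $2\eps$-Nash equilibrium of the original. I will fix a constant $c\ge 1$ for which hardness holds at accuracy $2n^{-c}$; this absorbs the factor of $2$.

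\textbf{Padding to dimension $d$.} I will set $d=\lceil n^{c}\rceil\ge n$ and enlarge the game to $d\times d$ by duplication. I append $d-n$ copies of row $0$ to both $A$ and $B$, and $d-n$ copies of column $0$ to both matrices. Entries stay in $[0,1]$, and the construction is polynomial because $c$ is a constant.

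Given a $1/d$-Nash equilibrium $(\sigma,\tau)$ of the padded game, I define a profile $(\hat\sigma,\hat\tau)$ of the original game by adding the mass on every duplicate row or column to its original. The key step is checking that $(\hat\sigma,\hat\tau)$ is a $1/d$-Nash equilibrium of the original game.
\begin{itemize}
\item Because duplicated rows and columns have identical payoff entries, expected payoffs are unchanged: $u_p(\sigma,\tau)=u_p(\hat\sigma,\hat\tau)$ for $p=1,2$.
\item Every pure deviation in the original game is also available in the padded game.
\item A pure deviation against $\tau$ earns the same as against $\hat\tau$, again by the identical entries.
\end{itemize}
Hence $(\hat\sigma,\hat\tau)$ is a $1/d$-Nash equilibrium of the original game, and $1/d\le n^{-c}$, which yields the reduction.

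\textbf{Main obstacle.} The only delicate point is matching the exact normalization and accuracy regime of~\cite{chen2009settling}. Specifically, one must confirm that their hardness is stated for all polynomial accuracies, or at least for some fixed inverse polynomial. Padding only lowers the required accuracy relative to $n$, so any inverse-polynomial hardness suffices, and the rest is routine bookkeeping.
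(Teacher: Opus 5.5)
Your proposal is correct and is essentially the paper's approach: the paper gives no proof and simply cites Chen--Deng--Teng~\cite{chen2009settling}, and you reduce to the same theorem. Their result already holds for every constant $c>0$ on positively normalized $n\times n$ games, so taking $c=1$ gives the statement directly with $d=n$; your affine-normalization and duplication-padding steps are valid but not needed, and would only matter if you had just one fixed inverse-polynomial hardness bound.
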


\begin{definition}[Win--lose and sparse bimatrix games]
\label{def:sparse-win-lose}
A bimatrix game $(A,B)$ is a win--lose game if $A,B\in\{0,1\}^{m\times n}$.
For an integer $d\ge 1$, it is $d$-sparse if every row and every
column of each of $A$ and $B$ has at most $d$ nonzero
entries.
\end{definition}

The following theorem demonstrates that computing an $\varepsilon$-NE in win-lose sparse bimatrix games is \PPAD-hard. 

\begin{theorem}[Sparse win--lose source hardness \cite{Batziou2026}]
\label{thm:s-source}
There exists a polynomial $f$ such that the following search problem is \textsf{PPAD}-hard: given a $d\times d$ $3$-sparse win--lose bimatrix game $(A,B)$, compute $\sigma,\tau\in\Delta([d]_0)$ satisfying
\[
\max_{i\in[d]_0}\bigl(e_i^\top A\tau-\sigma^\top A\tau\bigr)\le\frac{1}{f(d)}
\quad
\text{and}
\quad
\max_{j\in[d]_0}\bigl(\sigma^\top Be_j-\sigma^\top B\tau\bigr)\le\frac{1}{f(d)}.
\]
\end{theorem}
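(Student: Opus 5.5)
Since this result is imported from \cite{Batziou2026}, the plan is to reconstruct it as a chain of polynomial-time, approximation-preserving reductions. The starting point is Theorem~\ref{thm:s-general-source}, or more precisely the generalized-circuit problem underlying it. The chain has two stages: first make the game win--lose, then make it $3$-sparse. Throughout, we only require that an $\varepsilon'$-Nash equilibrium of the constructed game, with $\varepsilon'=1/\poly$, maps back efficiently to an inverse-polynomial approximate equilibrium of the source. For this reason the constructed game may have size polynomially larger than the source, and the polynomial $f$ absorbs all losses.

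\textbf{Step 1 (win--lose).} Follow the Abbott--Kane--Valiant approach. Write each rational entry of $A,B\in[0,1]^{d\times d}$ in binary, to the polynomially many bits needed for $1/d$-accuracy. Replace each row and column by a block of pure strategies, one per bit position. Inside the block, add an imitation/\emph{generalized matching pennies} gadget that forces each player to spread weight across the copies of a source strategy in fixed geometric ratios. The effect is that, in any approximate equilibrium, the expected $0/1$ payoff of a block equals the original rational payoff, up to a controlled additive error and a polynomial rescaling. The approximate-equilibrium conditions of the source are then read off from the aggregated block marginals, with $\varepsilon$ degraded by a polynomial factor.

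\textbf{Step 2 (sparsification).} Go back to the source of Theorem~\ref{thm:s-general-source}, namely \textsc{Generalized Circuit} with inverse-polynomial precision $\varepsilon_c$. Encode each node $v$ by a pair of strategies $(2v,2v{+}1)$ for each player, and let the value of $v$ be the relative weight $x_v=\sigma_{2v}/(\sigma_{2v}+\sigma_{2v+1})$. Each gate (assignment, addition, scaling, comparison, constant) is implemented by win--lose payoff entries that touch only the rows and columns of its at most three incident nodes. A large matching-pennies component with $0/1$ entries arranged as a permutation structure keeps every pair of strategies at weight roughly $1/d$. Since each node has bounded fan-in and fan-out (fan-out can first be reduced to one via copy gates), every row and column contains at most $3$ ones in each matrix. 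The standard Chen--Deng--Teng argument then converts an approximate equilibrium of the game into an $\varepsilon_c$-approximate circuit assignment.

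\textbf{Main obstacle.} The hard part is getting win--lose entries and $3$-sparsity to coexist. The matching-pennies enforcer normally uses a large constant $M$ on a dense diagonal. With $0/1$ entries it cannot dominate the gate payoffs by sheer magnitude. Its strength must instead come from a permutation structure, which contributes exactly one $1$ per row and column, while the gate gadgets contribute at most two more. My first attempt would be to scale gate precision down relative to the enforcer by duplicating strategies, so that each gate payoff is diluted by a $1/\poly(d)$ factor while every row keeps weight at most $3$. I would then verify that the resulting marginal-balance error is still $O(\varepsilon_c/\poly(d))$, which the polynomial $f$ absorbs.
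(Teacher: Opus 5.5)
The paper gives no proof of this theorem; it imports it from \cite{Batziou2026}. Your reconstruction therefore has to stand on its own, and it does not. The one step that carries all the difficulty is left as an unverified ``first attempt,'' and the mechanism you sketch for it fails as described. Step~1 is also never used: Step~2 restarts from \textsc{Generalized Circuit}. In any case the bit-expansion gadget of Step~1 produces dense rows, so it could not feed a sparsification step.

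Concretely, your encoding $x_v=\sigma_{2v}/(\sigma_{2v}+\sigma_{2v+1})$ needs an enforcer that balances \emph{pair} weights. The enforcer payoff of row $2v$ must depend on the opponent's total weight $\tau_{2w}+\tau_{2w+1}$ on a pair, and in a win--lose game that takes two ones in that row. Moreover, the robustness of the Chen--Deng--Teng balancing lemma comes entirely from its entries being $M\gg$ the gate payoffs; its error bound scales like (gate magnitude)$/M$. A permutation-type $0/1$ enforcer with one $1$ per row fails on both counts:
\begin{itemize}
\item Each row sees a single opposing strategy, so it constrains individual strategy weights, not pair weights. For instance, $A=I$ with $B$ a cyclic shift has the uniform distribution over all strategies as its unique equilibrium, which pins every $x_v$ to $1/2$ and erases the encoding.
\item Its entries have the same magnitude as the gate entries, so adding the gates can push the equilibrium arbitrarily far from balance. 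No ``every pair has weight $\approx 1/d$'' lemma is available.
\end{itemize}

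Your remedy, diluting gate payoffs by duplicating strategies, is precisely the missing new idea, not a routine verification. A row's payoff is the opponent's mass on its at most three winning columns. So a gate term can be $1/\poly(d)$ smaller only if it sits on auxiliary opponent strategies played with a precisely controlled small mass proportional to the encoded node weights. Nothing forces this: an opponent indifferent among copies may put all the duplicated mass on one of them. A further enforcement gadget would be needed, facing the same lack of magnitude and costing more sparsity.

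The sparsity count also fails for your design. Pair-level enforcement already uses two ones per line. An addition gate needs a line summing its two inputs' encoded masses, which is two more ones. Copy trees reduce fan-out only to two, not one, which again adds column entries. There is no slack to absorb this, since $2$-sparse win--lose games are solvable in polynomial time. Without a genuinely new enforcement construction, or simply invoking \cite{Batziou2026} as the paper does, the proposal does not establish the theorem.
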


\paragraph{Colonel Blotto games.}

In this paper, we use a multiplayer definition of discrete Colonel Blotto and specialize it to two players whenever we study the general piecewise-constant model. This convention lets the two-player sections retain the notation $\mathcal X,\mathcal Y,x,y$ used throughout the paper.

\begin{definition}[Discrete multiplayer Colonel Blotto game]
\label{def:colonel-blotto}
A discrete Colonel Blotto game consists of a player set $N=[K]=\{1,\ldots,K\}$, $k\ge 1$ battlefields, and integer budgets $B_i\ge 0$ for every $i\in N$. Player $i$ has the allocation set
\[
\mathcal A_i
=
\left\{
a_i\in\mathbb Z_{\ge 0}^{k}:\sum_{f=1}^{k}a_{if}=B_i
\right\}.
\]
For each battlefield $f$, player $i$ has a rational local payoff function $u_f^i$ depending on the allocations on that battlefield, and total utility is additive across battlefields. No monotonicity or other structural restriction is imposed at the level of this general definition; the computational representation used for the membership theorem is specified in Section~\ref{sec:ppad-membership}. Thus
\[
U_i(a)=\sum_{f=1}^{k}u_f^i(a_{1f},\ldots,a_{Kf}),
\qquad
a=(a_1,\ldots,a_K)\in\prod_{i\in N}\mathcal A_i.
\]
Unless stated otherwise, utilities are nonnegative and normalized so that $0\le U_i(a)\le 1$ for every player and every pure profile. Mixed strategies are independent distributions $\sigma_i\in\Delta(\mathcal A_i)$, and expected utilities are given by multilinear extension.

When $K=2$, we write $\mathcal X=\mathcal A_1$, $\mathcal Y=\mathcal A_2$, $x=a_1$, and $y=a_2$. We also write the two-player local utilities as
\[
u_f^1(x_f,y_f),\qquad u_f^2(y_f,x_f),
\]
so that
\[
U_1(x,y)=\sum_{f=1}^{k}u_f^1(x_f,y_f),
\qquad
U_2(x,y)=\sum_{f=1}^{k}u_f^2(y_f,x_f).
\]
Thus all subsequent two-player definitions and reductions are literal special cases of the multiplayer model.
\end{definition}

\begin{definition}[$\varepsilon$-Nash equilibrium in multiplayer Blotto]
\label{def:multiplayer-approx-nash}
A mixed profile $\sigma=(\sigma_i)_{i\in N}$ is an $\varepsilon$-Nash equilibrium if, for every player $i$ and every mixed deviation $\sigma_i'\in\Delta(\mathcal A_i)$,
\[
U_i(\sigma)\ge U_i(\sigma_i',\sigma_{-i})-\varepsilon.
\]
For $K=2$, this coincides with Definition~\ref{def:approx-nash}.
\end{definition}

\begin{definition}[Winner-takes-all sharing rule]
\label{def:wta-sharing}
Fix player-specific nonnegative battlefield values $v_{if}$. For a pure allocation profile $a$, let
\[
m_f(a)=\max_{j\in N}a_{jf}.
\]
For positive maxima, the set of highest bidders is
\[
S_f(a)=\{j\in N:a_{jf}=m_f(a)\}.
\]
For an all-zero battlefield we allow either standard convention: all players are treated as co-winners, or the battlefield awards zero to everyone. Accordingly, let $W_f(a)=S_f(a)$ when $m_f(a)>0$, and let $W_f(a)$ be either $N$ or $\varnothing$ when $m_f(a)=0$.

A \emph{co-winner sharing rule} is a family of functions
\[
\tau_{if}:2^N\to\mathbb Q_{\ge0}
\]
used only on sets $S$ containing $i$. The induced winner-takes-all payoff is
\[
U_i^{\tau}(a)
=
\sum_{f=1}^{k}v_{if}\,\tau_{if}(W_f(a))\,\one\{i\in W_f(a)\}.
\]
We will focus on two natural rules:
\begin{align*}
\text{full co-winner reward:}\qquad &\tau_{if}^{\mathrm{full}}(S)=1,\\
\text{equal splitting:}\qquad &\tau_{if}^{\mathrm{split}}(S)=\frac{1}{|S|}.
\end{align*}
Thus full reward gives every highest bidder her entire battlefield value, whereas equal splitting divides each player's battlefield value equally among all highest bidders. Values $v_{if}$ may differ across players and battlefields.
\end{definition}

\paragraph{Relation to existing multiplayer Blotto.}
The equal-splitting rule in Definition~\ref{def:wta-sharing} is the natural tie convention used by Boix-Adser\`a, Edelman, and Jayanti~\cite{boix2021multiplayer} in their multiplayer Colonel Blotto model: each battlefield has a common value and that value is divided evenly among all players tied for the highest bid. Thus, when $v_{if}=v_f$ for every player $i$ and we adopt the all-zero co-winner convention, $\tau^{\mathrm{split}}$ gives exactly the same battlefield payoff rule. Their main model has continuous bids, whereas we work with discrete integer allocations and allow player-specific values $v_{if}$. The full co-winner rule $\tau^{\mathrm{full}}$ is the alternative multiplayer WTA extension that we contrast with equal splitting in Section~\ref{sec:multiplayer-wta}.

\begin{definition}[Competitive monotonicity]
\label{def:competitive-monotonicity}
A player-$1$ battlefield payoff $u_b^1(s,t)$ is competitively monotone
if
$
u_b^1(s+1,t)\ge u_b^1(s,t)
$
whenever $s+1\le B_1$, and
$
u_b^1(s,t+1)\le u_b^1(s,t)
$
whenever $t+1\le B_2$.
We define similarly a player-$2$ battlefield payoff $u_b^2(s,t)$ as competitively monotone.
% A player-$2$ battlefield utility $u_b^2(t,s)$ is competitively monotone
% if
% \[
% u_b^2(t+1,s)\ge u_b^2(t,s)
% \]
% whenever $t+1\le B_2$, and
% \[
% u_b^2(t,s+1)\le u_b^2(t,s)
% \]
% whenever $s+1\le B_1$.
Equivalently, each player's battlefield utility is weakly increasing in her own allocation and weakly decreasing in the opponent's
allocation.
\end{definition}

\begin{definition}[$q$-piecewise constant battlefield payoff]
\label{def:q-piecewise-constant}
Following \cite{beaglehole2023sampling}, we define $u_b^1$ as
\emph{$q$-piecewise constant} if, for every
$t\in\{0,\ldots,B_2\}$, 
% the sequence
% \[
% \bigl(u_b^1(0,t),u_b^1(1,t),\ldots,u_b^1(B_1,t)\bigr)
% \]
% can be partitioned into at most $q$ consecutive constant pieces.
% Equivalently,
\[
1+
\left|
\left\{
s\in\{1,\ldots,B_1\}:
 u_b^1(s,t)\ne u_b^1(s-1,t)
\right\}
\right|
\le q.
\]
The definition for player $2$ is symmetric. A Colonel Blotto game has $q$-piecewise constant battlefield payoffs if every battlefield payoff of both players satisfies this condition. 
% The endpoints of the constant pieces may depend on the fixed opponent allocation.
\end{definition}

% For all reductions, the target precision in the Colonel Blotto instance is inverse polynomial in $d$. Given an approximate equilibrium of the constructed game, the source profile is recovered in polynomial time. 

% \begin{definition}[Normalized Colonel Blotto game]
% \label{def:normalized-blotto}
% A Colonel Blotto game is normalized if
% \[
% 0\le U_p(x,y)\le 1
% \]
% for every player $p\in\{1,2\}$ and every pure allocation profile
% $(x,y)\in\mathcal{X}\times\mathcal{Y}$.
% By multilinearity, the same bounds then hold for all mixed-strategy
% profiles.
% \end{definition}

%% file: ppad-membership.tex
\section{PPAD Membership for General Multiplayer Blotto}
\label{sec:ppad-membership}

Before turning to hardness, we record a general membership result that will apply to all of the hard subclasses considered later.  The argument does not use monotonicity, piecewise-constant structure, winner-takes-all payoffs, or any restriction on the number of players.  Its only requirements are the standard computational ones needed to represent arbitrary local rewards and to evaluate their expectations under independent mixed play.

\paragraph{Polynomial local representation.}
For a battlefield $f$ and player $i$, let $q_{jf}\in\Delta(\{0,\ldots,B_j\})$ denote a distribution over player $j$'s bid on battlefield $f$.  The \emph{local multilinear extension} of $u_f^i$ is
\begin{equation}
\overline u_f^i(q_{1f},\ldots,q_{Kf})
=
\sum_{b_1=0}^{B_1}\cdots\sum_{b_K=0}^{B_K}
 u_f^i(b_1,\ldots,b_K)
 \prod_{j=1}^K q_{jf}(b_j).
\label{eq:local-multilinear-extension}
\end{equation}
We say that a Blotto instance has a \emph{polynomial local representation} if each player's layered allocation network, defined below, has polynomial size in the input, every function \eqref{eq:local-multilinear-extension}, together with its first partial derivatives, is represented by polynomial-size arithmetic circuits with rational constants, and a polynomial-bit rational Lipschitz bound for the resulting payoff circuits is available.  In particular, this condition holds for completely arbitrary local payoff functions given by explicit rational tables (with their encoded payoff magnitudes furnishing the required Lipschitz bound).  It also covers succinctly described local rules whenever their expectation under independent bid distributions and its first derivatives admit polynomial-size arithmetic circuits.  This qualification is purely computational: no monotonicity, winner-takes-all, or other shape restriction is imposed on the local reward functions.  Some representation condition is nevertheless necessary for succinct rewards, because an arbitrary succinct payoff circuit need not admit polynomial-time evaluation of its expectation under a product distribution.

\begin{statementbox}
\begin{theorem}[General PPAD membership]
\label{thm:general-blotto-membership}
Given a normalized discrete multiplayer Colonel Blotto game with a polynomial local representation and a rational accuracy parameter $\varepsilon>0$, computing an $\varepsilon$-Nash equilibrium belongs to \PPAD.
\end{theorem}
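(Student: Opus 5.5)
We reduce to computing an approximate fixed point of a polynomially computable Lipschitz map on a polytope, a problem known to lie in \PPAD (e.g.\ via the generalized circuit framework of Chen--Deng--Teng / Etessami--Yannakakis for \emph{approximate} fixed points of arithmetic-circuit maps with a Lipschitz bound). The obstacle is that each $\mathcal A_i$ is exponentially large, so we cannot work with $\Delta(\mathcal A_i)$. Instead we use the \emph{flow polytope} of the layered allocation network. Nodes are pairs $(f,s)$, meaning $s$ soldiers have been used on the first $f$ battlefields. An edge from $(f-1,s)$ to $(f,s+b)$ means bid $b$ on battlefield $f$. Source--sink paths biject with $\mathcal A_i$, and every unit flow $z_i$ decomposes, in polynomial time, into a mixture of paths. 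This mixture is an explicit mixed strategy $\sigma_i$ whose per-battlefield marginals $q_{if}(b)=\sum_s z_i((f-1,s)\to(f,s+b))$ are linear in $z_i$.

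The first step is to show that expected utilities depend on $\sigma$ only through these marginals. Because $U_i$ is additive across battlefields and players randomize independently, $U_i(\sigma)=\sum_f \overline u_f^i(q_{1f},\ldots,q_{Kf})$ by \eqref{eq:local-multilinear-extension}. Moreover, for fixed $\sigma_{-i}$, this expression is \emph{linear} in player $i$'s own flow $z_i$. Hence player $i$'s best response is a longest path in the layered DAG with edge weights $w_i(f,b)=\partial \overline u_f^i/\partial q_{if}(b)$, evaluated at the opponents' marginals. These weights are given by the assumed derivative circuits.

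The second step defines a Nash-style fixed-point map on $P=\prod_i P_i$, where $P_i$ is the flow polytope. We take the regularized projection map
\[
F_i(z)=\Pi_{P_i}\bigl(z_i+\eta\, w_i(z_{-i})\bigr).
\]
Here $w_i$ is the gradient of player $i$'s (linear-in-own-strategy) payoff, and $\eta>0$ is a fixed step size. Its fixed points are exactly the Nash equilibria in flow space, because the variational inequality over $P_i$ for a linear objective is the best-response condition. The map is Lipschitz with a constant built from the given Lipschitz bound and the polynomial network size.

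The third and main obstacle is that exact Euclidean projection onto $P_i$ is not obviously expressible by an arithmetic circuit. I would first try the standard workaround of Etessami--Yannakakis style membership proofs, which avoids projection entirely. Use the Nash map with gains $g_{i,e}=\max(0,\text{improvement})$ adapted to flows: push a $\max(0,\cdot)$ improvement toward the best path computed by a circuit implementing Bellman--Ford max. This dynamic program is a polynomial-size $\max/+$ circuit, which lies within the generalized circuit model. Alternatively, one can rely directly on the result that finding an $\delta$-approximate fixed point of a polynomially computable $L$-Lipschitz map on an explicitly given polytope is in \PPAD.

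Finally, we translate an approximate fixed point into an $\varepsilon$-Nash equilibrium. Suppose $\|F(z)-z\|\le\delta$. Then, by the standard projection inequality, each player's regret, which is the longest-path value minus $\langle w_i,z_i\rangle$, is at most roughly $\delta(\operatorname{diam}P_i+\eta\|w_i\|)/\eta$. The diameter of $P_i$ is at most $\sqrt2$ times the number of layers, and $\|w_i\|$ is polynomially bounded by the Lipschitz data, so choosing $\delta$ inverse-polynomially small in these quantities and in $\varepsilon$ gives regret at most $\varepsilon$. We then decompose each $z_i$ into paths to output $\sigma_i$ with polynomial support. Because the regret is taken against pure best responses, this bound covers all mixed deviations, which completes the reduction.
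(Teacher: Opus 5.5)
Your reduction up to the fixed-point step matches the paper's. The shared steps are the layered flow polytope $P_i$, the linear marginals $q_{if}$, the identity $U_i=\sum_f\overline u_f^i(q_{1f},\ldots,q_{Kf})$ under independent play, linearity in the own flow, and path decomposition at the end. Your projection-inequality regret bound $\delta(\operatorname{diam}P_i+\eta\|w_i\|)/\eta$ is also correct.

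\textbf{The gap.} The gap is the step you yourself flag as the main obstacle, and neither fallback closes it.
\begin{itemize}
\item Fallback (a) does not give a continuous self-map of $P_i$. Bellman--Ford computes the best-response \emph{value} continuously, but the incidence vector of a maximizing path is an $\arg\max$ that jumps as $z_{-i}$ varies, so pushing mass toward ``the best path'' is discontinuous. If you instead add per-edge gains $g_{i,e}$ to $z_i$ and rescale, flow conservation fails unless the gain vector is itself a multiple of a unit flow.
\item Fallback (b) restates the obstacle rather than resolving it. The unnamed fixed-point theorem needs $F$ to be polynomially computable, and $F$ contains the projection $\Pi_{P_i}$ whose computability you left open. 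You also do not fix how the domain is presented or how the precision of evaluating $w_i$ is controlled.
\end{itemize}

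\textbf{How the paper handles this step.} The paper never builds a fixed-point map. After observing that $\widehat U_i(\cdot,z_{-i})$ is affine on $P_i$, it treats the flow game as a polynomially represented concave game over $P=\prod_iP_i$ with a strong separation oracle, working in the affine hull to get well-boundedness. It then invokes the \textsf{PPAD}-membership theorem of Papadimitriou, Vlatakis-Gkaragkounis, and Zampetakis, which is precisely the result that absorbs the projection issue given value and gradient circuits and a Lipschitz bound.

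\textbf{Two ways to fix your proof.} You can cite that theorem in place of your third step. Alternatively, you can repair (b) directly:
\begin{itemize}
\item Projecting a rational point onto the rational polytope $P$ is a convex quadratic program, solvable exactly in polynomial time (Kozlov--Tarasov--Khachiyan).
\item Hence $G(y)=F(\Pi_P(y))$ is a polynomial-time computable Lipschitz map from a box containing $P$ into $P$. This requires controlling the bit length of evaluating $w_i$, which is immediate for explicit local tables.
\item Approximate Brouwer on a box is in \textsf{PPAD}.
\item If $\|G(y)-y\|\le\delta$, then $x=\Pi_P(y)$ satisfies $\|y-x\|\le\|y-G(y)\|\le\delta$, because $G(y)\in P$ and $x$ is the nearest point of $P$ to $y$. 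Hence $\|F(x)-x\|\le 2\delta$, and your regret bound finishes the argument.
\end{itemize}
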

\end{statementbox}

\begin{proof}
We reduce the game to a polynomially represented concave game.  For each player $i$, construct a layered directed acyclic graph with nodes
\[
V_i=\{(f,s): f\in\{0,\ldots,k\},\ s\in\{0,\ldots,B_i\}\}.
\]
For every battlefield $f\in\{1,\ldots,k\}$, cumulative allocation $s$, and bid $b\ge0$ with $s+b\le B_i$, add an arc
\[
(f-1,s)\longrightarrow(f,s+b).
\]
Let $z^i_{f,s,b}$ denote the flow on this arc, and let $P_i$ be the polytope of nonnegative unit flows from $(0,0)$ to $(k,B_i)$.  A source-to-sink path chooses exactly one bid on every battlefield and is therefore in one-to-one correspondence with a pure Blotto allocation in $\mathcal A_i$.  Conversely, every point of $P_i$ is a convex combination of such path-incidence vectors by the standard flow-decomposition theorem.  Thus every mixed strategy over $\mathcal A_i$ induces a point of $P_i$, and every point of $P_i$ admits a path decomposition into pure allocations.  The flow polytope is therefore a compact marginal representation of mixed Blotto play; the map from mixed strategies to flows may be many-to-one, which is harmless because additive battlefield payoffs depend only on the induced per-battlefield marginals.

For $z_i\in P_i$, define the induced marginal bid distribution on battlefield $f$ by
\begin{equation}
q_{if}(b)=\sum_{s=0}^{B_i-b}z^i_{f,s,b},
\qquad b\in\{0,\ldots,B_i\}.
\label{eq:flow-bid-marginal}
\end{equation}
Because one unit of flow crosses every layer, $q_{if}$ is a probability distribution.  We now define a continuous game in which player $i$ chooses $z_i\in P_i$ and receives
\begin{equation}
\widehat U_i(z)
=
\sum_{f=1}^k
\overline u_f^i(q_{1f},\ldots,q_{Kf}).
\label{eq:flow-game-payoff}
\end{equation}
For fixed $z_{-i}$, each term in \eqref{eq:flow-game-payoff} is linear in $q_{if}$, because \eqref{eq:local-multilinear-extension} is multilinear, and $q_{if}$ is linear in $z_i$ by \eqref{eq:flow-bid-marginal}.  Hence $\widehat U_i(\cdot,z_{-i})$ is affine, and in particular concave, on the convex strategy set $P_i$.  After negating utilities, this is equivalently a convex-loss game.

The reduction preserves payoffs and unilateral regret exactly.  Indeed, given a mixed Blotto strategy $\sigma_i\in\Delta(\mathcal A_i)$, send every pure allocation in its support to the corresponding path and average the path-incidence vectors; this produces a flow $z_i\in P_i$.  Conversely, a flow decomposition of any $z_i\in P_i$ gives a mixed strategy over allocations.  Under a product profile, players randomize independently, so on each battlefield the joint bid distribution is the product of the marginal distributions in \eqref{eq:flow-bid-marginal}.  Therefore the expected Blotto payoff is exactly \eqref{eq:flow-game-payoff}.  The same correspondence applies to every unilateral deviation.  Consequently, a profile $z$ is an $\varepsilon$-Nash equilibrium of the flow game if and only if any path decompositions of its components form an $\varepsilon$-Nash equilibrium of the original Blotto game.

It remains to check computational size.  The polytope $P_i$ has an explicit rational flow description with $\mathcal{O}(kB_i^2)$ arc variables and constraints.  By the definition of polynomial local representation, these polytopes and the arithmetic circuits computing \eqref{eq:flow-game-payoff} have size polynomial in the input.  Taking the product $P=\prod_i P_i$ gives an explicit strong separation oracle.  A maximal independent family of flow-conservation equalities can be eliminated in polynomial time; equivalently, we work in the affine hull of $P$.  After deleting fixed coordinates and applying an affine rational rescaling if necessary, the resulting rational polytope is full-dimensional in its affine hull and has polynomial-bit well-boundedness parameters by standard rational-polytope bounds.  Together with the value-and-gradient circuits above, the resulting instance is a polynomially represented concave game with a strong separation oracle in the sense of Papadimitriou, Vlatakis-Gkaragkounis, and Zampetakis~\cite{papadimitriou2023concave}.  Their strong-separation PPAD-membership theorem therefore places the problem of computing an $\varepsilon$-Nash equilibrium of the flow game in \PPAD.  Finally, each rational unit flow admits a polynomial-time path decomposition with polynomial support, which gives the required Blotto mixed strategies while preserving the same regret bound.
\end{proof}

\paragraph{Applicability to the hardness results.}
Theorem~\ref{thm:general-blotto-membership} is deliberately stated at the level of arbitrary local reward functions.  For the two-player reductions in Sections~\ref{sec:hardness-2-battles} and~\ref{sec:single-threshold-hardness}, the budgets are polynomially bounded and every local payoff can be evaluated explicitly, so the corresponding local tables and the allocation-flow networks have polynomial size.

For the multiplayer WTA hardness result, we record the specialization that matches the model used in Section~\ref{sec:multiplayer-wta}.

\begin{statementbox}
\begin{corollary}[PPAD membership for equal-splitting WTA Blotto]
\label{cor:wta-split-membership}
Consider a normalized discrete multiplayer WTA Blotto game under the equal-splitting rule $\tau^{\mathrm{split}}$, with the all-zero co-winner convention, nonnegative rational player-specific battlefield values, and integer budgets that are polynomially bounded in the binary input length.  For every positive rational accuracy parameter $\varepsilon$ given in binary, computing an $\varepsilon$-Nash equilibrium belongs to \PPAD.
\end{corollary}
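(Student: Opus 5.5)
The plan is to apply Theorem~\ref{thm:general-blotto-membership} directly. All that must be checked is that an equal-splitting WTA instance with polynomially bounded budgets has a polynomial local representation. There are three things to verify: the allocation networks have polynomial size, the local multilinear extensions and their first partial derivatives are computed by polynomial-size arithmetic circuits, and a polynomial-bit Lipschitz bound is available.

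The first item is immediate. Since each $B_i$ is polynomially bounded, the layered network of player $i$ has $\mathcal O(kB_i^2)$ arcs.

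For the second item, I would not expand the local table directly. That table has $\prod_j(B_j+1)$ entries, which is exponential when $K$ is not constant. Instead I will write down a closed form for the expectation under independent marginals $q_{1f},\ldots,q_{Kf}$.

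Start with the all-zero battlefield. Under the co-winner convention its contribution is $v_{if}\cdot\frac1K\prod_j q_{jf}(0)$.

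For a bid $b\ge1$ of player $i$, player $i$ is a highest bidder with exactly $m$ tied opponents exactly when each other player $j$ bids either $b$ or less than $b$, with exactly $m$ of them bidding $b$. Write $L_j(b)=\sum_{c<b}q_{jf}(c)$ and $E_j(b)=q_{jf}(b)$. The expected share is then
\[
\overline u_f^i=v_{if}\Bigl[\tfrac1K\prod_j q_{jf}(0)+\sum_{b=1}^{B_i}q_{if}(b)\sum_{m=0}^{K-1}\tfrac{1}{m+1}\,e_m\bigl((L_j(b),E_j(b))_{j\ne i}\bigr)\Bigr],
\]
where $e_m$ denotes the coefficient of $t^m$ in $\prod_{j\ne i}(L_j(b)+tE_j(b))$.

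These coefficients can be computed by the standard $\mathcal O(K^2)$ dynamic program that multiplies the linear factors in $t$ one at a time. That program is itself a polynomial-size arithmetic circuit in the $q$ variables. The prefix sums $L_j$ add only $\mathcal O(B_j)$ gates each.

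The partial derivatives also have polynomial-size circuits. One option is the Baur--Strassen theorem, which gives all first partials of a circuit with only constant-factor blowup. Alternatively, one can rerun the same dynamic program with the relevant factor removed or differentiated.

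For the Lipschitz bound, the payoff is a multilinear polynomial on product simplices with values in $[0,\max v]$. Its partials in the flow coordinates are bounded by a polynomial in $K$, $k$, $B$ and the value encoding. For instance, each partial is an expectation of bounded terms, hence at most $\sum_f v_{if}$ in absolute value. This gives a polynomial-bit rational bound.

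The main care point is this circuit construction. It is where the number of players $K$ may be large, and a naive table would fail there. The symmetric-polynomial dynamic program resolves it. Once the three items are checked, Theorem~\ref{thm:general-blotto-membership} yields \PPAD\ membership for any binary rational $\varepsilon$.
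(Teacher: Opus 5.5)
Your proposal is correct and follows the paper's proof: it verifies the polynomial local representation needed for Theorem~\ref{thm:general-blotto-membership} by using the same generating polynomial $\prod_{j\ne i}(L_j(b)+tE_j(b))$, computing its coefficients with the same one-opponent-at-a-time recurrence, and weighting them by $1/(r+1)$ (the paper writes this as $\int_0^1$ and handles $b=0$ with the same formula rather than as a separate all-zero term). Your explicit Lipschitz bound is a detail the paper leaves implicit, but it does not change the argument.
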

\end{statementbox}

\begin{proof}
It suffices to verify the polynomial local representation required by Theorem~\ref{thm:general-blotto-membership}.  Fix a battlefield $f$, a player $i$, and a bid $b$.  For every opponent $j\ne i$, let
\[
L_j(b)=\Pr[a_{jf}<b],
\qquad
E_j(b)=\Pr[a_{jf}=b].
\]
Conditional on player $i$ bidding $b$, she receives a positive share exactly when no opponent bids above $b$.  If exactly $r$ opponents also bid $b$, equal splitting gives her the fraction $1/(r+1)$.  Hence her expected share of battlefield $f$ is
\begin{equation}
\int_0^1\prod_{j\ne i}\bigl(L_j(b)+tE_j(b)\bigr)\,dt.
\label{eq:equal-split-integral}
\end{equation}
The same formula also covers $b=0$ under the all-zero co-winner convention: then $L_j(0)=0$, so the integral equals
\[
\frac1K\prod_{j\ne i}E_j(0),
\]
which is exactly player $i$'s expected $1/K$ share when all $K$ players bid zero.

The integral in \eqref{eq:equal-split-integral} has a polynomial-size arithmetic-circuit representation.  For a fixed bid $b$, initialize $c_0=1$ and $c_r=0$ for $r>0$.  When opponent $j$ is incorporated, update
\[
c'_r=L_j(b)c_r+E_j(b)c_{r-1},
\qquad c_{-1}:=0.
\]
After all opponents are processed, the integral equals
\[
\sum_{r=0}^{K-1}\frac{c_r}{r+1}.
\]
Thus the expected local payoff of player $i$ is obtained by multiplying this share by $v_{if}$ and averaging over her own bid distribution.  The same recurrence differentiated gate-by-gate gives polynomial-size arithmetic circuits for all first partial derivatives.  Therefore equal-splitting WTA payoffs have a polynomial local representation.

Finally, because every budget is polynomially bounded in the binary input length, player $i$'s layered allocation network has $\mathcal O(kB_i^2)$ arcs and hence polynomial size.  All hypotheses of Theorem~\ref{thm:general-blotto-membership} are satisfied, so computing an $\varepsilon$-Nash equilibrium belongs to \PPAD.
\end{proof}

Corollary~\ref{cor:wta-split-membership} applies directly to the five-player construction of Theorem~\ref{thm:wta-five-player-hardness}.  That construction uses the all-zero co-winner convention, nonnegative rational player-specific values, and polynomially bounded budgets, and its payoffs are normalized before the target equilibrium accuracy is imposed.  Hence its inverse-polynomial-accuracy equilibrium problem lies in \PPAD.  Combined with Theorem~\ref{thm:wta-five-player-hardness}, the five-player equal-splitting problem is therefore \PPAD-complete.

%% file: hardness-2-battles.tex
\section{Hardness: two battlefields and constant pieces}
\label{sec:hardness-2-battles}

We establish \PPAD-hardness through an approximation-preserving reduction from the problem of computing an inverse-polynomial approximate Nash equilibrium in a $3$-sparse win--lose bimatrix game, which has been shown to be \PPAD-hard \cite{Batziou2026}. Given an arbitrary source instance $(A,B)$, we construct in polynomial time a Colonel Blotto game with exactly two battlefields and nonnegative, competitively monotone, $4$-piecewise constant battlefield utilities. We then show that an approximate Nash equilibrium of the constructed Colonel Blotto game can be transformed into an approximate Nash equilibrium of $(A,B)$, with only a polynomial loss in the approximation parameter.
Our first hardness result is the following.

\begin{statementbox}
\begin{theorem}[Two-battlefield hardness]
\label{thm:two-main}
It is \PPAD-hard to compute a Nash equilibrium with inverse-polynomial accuracy in normalized two-player Colonel Blotto games with exactly two battlefields and nonnegative, competitively monotone battlefield payoffs that are $4$-piecewise constant in the sense of Definition~\ref{def:q-piecewise-constant}.
\end{theorem}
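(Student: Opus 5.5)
The plan is to give a reduction from Theorem~\ref{thm:s-source}. With only two battlefields, a pure allocation of player~$1$ is determined by $s=x_1\in\{0,\ldots,B_1\}$, since $x_2=B_1-s$. So by taking $B_1=B_2=d-1$, the pure strategy sets of the Blotto game are in bijection with the rows and the columns of the source game $(A,B)$. I will realize $U_1(s,t)$ as $c_1A_{st}+C_1$ and $U_2(s,t)$ as $c_2B_{st}+C_2$, where $c_p>0$ is a scale factor and $C_p$ is a term that does not depend on the player's own action. Then the Blotto game is, up to a positive affine change of each player's payoff, literally the source bimatrix game. An $\varepsilon$-Nash equilibrium of the Blotto game is therefore a $(\varepsilon/c_p)$-equilibrium of $(A,B)$ in the regret form of Theorem~\ref{thm:s-source}. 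Since $c_p=\Theta(1/d)$ after normalization, choosing $\varepsilon=1/(\Theta(d)f(d))$ suffices.

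\textbf{Step 1: the core decomposition.} Fix a column $t$. Let $Z_t=\{a:A_{at}=1\}$, which has $|Z_t|\le 3$ by $3$-sparsity. Write
\[
A_{st}=\sum_{a\in Z_t}\bigl(\one[s\ge a]-\one[s\ge a+1]\bigr).
\]
Set $\phi(s,t)=|\{a\in Z_t:a\le s\}|$ on battlefield~$1$, a function of player~$1$'s bid $s$ there. Set $\psi(r,t)=3-|\{a\in Z_t:a+1\le d-1-r\}|$ on battlefield~$2$, where $r=B_1-s$ is player~$1$'s bid there. Then $\phi$ is nondecreasing in $s$, and $\psi$ is nondecreasing in $r$. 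Each has at most $3$ jumps, hence is $4$-piecewise constant. Both are nonnegative, and $\phi(s,t)+\psi(B_1-s,t)=A_{st}+3$.

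\textbf{Step 2: monotonicity in the opponent's bid.} The functions $\phi,\psi$ may vary arbitrarily, by at most $3$, as $t$ changes. To fix this, I add opponent-only terms:
\[
u_1^1(s,t)=\phi(s,t)+4(B_2-t),\qquad u_2^1(r,t')=\psi(r,B_2-t')+4(B_2-t'),
\]
where $t'=B_2-t$ is player~$2$'s bid on battlefield~$2$. Each added term drops by $4$ per unit increase in the opponent's bid, which dominates the at most $3$ change in $\phi$ or $\psi$. So both functions are competitively monotone. They stay piecewise constant in the own coordinate with the same number of pieces, because for fixed $t$ the added term is a constant. The added terms sum to $4B_2$, a constant, so player~$1$'s incentives are unchanged.

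\textbf{Step 3: player~$2$ and normalization.} For player~$2$ I do the symmetric construction using the rows of $B$, which also have at most $3$ ones. Finally I divide all payoffs by a common bound $\Theta(d)$, so that $U_p\in[0,1]$. This gives $c_p=\Theta(1/d)$.

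\textbf{Expected obstacle.} The main difficulty is reconciling three requirements at once: few pieces, monotonicity in the own bid, and monotonicity in the opponent's bid. This works only because each column has $\le 3$ ones, so the increasing part and the decreasing part each need $\le 3$ jumps, giving exactly $4$ pieces. The opponent-direction violations are handled by the cancelling linear terms $4(B_2-t)$ and $4t$. I will carefully check the boundary conventions for $s=0$ and $r=0$, and confirm that the equilibrium translation loses only the factor $\Theta(d)$.
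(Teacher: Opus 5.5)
Your proposal is correct and follows essentially the same route as the paper. The shared steps are: two battlefields with $B_1=B_2=d-1$, so that allocations biject with source actions; a columnwise (for $A$) and rowwise (for $B$) split of each $3$-sparse $0/1$ vector into a nondecreasing and a nonincreasing part with at most three jumps each; opponent-only linear terms that restore competitive monotonicity and cancel in the sum; and normalization by a $\Theta(d)$ factor, giving a loss of only that factor in the approximation. The differences are cosmetic: the paper's Lemma~\ref{lem:decomp-main} counts runs of ones rather than individual ones, and it uses compensating slope $3$ instead of your $4$, which already suffices because each part varies by at most $3$ between adjacent opponent bids.
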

\end{statementbox}

By Theorem~\ref{thm:general-blotto-membership}, this approximation problem also belongs to \PPAD. Hence the hardness statement above yields \PPAD-completeness.

The remainder of this section presents the proof of Theorem \ref{thm:two-main}.

\medskip

Let $(A,B)$ be an $m \times n$ $3$-sparse win--lose game. We set $B_1=m-1$, $B_2=n-1$, and use two battlefields. Thus every feasible allocation is already a source action; i.e., $x^i=(i,B_1-i)$ and $y^j=(j,B_2-j)$.
The remaining task is to realize each source payoff matrix as the sum of two competitively monotone battlefield utilities with at most four constant pieces in every own-allocation slice.

\begin{lemma}\label{lem:decomp-main}
Let $a=(a_0,\ldots,a_{L-1})\in\{0,1\}^L$ contain at most three ones. Let $a_{-1}=0$, and define
\begin{align*}
p_i=\sum_{h=0}^{i}\max\{a_h-a_{h-1},0\},\qquad q_i=a_i-p_i.
\end{align*}
Then the following properties hold:
\begin{enumerate}[label=(\roman*)]
\item $a_i=p_i+q_i$ for every $i\in[L]_0$;
\item $p_0,\ldots,p_{L-1}$ is weakly nondecreasing and $q_0,\ldots,q_{L-1}$ is weakly nonincreasing;
\item $0\le p_i\le3$ and $-3\le q_i\le0$ for every $i\in[L]_0$;
\item each of the two sequences has at most four consecutive constant pieces.
\end{enumerate}
\end{lemma}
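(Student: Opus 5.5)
Every claim follows directly from the definition of $p$ as a running sum of nonnegative increments, so I would verify the four items in order.

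\emph{Item (i).} This holds by definition, since $q_i=a_i-p_i$.

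\emph{Monotonicity of $p$.} Write $d_h=\max\{a_h-a_{h-1},0\}\in\{0,1\}$. Then $p_i=\sum_{h\le i}d_h$ is a partial sum of nonnegative terms, so $p$ is weakly nondecreasing.

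\emph{Monotonicity of $q$.} Compute
\[
q_{i}-q_{i-1}=(a_i-a_{i-1})-d_i=\min\{a_i-a_{i-1},0\}\le0,
\]
using the identity $x-\max\{x,0\}=\min\{x,0\}$. For $i=0$ there is nothing to compare. This gives (ii).

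\emph{Item (iii).} The number of indices with $d_h=1$ equals the number of rising edges $a_{h-1}=0,\ a_h=1$ (with $a_{-1}=0$). Each rising edge begins a maximal block of ones, and there are at most three ones in total, so there are at most three such blocks. Hence $0\le p_i\le3$. For $q$, note that $q_i=a_i-p_i$. Here I would use the cleaner fact that $q_0=a_0-d_0=0$, because $d_0=a_0$. Summing the decrements then shows $q_i$ is minus the number of falling edges among indices $1,\ldots,i$. Falling edges are at most the number of blocks, so at most three, and therefore $-3\le q_i\le0$.

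\emph{Item (iv).} The sequence $p$ changes value only at rising edges, of which there are at most three, so it has at most $1+3=4$ constant pieces. Likewise $q$ changes only at falling edges, again at most three, giving at most four pieces.

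The only mildly delicate point is the bookkeeping at $i=0$ together with the convention $a_{-1}=0$. It guarantees $q_0=0$, so $q$ starts at zero and its changes are counted purely by falling edges. A cruder bound such as $q_i\ge a_i-3$ would also give $-3\le q_i$ for (iii), but the falling-edge count is what yields the piece bound in (iv).
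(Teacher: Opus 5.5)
Your proof is correct and follows the paper's argument essentially step for step. You use the same increment identities $p_i-p_{i-1}=\max\{a_i-a_{i-1},0\}$ and $q_i-q_{i-1}=\min\{a_i-a_{i-1},0\}$, the same run-counting bound for $p$, and the same rising/falling-edge count for the piece bound. The only difference is cosmetic and occurs in (iii): the paper bounds $q_i$ by a case split on $a_i$ (using $p_i\ge1$ when $a_i=1$, and $q_i\ge a_i-3\ge-3$), whereas you use $q_0=0$ and count falling edges; both arguments are valid.
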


\begin{proof}
Property (i) is the definition of $q_i$. Set $p_{-1}=0$. For every $i\in[L]_0$,
\[
p_i-p_{i-1}=\max\{a_i-a_{i-1},0\}\ge0,
\]
and
\begin{align*}
q_i-q_{i-1}
&=(a_i-a_{i-1})-(p_i-p_{i-1})\\
&=(a_i-a_{i-1})-\max\{a_i-a_{i-1},0\}\\
&=\min\{a_i-a_{i-1},0\}\le0.
\end{align*}
This proves (ii).

The quantity $p_i$ counts the number of runs of ones that begin at an index at most $i$. Since $a$ contains at most three ones, it has at most three such runs, and therefore $0\le p_i\le3$. If $a_i=0$, then $q_i=-p_i\le0$. If $a_i=1$, the run containing $i$ has already begun, so $p_i\ge1$ and $q_i=1-p_i\le0$. Since $a_i\ge0$ and $p_i\le3$, also $q_i\ge-3$. This proves (iii).

The sequence $p$ changes only when a run of ones begins. The displayed formula for $q_i-q_{i-1}$ shows that $q$ changes only at a transition from one to zero. Each sequence therefore has at most three changes and at most four consecutive constant pieces, proving (iv).
\end{proof}

Lemma~\ref{lem:decomp-main} provides the one-dimensional decomposition used to construct the battlefield payoff. We next apply the lemma separately to the columns of $A$ and the rows of $B$.

More precisely, with $A_{-1,j}=0$, we apply Lemma~\ref{lem:decomp-main} to every column of $A$ and define
\[
P^A_{ij}=\sum_{h=0}^{i}\max\{A_{hj}-A_{h-1,j},0\},
\qquad Q^A_{ij}=A_{ij}-P^A_{ij}.
\]
With $B_{i,-1}=0$, we apply the same lemma rowwise to $B$ and define
\[
P^B_{ij}=\sum_{h=0}^{j}\max\{B_{ih}-B_{i,h-1},0\},
\qquad Q^B_{ij}=B_{ij}-P^B_{ij}.
\]
Thus $A=P^A+Q^A$ and $B=P^B+Q^B$, and the relevant columns or rows of $P$ and $Q$ satisfy Lemma~\ref{lem:decomp-main}. Define
\begin{align*}
F^A_{ij}&=P^A_{ij}-3j+3(n-1),&
G^A_{ij}&=Q^A_{ij}+3j+3,\\
F^B_{ij}&=P^B_{ij}-3i+3(m-1),&
G^B_{ij}&=Q^B_{ij}+3i+3.
\end{align*}
The four battlefield payoffs are
\begin{align}
u_1^1(r,s)&=F^A_{rs},&u_2^1(r,s)&=G^A_{B_1-r,B_2-s},\tag{i}\\
u_1^2(r,s)&=F^B_{s,r},&u_2^2(r,s)&=G^B_{B_1-s,B_2-r}.\tag{ii}
\end{align}

The following lemma verifies that these payoffs satisfy the required battlefield-level restrictions and that their aggregate utilities reproduce the source game up to player-specific additive constants.

\begin{lemma}[Battlefield structure and source-payoff embedding]\label{lem:local-embed-main}
All four payoffs in (i)--(ii) are nonnegative and competitively monotone, and every fixed-opponent own-allocation slice has at most four constant pieces. Moreover, for every source profile $(i,j)$,
\[
U_1(x^i,y^j)=A_{ij}+3n,
\qquad
U_2(x^i,y^j)=B_{ij}+3m.
\]
\end{lemma}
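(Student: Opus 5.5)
The plan is to verify the claims battlefield by battlefield directly from Lemma~\ref{lem:decomp-main} and the definitions of $F^A,G^A,F^B,G^B$, and then to add the two battlefields along source profiles. The key input is $3$-sparsity of $(A,B)$: every column of $A$ and every row of $B$ has at most three ones. So Lemma~\ref{lem:decomp-main} applies to each column sequence $(A_{0j},\ldots,A_{m-1,j})$, giving $P^A_{\cdot j}$ nondecreasing in $i$, $Q^A_{\cdot j}$ nonincreasing in $i$, $0\le P^A\le 3$, $-3\le Q^A\le 0$, and at most four constant pieces in $i$. Symmetrically, it applies to each row sequence of $B$, giving the same properties for $P^B,Q^B$ in the index $j$.

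\emph{Player 1, battlefield 1.} Here $u_1^1(r,s)=P^A_{rs}-3s+3(n-1)$, with own allocation $r$ and opponent allocation $s$.
\begin{itemize}
\item Nonnegativity holds because $P^A\ge0$ and $s\le B_2=n-1$.
\item Monotonicity in $r$ and the bound of at most four pieces in each $r$-slice come directly from Lemma~\ref{lem:decomp-main}(ii),(iv) applied to column $s$.
\item Monotonicity in $s$ uses the range bound rather than the lemma's monotonicity. Since $P^A\in[0,3]$, we get $P^A_{r,s+1}-P^A_{rs}\le 3$, so the $-3s$ term makes the increment in $s$ nonpositive.
\end{itemize}

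\emph{Player 1, battlefield 2.} Here $u_2^1(r,s)=Q^A_{B_1-r,B_2-s}+3(B_2-s)+3$. As $r$ increases, the row index $B_1-r$ decreases, and $Q^A$ is nonincreasing in that index, so $u_2^1$ is nondecreasing in $r$; the four-piece bound again follows from (iv). As $s$ increases by one, the column index drops by one. The $Q^A$ part then changes by at most $0-(-3)=3$, while the linear part drops by exactly $3$, so the change is $\le0$. Nonnegativity follows from $Q^A\ge-3$.

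\emph{Player 2.} The argument is identical, with the roles of $i$ and $j$ exchanged and the rowwise decomposition of $B$ used. Recall the convention $u_f^2(y_f,x_f)$: the first argument of $u_1^2$ is player 2's own allocation $r$, which indexes the column of $F^B$ where $P^B$ is monotone. The linear term $-3i$ (respectively $+3i$) with $i=s$ absorbs the opponent's effect using the range bounds $0\le P^B\le3$ and $-3\le Q^B\le0$.

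\emph{Payoff identity.} At $x^i=(i,B_1-i)$ and $y^j=(j,B_2-j)$, player 1's payoff is
\[
u_1^1(i,j)+u_2^1(B_1-i,B_2-j)=F^A_{ij}+G^A_{ij}=P^A_{ij}+Q^A_{ij}+3(n-1)+3=A_{ij}+3n.
\]
Likewise, player 2's payoff is
\[
u_1^2(j,i)+u_2^2(B_2-j,B_1-i)=F^B_{ij}+G^B_{ij}=B_{ij}+3m.
\]

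The only delicate point is keeping the argument orders and index reversals straight on the second battlefield and for player 2. There, own-allocation monotonicity comes from the lemma, while opponent monotonicity comes only from the range bounds together with the compensating linear terms. I would check these four sign computations carefully; everything else is bookkeeping.
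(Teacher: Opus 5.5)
Your proposal is correct and follows essentially the same route as the paper: own-allocation monotonicity and the four-piece bound come from Lemma~\ref{lem:decomp-main} applied to the columns of $A$ and the rows of $B$, while opponent monotonicity and nonnegativity come from the ranges $P\in[0,3]$, $Q\in[-3,0]$ offset by the linear $\pm 3$ terms, and the payoff identity reduces to $F^A_{ij}+G^A_{ij}=A_{ij}+3n$ and $F^B_{ij}+G^B_{ij}=B_{ij}+3m$. One small imprecision: for $u_2^2$ the row index is $B_1-s$ rather than $s$, but the same sign argument goes through unchanged.
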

\begin{proof}
By Lemma~\ref{lem:decomp-main}, $P^A_{ij}\in[0,3]$ and $Q^A_{ij}\in[-3,0]$. Hence
\[
F^A_{ij}=P^A_{ij}+3(n-1-j)\ge0,
\qquad
G^A_{ij}=Q^A_{ij}+3j+3\ge0.
\]
For fixed $j$, $F^A_{ij}$ is nondecreasing in $i$ and $G^A_{ij}$ is nonincreasing in $i$. Across adjacent columns, we have that
\[
F^A_{i,j+1}-F^A_{ij}=P^A_{i,j+1}-P^A_{ij}-3\le0,
\]
\[
G^A_{i,j+1}-G^A_{ij}=Q^A_{i,j+1}-Q^A_{ij}+3\ge0,
\]
because the relevant $P^A$ values lie in $[0,3]$ and the $Q^A$ values in $[-3,0]$. Thus $u_1^1$ increases in its first argument and decreases in its second. In $u_2^1$, increasing the own allocation decreases the first index of the nonincreasing array $G^A$, while increasing the opponent allocation decreases the second index, in which $G^A$ is nondecreasing. Hence $u_2^1$ is also competitively monotone. Its own slice is a reversed column of $Q^A$ plus a constant; an own slice of $u_1^1$ is a column of $P^A$ plus a constant. Lemma~\ref{lem:decomp-main} proves the constant-piece bound.

The player-$2$ utilities satisfy the same properties after exchanging rows and columns and applying the decomposition rowwise to $B$. 
In particular, their fixed-opponent own-allocation slices are rows of $P^B$ or reversed rows of $Q^B$, up to additive constants, and hence have at most four constant pieces. Finally, we get
% For player 2, $P^B$ is row-wise nondecreasing and $Q^B$ row-wise nonincreasing. Their ranges give nonnegativity exactly as above. Moreover, we have that
% \[
% F^B_{i+1,j}-F^B_{ij}=P^B_{i+1,j}-P^B_{ij}-3\le0,
% \]
% \[
% G^B_{i+1,j}-G^B_{ij}=Q^B_{i+1,j}-Q^B_{ij}+3\ge0.
% \]
% Along each fixed row, $F^B$ is nondecreasing in $j$ and $G^B$ is nonincreasing in $j$. Hence $u_1^2(r,s)=F^B_{s,r}$ increases with the own argument $r$ and decreases with the opponent argument $s$. In $u_2^2$, both indices are reversed, giving the same conclusion from the monotonicities of $G^B$. Its fixed-opponent own slices are reversed rows of $Q^B$ plus constants, while those of $u_1^2$ are rows of $P^B$ plus constants. Thus they also have at most four constant pieces.
\[
F^A_{ij}+G^A_{ij}=A_{ij}+3n,
\qquad
F^B_{ij}+G^B_{ij}=B_{ij}+3m,
\]
and evaluating (i)--(ii) at $(x^i,y^j)$ proves the desired result.
\end{proof}

% Lemma~\ref{lem:local-embed-main} establishes the required battlefield structure and the affine embedding at pure profiles. 
It remains to verify that the same embedding preserves unilateral payoff differences under mixed strategies.

\begin{proposition}[Exact mixed-strategy and approximation transfer]\label{lem:transfer-two-main}
Under the identification of source mixtures $p,q$ with the corresponding mixtures over $x^i,y^j$, respectively,
\begin{align*}
U_1(x^i,q)-U_1(p,q)&=e_i^\top Aq-p^\top Aq,\\
U_2(p,y^j)-U_2(p,q)&=p^\top Be_j-p^\top Bq.
\end{align*}
After dividing player 1's utilities by $3n+1$ and player 2's by $3m+1$, a normalized $\eps$-equilibrium induces a source equilibrium with error at most $\eps\max\{3n+1,3m+1\}$.
\end{proposition}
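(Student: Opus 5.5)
The plan is to use bilinearity of expected utility together with the pure-profile identity of Lemma~\ref{lem:local-embed-main}. Since $B_1=m-1$ and $B_2=n-1$ with two battlefields, the maps $i\mapsto x^i=(i,B_1-i)$ and $j\mapsto y^j=(j,B_2-j)$ are bijections onto $\mathcal X$ and $\mathcal Y$. Consequently, every mixed Blotto strategy is exactly a source mixture $p\in\Delta([m]_0)$ or $q\in\Delta([n]_0)$, and the identification in the statement involves no loss. Expected utilities are multilinear extensions, so
\[
U_1(p,q)=\sum_{i,j}p_iq_j\,U_1(x^i,y^j)=\sum_{i,j}p_iq_j\,(A_{ij}+3n)=p^\top Aq+3n,
\]
because $\sum_{i,j}p_iq_j=1$. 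Similarly, $U_1(x^i,q)=e_i^\top Aq+3n$, $U_2(p,q)=p^\top Bq+3m$, and $U_2(p,y^j)=p^\top Be_j+3m$. Subtracting these identities removes the additive constants and gives the two displayed regret identities.

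For normalization, I would take the range from Lemma~\ref{lem:local-embed-main}: at pure profiles $U_1\in\{3n,3n+1\}$ and $U_2\in\{3m,3m+1\}$, since the sources are win--lose. Every battlefield payoff is nonnegative, so dividing by $3n+1$ and $3m+1$ respectively puts all pure utilities in $[0,1]$. This division preserves nonnegativity, competitive monotonicity, and the four-piece bound, because both of the latter are invariant under positive scaling. One could instead normalize by the affine map $(U-3n)$ to obtain a tighter bound, but the statement asks for the scaling version, so I will use that.

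For the transfer step, suppose $(p,q)$ is an $\eps$-NE of the normalized game. For player~1 and any deviation, it suffices to test pure deviations $x^i$, since a mixed deviation's payoff is a convex combination of pure ones. This gives $\bigl(U_1(x^i,q)-U_1(p,q)\bigr)/(3n+1)\le\eps$, so $e_i^\top Aq-p^\top Aq\le\eps(3n+1)$ for every $i$. In the same way, $p^\top Be_j-p^\top Bq\le\eps(3m+1)$ for every $j$. Both regrets are therefore at most $\eps\max\{3n+1,3m+1\}$, which matches the form required in Theorem~\ref{thm:s-source}.

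The main obstacle is mostly bookkeeping. One must make sure that the Blotto strategy space contains no allocations outside the source actions, and this holds because two battlefields with budget $B_1$ give exactly $B_1+1=m$ allocations. One must also make sure that the additive constants are profile-independent, which holds since they depend only on $n$ or $m$. With $\eps$ chosen as $1/\bigl(f(d)\max\{3n+1,3m+1\}\bigr)$, which is inverse-polynomial, the source recovery is immediate.
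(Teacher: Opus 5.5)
Your proposal is correct and follows essentially the same route as the paper: multilinearity plus the pure-profile identities of Lemma~\ref{lem:local-embed-main} give $U_1(p,q)=p^\top Aq+3n$ and $U_2(p,q)=p^\top Bq+3m$ (and the analogous formulas for pure deviations), subtraction cancels the constants, and normalization divides each player's unilateral gains exactly by $3n+1$ or $3m+1$. Your observation that $i\mapsto x^i$ and $j\mapsto y^j$ are bijections onto $\mathcal X$ and $\mathcal Y$ is the same fact the paper notes right after setting $B_1=m-1$ and $B_2=n-1$.
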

\begin{proof}
Linearity and Lemma \ref{lem:local-embed-main} give, for every pair of mixtures,
\[
U_1(p,q)=p^\top Aq+3n,
\qquad
U_2(p,q)=p^\top Bq+3m.
\]
The same calculation with one pure deviation gives $U_1(x^i,q)=e_i^\top Aq+3n$ and $U_2(p,y^j)=p^\top Be_j+3m$. Subtracting proves the displayed identities and, after maximizing over $i$ or $j$, equality of the two players' unnormalized maximum unilateral gains with those in the source game. After normalization these gains are divided exactly by $3n+1$ and $3m+1$, respectively. Therefore, a normalized $\eps$-equilibrium induces source gains at most $(3n+1)\eps$ and $(3m+1)\eps$.
\end{proof}

The preceding lemma shows that normalization changes the two players' unilateral gains only by the factors $3n+1$ and $3m+1$. We can therefore complete the approximation-preserving reduction.

\begin{proof}[Proof of Theorem \ref{thm:two-main}]
Given a source game $(A,B)$, construct the battlefield utilities as discussed above. Player $1$'s pure-profile payoffs lie in $\{3n,3n+1\}$, and player $2$'s lie in $\{3m,3m+1\}$. Divide player $1$'s utilities by $3n+1$ and player $2$'s utilities by $3m+1$.

Apply the construction to a square source instance from Theorem~\ref{thm:s-source}, so $m=n=d$. Let $\eps_0=1/f(d)$ and set
\[
\eps=\frac{\eps_0}{3d+1}.
\]
An $\eps$-Nash equilibrium of the  Colonel Blotto instance has unnormalized unilateral gain at most $\eps_0$ for each player. Proposition~\ref{lem:transfer-two-main} therefore yields an $\eps_0$-Nash equilibrium of the source game. The construction has polynomial size, and Lemmas~\ref{lem:decomp-main} and~\ref{lem:local-embed-main} establish all stated structural properties.
    
\end{proof}

% \begin{proof}
% Given the source game, we construct (3)--(4). The action maps are bijections onto the complete unrestricted Colonel Blotto action sets, and the four tables have polynomial dimensions and entries. If the source tolerance is $\eps_0$, we set the target accuracy for a Nash equilibrium of the constructed Colonel Blotto game equal to
% \[
% \eps=\frac{\eps_0}{\max\{3n+1,3m+1\}}.
% \]
% Lemma~\ref{lem:transfer-two-main} maps any normalized $\eps$-equilibrium to an $\eps_0$-equilibrium of $(A,B)$, while Lemma~\ref{lem:local-embed-main} proves every advertised structural condition. The source problem is PPAD-hard, and the construction is an approximation-preserving polynomial reduction.
% \end{proof}

% The four-piece bound corresponds to the three possible runs of ones in a $3$-sparse source vector. 
Theorem \ref{thm:two-main} controls unilateral incentive complexity, not the number of changes as the opponent allocation varies. 
Our results in the next section remove that limitation.

%% file: hardness-single-threshold.tex
\section{Single-Threshold Hardness}
\label{sec:single-threshold-hardness}

In this section, we obtain a stronger hardness result for battlefield payoffs with fewer constant pieces by allowing more battlefields.

\begin{definition}[Single-threshold battlefield payoff]
\label{def:single-threshold}
A single-threshold battlefield payoff has the form
\[
 u(s,t)=\mathbf{1}\{s\ge c\}
 \left(a+z\mathbf{1}\{t<c\}\right),
\]
where $s$ denotes own allocation, $t$ denotes other player's allocation, $c$ is a nonnegative integer threshold and $a,z\ge0$. Such a battlefield payoff is competitively monotone and is $2$-piecewise constant in either coordinate.
\end{definition}

The following theorem states the resulting single-threshold hardness result.

\begin{statementbox}
\begin{theorem}[Single-threshold hardness]
\label{thm:single-threshold-main}
Computing an inverse-polynomial $\eps$-Nash equilibrium in two-player Colonel Blotto with single-threshold battlefield payoffs as in Definition~\ref{def:single-threshold} is \PPAD-hard.
\end{theorem}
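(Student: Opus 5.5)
We plan a polynomial-time, approximation-preserving reduction from the sparse win--lose source problem of Theorem~\ref{thm:s-source}, in the spirit of Section~\ref{sec:hardness-2-battles}. Since a single-threshold payoff can no longer carry graded information inside one battlefield, we spread each source payoff over many battlefields. Every player-$1$ battlefield payoff will be an affine combination of two indicators: $\mathbf{1}\{s\ge c\}$, which says ``I cover this battlefield'', and $\mathbf{1}\{s\ge c,\ t<c\}$, which says ``I cover it and the opponent does not''. The same holds for player $2$.

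\emph{Interaction gadget.} For every column $j$ of $A$ we introduce a column battlefield $g_j$ with a common threshold $c$. Player $2$ plays source action $j$ by covering $g_j$. For player $1$, battlefield $g_j$ has payoff $z\mathbf{1}\{s\ge c\}\mathbf{1}\{t<c\}$. The canonical allocation $x^i$ covers $g_{j'}$ for every $j'$ with $A_{ij'}=0$. It also covers filler battlefields with pure payoff $a\mathbf{1}\{s\ge c\}$, in a number chosen so that every canonical allocation uses the same budget.

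Against $y^j$, player $1$ then receives
\[
z\bigl(n-r_i-1+A_{ij}\bigr)+(\text{filler}),
\]
where $r_i$ is the number of ones in row $i$ of $A$. The filler values $a$ are chosen to cancel the row-dependent constant $-zr_i$. Player $1$'s payoff is therefore $zA_{ij}$ plus a player-specific constant. The construction for $B$ is symmetric, using row battlefields $h_i$. Proposition~\ref{lem:transfer-two-main} then transfers exactly as before, up to the normalization factor, which is polynomial because budgets and values are polynomial.

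\emph{Main obstacle: enforcing canonical allocations.} Because payoffs are additive over battlefields, a player is free to cover an arbitrary set of about $B_1/c$ battlefields. This decouples the column battlefields from one another and destroys the encoding. We handle this with a heavy selector layer, which we expect to be the hard part.

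For each action $i$ we add a selector battlefield $\sigma_i$ with a large own value $M$, as a pure $a$-type payoff with threshold $c_i$. The thresholds $c_i$ and the budget are chosen so that $c_i$ plus exactly the soldiers needed for the $i$-th canonical pattern exhausts the budget. We then argue that any allocation not obtainable from some $x^i$ loses at least $M$ times a constant relative to the best canonical allocation. In an approximate equilibrium, the non-canonical mass is therefore at most $\eps/M$.

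To rule out partial collection of $M$ together with stray coverage, the canonical thresholds are taken distinct and spaced by more than the total filler budget, for example $c_i=iD$ for a large $D$. A player who does not commit to a selector gains at most the bounded interaction payoff, which is strictly smaller than $M$.

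It remains to check that the opponent-side terms cannot be exploited. A player could try to cover the opponent's column or row battlefields to block the opponent's $z$-bonus. We remove this incentive by giving each player zero own value on the opponent's gadget battlefields, except at the intended coverage pattern. Even then, covering such a battlefield costs budget and hence a selector. Finally, rounding the non-canonical mass onto canonical actions changes the source regret by at most $O(\eps M\,\mathrm{poly}(d))$. Choosing $\eps=1/(\mathrm{poly}(d)f(d))$ yields a $1/f(d)$-equilibrium of $(A,B)$ and completes the reduction.
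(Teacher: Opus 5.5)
\textbf{There is a genuine gap in the enforcement step.} Your interaction gadget is fine at canonical profiles; the count $z(n-r_i-1+A_{ij})$ is correct. The problem is the step you yourself flag as the crux. The claim that every allocation not of the form $x^i$ ``loses at least $M$ times a constant'' is false.

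Covering $\sigma_i$ earns $M$ however the remaining $\mathcal B-c_i$ soldiers are spent. The budget only forces the covered thresholds to sum to \emph{at most} $\mathcal B$, and soldiers placed below a threshold are wasted at no penalty. Concretely, take $x^i$ and move the $c$ soldiers from some $g_{j'}$ with $A_{ij'}=0$ to some $g_{j''}$ with $A_{ij''}=1$. The new allocation still collects $M$ and all fillers, and its payoff changes by $z(q_{j'}-q_{j''})$, which can be positive. Likewise, moving the soldiers on $h_i$ to $h_{i^*}$ mismatches the row signalled to player~$2$ with the row defining player~$1$'s $g$-pattern, again at no $M$-cost.

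Because all your gadget battlefields share the single threshold $c$, no choice of budget or selector thresholds can detect such swaps. So player~$1$ covers whichever $g$'s player~$2$ rarely covers, independently of its $h$-signal. This is exactly the decoupling you wanted to prevent. Several further details also fail:
\begin{itemize}
\item Spacing $c_i=iD$ only makes players prefer the cheapest selector, which leaves the most budget for positive-valued fillers, and $c_0=0$ is covered for free.
\item ``Zero own value except at the intended coverage pattern'' is not expressible, since a battlefield payoff sees only the two bids on that battlefield.
\item If player~$2$ has zero value on $g_j$, nothing makes it cover $g_j$ at all.
\end{itemize}

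\textbf{The missing idea.} With additive payoffs, the budget is the only global coupling. Enforcement must therefore make \emph{exact budget exhaustion on thresholds} both necessary for the heavy reward and sufficient to identify the action. The paper does this in two steps.
\begin{itemize}
\item Every battlefield carries an activation reward proportional to its own threshold, $Hc_b\one\{s\ge c_b\}$. The heavy component is then $H\Theta_p\le H\mathcal B$, with equality only if no soldier is wasted.
\item The thresholds are arithmetic, so that the only covered sets summing exactly to $\mathcal B$ are codewords. Pair fields have $c_{ij}=2(1+L(\gamma_i+M\gamma_j))$ with $\gamma_i=i+Ki^2$, whose subset sums single out complete rows and columns via a zeroth/first/second-moment argument. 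Odd balancing thresholds $\bar c_i=\mathcal B-R_i>\mathcal B/2$ force exactly one balancer by parity.
\end{itemize}
With $H>Q_p$, every non-represented allocation is then pointwise dominated by every represented one by at least $1$. The source payoff sits on the single shared pair field $(i,j)$, where both players lose their bonuses $1-A_{ij}$ and $1-B_{ij}$.

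Transfer then needs the conditioning argument of Proposition~\ref{lem:conditioning-main}. Proposition~\ref{lem:transfer-two-main} does not apply, because it relies on every allocation being a source action.

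To rescue your own gadget, you would have to replace the common threshold $c$ by a polynomially bounded, uniquely decodable threshold code on the $g$'s and $h$'s, and replace the selectors by threshold-proportional activation rewards. That is essentially the paper's machinery.
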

\end{statementbox}

Theorem~\ref{thm:general-blotto-membership} applies to this class as well, so the corresponding inverse-polynomial approximation problem is in \PPAD. Consequently, Theorem~\ref{thm:single-threshold-main} is a \PPAD-completeness result.

\paragraph{Proof overview.}
We give a polynomial-time reduction from a square $d\times d$
bimatrix game $(A,B)$ with payoffs in $[0,1]$
\cite{chen2009settling}.
The construction separates two tasks: embedding the $d$ source
actions into the exponentially large Colonel Blotto action sets, and
excluding all allocations outside this embedding.
For each source row $i$ and column $j$, we define a battlefield $(i,j)$, as well as represented
allocations $x^i$ and $y^j$. The allocation $x^i$ reaches the
prescribed thresholds along row $i$, while $y^j$ reaches those
along column $j$; hence they meet at  battlefield
$(i,j)$.
The thresholds are chosen so that no other subset of pair-indexed
battlefields has the same total threshold as a complete row or
column. 
We define $2d$ additional  battlefields which make all represented allocations use the same budget. Consequently, the represented
allocations are exactly those attaining the maximum possible
reached-threshold total.
A large activation term rewards this total and pointwise dominates
every nonrepresented allocation, while smaller source-dependent
payoff terms reproduce $A$ and $B$ on the represented $d\times d$
subgame. Thus exact equilibria use only represented allocations;
in an approximate equilibrium, pointwise dominance bounds the
probability assigned to all other allocations. Conditioning on the
represented allocations then yields an approximate NE of
the source game. 
% We prove a polynomial-time reduction from $3$-sparse win--lose game \cite{Batziou2026}\footnote{In fact, the reduction in this section only need a bimatrix game with utilities in $[0,1]$.}.
% Let $(A,B)$ be a square $d\times d$ $3$-sparse win--lose game, which will be used as the source game of our reduction. 
% The constructed Colonel Blotto game has many more pure allocations than the source game, so the reduction cannot identify every Colonel Blotto allocation with a source action. 
% Instead, we leverage only $d$ allocations $x^0,\ldots,x^{d-1}$ for player $1$ and $d$ allocations $y^0,\ldots,y^{d-1}$ for player $2$. 
% These allocations represent the row and column actions of the source game.
% We define $d^2$ battlefields for each joint action profile of the source game, as well as $2d$ auxiliary battlefields depending only on either player. 
% The battlefield thresholds are chosen so that the represented allocations are exactly the allocations that reach a prescribed total threshold value. 
% A sufficiently large activation payoff then makes every nonrepresented allocation strategies strictly dominated by every represented allocation strategy. 
% Finally, smaller source-payoff coefficients reproduce $A$ and $B$ on the represented $d\times d$ subgame.
% Next, we present the main proof components.
% The full proof can be found in the supplementary material.

\paragraph{Battlefields and thresholds.}
We first construct thresholds that uniquely identify complete rows and columns.
The battlefield index set is
\[
\mathcal I=([d]_0\times[d]_0)
\mathbin{\dot\cup}\{\rho_i:i\in[d]_0\}
\mathbin{\dot\cup}\{\kappa_j:j\in[d]_0\}.
\]
The following constants separate cardinality, row-index, and column-index information in threshold sums. Let $K=d(d-1)+1$, $\gamma_i=i+Ki^2$ and $\gamma_{\max}=\max_{i\in[d]_0}\gamma_i$. We set $M=d\gamma_{\max}+1$ and $L=d^2+1$.

For each joint action profile of the source game, $(i,j)$, we define a battlefield with threshold:
\begin{equation}
c_{ij}=2\bigl(1+L(\gamma_i+M\gamma_j)\bigr).
\label{eq:pair-threshold-main}
\end{equation}
We also define the row and column threshold sums:
\begin{equation}
R_i=\sum_{j=0}^{d-1}c_{ij},
\qquad
C_j=\sum_{i=0}^{d-1}c_{ij}.
\label{eq:s-row-column-sums}
\end{equation}

\begin{lemma}[Row and column compatibility]
\label{lem:threshold-sums-main}
For every $i,j\in[d]_0$ and every subset $S\subseteq[d]_0\times[d]_0$,
\[\bullet
\sum_{(h,r)\in S}c_{hr}=R_i
\quad\Longrightarrow\quad
S=\{i\}\times[d]_0,
\]
% and
\[\bullet
\sum_{(h,r)\in S}c_{hr}=C_j
\quad\Longrightarrow\quad
S=[d]_0\times\{j\}.
\]
% All thresholds $c_{ij}$ are bounded by a polynomial in $d$.
\end{lemma}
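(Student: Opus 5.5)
The plan is to read the threshold sum as a number written in a mixed radix whose ``digits'' are the cardinality of $S$, the row data and the column data. Then $\sum_{S}c_{hr}=R_i$ forces each digit to match separately. Once the digits are matched, a variance argument pins down the row (or column) index. Expanding \eqref{eq:pair-threshold-main}, we have $c_{hr}=2+2L\gamma_h+2LM\gamma_r$. Writing $\Gamma=\sum_{r\in[d]_0}\gamma_r$, this gives
\[
\sum_{(h,r)\in S}c_{hr}=2|S|+2L\sum_{(h,r)\in S}\gamma_h+2LM\sum_{(h,r)\in S}\gamma_r,
\qquad
R_i=2d+2Ld\gamma_i+2LM\Gamma .
\]

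\emph{Step 1 (cardinality digit).} Reduce both sides modulo $2L$. Since $|S|\le d^2<L$ and $d<L$, both $2|S|$ and $2d$ lie in $[0,2L)$. Hence $|S|=d$. Subtracting $2d$ and dividing by $2L$ leaves
\[
\sum_{S}\gamma_h+M\sum_{S}\gamma_r=d\gamma_i+M\Gamma .
\]

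\emph{Step 2 (row and column digits).} Reduce modulo $M$. Since $|S|=d$, we get $\sum_S\gamma_h\le d\gamma_{\max}<M$, and likewise $d\gamma_i<M$. So both residues are the actual values, which yields $\sum_S\gamma_h=d\gamma_i$ and, after dividing by $M$, $\sum_S\gamma_r=\Gamma$.

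\emph{Step 3 (splitting $\gamma$).} Let $h_1,\dots,h_d$ be the multiset of first coordinates of the $d$ elements of $S$. Then
\[
\sum_t h_t+K\sum_t h_t^2=di+Kdi^2 .
\]
Here $\sum_t h_t\le d(d-1)<K$ and $di\le d(d-1)<K$, so reducing modulo $K$ gives $\sum_t h_t=di$, and therefore $\sum_t h_t^2=di^2$.

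\emph{Step 4 (variance).} Combining the two identities from Step~3,
\[
\sum_t(h_t-i)^2=\sum_t h_t^2-2i\sum_t h_t+di^2=0 .
\]
So every $h_t=i$, which means $S\subseteq\{i\}\times[d]_0$. Together with $|S|=d$, this gives $S=\{i\}\times[d]_0$.

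The column statement follows in exactly the same way. Steps~1 and~2 now give $|S|=d$, $\sum_S\gamma_h=\Gamma$ and $\sum_S\gamma_r=d\gamma_j$, and Steps~3 and~4 are applied to the second coordinates. The only delicate point is the bookkeeping that every ``digit'' stays strictly below its radix: $2|S|<2L$, $\sum_S\gamma_h<M$ and $\sum_t h_t<K$. Each of these inequalities needs $|S|=d$ from the previous step, which is exactly why the constants $L$, $M$ and $K$ were chosen as they were.
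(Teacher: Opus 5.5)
Your proof is correct and is essentially the paper's argument. The paper uses the same three radix levels ($L$, then $M$, then $K$) followed by the same variance identity; it only phrases each digit comparison as the implication $a+Nb=0,\ |a|<N\Rightarrow a=b=0$ applied to differences, where you reduce both sides modulo the radix.

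One small correction to your closing remark: the first bound $2|S|<2L$ uses only $|S|\le d^2$, not $|S|=d$. It is the later bounds $\sum_S\gamma_h<M$ and $\sum_t h_t<K$ that rely on $|S|=d$.
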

\begin{proof}
We use the elementary implication
\[
a+Nb=0,\quad a,b\in\mathbb Z,\quad |a|<N
\quad\Longrightarrow\quad a=b=0.
\tag{E}
\]
Suppose that $\sum_{(h,r)\in S}c_{hr}=R_i$. Expanding \eqref{eq:pair-threshold-main} gives
\[
0=(|S|-d)+L(\Delta_1+M\Delta_2),
\]
where
$\Delta_1=\sum_{(h,r)\in S}\gamma_h-d\gamma_i$ and  $\Delta_2=\sum_{(h,r)\in S}\gamma_r-\sum_{j=0}^{d-1}\gamma_j$.
Since $||S|-d|\le d^2<L$, implication (E) yields $|S|=d$ and $\Delta_1+M\Delta_2=0$. Both $\sum_{(h,r)\in S}\gamma_h$ and $d\gamma_i$ lie in $[0,d\gamma_{\max}]$, so $|\Delta_1|\le d\gamma_{\max}<M$. A second application of (E) gives $\Delta_1=\Delta_2=0$. If $h_1,\ldots,h_d$ are the first coordinates of the elements of $S$, then $\sum_{a=1}^d\gamma_{h_a}=d\gamma_i$.
Substituting $\gamma_h=h+Kh^2$ gives
\[
\left(\sum_{a=1}^dh_a-di\right)
+K\left(\sum_{a=1}^dh_a^2-di^2\right)=0.
\]
The first parenthesis has absolute value at most $d(d-1)<K$, so (E) makes both parentheses zero. Therefore
$\sum_{a=1}^d(h_a-i)^2=0$,
and hence every $h_a=i$. Since $S$ contains $d$ distinct pairs, it is the complete $i$-th row. The column statement follows by the same argument applied to the column indices. 
Finally, $\gamma_{\max}=\mathcal{O}(d^4)$, $M=\mathcal{O}(d^5)$, and \eqref{eq:pair-threshold-main} gives $c_{ij}=\mathcal{O}(d^{11})$. Hence $R_i,C_j=\mathcal{O}(d^{12})$.
\end{proof}
Now, let
$N=\max\{R_i,C_j:i,j\in[d]_0\}$ and define the common soldier budget $\mathcal B=2N+1$. In particular, $\mathcal B=\mathcal{O}(d^{12})$, so every threshold and the budget have $\mathcal{O}(\log d)$-bit binary encodings.
For the auxiliary $2d$ battlefields, we assign thresholds
\begin{equation}
\bar c_i=\mathcal B-R_i
\quad\text{to }\rho_i,
\qquad
\widetilde c_j=\mathcal B-C_j
\quad\text{to }\kappa_j.
\label{eq:s-balancing-thresholds}
\end{equation}

\paragraph{Allocations representing source actions.}

For every source row $i$, we define the allocation strategy $x^i\in\X$ by
\begin{equation}
x^i_{(h,r)}=
\begin{cases}
c_{ir},&h=i\\
0,&h\ne i
\end{cases}
,\quad
x^i_{\rho_h}=
\begin{cases}
\bar c_i,&h=i\\
0,&h\ne i
\end{cases}
,\quad
x^i_{\kappa_r}=0.
\label{eq:s-row-allocation}
\end{equation}
For every source column $j$, define $y^j\in\Y$ by
\[
y^j_{(h,r)}=
\begin{cases}
c_{hj},&r=j\\
0,&r\ne j
\end{cases}
,\quad
y^j_{\kappa_r}=
\begin{cases}
\widetilde c_j,&r=j\\
0,&r\ne j
\end{cases}
,\quad
y^j_{\rho_h}=0.
\]
Thus $x^i$ allocates exactly the threshold amounts along the complete $i$-th row and on $\rho_i$, while $y^j$ does so along the complete $j$-th column and on $\kappa_j$. It is easy to check that each allocation uses exactly $\mathcal B$ soldiers. Next, we define the set of represented allocation strategies induced by the source game: 
\[
\X^\star=\{x^0,\ldots,x^{d-1}\},
\qquad
\Y^\star=\{y^0,\ldots,y^{d-1}\}.
\]
For an arbitrary allocation $x\in\X$, we define
\begin{equation}\label{eq:s-theta-one}
\Theta_1(x)=
\sum_{i,j}c_{ij}\one\{x_{(i,j)}\ge c_{ij}\}
+\sum_i\bar c_i\one\{x_{\rho_i}\ge\bar c_i\},
\end{equation}
and, similarly we define $\Theta_2(y)$ for $y\in\Y$.
% \[
% \Theta_2(y)=
% \sum_{i,j}c_{ij}\one\{y_{(i,j)}\ge c_{ij}\}
% +\sum_j\widetilde c_j\one\{y_{\kappa_j}\ge\widetilde c_j\}.
% \]
These quantities sum the fixed threshold values reached by the allocation: allocating more than a threshold still contributes only the threshold itself.
The next lemma shows that the represented allocations are exactly those attaining the maximum possible value of $\Theta_p$.
% Next, we prove an important lemma demonstrating that the represented allocations are dominating strategies with respect to $\Theta_1$ and $\Theta_2$.

\begin{lemma}[Characterization of represented allocations]
\label{lem:code-main}
For every $x\in\X$ and $y\in\Y$,
it holds that
$\Theta_1(x)\le\mathcal B$ and
$\Theta_2(y)\le\mathcal B.$
Moreover,
\[
\Theta_1(x)=\mathcal B
\Longleftrightarrow
x\in\X^\star,\quad
\Theta_2(y)=\mathcal B
\Longleftrightarrow
y\in\Y^\star.
\]
Therefore, every nonrepresented allocation has threshold total at most $\mathcal B-1$.
% The common budget satisfies $\mathcal B=\mathcal{O}(d^{12})$.
\end{lemma}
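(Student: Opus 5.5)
The plan is to prove the bound by a budget argument, then characterize equality using the two uniqueness statements of Lemma~\ref{lem:threshold-sums-main}. We treat player $1$; player $2$ is symmetric, with the roles of rows and columns and of $\rho$ and $\kappa$ exchanged.

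\textbf{Upper bound.} Every term $c\,\one\{x_f\ge c\}$ in \eqref{eq:s-theta-one} is at most $x_f$, since it is $0$ when the threshold is not reached and equals $c\le x_f$ when it is. Summing over all battlefields gives $\Theta_1(x)\le\sum_f x_f=\mathcal B$. Moreover, equality forces two things:
\begin{itemize}
\item every battlefield with $x_f>0$ has its threshold reached with $x_f$ equal to that threshold exactly;
\item nothing is placed on the $\kappa_j$ battlefields, which do not contribute to $\Theta_1$.
\end{itemize}

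\textbf{Equality holds on $\X^\star$.} For $x^i$ the reached thresholds are exactly the row $\{i\}\times[d]_0$ and $\rho_i$. The total is therefore $R_i+\bar c_i=\mathcal B$.

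\textbf{Equality implies $x\in\X^\star$.} Let $S$ be the set of pair battlefields whose threshold is reached, and $T$ the set of indices $h$ with $\rho_h$ reached. By the equality conditions above, $\sum_{(h,r)\in S}c_{hr}+\sum_{h\in T}(\mathcal B-R_h)=\mathcal B$. The plan is a case analysis on $|T|$, using that $0<R_h\le N$ and $\mathcal B=2N+1$.
\begin{itemize}
\item If $|T|\ge2$, the $\rho$ terms alone sum to at least $2(\mathcal B-N)=2N+2>\mathcal B$, which is impossible.
\item If $T=\varnothing$, then $\sum_S c_{hr}=\mathcal B=2N+1$. One has to rule this out. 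The pair thresholds are all even by \eqref{eq:pair-threshold-main}, while $\mathcal B$ is odd, so this case cannot occur. The factor $2$ in the definition of $c_{ij}$ was presumably included for exactly this reason.
\item If $T=\{i\}$, then $\sum_S c_{hr}=R_i$. Lemma~\ref{lem:threshold-sums-main} gives $S=\{i\}\times[d]_0$. Combined with exactness of the allocations, this yields $x=x^i$.
\end{itemize}

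For player $2$ the same argument uses the $\kappa$ battlefields and the column statement of the lemma. The final claim follows by integrality: a nonrepresented allocation has $\Theta_1(x)\le\mathcal B$ with strict inequality, hence $\Theta_1(x)\le\mathcal B-1$.

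\textbf{Main obstacle.} The only delicate step is excluding the case $T=\varnothing$. I expect the parity argument to settle it. If it did not, the fallback would be a size comparison, but $\mathcal B>N\ge$ any single row sum does not by itself prevent larger subsets from summing to $\mathcal B$, so parity is the intended route.
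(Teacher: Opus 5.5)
Your proposal is correct and follows essentially the same route as the paper's proof: the termwise bound $c\,\one\{x_f\ge c\}\le x_f$, parity (even pair thresholds versus odd $\mathcal B$) to force at least one balancing threshold to be reached, the half-budget bound $\bar c_i\ge N+1>\mathcal B/2$ to forbid two, Lemma~\ref{lem:threshold-sums-main} to identify the row, and tightness of the budget chain to pin down every coordinate. The only cosmetic difference is that you organize these steps as a case analysis on $|T|$, whereas the paper argues ``at least one, then at most one''.
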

% \begin{proof}
% For every counted battlefield, its threshold multiplied by the indicator that it is reached is at most the actual allocation on that battlefield. Therefore $\Theta_1(x)\le\sum_bx_b=\mathcal B$, and symmetrically for player $2$.

% Every $c_{ij}$ is even, whereas $\mathcal B$, $\bar c_i$, and $\widetilde c_j$ are odd. Also, each $\bar c_i$ and $\widetilde c_j$ exceeds $\mathcal B/2$. If $\Theta_1(x)=\mathcal B$, parity implies that $x$ reaches at least one threshold $\bar c_i$, and the half-budget inequality implies that it reaches exactly one. The reached pair-indexed thresholds then sum to $R_i$, so Lemma~\ref{lem:threshold-sums-main} forces the complete $i$-th row. Equality in $\Theta_1(x)\le\sum_bx_b$ further implies that every reached coordinate equals its threshold and every other coordinate is zero. Hence $x=x^i$. The converse follows directly from the definition of $x^i$. The player-$2$ argument is symmetric. Integrality gives the $\mathcal B-1$ bound. The polynomial bound follows by summing $d$ thresholds of order $\mathcal{O}(d^{11})$.
% \end{proof}

\begin{proof}
We prove the statements for player $1$. For every $(i,j)$,
\[
c_{ij}\one\{x_{(i,j)}\ge c_{ij}\}\le x_{(i,j)},
\]
and, for every $i$,
\[
\bar c_i\one\{x_{\rho_i}\ge\bar c_i\}\le x_{\rho_i}.
\]
Summing these inequalities and then including the nonnegative coordinates $x_{\kappa_j}$ gives
\begin{equation}
\Theta_1(x)
\le
\sum_{i,j}x_{(i,j)}+\sum_i x_{\rho_i}
\le
\sum_{b\in\mathcal I}x_b
=\calB.
\label{eq:s-theta-budget-chain}
\end{equation}
This proves the first desired inequality.

Every $c_{ij}$ is even by \eqref{eq:pair-threshold-main}; hence each $R_i$ is even. Hence by definition $\calB$ and every $\bar c_i$ are odd. Furthermore,
\begin{equation}
\bar c_i=2N+1-R_i\ge N+1>\frac{2N+1}{2}=\frac{\calB}{2}.
\label{eq:s-balancing-half-budget}
\end{equation}
Assume that $\Theta_1(x)=\calB$. The pair-indexed terms in \eqref{eq:s-theta-one} are even, while $\calB$ is odd. 
Therefore $x$ reaches at least one threshold $\bar c_i$. 
Hence \eqref{eq:s-balancing-half-budget} and the budget constraint imply that $x$ reaches at most one such threshold. Thus, there is a unique $i\in[d]_0$ such that $x_{\rho_i}\ge\bar c_i$.

Let
\[
S(x)=\{(h,r)\in[d]_0\times[d]_0:x_{(h,r)}\ge c_{hr}\}.
\]
The pair-indexed contribution to $\Theta_1(x)$ is
\[
\sum_{(h,r)\in S(x)}c_{hr}
=\calB-\bar c_i
=R_i.
\]
Lemma~\ref{lem:threshold-sums-main} yields $S(x)=\{i\}\times[d]_0$.

Since both ends of \eqref{eq:s-theta-budget-chain} equal $\calB$, equality holds in every coordinate-wise inequality used to obtain it. Thus:
\begin{enumerate}[label=(\alph*)]
\item $x_{(i,j)}=c_{ij}$ for every $j\in[d]_0$;
\item $x_{(h,r)}=0$ whenever $h\ne i$;
\item $x_{\rho_i}=\bar c_i$ and $x_{\rho_h}=0$ whenever $h\ne i$;
\item $x_{\kappa_j}=0$ for every $j\in[d]_0$.
\end{enumerate}
These are exactly the coordinates in \eqref{eq:s-row-allocation}, so $x=x^i\in\X^\star$.

Conversely, if $x=x^i$, then equations~\eqref{eq:s-row-allocation}, \eqref{eq:s-row-column-sums}, and~\eqref{eq:s-balancing-thresholds} give
\[
\Theta_1(x^i)=\sum_{j=0}^{d-1}c_{ij}+\bar c_i=R_i+\bar c_i=\calB.
\]
The proof for player $2$ is identical after exchanging $\rho_i,\bar c_i,R_i$ with $\kappa_j,\widetilde c_j,C_j$ and applying the column statement of Lemma~\ref{lem:threshold-sums-main}. Since all thresholds are integers, a threshold total strictly smaller than $\calB$ is at most $\calB-1$.
\end{proof}

\paragraph{Battlefield payoffs.}
It remains to encode the source payoffs without weakening the preceding allocation-forcing property. For player $1$, define
$w^1_{ij}=1-A_{ij}$,
$W_i^1=\sum_jw^1_{ij}$,
$L_1=\max_iW_i^1$, and 
$\lambda_i=L_1-W_i^1$. 

Similarly for player 2:
$w^2_{ij}=1-B_{ij}$,
$W_j^2=\sum_iw^2_{ij}$,
$L_2=\max_jW_j^2$, and 
$\mu_j=L_2-W_j^2$.

The coefficients $\lambda_i$ and $\mu_j$ equalize these sums across all represented rows and columns. Define
\[
Q_1=\sum_{i,j}w^1_{ij}+\sum_i\lambda_i=dL_1,
\]
and define $Q_2=dL_2$ analogously for player $2$. Finally, set
\[
H=1+\max\{Q_1,Q_2\}.
\]

For the single-threshold construction (Definition \ref{def:single-threshold}), player $1$'s nonzero battlefield utilities are
\begin{align}
 u_{ij}^1(s,t)
 &=\one\{s\ge c_{ij}\}
 \bigl(Hc_{ij}+w^1_{ij}\one\{t<c_{ij}\}\bigr),\notag\\
 u_{\rho_i}^1(s,t)
 &=\one\{s\ge\bar c_i\}
 \bigl(H\bar c_i+\lambda_i\one\{t<\bar c_i\}\bigr),
\label{eq:threshold-utilities-main}
\end{align}
with zero utility on every $\kappa_j$; player $2$ is defined symmetrically using $w^2_{ij},\widetilde c_j,\mu_j$.

\begin{remark}[Properties of the constructed battlefield payoffs]
\label{lem:shape-main}
\textit{Every battlefield payoff in \eqref{eq:threshold-utilities-main} is nonnegative, competitively monotone, and $2$-piecewise constant in either coordinate.}
\end{remark}
% \begin{proof}
% \end{proof}

The following lemma combines the two global properties needed for the reduction: pointwise exclusion of nonrepresented allocations and affine reproduction of the source game on represented profiles.

\begin{lemma}[Dominance and total utilities]
\label{lem:dom-main}
The following statements hold:
\begin{enumerate}
\item For every $i\in[d]_0$, every $x\in\X\setminus\X^\star$, and every $y\in\Y$,
\[
U_1(x^i,y)\ge U_1(x,y)+1.
\]
For every $j\in[d]_0$, every $y\in\Y\setminus\Y^\star$, and every $x\in\X$,
\[
U_2(x,y^j)\ge U_2(x,y)+1.
\]
\item For every $i,j\in[d]_0$,
\[
U_1(x^i,y^j)=H\mathcal B+L_1-1+A_{ij},
\]
\[
U_2(x^i,y^j)=H\mathcal B+L_2-1+B_{ij}.
\]
\end{enumerate}
\end{lemma}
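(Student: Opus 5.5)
Fix player $1$; the player-$2$ argument is identical after swapping rows with columns and $\rho_i,\bar c_i,\lambda_i$ with $\kappa_j,\widetilde c_j,\mu_j$. The plan is to split $U_1(x,y)$ into an activation part and a bonus part using \eqref{eq:threshold-utilities-main}:
\[
U_1(x,y)=H\,\Theta_1(x)+\Phi_1(x,y),
\]
where
\[
\Phi_1(x,y)=\sum_{i,j}w^1_{ij}\one\{x_{(i,j)}\ge c_{ij}\}\one\{y_{(i,j)}<c_{ij}\}+\sum_i\lambda_i\one\{x_{\rho_i}\ge\bar c_i\}\one\{y_{\rho_i}<\bar c_i\}.
\]
Since $w^1_{ij}=1-A_{ij}\in[0,1]$ and $\lambda_i\ge0$, every term is nonnegative. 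Hence
\[
0\le\Phi_1(x,y)\le\sum_{i,j}w^1_{ij}+\sum_i\lambda_i=Q_1\le H-1.
\]

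For part 1, apply Lemma~\ref{lem:code-main}. A represented allocation has $\Theta_1(x^i)=\mathcal B$, while any $x\notin\X^\star$ has $\Theta_1(x)\le\mathcal B-1$. Then
\[
U_1(x^i,y)\ge H\mathcal B\qquad\text{and}\qquad U_1(x,y)\le H(\mathcal B-1)+Q_1\le H\mathcal B-1,
\]
which gives the claimed gap of $1$. The only point to get right is that the bonus term is bounded by $Q_1$ uniformly in both $x$ and $y$; the nonnegativity and the sum bound above give this.

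For part 2, evaluate $\Phi_1(x^i,y^j)$ directly. The allocation $x^i$ reaches exactly the thresholds on battlefields $(i,r)$ for $r\in[d]_0$ and on $\rho_i$. The allocation $y^j$ puts nothing on $\rho_i$, so the $\lambda_i$ bonus is collected. On the pair battlefields $(i,r)$, $y^j$ is positive only when $r=j$, where it equals $c_{ij}$ and hence $t<c_{ij}$ fails. Thus the bonus is collected on $(i,r)$ for all $r\ne j$ and lost only at $(i,j)$. This gives
\[
\Phi_1(x^i,y^j)=W_i^1-w^1_{ij}+\lambda_i=L_1-1+A_{ij}.
\]
Combining with $\Theta_1(x^i)=\mathcal B$ yields $U_1(x^i,y^j)=H\mathcal B+L_1-1+A_{ij}$.

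The main obstacle is this intersection bookkeeping: confirming that the rival allocation blocks only the single shared battlefield $(i,j)$ and never an auxiliary $\rho$ or $\kappa$ battlefield. It is resolved by the disjoint supports of $x^i$ and $y^j$ outside $(i,j)$. The remaining steps are routine.
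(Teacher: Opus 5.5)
Your proof is correct and follows essentially the same route as the paper. Both split $U_1$ into the activation component $H\Theta_1$ plus a nonnegative source-payoff component bounded by $Q_1\le H-1$, invoke Lemma~\ref{lem:code-main} for the dominance gap, and observe that at $(x^i,y^j)$ the only forfeited bonus is $w^1_{ij}$ on the shared battlefield $(i,j)$. Your write-up is just slightly more explicit, for example in noting that $\lambda_i\ge 0$ and that $y^j$ places nothing on $\rho_i$.
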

\begin{proof}
Player $p$'s activation component equals $H\Theta_p$, while the remaining nonnegative source-payoff component is at most $Q_p$. By Lemma~\ref{lem:code-main}, a represented allocation has activation component $H\mathcal B$, whereas a nonrepresented allocation has total payoff at most $H(\mathcal B-1)+Q_p\le H\mathcal B-1$. This proves pointwise dominance.

At $(x^i,y^j)$, both players allocate exactly $c_{ij}$ to the unique intersection battlefield $(i,j)$, so neither source-payoff coefficient $w^p_{ij}$ is received there. Player $1$ receives all coefficients on the remaining battlefields of row $i$ and the coefficient $\lambda_i$ on $\rho_i$. Their sum is
\[
\sum_{r\ne j}w^1_{ir}+\lambda_i
=W_i^1-w^1_{ij}+L_1-W_i^1
=L_1-1+A_{ij}.
\]
Adding the activation component proves the player-$1$ identity. The player-$2$ identity is symmetric.
\end{proof}

The next key step transfers approximate NE. 
Pointwise dominance controls the probability assigned to nonrepresented allocations, while conditioning on the represented allocations changes the payoff of any represented strategy by at most $Q_p$ times the mass assigned to nonrepresented allocations.

% Finally, we prove the following key proposition after inverse-polynomial rescaling of the accuracy.
\begin{definitionbox}
\begin{proposition}
\label{lem:conditioning-main}
Let $Z_p=H\mathcal B+Q_p$ for $p\in\{1,2\}$, and normalize player $p$'s total utility by dividing it by $Z_p$. Let $(\sigma,\tau)$ be an $\eps$-Nash equilibrium of the normalized Colonel Blotto game, where
$\eps\le\frac{1}{2\max\{Z_1,Z_2\}}$.
Let $\delta_1=\sigma(\X\setminus\X^\star)$ and
$\delta_2=\tau(\Y\setminus\Y^\star)$.
Then
\[
\delta_1\le\eps Z_1\le 1/2,
\qquad
\delta_2\le\eps Z_2\le 1/2.
\]
Also, we define the distributions $p,q\in\Delta([d]_0)$ as
\[
p_i=\frac{\sigma(x^i)}{1-\delta_1},
\qquad
q_j=\frac{\tau(y^j)}{1-\delta_2}.
\] 
Then, for every $i,j\in[d]_0$,
\[
e_i^\top Aq-p^\top Aq
\le\eps Z_1+2Q_1\eps Z_2, 
\]
\[
p^\top Be_j-p^\top Bq
\le\eps Z_2+2Q_2\eps Z_1.
\]
\end{proposition}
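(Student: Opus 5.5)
The plan is to extract both conclusions from one fact: unilateral deviations to represented allocations give a bounded gain in the unnormalized game. An $\eps$-equilibrium of the normalized game is exactly one in which every unilateral deviation gains at most $\eps Z_p$ in player $p$'s unnormalized utility. I will work with unnormalized utilities throughout and treat player $1$; player $2$ is symmetric, exchanging the roles via the second halves of Lemma~\ref{lem:dom-main}.

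\textbf{Mass bound.} Let $\sigma'$ be obtained from $\sigma$ by moving all mass on $\X\setminus\X^\star$ onto a fixed represented allocation, say $x^0$. Part~1 of Lemma~\ref{lem:dom-main} gives $U_1(x^0,y)\ge U_1(x,y)+1$ pointwise in $y$ for every nonrepresented $x$. Averaging over $\tau$ yields $U_1(\sigma',\tau)-U_1(\sigma,\tau)\ge\delta_1$. The equilibrium condition then gives $\delta_1\le\eps Z_1$, and the hypothesis on $\eps$ gives $\eps Z_1\le 1/2$. The same argument bounds $\delta_2$.

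\textbf{Replacing $\sigma$ by its represented part.} Let $\bar p=\sum_h p_h x^h$ denote the conditioned mixture. Write $\sigma=(1-\delta_1)\bar p+\delta_1\sigma_{\mathrm{out}}$, where $\sigma_{\mathrm{out}}$ is supported on nonrepresented allocations. By pointwise dominance, each $x^h$ beats every nonrepresented allocation against every $y$, so $U_1(\sigma_{\mathrm{out}},\tau)\le U_1(\bar p,\tau)$ and therefore $U_1(\sigma,\tau)\le U_1(\bar p,\tau)$. Combining this with the equilibrium condition for the pure deviation $x^i$ gives
\[
D:=U_1(x^i,\tau)-U_1(\bar p,\tau)\le\eps Z_1 .
\]

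\textbf{Decomposing over $\tau$.} Write $\tau=(1-\delta_2)\bar q+\delta_2\tau_{\mathrm{out}}$ in the same way. Then $D=(1-\delta_2)D_{\mathrm{rep}}+\delta_2 D'$, where $D_{\mathrm{rep}}$ is the difference against $\bar q$ and $D'$ is the difference against $\tau_{\mathrm{out}}$. By part~2 of Lemma~\ref{lem:dom-main} and linearity, the additive constant $H\mathcal B+L_1-1$ cancels, so $D_{\mathrm{rep}}=e_i^\top Aq-p^\top Aq$.

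\textbf{Main obstacle: controlling the cross term.} Here one must avoid dividing by $1-\delta_2$, which would double the $\eps Z_1$ term. Instead I use the identity
\[
D_{\mathrm{rep}}=D+\delta_2\,(D_{\mathrm{rep}}-D').
\]
For every $y$, any represented allocation of player~1 has $\Theta_1=\mathcal B$. So both $U_1(x^i,y)$ and $U_1(\bar p,y)$ equal $H\mathcal B$ plus a source-payoff component lying in $[0,Q_1]$. Hence both $D_{\mathrm{rep}}$ and $D'$ lie in $[-Q_1,Q_1]$, and $|D_{\mathrm{rep}}-D'|\le 2Q_1$. Therefore
\[
e_i^\top Aq-p^\top Aq\le \eps Z_1+2Q_1\delta_2\le \eps Z_1+2Q_1\eps Z_2 .
\]
The player-$2$ bound follows identically, with $\delta_1\le\eps Z_1$ controlling the cross term.
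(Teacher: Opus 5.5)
Your proof is correct and takes essentially the same route as the paper. The steps match: the same mass-shifting deviation gives $\delta_p\le\eps Z_p$, pointwise dominance shows that conditioning your own strategy on the represented allocations cannot lower your payoff, and the range bound $[H\mathcal B,H\mathcal B+Q_1]$ pays $2Q_1\delta_2$ for replacing $\tau$ by its conditional. Your identity $D_{\mathrm{rep}}=D+\delta_2(D_{\mathrm{rep}}-D')$ packages the paper's two separate applications of $|U_1(z,\tau)-U_1(z,\widehat\tau)|\le\delta_2Q_1$ (to $z=\widehat\sigma$ and $z=x^i$) into one step, and the degenerate cases $\delta_p=0$, where $\sigma_{\mathrm{out}}$ or $\tau_{\mathrm{out}}$ is undefined, are trivial.
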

\end{definitionbox}

\begin{proof}
% The detailed proof appears in the supplementary material. The pointwise inequalities in Lemma~\ref{lem:dom-main} first imply $\delta_p\le\eps Z_p$. The chosen accuracy therefore ensures $\delta_p\le1/2$, so conditioning is always defined. Replacing each player's full strategy by its conditional distribution on the represented allocations cannot reduce that player's payoff. Replacing the opponent's strategy by its conditional distribution changes the payoff of a represented allocation by at most $Q_p\delta_{-p}$. Applying this bound to both sides of each approximate best-response inequality and then using the affine identities of Lemma~\ref{lem:dom-main} gives the displayed source-game inequalities.
Because $(\sigma,\tau)$ is an $\eps$-Nash equilibrium for the normalized payoffs, for every $\sigma'\in\Delta(\X)$ and $\tau'\in\Delta(\Y)$,
\begin{equation}
U_1(\sigma,\tau)\ge U_1(\sigma',\tau)-\eps Z_1,
\label{eq:s-unnormalized-row-ne}
\end{equation}
% and
\[
U_2(\sigma,\tau)\ge U_2(\sigma,\tau')-\eps Z_2.
\]

Fix $x^0\in\X^\star$. Define $\sigma'\in\Delta(\X)$ by moving all probability that $\sigma$ assigns to $\X\setminus\X^\star$ onto $x^0$ and leaving the probabilities of the represented allocations unchanged. Formally,
\[
\sigma'(x^0)=\sigma(x^0)+\delta_1,
\qquad
\sigma'(x^i)=\sigma(x^i)\quad(i\ne0),
\]
and $\sigma'(x)=0$ for every $x\notin\X^\star$. Lemma \ref{lem:dom-main} implies
\[
U_1(\sigma',\tau)-U_1(\sigma,\tau)\ge\delta_1.
\]
Equation \eqref{eq:s-unnormalized-row-ne} gives the reverse upper bound $\eps Z_1$, so $\delta_1\le\eps Z_1 \le 1/2$. The symmetric construction for player $2$ also gives $\delta_2\le\eps Z_2 \le 1/2$. 
It is easy to see that $p_i$ and $q_j$ are valid distributions over $[d]_0$.
% , because
% \[
% \sum_i p_i=\frac{\sum_i\sigma(x^i)}{1-\delta_1}=1,
% \qquad
% \sum_j q_j=\frac{\sum_j\tau(y^j)}{1-\delta_2}=1.
% \]

Let $\widehat\sigma\in\Delta(\X^\star)$ and $\widehat\tau\in\Delta(\Y^\star)$ be the represented Colonel Blotto strategies induced by $p$ and $q$; i.e., 
$\widehat\sigma(x^i)=p_i$ and
$\widehat\tau(y^j)=q_j.$

If $\delta_1=0$, then $\sigma=\widehat\sigma$ and
$U_1(\widehat\sigma,\tau)-U_1(\sigma,\tau)=0$.

If $\delta_1>0$, let $\sigma^{\mathrm{out}}$ be $\sigma$ conditioned on $\X\setminus\X^\star$. Then
\[
\sigma=(1-\delta_1)\widehat\sigma+\delta_1\sigma^{\mathrm{out}}.
\]
Lemma \ref{lem:dom-main}, by averaging over $x^i\sim\widehat\sigma$, $x\sim\sigma^{\mathrm{out}}$, and $y\sim\tau$, gives
\[
U_1(\widehat\sigma,\tau)\ge U_1(\sigma^{\mathrm{out}},\tau)+1,
\]
and hence
\[
U_1(\widehat\sigma,\tau)-U_1(\sigma,\tau)
=\delta_1\bigl(U_1(\widehat\sigma,\tau)-U_1(\sigma^{\mathrm{out}},\tau)\bigr)
\ge\delta_1.
\]
Thus, in both cases,
\begin{equation}
U_1(\widehat\sigma,\tau)\ge U_1(\sigma,\tau).
\label{eq:s-conditioning-own-strategy}
\end{equation}
Combining \eqref{eq:s-conditioning-own-strategy} with \eqref{eq:s-unnormalized-row-ne}, using the pure deviation $x^i$, yields
\begin{equation}
U_1(\widehat\sigma,\tau)
\ge U_1(x^i,\tau)-\eps Z_1
\quad\text{for every }i\in[d]_0.
\label{eq:s-represented-row-ne}
\end{equation}

For every strategy $z$ supported on $\X^\star$ and every opponent strategy $r\in\Delta(\Y)$, the represented allocation has $\Theta_1=\calB$ and its nonactivation component lies in $[0,Q_1]$. Hence
\begin{equation}
H\calB\le U_1(z,r)\le H\calB+Q_1.
\label{eq:s-represented-payoff-range}
\end{equation}
% If $\delta_2=0$, then $\tau=\widehat\tau$ and the left-hand side of equation~\eqref{eq:s-opponent-conditioning-loss} below is zero. 
Let $\tau^{\mathrm{out}}$ be $\tau$ conditioned on $\Y\setminus\Y^\star$, so
\[
\tau=(1-\delta_2)\widehat\tau+\delta_2\tau^{\mathrm{out}}.
\]
For every $z$ supported on $\X^\star$, \eqref{eq:s-represented-payoff-range} gives
\begin{equation}
\bigl|U_1(z,\tau)-U_1(z,\widehat\tau)\bigr|
\le\delta_2Q_1.
\label{eq:s-opponent-conditioning-loss}
\end{equation}
Indeed, this is immediate when $\delta_2=0$; when $\delta_2>0$, it follows from
\[
U_1(z,\tau)-U_1(z,\widehat\tau)
=\delta_2\bigl(U_1(z,\tau^{\mathrm{out}})-U_1(z,\widehat\tau)\bigr).
\]
We apply \eqref{eq:s-opponent-conditioning-loss} to $z=\widehat\sigma$ and $z=x^i$ in \eqref{eq:s-represented-row-ne}. Then
\begin{align}
U_1(\widehat\sigma,\widehat\tau)
&\ge U_1(\widehat\sigma,\tau)-\delta_2Q_1\nonumber \\
&\ge U_1(x^i,\tau)-\eps Z_1-\delta_2Q_1 \nonumber\\
&\ge U_1(x^i,\widehat\tau)-\eps Z_1-2\delta_2Q_1.\label{ineq:last}
\end{align}
Using Lemma \ref{lem:dom-main} and defining $K_1=H\mathcal B+L_1-1$, we get
\[
U_1(\widehat\sigma,\widehat\tau)=K_1+p^\top Aq,
\qquad
U_1(x^i,\widehat\tau)=K_1+e_i^\top Aq.
\]
Using \eqref{ineq:last}, cancelling $K_1$, and using that $\delta_2\le\eps Z_2$, we get the player-1 desired NE condition for the source game.
The player-$2$ proof is symmetric. 
\end{proof}

\begin{proof}[Proof of Theorem \ref{thm:single-threshold-main}]
Let $\eta=1/d$, the source accuracy in Theorem~\ref{thm:s-general-source}, and set
\[
D:=\max\{Z_1+2Q_1Z_2,\;Z_2+2Q_2Z_1\}.
\]
Request a normalized Colonel Blotto equilibrium with accuracy
\[
\varepsilon
=
\min\left\{
\frac{1}{2\max\{Z_1,Z_2\}},
\frac{\eta}{D}
\right\}.
\]
The construction has polynomial encoding length. Indeed,
$\gamma_{\max}=\mathcal{O}(d^4)$, $M=\mathcal{O}(d^5)$, $c_{ij}=\mathcal{O}(d^{11})$, and
$\mathcal B=\mathcal{O}(d^{12})$. Moreover, because the source payoffs lie in
$[0,1]$, we have $L_1,L_2\le d$, $Q_1,Q_2\le d^2$, and
$H\le d^2+1$. Consequently $Z_1,Z_2,D$ are polynomially bounded in
$d$, and the requested $\varepsilon$ is inverse polynomial with
polynomial bit complexity. Proposition~\ref{lem:conditioning-main}
then returns source strategies whose unilateral gains are at most
$\varepsilon D\le\eta=1/d$. This solves the \PPAD-hard source problem
of Theorem~\ref{thm:s-general-source}, completing the reduction.
\end{proof}

% \begin{remark}
% \textit{The quantities defined in the reduction satisfy
% $\gamma_{\max}=\mathcal{O}(d^4)$,
% $M=\mathcal{O}(d^5)$,
% $c_{ij}=\mathcal{O}(d^{11})$, and
% $\calB=\mathcal{O}(d^{12})$.
% Consequently, every threshold and budget has $\mathcal{O}(\log d)$ binary encoding length. Hence, the construction is polynomial time.}
% \end{remark}

% The final step of the proof requests a Colonel Blotto Nash equilibrium with $\eps=\min\left\{\frac{1}{2\max\{Z_1,Z_2\}},
% \frac{\eps_0}{D}
% \right\}$.
% \]

% \begin{proof}
% Let $\eps_0$ be the inverse-polynomial source tolerance from the sparse win--lose hardness theorem, and put
% \[
% D=\max\{Z_1+2Q_1Z_2,\ Z_2+2Q_2Z_1\}.
% \]
% We request a Colonel Blotto equilibrium with
% \[
% \eps=\min\left\{
% \frac{1}{2\max\{Z_1,Z_2\}},
% \frac{\eps_0}{D}
% \right\}.
% \]
% All thresholds, budgets, payoff values, normalization factors, and the reciprocal of the requested accuracy have polynomial encoding length. Lemma~\ref{lem:conditioning-main} produces well-defined source strategies $p,q$ whose two unilateral gains are at most $\eps D\le\eps_0$. Lemmas~\ref{lem:shape-main} and~\ref{lem:dom-main} establish the claimed utility classes and payoff embedding. Hence both constructions are polynomial approximation-preserving reductions.
% \end{proof}

% \paragraph{Comparison of the hardness statements.}

%% file: fixed_rank.tex
\section{Efficient Computation under Low Rank}
\label{sec:low-rank}

Unlike the hardness sections, no monotonicity or piecewise-constancy restriction is imposed in this section.

\subsection{Representing Colonel Blotto as a Layered Graph}

\paragraph{Layered Graph Representation.}

We use the layered allocation-flow representation of \cite{duels,behnezhad2017faster}.
Without loss of generality, assume that $B_1=B_2=B$.  For each player, define the reduced allocation DAG $G(k,B)=(V,E)$ by
\[
V=\{(0,0),(k,B)\}\cup\{(b,h):1\le b<k,\ 0\le h\le B\}.
\]
For each $b\in\{1,\ldots,k\}$, let an edge
\[
e:(b-1,h)\longrightarrow(b,h+t)
% \tag{C.1}
\]
be labeled $\lambda(e)=t$ whenever both endpoints occur in $V$. 
Thus an edge entering the sink must end at cumulative allocation $B$. Every edge lies on a source--sink path. For $k\ge2$, it holds that $|E|=\mathcal{O}(kB^2)$.

We orient the node--edge incidence matrix so an edge contributes $+1$ at its tail and $-1$ at its head. Delete the sink row and call the resulting full-row-rank matrix $D$. Let $d \in \mathbb{R}^{|V|}$ have value one at the source and zero on every retained non-source node. The unit-flow polytope is
\begin{equation}
F(k,B)=\{f\in\R_{\ge0}^{E}:Df=d\}.
\label{C.2}
\end{equation}
For player $p$, write $G_p,D_p,d_p,E_p,F_p$ for the objects obtained with budget $B_p$.

\begin{proposition}[Path--allocation correspondence]\label{prop:s-path}
Source--sink paths in $G(k,B)$ are in bijection with pure allocations $z\in\Z_{\ge0}^k$ satisfying $\sum_bz_b=B$. Every probability distribution over allocations induces a flow in $F(k,B)$, every flow is induced by a distribution over paths, and every rational flow has a rational decomposition using at most $|E|$ paths.
\end{proposition}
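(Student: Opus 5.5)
The plan is to prove the three claims of Proposition~\ref{prop:s-path} in turn: the path--allocation bijection, the fact that distributions over allocations induce flows, and the path-decomposition statement.

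\emph{Bijection.} Every edge of $G(k,B)$ goes from layer $b-1$ to layer $b$. Hence a source--sink path is a sequence of nodes $(0,h_0),(1,h_1),\ldots,(k,h_k)$ with $h_0=0$ and $h_k=B$. The increments $t_b=h_b-h_{b-1}$ are exactly the edge labels, and they are nonnegative integers, so we set $z_b=\lambda(e_b)=t_b$. The sum telescopes to $\sum_b z_b=h_k-h_0=B$. Conversely, given an allocation $z$, the partial sums $h_b=\sum_{b'\le b}z_{b'}$ lie in $\{0,\ldots,B\}$, and $h_k=B$. So each node $(b,h_b)$ lies in $V$ and each consecutive pair is an edge with label $z_b$. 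The two maps are mutually inverse, because a path is determined by its node sequence.

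\emph{Distributions induce flows.} For a path $P$, let $\chi_P\in\{0,1\}^E$ be its incidence vector. With the sign convention that an edge contributes $+1$ at its tail and $-1$ at its head, the vector $D\chi_P$ has $+1$ at the source and $0$ at every retained internal node, since each internal node on $P$ has one incoming and one outgoing edge. The sink row has been deleted, so $D\chi_P=d$ and $\chi_P\in F(k,B)$. A distribution $\mu$ over allocations maps to $\sum_P\mu(P)\chi_P$. This is a convex combination of points of the convex set $F(k,B)$, so it lies in $F(k,B)$.

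\emph{Decomposition.} For $f\in F(k,B)$, use the standard greedy flow decomposition. The graph is acyclic and layered, so flow conservation holds at every internal node. If $f\neq 0$, follow positive-flow edges forward from the source; conservation guarantees that this walk can always continue until it reaches the sink. Take the minimum flow value $\theta$ along the resulting path $P$ and subtract $\theta\chi_P$. The remainder is again a nonnegative conserving flow, now of value $1-\sum\theta$, and at least one additional edge has become zero. After at most $|E|$ iterations the flow vanishes, which gives $f=\sum_\ell\theta_\ell\chi_{P_\ell}$ with $\sum_\ell\theta_\ell=1$. Each $\theta_\ell$ is a difference of entries of $f$ and earlier $\theta$'s, so every $\theta_\ell$ is rational whenever $f$ is rational.

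The main obstacle is justifying that the greedy walk never gets stuck at a non-sink node. The argument is conservation plus positive value: an internal node entered with positive inflow must have positive outflow. The only edges entering the sink come from cumulative allocation $B$, and the reduced vertex set removes dead ends, since every edge lies on a source--sink path. These two facts let the walk terminate at the sink.
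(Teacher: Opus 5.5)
Your proposal is correct and follows the paper's proof essentially step by step: cumulative sums give the bijection, $D\chi_P=d$ together with convexity of $F(k,B)$ gives the induced flows, and greedy bottleneck subtraction gives the decomposition, with at least one edge zeroed per round (hence at most $|E|$ rounds) and rational weights summing to one. One point that both you and the paper leave implicit is that a nonzero residual still has positive value, so the walk can always leave the source; this follows from acyclicity, since a nonnegative $r$ with $Dr=0$ vanishes layer by layer.
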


\begin{proof}
    See Appendix \ref{app: 5.1}.
\end{proof}

\paragraph{Marginals and bilinear payoffs.}

Define the linear marginal operator
\begin{align*}
T_p:\R^{E_p}\to\R^{k(B_p+1)},\qquad
(T_pf)_{b,s}=\sum_{\substack{e\in E_p:\ e\text{ crosses layer }b\\\lambda(e)=s}}f_e.
\end{align*}

Equivalently, we can think of
$T_p$ as a matrix in $\mathbb{R}^{\,k(B_p+1)\times |E_p|}$, where each row corresponds to a pair \((b,s)\), and each column corresponds to an edge \(e\). 
The entries are defined by
\[
(T_p)_{(b,s),e}
=
\begin{cases}
1, & \text{if \(e\) crosses layer \(b\) and \(\lambda(e)=s\),}\\
0, & \text{otherwise.}
\end{cases}
\]

Therefore, if a path is sampled from a distribution inducing $f$, then $(T_pf)_{b,s}$ is the probability that its allocation on battlefield $b$ equals $s$.

For each battlefield \(b\), define the matrices
\[
S_b^1
:=
\bigl[u_b^1(s,t)\bigr]_
{\substack{s=0,\ldots,B_1\\ t=0,\ldots,B_2}},
\qquad
S_b^2
:=
\bigl[u_b^2(t,s)\bigr]_
{\substack{s=0,\ldots,B_1\\ t=0,\ldots,B_2}}.
\]
In both matrices, rows are indexed by player~1's allocation \(s\) and
columns by player~2's allocation \(t\).  Define the block-diagonal
matrices
\[
S^1:=\operatorname{diag}(S_1^1,\ldots,S_k^1),
\qquad
S^2:=\operatorname{diag}(S_1^2,\ldots,S_k^2).
\]
Thus $S^1,S^2
\in
\mathbb{R}^{\,k(B_1+1)\times k(B_2+1)}$.
The block-diagonal structure reflects the additivity of the Colonel
Blotto payoff across battlefields: only the two players' allocation
marginals on the same battlefield interact.

Since \(T_1f\) and \(T_2g\) are the vectors of battlefield-allocation
marginals induced by flows \(f\) and \(g\), respectively, player~1's
expected payoff can be written as function of $f$ and $g$ as follows
\[
U_1(f,g)=
(T_1f)^\top S^1(T_2g)
=
\sum_{b=1}^k
\sum_{s=0}^{B_1}
\sum_{t=0}^{B_2}
(T_1f)_{b,s}(T_2g)_{b,t}u_b^1(s,t).
\]
and similarly for $U_2(f,g)$.
Accordingly, define
\[
A_F:=T_1^\top S^1T_2,
\qquad
B_F:=T_1^\top S^2T_2.
\]
Then, we have that
\begin{equation}
U_1(f,g)=f^\top A_Fg,
\qquad
U_2(f,g)=f^\top B_Fg.
\label{eq: utility_rank}
\end{equation}
Hence a flow pair is an $\eps$-Nash equilibrium if and only if any inducing independent path distributions form an $\eps$-Nash equilibrium of the Colonel Blotto game.

\paragraph{Exact best responses.}

For fixed $g\in F_2$, define player $1$'s expected battlefield payoff
\begin{align*}
R_b^1(s;g)=\sum_{t=0}^{B_2}(T_2g)_{b,t}u_b^1(s,t).
\end{align*}

For every pure allocation $x$, $U_1(x,g)=\sum_bR_b^1(x_b;g)$. We define
\begin{align*} 
DP_1(0,0)=0,\qquad DP_1(0,h)=-\infty\quad(h>0),
\end{align*}

\begin{align*} 
DP_1(b,h)=\max_{0\le s\le h}
\{DP_1(b-1,h-s)+R_b^1(s;g)\}.
\end{align*}
The quantity \(DP_1(b,h)\) denotes the maximum expected payoff that player~1 can obtain by allocating exactly \(h\) soldiers among the first \(b\) battlefields.

We define player $2$'s recurrence symmetrically using
\[
R_b^2(t;f)=\sum_{s=0}^{B_1}(T_1f)_{b,s}u_b^2(t,s).
\]

\begin{proposition}[Exact deviation oracle]\label{prop:s-br}
The values $DP_1(k,B_1)$ and $DP_2(k,B_2)$ are the two best-response values, and optimizing allocations are obtained by backtracking. Forming both payoff arrays and running both dynamic programs uses
$\mathcal{O}(kB_1B_2+kB_1^2+kB_2^2)$ rational arithmetic operations.
\end{proposition}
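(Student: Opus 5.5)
The plan is to show two things in turn: that the DP values are exactly the best-response values, and that the stated operation count holds. I treat player $1$ against a fixed flow $g\in F_2$; player $2$ is symmetric, using $R_b^2(t;f)$.

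First, I reduce the expected payoff of a pure allocation to a sum of per-battlefield terms. By Proposition~\ref{prop:s-path}, $g$ is induced by a distribution over paths, i.e.\ over pure allocations of player $2$. For that distribution, $(T_2g)_{b,t}$ is the probability that player $2$ allocates $t$ on battlefield $b$. Additivity across battlefields and linearity of expectation then give
\[
U_1(x,g)=\sum_{b=1}^k R_b^1(x_b;g)
\]
for every pure allocation $x$. This agrees with the bilinear form \eqref{eq: utility_rank} evaluated at the path-incidence vector of $x$, because $T_1$ applied to that vector is the indicator of $(b,x_b)$ for each $b$.

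Second, I prove by induction on $b$ the following invariant: $DP_1(b,h)$ equals the maximum of $\sum_{b'\le b}R_{b'}^1(x_{b'};g)$ over nonnegative integer vectors $(x_1,\dots,x_b)$ with sum exactly $h$, and equals $-\infty$ when no such vector exists.
\begin{itemize}
\item The base case $b=0$ is the stated initialization: only the empty vector exists, and it has sum $0$.
\item For the inductive step, split on the last coordinate $s=x_b\in\{0,\dots,h\}$. The remaining prefix has sum $h-s$, and the objective separates as the prefix sum plus $R_b^1(s;g)$. Maximizing over the prefix first and then over $s$ gives exactly the recurrence.
\end{itemize}
At $b=k$ and $h=B_1$, the feasible vectors are precisely $\mathcal X$. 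Hence $DP_1(k,B_1)=\max_{x\in\mathcal X}U_1(x,g)$. Since $U_1(\cdot,g)$ is linear in the mixed strategy, this is also the maximum over all mixed deviations. Storing an argmax $s$ at each cell and backtracking from $(k,B_1)$ recovers an optimal allocation; this amounts to a longest-path computation on $G_1$ with edge weights $R_b^1(\lambda(e);g)$.

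Third, I count operations.
\begin{itemize}
\item Computing the marginals $T_2g$ takes $\mathcal O(|E_2|)=\mathcal O(kB_2^2)$ additions, since each edge contributes to exactly one entry.
\item Each $R_b^1(s;g)$ is an inner product of length $B_2+1$, so all $k(B_1+1)$ of them cost $\mathcal O(kB_1B_2)$.
\item The DP has $k(B_1+1)$ cells, each a maximum over at most $B_1+1$ candidates, for a total of $\mathcal O(kB_1^2)$. Backtracking costs $\mathcal O(k)$.
\end{itemize}
The symmetric computation for player $2$ costs $\mathcal O(kB_1^2+kB_1B_2+kB_2^2)$, which gives the stated total.

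The only point that needs care is the identification of $(T_2g)_{b,t}$ with the opponent's marginal bid distribution. This holds for every flow, including those not given explicitly as mixtures. Proposition~\ref{prop:s-path} supplies it, because some decomposition into paths always exists and the marginals are the same for every decomposition. The rest is routine.
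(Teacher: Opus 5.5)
Your proof is correct and follows the paper's argument. You prove the recurrence by induction on $b$, splitting on the last battlefield's allocation, and then count $\mathcal{O}(kB_1B_2)$ operations for the expected payoff arrays plus $\mathcal{O}(kB_1^2)$ (resp.\ $\mathcal{O}(kB_2^2)$) for the dynamic programs. You are more explicit than the paper's sketch in three places: you justify $U_1(x,g)=\sum_b R_b^1(x_b;g)$ through the path decomposition, you pass from pure to mixed deviations by linearity, and you charge the cost of forming the marginals $T_1f,T_2g$, which the paper leaves implicit but which fits within the stated bound.
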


\begin{proof}
    See Appendix \ref{app: 5.2}.
\end{proof}

\subsection{Intrinsic social interaction rank}\label{app:rank}

Let
$S=S^1+S^2$
be the block-diagonal social-payoff matrix. For a pure allocation strategy $x$ of player $1$, let $\chi_1(x)\in\{0,1\}^{k(B_1+1)}$ have entry one at
$(b,x_b)$ for every battlefield and otherwise zero. For a pure allocation strategy $y$ of player $2$, we define $\chi_2(y)$ analogously. Then, the social-payoff is defined as
\begin{align}
C_{xy}:=U_1(x,y)+U_2(x,y)
=\chi_1(x)^\top S\chi_2(y).
\label{D.1}
\end{align}

\paragraph{Feasible marginal directions.}

Let $\mathcal M_p=T_p(F_p)$ be player $p$'s marginal polytope. So, $\mathcal{M}_p$ is simply the set of all marginal vectors that can actually arise from feasible mixed Blotto strategies.
Let
$$ \mathcal{M}_p - \mathcal{M}_p := \{m-m' : m,m' \in \mathcal{M}_p \}. $$
Let
\[
\mathcal{L}_p=\operatorname{lin}(\mathcal{M}_p-\mathcal{M}_p)
:=
\left\{
\sum_{i=1}^r \alpha_i (m_i-m_i')
:
r\ge 1,\;
\alpha_i\in\mathbb{R},\;
m_i,m_i'\in\mathcal{M}_p
\right\}.
\]
be the vector space of all directions generated by moving between feasible marginal vectors.

The following lemma ensures that a rational basis of $\mathcal{L}_p$ can be computed efficiently from $D_p$ and $T_p$.

\begin{lemma}[Feasible flow and marginal directions]\label{lem: feasible_dir}
For every player \(p\),
\[
\operatorname{lin}(F_p-F_p)=\ker D_p,
\]
and consequently
\[
\mathcal L_p
=
\operatorname{lin}(\mathcal M_p-\mathcal M_p)
=
T_p(\ker D_p).
\]
In particular, a rational basis of \(\mathcal L_p\) can be computed in
polynomial time from \(D_p\) and \(T_p\).
\end{lemma}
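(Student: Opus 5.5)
The plan is to prove the flow statement $\operatorname{lin}(F_p-F_p)=\ker D_p$ first, and then push it through the linear map $T_p$.

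\emph{Inclusion $\operatorname{lin}(F_p-F_p)\subseteq\ker D_p$.} This direction is immediate. For $f,f'\in F_p$ we have $D_p(f-f')=d_p-d_p=0$, and $\ker D_p$ is a linear subspace, so it contains every linear combination of such differences.

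\emph{Reverse inclusion.} Here I would construct a strictly positive feasible flow $\bar f\in F_p$ with $\bar f_e>0$ for every $e\in E_p$. Section~\ref{sec:low-rank} notes that every edge of $G_p$ lies on a source--sink path, and by Proposition~\ref{prop:s-path} each such path gives a flow in $F_p$ (its incidence vector). Averaging these incidence vectors over all paths gives $\bar f$: it lies in $F_p$ by convexity and puts positive mass on every edge. Now take any $h\in\ker D_p$. For sufficiently small $\eta>0$ the vector $\bar f+\eta h$ is nonnegative, and it satisfies $D_p(\bar f+\eta h)=d_p$, so it lies in $F_p$. Then
\[
h=\eta^{-1}\bigl((\bar f+\eta h)-\bar f\bigr)\in\operatorname{lin}(F_p-F_p).
\]
This is the only step with real content, and the potential obstacle is making sure the polytope has full relative dimension in $\ker D_p$. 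The positive-flow argument settles that, given that every edge is usable, which the paper's edge definition guarantees.

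\emph{Transfer to marginals.} Since $\mathcal M_p=T_p(F_p)$ and $T_p$ is linear, $\mathcal M_p-\mathcal M_p=T_p(F_p-F_p)$. Linear maps commute with taking linear spans, so
\[
\mathcal L_p=T_p\bigl(\operatorname{lin}(F_p-F_p)\bigr)=T_p(\ker D_p).
\]

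\emph{Computing a basis.} $D_p$ and $T_p$ are explicit $0/\pm1$ matrices of size polynomial in $k$ and $B_p$, since $|E_p|=\mathcal O(kB_p^2)$. Gaussian elimination over $\mathbb Q$ gives a rational basis $h_1,\dots,h_\ell$ of $\ker D_p$ in polynomial time with polynomially bounded bit sizes. Applying $T_p$ to these vectors and extracting a maximal linearly independent subset of $\{T_ph_i\}$, again by Gaussian elimination, yields a rational basis of $\mathcal L_p$.
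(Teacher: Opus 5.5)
Your proposal is correct and follows the paper's proof essentially step for step: the trivial inclusion; a strictly positive feasible flow $\bar f$ that shows every $h\in\ker D_p$ is a scaled difference of two feasible flows; pushing spans through the linear map $T_p$; and two rounds of rational Gaussian elimination to get the basis. The only difference is cosmetic: you average over all source--sink paths, while the paper averages one chosen path $P_e$ per edge, which gives the explicit bound $\bar f_e\ge 1/|E_p|$. Since $\bar f$ is used only to show existence, either choice works.
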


\begin{proof}
    See Appendix \ref{app: 5.3}.
\end{proof}

% \begin{proof}
% Fix a player \(p\), and write \(G_p=(V_p,E_p)\) for the corresponding
% reduced allocation DAG, with source \(s_p\) and sink \(t_p\).
% Recall that
% \[
% F_p=\{f\in\mathbb{R}_{\ge 0}^{E_p}:D_pf=d_p\},
% \]
% where \(D_p\) is the node--edge incidence matrix with the sink row
% deleted.

% We first show that \(D_p\) has full row rank.  Suppose that
% \(a^\top D_p=0\), where \(a\in\mathbb{R}^{V_p\setminus\{t_p\}}\).
% Extend \(a\) to all vertices by setting \(a_{t_p}:=0\).
% For every directed edge \(e=(u,v)\in E_p\), the column of the full
% incidence matrix corresponding to \(e\) has a \(+1\) in row \(u\) and
% a \(-1\) in row \(v\).  Hence \(a^\top D_p=0\) implies
% \[
% a_u-a_v=0
% \]
% for every edge \(e=(u,v)\), where \(a_{t_p}=0\) is used when
% \(v=t_p\).  Every vertex of the reduced allocation DAG lies on a
% source--sink path and therefore has a directed path to \(t_p\).
% Applying the preceding equality successively along such a path gives
% \[
% a_v=a_{t_p}=0
% \]
% for every retained vertex \(v\).  Thus \(a=0\), so the rows of \(D_p\)
% are linearly independent.  Hence \(D_p\) has full row rank.

By Lemma \ref{lem: feasible_dir}, we have that $\mathcal L_p=T_p(\ker D_p)$.
and we can compute a rational basis matrix \(Z_p\) for \(\ker D_p\).  Then the
columns of \(T_pZ_p\) span \(\mathcal L_p\).  By rational Gaussian
elimination, extract a full-column-rank rational matrix \(L_p\) whose
columns form a basis of \(\mathcal L_p\).

% For a pure allocation \(x\in X\), define its marginal-incidence vector
% \[
% \chi_1(x)\in\{0,1\}^{k(B_1+1)}
% \]
% by
% \[
% \bigl(\chi_1(x)\bigr)_{b,s}
% :=
% \begin{cases}
% 1, & s=x_b,\\
% 0, & s\neq x_b,
% \end{cases}
% \qquad
% b\in[k],\quad s\in\{0,\ldots,B_1\}.
% \]
% Similarly, for \(y\in Y\), define
% \[
% \bigl(\chi_2(y)\bigr)_{b,t}
% :=
% \begin{cases}
% 1, & t=y_b,\\
% 0, & t\neq y_b.
% \end{cases}
% \qquad
% b\in[k],\quad t\in\{0,\ldots,B_2\}.
% \]
% Equivalently, if \(f^x\in F_1\) and \(g^y\in F_2\) are the unit path
% flows corresponding to \(x\) and \(y\), then
% \[
% \chi_1(x)=T_1f^x,
% \qquad
% \chi_2(y)=T_2g^y.
% \]
Choose arbitrary reference pure allocations \(x^0\in X\) and
\(y^0\in Y\), and define
\[
m_1^0:=\chi_1(x^0)\in\mathcal M_1,
\qquad
m_2^0:=\chi_2(y^0)\in\mathcal M_2.
\]
For \(p\in\{1,2\}\), let \(z\) denote a pure allocation of player \(p\),
and let \(m_p^0\) denote the corresponding reference marginal defined
above. Since
$\chi_p(z),\,m_p^0\in\mathcal M_p$,
we have
\[
\chi_p(z)-m_p^0
\in
\mathcal M_p-\mathcal M_p
\subseteq
\operatorname{lin}(\mathcal M_p-\mathcal M_p)
=
\mathcal L_p.
\]
Because the columns of \(L_p\) form a basis of \(\mathcal L_p\), there
exists a unique coordinate vector \(\theta_z^{(p)}\) such that
\[
\chi_p(z)-m_p^0=L_p\theta_z^{(p)}.
\]
Accordingly, writing
\[
\theta_x^{(1)}:=\xi_x,
\qquad
\theta_y^{(2)}:=\eta_y,
\]
every pair of pure allocation marginals admits the unique coordinate
representations
\begin{equation}
\chi_1(x)=m_1^0+L_1\xi_x,
\qquad
\chi_2(y)=m_2^0+L_2\eta_y.
\label{D.44}
\end{equation}

\paragraph{Rank-$r$ Colonel Blotto.}
The following definition extends the classical notion of rank-$r$ bimatrix games to Colonel Blotto by requiring the social payoff to admit a decomposition into battlefield-additive terms together with a bounded number of separable interactions between the players' aggregate battlefield features.

\begin{definitionbox}
\begin{definition}[Rank-$r$ Colonel Blotto]\label{def:s-rank-blotto}
A Colonel Blotto game has \emph{intrinsic social interaction rank}, or simpler is \textit{rank-$r$},
$r$ if $r$ is the least integer for which there exist functions
$\rho_{pb},\alpha_{\ell b},\beta_{\ell b}$ such that, for every
$(x,y)\in\X\times\Y$,
\begin{equation}
C_{xy}
=\sum_b\bigl(\rho_{1b}(x_b)+\rho_{2b}(y_b)\bigr)+\sum_{\ell=1}^r
\left(\sum_b\alpha_{\ell b}(x_b)\right)
\left(\sum_b\beta_{\ell b}(y_b)\right).
\label{D.55}
\end{equation}

\end{definition}
\end{definitionbox}

\paragraph{Normal-form characterization.}

\medskip
\noindent
Next, we define the matrix
\begin{equation}
K=L_1^\top SL_2 \in \mathbb{R}^{r_1 \times r_2},
\label{D.66}
\end{equation}
where $r_p = \rank(L_p)$ for $p\in \{1,2\}$. 
Thus, $K$ denotes the matrix representing the social-payoff interaction \(S\) restricted to feasible marginal directions.
Notably, changing bases replaces $K$ by $M_1^\top KM_2$ for invertible
matrices $M_1,M_2$, so $\rank(K)$ is basis independent.

For the chosen reference allocations, define the normal-form social-payoff matrix by
\begin{align}
\overline C_{xy}
=
C_{xy}-C_{xy^0}-C_{x^0y}+C_{x^0y^0}.
\label{D.77}
\end{align}
and let
\begin{equation}
\overline C
:=
\bigl[\overline C_{xy}\bigr]_{x\in X,\;y\in Y}
\in\mathbb{R}^{|X|\times |Y|}
\label{eq: C_stacked}
\end{equation}

The following lemma shows an interesting property regarding the rank of the above matrices.

\begin{lemma}\label{lem:s-double}
For every reference pair $(x^0,y^0)$,
\[
\rank(\overline C)=\rank(K).
\]
\end{lemma}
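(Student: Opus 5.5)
The plan is to factor $\overline C$ through $K$ and then show that both outer factors have full rank.

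First, by \eqref{D.1} and the coordinate representation \eqref{D.44}, every pure pair satisfies
\[
C_{xy}=(m_1^0+L_1\xi_x)^\top S(m_2^0+L_2\eta_y).
\]
Expanding this product and substituting into \eqref{D.77}, the terms $m_1^{0\top}Sm_2^0$, $m_1^{0\top}SL_2\eta_y$ and $\xi_x^\top L_1^\top Sm_2^0$ cancel. For the cancellation we use $\xi_{x^0}=0$ and $\eta_{y^0}=0$, which follow from uniqueness of the coordinates in \eqref{D.44} together with $\chi_1(x^0)=m_1^0$ and $\chi_2(y^0)=m_2^0$. This leaves
\[
\overline C_{xy}=\xi_x^\top K\eta_y .
\]
Stacking over all pure allocations, let $\Xi\in\R^{|X|\times r_1}$ have rows $\xi_x^\top$ and $H\in\R^{|Y|\times r_2}$ have rows $\eta_y^\top$. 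Then $\overline C=\Xi K H^\top$.

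Next, I will show that $\Xi$ has full column rank $r_1$, and likewise $H$ has rank $r_2$. Since the columns of $L_1$ are independent, $\rank\Xi=\rank(L_1\Xi^\top)$. The columns of $L_1\Xi^\top$ are the vectors $\chi_1(x)-m_1^0$, so it suffices to show that these vectors span $\mathcal L_1$. Every point of $\mathcal M_1=T_1(F_1)$ is a convex combination of pure marginals $\chi_1(x)$: by Proposition~\ref{prop:s-path} every flow decomposes into paths, each path corresponds to a pure allocation, and $T_1$ maps a path flow to $\chi_1(x)$. Hence any difference $m-m'$ with $m,m'\in\mathcal M_1$ equals $\sum_x(\alpha_x-\alpha'_x)(\chi_1(x)-m_1^0)$, because the weights $\alpha$ and $\alpha'$ each sum to one. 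Therefore $\mathcal L_1=\operatorname{span}\{\chi_1(x)-m_1^0\}$, and $\Xi$ has rank $r_1$.

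Finally, a matrix of full column rank has a left inverse, and a matrix of full row rank has a right inverse. Thus $\Xi^+\Xi=I$ and $H^\top(H^\top)^+=I$ for suitable pseudo-inverses, which gives
\[
K=\Xi^+\,\overline C\,(H^\top)^+ .
\]
Combined with $\overline C=\Xi KH^\top$, this yields $\rank K\le\rank\overline C\le\rank K$.

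The main obstacle is the spanning claim in the second step: that the pure marginals $\chi_1(x)$ affinely generate the full marginal space $\mathcal L_1$. It rests on the fact that every $m\in\mathcal M_1$ is a convex hull point of the pure marginals, via flow decomposition. The remaining steps are routine algebra, and the basis independence noted after \eqref{D.66} makes the result insensitive to the choice of $L_p$.
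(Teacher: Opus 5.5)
Your proposal is correct and follows essentially the same route as the paper. Both arguments factor $\overline C=\Xi K H^\top$ through the coordinates $\xi_x,\eta_y$. Both get full column rank of the stacked coordinate matrices from $\mathcal L_1=\operatorname{span}\{\chi_1(x)-m_1^0\}$, which follows from the convex-hull description of $\mathcal M_1$. Both then sandwich the ranks using one-sided inverses. Your extra details (the weights summing to one, and $\rank\Xi=\rank(L_1\Xi^\top)$ by injectivity of $L_1$) only spell out steps the paper states more briefly.
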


\begin{proof}
From \eqref{D.1}, we have that
\[
C_{xy}
=
\chi_1(x)^\top S\chi_2(y).
\]
By \eqref{D.44},
\[
\chi_1(x)=m_1^0+L_1\xi_x,
\qquad
\chi_2(y)=m_2^0+L_2\eta_y.
\]
Since \(m_1^0=\chi_1(x^0)\) and \(m_2^0=\chi_2(y^0)\), the uniqueness
of the representations in \eqref{D.44} and the full column rank of
\(L_1,L_2\) imply
\[
\xi_{x^0}=0,
\qquad
\eta_{y^0}=0.
\]
Therefore, by \eqref{D.77} we have that
\begin{align}
\overline C_{xy}
&=
(\chi_1(x)-m_1^0)^\top
S
(\chi_2(y)-m_2^0) \nonumber\\
&=
(L_1\xi_x)^\top S(L_2\eta_y) \nonumber \\
&=
\xi_x^\top L_1^\top S L_2\eta_y \nonumber \\
&=
\xi_x^\top K\eta_y.
\label{D.88} 
\end{align}

% Let \(X_\xi\) be the matrix whose row indexed by \(x\in X\) is
% \(\xi_x^\top\), and let \(Y_\eta\) be the matrix whose row indexed by
% \(y\in Y\) is \(\eta_y^\top\).  Equation (D.88) gives
% \[
% \overline C_{xy}=X_\xi K Y_\eta^\top.
% \tag{D.99}
% \]

\noindent
Define
\[
X_\xi\in\mathbb R^{|X|\times r_1},
\qquad
Y_\eta\in\mathbb R^{|Y|\times r_2},
\]
by
\[
(X_\xi)_{x,i}:=(\xi_x)_i
\qquad
(x\in X,\ i\in[r_1]),
\]
and
\[
(Y_\eta)_{y,j}:=(\eta_y)_j
\qquad
(y\in Y,\ j\in[r_2]).
\]
Equivalently, the row of \(X_\xi\) indexed by \(x\) is
\(\xi_x^\top\), while the row of \(Y_\eta\) indexed by \(y\) is
\(\eta_y^\top\).

Hence, for every \(x\in X\) and \(y\in Y\),
\begin{equation}
\bigl(X_\xi K Y_\eta^\top\bigr)_{x,y}
=
\xi_x^\top K\eta_y
=
\overline C_{xy}. \label{D.99}
\end{equation}
% Therefore,
% \[
% \overline C
% =
% X_\xi K Y_\eta^\top.
% \tag{D.99}
% \]

We next show that \(X_\xi\) and \(Y_\eta\) have full column rank.
Because every feasible flow is a convex combination of pure
source--sink path flows and \(T_p\) is linear, the marginal polytope
satisfies
\[
\mathcal M_1
=
\operatorname{conv}\{\chi_1(x):x\in X\}.
\]
Hence
\begin{equation}
\mathcal L_1
=
\operatorname{lin}(\mathcal M_1-\mathcal M_1)
=
\operatorname{span}\{
\chi_1(x)-m_1^0:x\in X
\}.
\label{D.100}
\end{equation}
Using that 
$\chi_1(x)-m_1^0=L_1\xi_x$
and the fact that the columns of \(L_1\) form a basis of
\(\mathcal L_1\), \eqref{D.100} implies
\[
\operatorname{span}\{\xi_x:x\in X\}
=
\mathbb R^{r_1}.
\]
Thus \(X_\xi\) has full column rank.  The same argument gives
\[
\operatorname{span}\{\eta_y:y\in Y\}
=
\mathbb R^{r_2},
\]
and therefore \(Y_\eta\) also has full column rank.

Since \(X_\xi\) has full column rank, there exists a left inverse
\(A\) such that
\[
AX_\xi=I.
\]
Since \(Y_\eta\) has full column rank, \(Y_\eta^\top\) has full row
rank, and therefore there exists a right inverse \(B\) such that
\[
Y_\eta^\top B=I.
\]
Applying these identities to \eqref{D.99} yields
\[
A\overline C B
=
A X_\xi K Y_\eta^\top B
=
K.
\]
Consequently,
\[
\rank(K)
=
\rank(A\overline C B)
\le
\rank(\overline C).
\]
On the other hand, \eqref{D.99} and the standard rank inequality for a matrix product give
\[
\rank(\overline C)
=
\rank(X_\xi K Y_\eta^\top)
\le
\rank(K).
\]
Combining the two inequalities proves
\[
\rank(\overline C)=\rank(K).
\]
\end{proof}

The next theorem demonstrates that we can efficiently compute the rank and the underlying decomposition of a $\rank$-$r$ Colonel Blotto game without enumerating over the exponentially large strategy spaces of the game.

\begin{definitionbox}
\begin{theorem}[Efficient rank and decomposition computation]
\label{thm:s-certificate}
Let $r=\rank(K)$, where $K$ is given in \eqref{D.66}. Then $r$ is the minimum number of product terms in
any identity of the form
\begin{align}
C_{xy}
=\sum_b\bigl(\rho_{1b}(x_b)+\rho_{2b}(y_b)\bigr)+\sum_{\ell=1}^q
\left(\sum_b\alpha_{\ell b}(x_b)\right)
\left(\sum_b\beta_{\ell b}(y_b)\right)
\label{D.8}
\end{align}
valid for all pure allocations $x \in \mathcal{X}$ and $y \in \mathcal{Y}$. Moreover, rational functions
realizing \eqref{D.8} with $q=r$ can be computed in polynomial time.
\end{theorem}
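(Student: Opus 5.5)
We prove two inequalities, a lower bound $q\ge r$ for every identity of the form \eqref{D.8} and an explicit construction with $q=r$, and then check that the construction is polynomial-time.

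\emph{Lower bound.} Suppose \eqref{D.8} holds with some $q$. Apply the double-difference operator of \eqref{D.77} to both sides. Every separable term $\sum_b\rho_{1b}(x_b)$ or $\sum_b\rho_{2b}(y_b)$ depends on only one player's allocation, so it cancels. Each product term becomes
\[
\Bigl(a_\ell(x)-a_\ell(x^0)\Bigr)\Bigl(b_\ell(y)-b_\ell(y^0)\Bigr),
\]
where $a_\ell(x)=\sum_b\alpha_{\ell b}(x_b)$ and $b_\ell(y)=\sum_b\beta_{\ell b}(y_b)$. Each such term is a rank-one $|X|\times|Y|$ matrix, so $\overline C$ is a sum of $q$ rank-one matrices and $\rank(\overline C)\le q$. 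By Lemma~\ref{lem:s-double}, $\rank(\overline C)=\rank(K)=r$, hence $q\ge r$. This step needs no algorithm and works even though the matrices are exponentially large.

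\emph{Construction with $q=r$.} Compute a rational basis $L_p$ via Lemma~\ref{lem: feasible_dir}, form $K=L_1^\top SL_2$, and take a rational rank factorization $K=\sum_{\ell=1}^r u_\ell v_\ell^\top$ by Gaussian elimination. The difficulty is that $\xi_x$ must be expressed battlefield-additively. Since $L_1$ has full column rank, fix a rational left inverse $L_1^+$ with $L_1^+L_1=I$. Then $\xi_x=L_1^+(\chi_1(x)-m_1^0)$, which is affine in $\chi_1(x)$, and $\chi_1(x)$ is a sum over $b$ of unit vectors $e_{(b,x_b)}$. Hence $u_\ell^\top\xi_x=\sum_b\alpha_{\ell b}(x_b)$ with
\[
\alpha_{\ell b}(s)=(L_1^{+\top}u_\ell)_{(b,s)}-\tfrac1k\,u_\ell^\top L_1^+m_1^0,
\]
and $\beta_{\ell b}$ is defined analogously from $v_\ell$, $L_2^+$ and $m_2^0$. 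By \eqref{D.88}, $\overline C_{xy}=\sum_\ell\bigl(\sum_b\alpha_{\ell b}(x_b)\bigr)\bigl(\sum_b\beta_{\ell b}(y_b)\bigr)$.

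It remains to recover $C_{xy}$. By \eqref{D.77},
\[
C_{xy}=\overline C_{xy}+C_{xy^0}+C_{x^0y}-C_{x^0y^0}.
\]
Here $C_{xy^0}=\chi_1(x)^\top S\chi_2(y^0)=\sum_b S_{b}(x_b,y^0_b)$ is battlefield-additive in $x$ by the block-diagonality of $S$, and likewise for $C_{x^0y}$. So set $\rho_{1b}(s)=S_b(s,y^0_b)$, set $\rho_{2b}(t)=S_b(x^0_b,t)$, and absorb the constant $-C_{x^0y^0}$ into $\rho_{11}$, say. This gives \eqref{D.8} with exactly $r$ product terms.

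\emph{Complexity.} All objects involved ($D_p$, $T_p$, $S$, $L_p$, $L_p^+$ and the factorization of $K$) have dimensions polynomial in $k,B_1,B_2$. They are produced by rational Gaussian elimination, so their bit sizes are polynomially bounded.

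The main obstacle is making $\xi_x$ additive across battlefields, since it is defined only implicitly through \eqref{D.44}. The fix is the observation that $L_1$ has full column rank, so any left inverse yields an exact linear formula for $\xi_x$ on every feasible marginal; no uniqueness issue arises because $\chi_1(x)-m_1^0$ lies in the column space of $L_1$.
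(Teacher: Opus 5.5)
Your proof is correct and follows the paper's argument. The lower bound (double difference, rank subadditivity, Lemma~\ref{lem:s-double}) is identical, and your $L_1^{+\top}u_\ell$ is simply a particular solution of the paper's lifting equation $L_1^\top a_\ell=p_\ell$. The only difference is bookkeeping: you center the product features at $(x^0,y^0)$ and read the separable terms $\rho_{1b}(s)=S_b(s,y^0_b)$ and $\rho_{2b}(t)=S_b(x^0_b,t)$ directly off the inverted double-difference identity. The paper instead keeps uncentered features and shows that the residual $R=S-\sum_\ell a_\ell b_\ell^\top$ is separable on feasible marginals, because $L_1^\top R L_2=0$.
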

\end{definitionbox}

\begin{proof}

We first prove that every representation of the form \eqref{D.8}
requires at least \(r\) product terms.  Suppose that \eqref{D.8}
holds with \(q\) product terms.  For \(\ell\in[q]\), define
\[
A_\ell(x):=\sum_b \alpha_{\ell b}(x_b),
\qquad
B_\ell(y):=\sum_b \beta_{\ell b}(y_b).
\]
Also write
\[
R_1(x):=\sum_b \rho_{1b}(x_b),
\qquad
R_2(y):=\sum_b \rho_{2b}(y_b).
\]
Thus
\[
C_{xy}
=
R_1(x)+R_2(y)
+
\sum_{\ell=1}^q A_\ell(x)B_\ell(y).
\]

Fix the reference pair \((x^0,y^0)\).  By \eqref{D.77}, we have that
\[
\overline C_{xy}
=
C_{xy}-C_{xy^0}-C_{x^0y}+C_{x^0y^0}.
\]
In the above quantity, using \eqref{D.8} the one-player terms cancel identically.  For every
\(\ell\in[q]\), the contribution of the \(\ell\)-th product term is
\[
\begin{aligned}
&A_\ell(x)B_\ell(y)
-A_\ell(x)B_\ell(y^0)
-A_\ell(x^0)B_\ell(y)
+A_\ell(x^0)B_\ell(y^0)
\\
&\qquad =
\bigl(A_\ell(x)-A_\ell(x^0)\bigr)
\bigl(B_\ell(y)-B_\ell(y^0)\bigr).
\end{aligned}
\]
Define vectors \(a^\ell\in\mathbb{R}^{|X|}\) and
\(b^\ell\in\mathbb{R}^{|Y|}\) by
\[
a^\ell_x:=A_\ell(x)-A_\ell(x^0),
\qquad
b^\ell_y:=B_\ell(y)-B_\ell(y^0).
\]
Then
\[
\overline C
=
\sum_{\ell=1}^q a^\ell (b^\ell)^\top .
\]
Each outer product \(a^\ell (b^\ell)^\top\) has rank at most one.
Hence, by subadditivity of matrix rank,
\[
\rank(\overline C)
\le
\sum_{\ell=1}^q
\rank\!\left(a^\ell(b^\ell)^\top\right)
\le q.
\]
By Lemma~\ref{lem:s-double},
\[
\rank(\overline C)=\rank(K)=r,
\]
and therefore
$q\ge r$.

\bigskip

We now prove that a representation with exactly \(r\) product terms
exists and can be computed in polynomial time.  Let
$n_p:=k(B_p+1)$.
% so that
% \[
% L_1\in\mathbb{Q}^{n_1\times r_1},
% \qquad
% L_2\in\mathbb{Q}^{n_2\times r_2},
% \qquad
% K=L_1^\top S L_2
% \in\mathbb{Q}^{r_1\times r_2}.
% \]
Since \(K\) is rational and has rank \(r\), exact rational Gaussian
elimination yields a rank factorization
\[
K=PQ^\top,
\qquad
P\in\mathbb{Q}^{r_1\times r},
\qquad
Q\in\mathbb{Q}^{r_2\times r}.
\tag{D.9}
\]
Writing \(p_\ell\) and \(q_\ell\) for the \(\ell\)-th columns of
\(P\) and \(Q\), respectively, this is equivalently
\[
K=\sum_{\ell=1}^r p_\ell q_\ell^\top.
\]

Because \(L_1\) has full column rank \(r_1\), the matrix
\(L_1^\top\) has full row rank \(r_1\).  Consequently the linear map
\[
L_1^\top:\mathbb{Q}^{n_1}\to\mathbb{Q}^{r_1}
\]
is surjective.  Hence, for every \(\ell\in[r]\), there exists
\(a_\ell\in\mathbb{Q}^{n_1}\) satisfying
\[
L_1^\top a_\ell=p_\ell.
\]
Likewise, since \(L_2^\top\) has full row rank, there exists
\(b_\ell\in\mathbb{Q}^{n_2}\) satisfying
\begin{equation}
L_2^\top b_\ell=q_\ell. \nonumber
% \label{D.10}
\end{equation}
These solutions can be computed by exact rational Gaussian
elimination.

Define
\begin{equation}
R
:=
S-\sum_{\ell=1}^r a_\ell b_\ell^\top.
\label{D.11}
\end{equation}
Then
\begin{align}
L_1^\top R L_2
&=
L_1^\top S L_2
-
\sum_{\ell=1}^r
(L_1^\top a_\ell)
(L_2^\top b_\ell)^\top \nonumber
\\
&=
K-\sum_{\ell=1}^r p_\ell q_\ell^\top
=0.
\label{D.12}
\end{align}

We next convert the rank-one matrices
\(a_\ell b_\ell^\top\) into the desired battlefield product terms.  Index the coordinates of \(a_\ell\) by pairs \((b,s)\), where
\(b\) is a battlefield and \(s\in\{0,\ldots,B_1\}\), and index the
coordinates of \(b_\ell\) analogously by \((b,t)\).  Define
\begin{equation}
\alpha_{\ell b}(s):=(a_\ell)_{b,s},
\qquad
\beta_{\ell b}(t):=(b_\ell)_{b,t}.
\label{D.13}
\end{equation}
Since the pure marginal vector \(\chi_1(x)\) has exactly one nonzero
entry in each battlefield block, namely at \((b,x_b)\), we have
\[
\chi_1(x)^\top a_\ell
=
\sum_b \alpha_{\ell b}(x_b).
\]
Similarly,
\[
b_\ell^\top\chi_2(y)
=
\sum_b \beta_{\ell b}(y_b).
\]
Consequently,
\begin{equation}
\chi_1(x)^\top a_\ell b_\ell^\top\chi_2(y)
=
\left(\sum_b\alpha_{\ell b}(x_b)\right)
\left(\sum_b\beta_{\ell b}(y_b)\right).
\label{D.14}
\end{equation}

It remains to show that the residual bilinear form induced by \(R\) does not contain any two-player interaction on feasible marginals.
Let
\[
u:=Rm_2^0\in\mathbb{Q}^{n_1},
\qquad
v:=R^\top m_1^0\in\mathbb{Q}^{n_2},
\qquad
c:=(m_1^0)^\top Rm_2^0.
\]
For fixed $s\in [B_1]$, define
\[
\rho_{1b}(s):=u_{b,s}.
\]
Choose an arbitrary fixed battlefield \(b^\star\) and for any fixed $t\in[B_2]$  define
\begin{equation}
\rho_{2b}(t)
:=
\begin{cases}
v_{b,t}, & b\neq b^\star,\\
v_{b^\star,t}-c, & b=b^\star.
\end{cases}
\label{D.15}
\end{equation}
Because \(\chi_1(x)\) contains exactly one unit entry in every
battlefield block,
\begin{equation}
\sum_b\rho_{1b}(x_b)
=
\chi_1(x)^\top Rm_2^0.
\label{D.16}
\end{equation}
Likewise, \(\chi_2(y)\) contains exactly one unit entry in the
\(b^\star\)-block, so subtracting \(c\) from every value in that block
subtracts exactly \(c\), independently of \(y\).  Hence
\begin{equation}
\sum_b\rho_{2b}(y_b)
=
(m_1^0)^\top R\chi_2(y)-c.
\label{D.17}
\end{equation}
Using the coordinate representations
\[
\chi_1(x)=m_1^0+L_1\xi_x,
\qquad
\chi_2(y)=m_2^0+L_2\eta_y,
\]
we obtain
\[
\begin{aligned}
\chi_1(x)^\top R\chi_2(y)
={}&
\chi_1(x)^\top Rm_2^0
+
(m_1^0)^\top R\chi_2(y)
-
(m_1^0)^\top Rm_2^0
+
\xi_x^\top L_1^\top R L_2\eta_y.
\end{aligned}
\]
By \eqref{D.12}, the final term vanishes.  
Combining this with
\eqref{D.16}--\eqref{D.17} yields
\begin{equation}
\chi_1(x)^\top R\chi_2(y)
=
\sum_b\rho_{1b}(x_b)
+
\sum_b\rho_{2b}(y_b).
\label{D.18}
\end{equation}
Finally, by \eqref{D.11},
\[
S
=
R+\sum_{\ell=1}^r a_\ell b_\ell^\top.
\]
Therefore, using
\[
C_{xy}=\chi_1(x)^\top S\chi_2(y),
\]
together with \eqref{D.14} and \eqref{D.18}, we obtain
\[
C_{xy}
=
\sum_b\bigl(\rho_{1b}(x_b)+\rho_{2b}(y_b)\bigr)
+
\sum_{\ell=1}^r
\left(\sum_b\alpha_{\ell b}(x_b)\right)
\left(\sum_b\beta_{\ell b}(y_b)\right).
\]
Thus \eqref{D.8} always holds with \(q=r\).

All quantities constructed above are rational.  The matrices
\(D_p,T_p,L_p,S\) have dimensions polynomial in
\(k,B_1,B_2\).  Rational nullspace computation, extraction of image
bases, exact rank factorization, and 
\eqref{D.100} are all computable by exact rational Gaussian
elimination in polynomial time, with polynomially bounded encoding
length.  Hence a minimum-size rational representation with exactly
\(r\) product terms can be computed in polynomial time.
\end{proof}

\subsection{Relation to fixed-rank bilinear games}
\label{subsec:bilinear-relation}

The compact flow game is a bilinear game in the sense of Garg, Jiang, and Mehta~\cite{garg2011bilinear}, but their parameter is the ambient social-payoff rank $\rank(A+B)$ of a chosen representation. Theorem~\ref{thm:s-certificate} instead computes the minimum interaction rank on the feasible Blotto marginal sets, after removing terms that depend on only one player's action. If
\[
U_1(f,g)+U_2(f,g)
=
\rho_1(f)+\rho_2(g)+\sum_{\ell=1}^r \alpha_\ell(f)\beta_\ell(g),
\]
then subtracting $\rho_2(g)$ from player~1 and $\rho_1(f)$ from player~2 preserves every unilateral payoff difference and produces a strategically equivalent bilinear game of social rank $r$. Thus the intrinsic rank is exactly the rank relevant to equilibrium incentives, rather than to an arbitrary ambient flow encoding.

There is a second, algorithmically important distinction. The Vavasis-based FPTAS of Garg et al. controls the sum of unilateral gains at the scale
\[
\eta\,x_{\max}\lVert A+B\rVert_\infty y_{\max}.
\]
Obtaining an ordinary absolute $\eps$-Nash guarantee by direct substitution therefore introduces the representation-dependent factor $x_{\max}\lVert A+B\rVert_\infty y_{\max}$ into the required accuracy, and hence into the running time. In our intrinsic reduction, normalized feasible payoffs do not by themselves control the coefficient norm of a particular rational low-rank representation, so a black-box invocation would require an additional conditioning bound.

Our equilibrium-gap formulation avoids this issue. It is built directly from the normalized Blotto payoffs and exact best-response certificates, and Lemma~\ref{lem:gap-newmain} proves $\Phi_{\min}=0$ and $\Phi_{\max}-\Phi_{\min}\le 2$. Applying Vavasis to $\Phi$ with accuracy $\eps/2$ therefore yields the desired absolute $\eps$-Nash equilibrium with running time polynomial in the compact input size and $1/\eps$ for every fixed intrinsic rank, with no multiplicative dependence on the scale of a chosen bilinear factorization. This is the reason we use a dedicated gap program rather than invoke the generic bilinear FPTAS as a black box.

\subsection{Computing an approximate equilibrium from the rank decomposition}
\label{subsec:rank-equilibrium}

The preceding subsections already provide the compact flow representation, exact best-response oracle, and a minimum-rank decomposition of the social payoff. We now use that decomposition to formulate equilibrium computation as a fixed-rank quadratic optimization problem. This is the part of the argument not covered by the preceding rank analysis.

Fix the minimum-rank decomposition from Theorem~\ref{thm:s-certificate}. We extend each battlefield-additive feature linearly from pure allocations to flows. For $f\in F_1$ and $g\in F_2$, define
\begin{align*}
\rho_1(f)&:=\sum_{b=1}^k\sum_{s=0}^{B_1}(T_1f)_{b,s}\rho_{1b}(s),
&\alpha_\ell(f)&:=\sum_{b=1}^k\sum_{s=0}^{B_1}(T_1f)_{b,s}\alpha_{\ell b}(s),\\
\rho_2(g)&:=\sum_{b=1}^k\sum_{t=0}^{B_2}(T_2g)_{b,t}\rho_{2b}(t),
&\beta_\ell(g)&:=\sum_{b=1}^k\sum_{t=0}^{B_2}(T_2g)_{b,t}\beta_{\ell b}(t).
\end{align*}
Since $T_pf$ gives the battlefield-allocation marginals of a path distribution inducing $f$, linearity of expectation extends the pure-strategy identity of Theorem~\ref{thm:s-certificate} to every flow pair:
\begin{equation}
U_1(f,g)+U_2(f,g)
=
\rho_1(f)+\rho_2(g)+\sum_{\ell=1}^{r}\alpha_\ell(f)\beta_\ell(g).
\label{eq:rank-flow-social-decomp}
\end{equation}

\paragraph{Bellman certificates for best responses.}
For a player-$1$ edge $e=(u,v)$ crossing battlefield $b$ with label $s$, and a player-$2$ edge $e=(u,v)$ crossing battlefield $b$ with label $t$, define
\begin{equation}
r_e^1(g)=\sum_{\tau=0}^{B_2}(T_2g)_{b,\tau}u_b^1(s,\tau),
\qquad
r_e^2(f)=\sum_{\sigma=0}^{B_1}(T_1f)_{b,\sigma}u_b^2(t,\sigma).
\label{eq:rank-edge-rewards}
\end{equation}
Thus the reward of a source--sink path is exactly the corresponding pure allocation's expected payoff against the opponent flow.

For each player $p$, introduce a potential $p_v^p$ at every node of her allocation DAG, set
\begin{equation}
p_{(0,0)}^p=0,
\qquad
w_p=p_{(k,B_p)}^p,
\label{eq:rank-pot-endpoints}
\end{equation}
and impose
\[
p_v^1-p_u^1\ge r_e^1(g)
\quad\text{for every player-$1$ edge }e=(u,v),
\]
\begin{equation}
p_v^2-p_u^2\ge r_e^2(f)
\quad\text{for every player-$2$ edge }e=(u,v).
\label{eq:rank-pot-p2}
\end{equation}

\begin{lemma}[Potential representation of best responses]
\label{lem:rank-pot}
Every feasible potential vector satisfies
\[
w_1\ge \BR_1(g),
\qquad
w_2\ge \BR_2(f).
\]
For every $(f,g)\in F_1\times F_2$, longest-path potentials satisfy \eqref{eq:rank-pot-endpoints}--\eqref{eq:rank-pot-p2} with
\[
w_1=\BR_1(g),
\qquad
w_2=\BR_2(f).
\]
\end{lemma}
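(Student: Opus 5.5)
The plan is the standard LP-duality/longest-path argument on the DAG, split into the two claims of Lemma~\ref{lem:rank-pot}.

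For the first claim, fix feasible potentials and any source--sink path $P$ in player $1$'s DAG. By Proposition~\ref{prop:s-path}, $P$ corresponds to a pure allocation $x$. Summing the edge constraints $p_v^1-p_u^1\ge r_e^1(g)$ along $P$ telescopes to
\[
w_1-p^1_{(0,0)}\ge\sum_{e\in P}r_e^1(g).
\]
By \eqref{eq:rank-edge-rewards}, each $r_e^1(g)$ equals $R_b^1(s;g)$ for the battlefield $b$ and label $s$ of $e$. The path crosses each battlefield exactly once, so the right-hand side equals $\sum_b R_b^1(x_b;g)=U_1(x,g)$. Using $p^1_{(0,0)}=0$ from \eqref{eq:rank-pot-endpoints} gives $w_1\ge U_1(x,g)$ for every pure $x$. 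Since $\BR_1(g)$ is the maximum of $U_1(\cdot,g)$ over pure allocations (it is a linear function on $F_1$, maximized at a path vertex), this yields $w_1\ge\BR_1(g)$. The same argument with $r_e^2(f)$ gives $w_2\ge\BR_2(f)$.

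For the second claim, define $p_v^1$ as the longest-path value from the source to $v$ under edge weights $r_e^1(g)$. This is well defined because the graph is acyclic and every node lies on a source--sink path, as noted for $G(k,B)$, so every node is reachable and no value is $-\infty$. Then:
\begin{itemize}
\item $p^1_{(0,0)}=0$, since the source has no incoming edges.
\item Each edge constraint holds by the Bellman optimality inequality $p_v^1\ge p_u^1+r_e^1(g)$: any path to $u$ extended by $e$ is a path to $v$.
\item The sink value $w_1=p^1_{(k,B_1)}$ is the maximum path reward, which is $\max_x U_1(x,g)=\BR_1(g)$.
\end{itemize}
Alternatively, one can note that $p^1_{(b,h)}$ coincides with $DP_1(b,h)$ from Proposition~\ref{prop:s-br} on retained nodes, and invoke that proposition for $DP_1(k,B_1)=\BR_1(g)$. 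Player $2$ is symmetric.

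The only step needing care is the identification of path reward with $U_1(x,g)$ in the reduced DAG $G(k,B)$. Intermediate nodes exist only for layers $1\le b<k$, and edges into the sink must end at cumulative allocation $B$. I will therefore check that every edge crossing layer $b$ carries exactly the label $x_b$ on the path of $x$, so that each battlefield reward is counted exactly once, including the last battlefield. Apart from this bookkeeping the argument is routine.
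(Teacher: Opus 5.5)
Your proposal is correct and takes essentially the same route as the paper: you telescope the edge inequalities along every source--sink path to get $w_p\ge\BR_p$, and you take longest-path values from the source (well defined by acyclicity) to obtain potentials whose sink value is the best-response value, as in Proposition~\ref{prop:s-br}. Your extra checks on the reduced DAG (every node is reachable, and each battlefield is crossed exactly once, including the last one) only spell out details the paper leaves implicit.
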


\begin{proof}
Summing the relevant edge inequalities along any source--sink path telescopes, so $w_p$ is at least the reward of that path. Maximizing over paths gives $w_p\ge\BR_p$. Conversely, assign to each node the maximum reward of a path from the source to that node. Because the allocation graph is acyclic, these longest-path values are well defined and satisfy every Bellman inequality; at the sink the potential equals the maximum source--sink path reward, which is exactly the best-response value by Proposition~\ref{prop:s-br}.
\end{proof}

Let
\[
H_p:=\sum_{b=1}^k\max_{s,t}|u_b^p(s,t)|.
\]
Every prefix of a source--sink path collects at most $H_p$ in absolute reward. Hence the exact longest-path potentials in Lemma~\ref{lem:rank-pot} satisfy $|p_v^p|\le H_p$ at every node. Imposing these bounds therefore makes the potential region bounded without excluding exact best-response certificates.

\paragraph{Maximum social payoff.}
Write
\[
c_b(s,t):=u_b^1(s,t)+u_b^2(t,s).
\]
The maximum pure-profile social payoff can be computed directly from the compact battlefield representation. Initialize
\[
M^+(0,0,0)=0
\]
and $M^+(0,h_1,h_2)=-\infty$ for every other pair $(h_1,h_2)$. For $b\ge1$, set
\[
M^+(b,h_1,h_2)=
\max_{\substack{0\le s\le h_1\\0\le t\le h_2}}
\left\{M^+(b-1,h_1-s,h_2-t)+c_b(s,t)\right\}.
\]
Then
\[
C_{\max}:=M^+(k,B_1,B_2)
=
\max_{(x,y)\in\X\times\Y}\{U_1(x,y)+U_2(x,y)\}.
\]
The recurrence uses $O(kB_1^2B_2^2)$ arithmetic operations. Since the game is normalized, $0\le C_{\max}\le2$.

\paragraph{A bounded equilibrium-gap program.}
Let $\mathcal P$ be the set of tuples
\[
(f,g,p^1,p^2,w_1,w_2)
\]
satisfying $f\in F_1$, $g\in F_2$, equations \eqref{eq:rank-pot-endpoints}--\eqref{eq:rank-pot-p2}, the bounds $|p_v^p|\le H_p$ for every player and node, and
\begin{equation}
w_1+w_2\le C_{\max}.
\label{eq:rank-social-cap}
\end{equation}
The flow polytopes are bounded, and all potential variables are explicitly bounded, so $\mathcal P$ is a bounded rational polytope. It is also nonempty. Indeed, take an exact Nash equilibrium of the finite allocation game, represent its mixed strategies by the induced flows, and equip them with exact longest-path potentials. At equilibrium,
\[
w_1+w_2
=U_1(f,g)+U_2(f,g)
\le C_{\max},
\]
where the last inequality follows because an expected social payoff cannot exceed the maximum social payoff of a pure profile.

Over $\mathcal P$, minimize
\begin{align}
\Phi(f,g,p^1,p^2,w_1,w_2)
&:=w_1+w_2-\rho_1(f)-\rho_2(g)
-\sum_{\ell=1}^{r}\alpha_\ell(f)\beta_\ell(g).
\label{eq:rank-objective-main}
\end{align}
By \eqref{eq:rank-flow-social-decomp}, this objective has the equivalent and useful form
\begin{equation}
\Phi
=w_1+w_2-U_1(f,g)-U_2(f,g).
\label{eq:rank-objective-gap}
\end{equation}

\begin{lemma}[Equilibrium-gap objective]
\label{lem:gap-newmain}
The minimum of $\Phi$ over $\mathcal P$ is zero, and every feasible point satisfies
\begin{equation}
0
\le
\BR_1(g)-U_1(f,g)+\BR_2(f)-U_2(f,g)
\le
\Phi
\le
2.
\label{eq:rank-gap-main}
\end{equation}
Moreover, $\nabla^2\Phi$ has rank at most $2r$ and at most $r$ negative eigenvalues.
\end{lemma}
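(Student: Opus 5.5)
Starting from the gap form \eqref{eq:rank-objective-gap}, I will prove the chain \eqref{eq:rank-gap-main} link by link. The inequalities are justified as follows.

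\emph{Left inequality.} Since $\BR_p$ is a maximum over pure allocations, and $U_p$ is linear in the player's own flow, which is a convex combination of path flows by Proposition~\ref{prop:s-path}, we get $\BR_1(g)\ge U_1(f,g)$ and $\BR_2(f)\ge U_2(f,g)$.

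\emph{Middle inequality.} By Lemma~\ref{lem:rank-pot}, every feasible potential vector has $w_1\ge\BR_1(g)$ and $w_2\ge\BR_2(f)$. Hence
\[
\Phi=w_1+w_2-U_1(f,g)-U_2(f,g)\ge \BR_1(g)-U_1(f,g)+\BR_2(f)-U_2(f,g).
\]

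\emph{Upper bound.} The cap \eqref{eq:rank-social-cap} gives $w_1+w_2\le C_{\max}\le 2$. Normalization gives $U_1,U_2\ge0$, so $\Phi\le 2$.

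For the minimum value, the lower bound $\Phi\ge0$ already holds on all of $\mathcal P$. Attainment uses the point built in the nonemptiness argument: an exact Nash equilibrium $(f,g)$ with exact longest-path potentials. Lemma~\ref{lem:rank-pot} gives $w_1=\BR_1(g)=U_1(f,g)$ and $w_2=\BR_2(f)=U_2(f,g)$. This point is feasible, since we have already checked the cap and the bounds $|p^p_v|\le H_p$. At it, $\Phi=0$, so $\min\Phi=0$.

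For the Hessian, I use the form \eqref{eq:rank-objective-main}. There $w_1$, $w_2$, $\rho_1(f)$ and $\rho_2(g)$ are linear, and the potentials do not appear at all. The only nonlinear part is $-\sum_{\ell=1}^r\alpha_\ell(f)\beta_\ell(g)$, where $\alpha_\ell(f)=a_\ell^\top T_1 f$ and $\beta_\ell(g)=b_\ell^\top T_2 g$ are linear functionals, with $a_\ell$ and $b_\ell$ the vectors from the proof of Theorem~\ref{thm:s-certificate}. Set $\hat a_\ell=T_1^\top a_\ell$ and $\hat b_\ell=T_2^\top b_\ell$. Then the Hessian in the $(f,g)$ block (zero elsewhere) is
\[
\nabla^2\Phi=-\sum_{\ell=1}^r\left(\begin{pmatrix}\hat a_\ell\\0\end{pmatrix}\begin{pmatrix}0\\ \hat b_\ell\end{pmatrix}^{\!\top}+\begin{pmatrix}0\\ \hat b_\ell\end{pmatrix}\begin{pmatrix}\hat a_\ell\\0\end{pmatrix}^{\!\top}\right).
\]
Each summand is a symmetric matrix of the form $uv^\top+vu^\top$, which has rank at most $2$, so $\nabla^2\Phi$ has rank at most $2r$.

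For the signature, I write
\[
uv^\top+vu^\top=\tfrac12\bigl[(u+v)(u+v)^\top-(u-v)(u-v)^\top\bigr].
\]
Summing over $\ell$ writes $\nabla^2\Phi$ as a sum of $r$ positive semidefinite rank-one terms, namely $\tfrac12(u_\ell-v_\ell)(u_\ell-v_\ell)^\top$ after accounting for the overall minus sign, plus $r$ negative semidefinite rank-one terms. The negative part therefore has rank at most $r$.

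The step most easily gotten wrong is converting this into "at most $r$ negative eigenvalues". I would argue by Courant--Fischer. The quadratic form is nonnegative on the orthogonal complement of the span of the $r$ vectors in the negative part, a subspace of codimension at most $r$. So any subspace on which the form is negative definite meets that complement trivially and has dimension at most $r$.
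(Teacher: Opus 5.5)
Your proof is correct. For the inequality chain, the bound $\Phi\le 2$, and the attainment of $\Phi_{\min}=0$ at an exact equilibrium with longest-path potentials, it follows the paper's argument step by step. You also spell out why $\BR_p\ge U_p$ via the path decomposition, which the paper leaves implicit.

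The only real divergence is the inertia count for the Hessian. The paper writes the Hessian as the bipartite block matrix $\begin{pmatrix}0&-A_\alpha B_\beta^\top\\-B_\beta A_\alpha^\top&0\end{pmatrix}$. It then uses the fact that the nonzero eigenvalues of such a matrix are exactly $\pm\sigma_i(A_\alpha B_\beta^\top)$. Since the cross block has rank at most $r$, this gives rank at most $2r$ and at most $r$ negative eigenvalues in one stroke.

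You instead decompose the Hessian into $r$ symmetric terms $u_\ell v_\ell^\top+v_\ell u_\ell^\top$ and bound the rank by subadditivity. You then polarize each term into a PSD and an NSD rank-one piece and bound the negative inertia by a Courant--Fischer (Sylvester inertia) argument.

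Your route is slightly longer but more general. It never uses that $u_\ell$ and $v_\ell$ are supported on complementary coordinate blocks, so it controls the signature of any symmetric sum of $r$ such terms. The paper's route exploits the bilinear $f$--$g$ structure to get the exact $\pm$ pairing of the spectrum directly.
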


\begin{proof}
Lemma~\ref{lem:rank-pot} gives $w_1\ge\BR_1(g)$ and $w_2\ge\BR_2(f)$. Using \eqref{eq:rank-objective-gap},
\begin{align*}
\Phi
&=(w_1-U_1(f,g))+(w_2-U_2(f,g))\\
&\ge
\BR_1(g)-U_1(f,g)+\BR_2(f)-U_2(f,g)
\ge0.
\end{align*}
At an exact Nash equilibrium, both unilateral gains vanish. Exact Bellman potentials therefore yield $w_p=\BR_p=U_p$ for both players and hence $\Phi=0$. As noted above, this tuple belongs to $\mathcal P$, so $\Phi_{\min}=0$.

For every feasible point, \eqref{eq:rank-social-cap} gives $w_1+w_2\le C_{\max}\le2$, whereas normalization gives $U_1(f,g)+U_2(f,g)\ge0$. Equation~\eqref{eq:rank-objective-gap} therefore gives $\Phi\le2$.

It remains to bound the quadratic rank. Let $A_\alpha$ and $B_\beta$ have as columns the flow coefficient vectors of $\alpha_\ell$ and $\beta_\ell$. The only quadratic part of $\Phi$ is
\[
-f^\top A_\alpha B_\beta^\top g.
\]
Up to the conventional factor of two, the symmetric Hessian on the variables $(f,g)$ is
\[
Q=
\begin{pmatrix}
0&-A_\alpha B_\beta^\top\\
-B_\beta A_\alpha^\top&0
\end{pmatrix}.
\]
The cross matrix $A_\alpha B_\beta^\top$ has rank at most $r$, so $Q$ has rank at most $2r$. Its nonzero eigenvalues occur in positive--negative pairs whose magnitudes are the singular values of $A_\alpha B_\beta^\top$. Hence $Q$, and therefore $\nabla^2\Phi$, has at most $r$ negative eigenvalues.
\end{proof}

We now invoke the fixed-rank quadratic-optimization approximation theorem of Vavasis, in the same range-relative form used in fixed-rank bilinear and bimatrix games~\cite{garg2011bilinear,kannan2010games}. Applied to our normalized gap objective, the theorem immediately converts the intrinsic-rank structure into an absolute approximation guarantee because $\Phi$ has range at most $2$; no representation-dependent bilinear scale enters the bound.

\begin{theorem}[Vavasis~\cite{Vavasis1992}]
\label{thm:rank-vavasis}
Fix $R$. Given a nonempty bounded rational polytope $P$ and a rational quadratic function $q$ whose Hessian has rank at most $R$, one can, for every rational $\delta>0$, compute $z^\delta\in P$ satisfying
\[
q(z^\delta)-q_{\min}
\le
\delta\bigl(q_{\max}-q_{\min}\bigr)
\]
in time polynomial in the binary input length and $1/\delta$, where the polynomial degree may depend on $R$.
\end{theorem}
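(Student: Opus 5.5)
The plan is to reduce the problem to a low-dimensional search over the image of $P$ under a linear map of rank at most $R$, and to solve an exact convex problem on each fiber of that map.

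\textbf{Low-rank decomposition.} First write $q(z)=\tfrac12 z^\top Hz+c^\top z+c_0$ with $\rank(H)\le R$. Using exact rational arithmetic, compute a factorization
\[
H=\sum_{i=1}^{R_+}u_iu_i^\top-\sum_{j=1}^{R_-}v_jv_j^\top,\qquad R_++R_-\le R.
\]
This factorization may involve square roots, so I would use a rational $LDL^\top$ decomposition restricted to the range of $H$, which keeps everything rational with polynomial bit length. Set $V=[v_1,\dots,v_{R_-}]$. For fixed $y=V^\top z$ the objective becomes
\[
\tfrac12\sum_i(u_i^\top z)^2+c^\top z-\tfrac12\|y\|^2,
\]
which is convex in $z$. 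Define
\[
\varphi(y)=\min\Bigl\{\tfrac12\sum_i(u_i^\top z)^2+c^\top z\;:\;z\in P,\ V^\top z=y\Bigr\}-\tfrac12\|y\|^2 .
\]
Each evaluation of $\varphi$ is a convex quadratic program, solvable exactly (or to any needed precision) in polynomial time. Moreover $q_{\min}=\min_{y\in V^\top P}\varphi(y)$, and the first term of $\varphi$ is convex in $y$.

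\textbf{Grid search.} Compute the projected polytope's bounding box: for each direction $v_j$, solve two LPs to find $[\ell_j,h_j]=\{v_j^\top z: z\in P\}$. Lay a grid of $O(1/\delta)$ points per coordinate, i.e.\ $O(\delta^{-R_-})\le O(\delta^{-R})$ points in total. For every grid point $y$ whose fiber is feasible (checked by an LP), solve the convex QP defining $\varphi(y)$. Return the best $z$ found. The running time is polynomial in the input length times $\delta^{-R}$, as required.

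\textbf{Error bound (main obstacle).} The hard step is controlling the error relative to $q_{\max}-q_{\min}$ rather than in absolute terms. Let $z^\ast$ be a minimizer and let $y'$ be a nearby grid point (or a point of the projection near it). Move from $z^\ast$ toward a point in the fiber of $y'$ along a segment inside $P$ of relative length at most $\delta$. The convex part along the segment can be bounded by convexity through interpolation with a far endpoint of $P$. The concave part changes by at most about $\delta$ times its total variation over $P$ along that direction.

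To turn these into a bound of the form $O(\delta)(q_{\max}-q_{\min})$, I would use the fact that $q$ restricted to any segment of $P$ is a one-variable quadratic. If its concave coefficient is $-a$ over a length-$\ell$ segment, its range there is at least $a\ell^2/8$. Applying this to the extremal segments realizing the width of $V^\top P$ in each coordinate bounds $\sum_j \mu_j w_j^2$ (with $w_j$ the widths) by $O(q_{\max}-q_{\min})$. The expected difficulty is that positive curvature might cancel negative curvature along such segments. If that happens, I would first affinely normalize $V^\top P$ by an approximate John ellipsoid, so that grid cells are comparable to the body's shape, and then run the segment argument along fiber-minimizing directions, where the convex term is already optimized. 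This is the step I expect to require the most care; the remaining steps are routine LP and convex QP subroutines.
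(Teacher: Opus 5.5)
The paper does not prove this statement. It quotes it from Vavasis and uses it as a black box, so the only question is whether your reconstruction stands on its own. The framework is sound: a convex QP on each fiber $V^\top z=y$, and enumeration of $\delta^{-O(R)}$ values of $y$. But the range-relative error bound, which is the entire content of the theorem, is not proved, and the route you sketch for it cannot work.

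The inequality you aim for, $\sum_j\mu_jw_j^2=O(q_{\max}-q_{\min})$, is false. Take $q(z)=\tfrac12z_1^2-\tfrac12z_2^2$ on $P=\{z\in[0,1]^2:z_1=z_2\}$. Then $q\equiv0$ on $P$, while $V^\top P=[0,1]$ has width $1$ and $\mu=1$. So no analysis that charges the grid error to $\mu_jw_j^2$ can give a multiplicative guarantee. Rounding $V^\top P$ does not rescue it either, since here $V^\top P$ is an interval and every fiber is a single point.

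Independently, the algorithm as written can fail, because an axis-aligned grid on the bounding box need not meet $V^\top P$ anywhere near $y^\ast$. For example, let $P\subset\mathbb{R}^3$ be the triangle with vertices $(0,\gamma,\gamma)$, $(\gamma,0,\gamma)$, $(1,1,2)$, and let $q(z)=-\tfrac12\|z-(1,1,2)\|^2$. Take $\gamma=1/p$ for a prime $p$ larger than the number $N$ of grid divisions per axis. Since $P$ lies in the plane $z_3=z_1+z_2$, the only grid point in $V^\top P=P$ is $(1,1,2)$, which is the maximizer of $q$.

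Your instinct to move from $z^\ast$ by a relative fraction of a segment is right. What is missing is an argument that never splits the curvature. For a minimizer $z^\ast$, any $p\in P$, $d=p-z^\ast$ and $\kappa=d^\top Hd$, the one-variable identity
\[
q\bigl((1-t)z^\ast+tp\bigr)-q(z^\ast)=t\bigl(q(p)-q(z^\ast)\bigr)-\tfrac{\kappa}{2}\,t(1-t)
\]
holds exactly. When $\kappa<0$, the midpoint of $[z^\ast,p]$ gives $-\kappa/8\le q_{\max}-q_{\min}$. Hence every point of $(1-t)z^\ast+tP$ is within $5t(q_{\max}-q_{\min})$ of $q_{\min}$. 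Bounding the convex part by the convexity inequality, as you propose, discards exactly the term $-\tfrac12t(1-t)\sum_i(u_i^\top d)^2$ that offsets the concave one.

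It therefore suffices, with $t=\delta/5$ and $K=V^\top P$, that the candidate set meets $(1-t)y^\ast+tK$ for every $y^\ast\in K$.
\begin{itemize}
\item Compute a rational affine map $T$ on the affine hull of $K$ with $\mathbb{B}\subseteq T(K)\subseteq\alpha_R\mathbb{B}$, where $\mathbb{B}$ is the unit ball and $\alpha_R$ depends only on $R$. In fixed dimension this needs polynomially many LPs over $P$, e.g.\ by sending a nearly maximal-volume simplex with vertices in $K$ to a fixed reference simplex.
\item Then $(1-t)T(y^\ast)+t\mathbb{B}$ is a ball of radius $t$ inside $T\bigl((1-t)y^\ast+tK\bigr)$.
\item So the pullback under $T$ of a grid of spacing $t/\sqrt R$ on $(\alpha_R+1)\mathbb{B}$ works, and it has $O_R(\delta^{-R})$ points.
\item Each grid point $y'$ landing in such a ball equals $V^\top\bigl((1-t)z^\ast+tp\bigr)$ for some $p\in P$, so $\varphi(y')\le q_{\min}+\delta(q_{\max}-q_{\min})$.
\end{itemize}
With these two ingredients your fiber-QP scheme proves the statement; without them it does not.
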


The range-relative form above is exactly what is needed here: Lemma~\ref{lem:gap-newmain} supplies a dimension-independent range bound $\Phi_{\max}-\Phi_{\min}\le2$.

\begin{statementbox}
\begin{theorem}[Efficient computation for fixed intrinsic rank]
\label{thm:rank-newmain}
For every fixed $r\ge0$ and every $\eps\in(0,1)$, an $\eps$-Nash equilibrium of a normalized rank-$r$ Colonel Blotto game with $k$ battlefields and budgets $B_1,B_2$ can be computed in time
\[
\poly(k,B_1,B_2)\,
\poly_r\!\left(1/\eps\right).
\]
Consequently, for every fixed $r$, the algorithm is polynomial in the game parameters and $1/\eps$.
\end{theorem}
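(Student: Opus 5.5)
The plan is to assemble the pieces already in place: the compact flow representation, the minimum-rank decomposition of Theorem~\ref{thm:s-certificate}, the bounded gap program $(\mathcal P,\Phi)$, and Vavasis' theorem. First, I would compute $L_1,L_2,K$ and a rank-$r$ decomposition \eqref{D.8} by exact rational linear algebra; by Theorem~\ref{thm:s-certificate} this takes $\poly(k,B_1,B_2)$ time. Next I would form the edge rewards \eqref{eq:rank-edge-rewards}, the potential constraints \eqref{eq:rank-pot-endpoints}--\eqref{eq:rank-pot-p2}, the bounds $|p_v^p|\le H_p$, and the social cap \eqref{eq:rank-social-cap} with $C_{\max}$ from the $M^+$ recurrence. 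One point needs checking here: the edge rewards are linear in the opponent flow, since $r_e^1(g)$ is linear in $T_2g$. Hence every constraint is linear in the joint variables $(f,g,p^1,p^2,w_1,w_2)$, so $\mathcal P$ is an explicit rational polytope with polynomially many variables and constraints. Its entries are polynomial-bit, because $H_p$ and the payoffs are input rationals. As argued before Lemma~\ref{lem:gap-newmain}, $\mathcal P$ is nonempty and bounded.

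Second, I would apply Theorem~\ref{thm:rank-vavasis} to $q=\Phi$ on $P=\mathcal P$ with $R=2r$ and $\delta=\eps/2$. This is legitimate because $\Phi$ is a rational quadratic whose Hessian has rank at most $2r$ by Lemma~\ref{lem:gap-newmain}. The output $z^\delta=(f,g,\dots)$ satisfies $\Phi(z^\delta)\le \Phi_{\min}+\delta(\Phi_{\max}-\Phi_{\min})\le 0+\tfrac{\eps}{2}\cdot 2=\eps$, using $\Phi_{\min}=0$ and $\Phi\le 2$ from Lemma~\ref{lem:gap-newmain}. By \eqref{eq:rank-gap-main}, the sum of unilateral gains $\BR_1(g)-U_1(f,g)+\BR_2(f)-U_2(f,g)$ is at most $\eps$. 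Each gain is nonnegative, so each is at most $\eps$, and $(f,g)$ is an $\eps$-Nash equilibrium of the flow game.

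Third, I would convert the flows back to mixed Blotto strategies. By Proposition~\ref{prop:s-path}, each rational flow decomposes into at most $|E_p|$ paths, and by \eqref{eq: utility_rank} the path distributions form an $\eps$-Nash equilibrium of the Blotto game. This conversion is polynomial in the bit length of $z^\delta$, which is polynomial in the input and $1/\eps$. For the running time, Vavasis gives time polynomial in the input length of $(\mathcal P,\Phi)$, which is $\poly(k,B_1,B_2)$, and in $1/\delta=2/\eps$, with degree depending on $r$. This yields the claimed $\poly(k,B_1,B_2)\poly_r(1/\eps)$ bound; the case $r=0$ is the degenerate linear-program case and is covered by the same bound.

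The main obstacle I expect is the encoding-size bookkeeping needed to justify that Vavasis runs in the stated time. Concretely, the coefficients of $\alpha_\ell,\beta_\ell,\rho_p$ produced by Gaussian elimination must have polynomial bit length so that $\Phi$ is a polynomial-size rational quadratic, and the split of the time bound into $\poly(k,B_1,B_2)$ times $\poly_r(1/\eps)$ must be read with the convention that encoding lengths of the payoffs are suppressed. I would handle the first point by the standard Edmonds bound on Gaussian elimination. For the second, I would note that the range bound $2$ keeps any representation scale out of the $1/\eps$ dependence.
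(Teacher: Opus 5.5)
Your proposal is correct and follows the paper's proof essentially step for step: build $(\mathcal P,\Phi)$ from the rank-$r$ decomposition, invoke Lemma~\ref{lem:gap-newmain} for $\Phi_{\min}=0$, range at most $2$, and Hessian rank at most $2r$, apply Theorem~\ref{thm:rank-vavasis} with $R=2r$ and $\delta=\eps/2$, split the resulting gap bound into two nonnegative unilateral gains, and decompose the flows via Proposition~\ref{prop:s-path}. Your extra checks (linearity of the Bellman constraints in the joint variables, Edmonds-type bit bounds) make explicit what the paper leaves implicit; as in the paper, the bound you actually derive is polynomial in $k,B_1,B_2,1/\eps$ with an $r$-dependent degree, which establishes the ``consequently'' clause of the statement.
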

\end{statementbox}

\begin{proof}
The preceding subsections compute, in time polynomial in $k,B_1,B_2$ and the suppressed binary encoding lengths of the rational payoff data, the compact flow representation and a minimum-rank rational decomposition with exactly $r$ product terms. Equations~\eqref{eq:rank-edge-rewards}--\eqref{eq:rank-social-cap} construct the bounded rational polytope $\mathcal P$, and \eqref{eq:rank-objective-main} constructs its quadratic objective $\Phi$. By Lemma~\ref{lem:gap-newmain}, the Hessian rank is at most $2r$ and
\[
\Phi_{\min}=0,
\qquad
\Phi_{\max}-\Phi_{\min}\le2.
\]

Apply Theorem~\ref{thm:rank-vavasis} with $R=2r$ and $\delta=\eps/2$. For fixed $r$, the quadratic-optimization stage is polynomial in its binary input length and $1/\eps$, with a polynomial degree that may depend on $r$. It returns a feasible point $z^\delta$ satisfying
\[
\Phi(z^\delta)
=
\Phi(z^\delta)-\Phi_{\min}
\le
\frac{\eps}{2}\bigl(\Phi_{\max}-\Phi_{\min}\bigr)
\le\eps.
\]
Equation~\eqref{eq:rank-gap-main} implies that the sum of the two unilateral gains is at most $\eps$. Since each gain is nonnegative, both are at most $\eps$. Hence the returned flows form an $\eps$-Nash equilibrium of the flow game. Finally, the payoff-preservation statement preceding Proposition~\ref{prop:s-br} transfers the conclusion to the original allocation game, and Proposition~\ref{prop:s-path} decomposes the two rational flows into explicit polynomial-support mixed allocations.
\end{proof}

%% file: wta_frontier_start_and_positive.tex
\section{Multiplayer WTA Blotto: A Tie-Breaking Frontier}
\label{sec:multiplayer-wta}

The preceding sections established a complexity frontier for two-player Blotto, combining hardness for monotone piecewise-constant general-sum payoffs with tractability under fixed intrinsic rank. We now turn to the multiplayer setting and answer Question~3 for literal winner-takes-all (WTA) competition. As noted after Definition~\ref{def:wta-sharing}, equal splitting is the tie convention used in the multiplayer Blotto model of Boix-Adser\`a, Edelman, and Jayanti~\cite{boix2021multiplayer} for common battlefield values. We ask how equilibrium computation behaves in the discrete, player-specific-value setting, and compare this established convention with the full co-winner rule. The two extensions lead to sharply different computational regimes. If every co-winner receives her full battlefield value, an exact pure Nash equilibrium is computable in polynomial time. If the battlefield value is instead split equally among the co-winners, computing an inverse-polynomial approximate Nash equilibrium becomes \PPAD-hard even with only five players.

Our answer is summarized formally below.

\begin{statementbox}
\begin{theorem}[Tie-breaking frontier for multiplayer WTA Blotto]
\label{thm:wta-frontier}
Consider discrete multiplayer WTA Blotto with unrestricted access to all battlefields and nonnegative player-specific values.
\begin{enumerate}[label=(\alph*)]
\item Under the full co-winner rule $\tau^{\mathrm{full}}$, an exact pure Nash equilibrium can be computed in time polynomial in the binary input length, for arbitrary integer budgets. This holds under either convention for all-zero battlefields.
\item Under equal splitting $\tau^{\mathrm{split}}$, with the all-zero co-winner convention, computing an inverse-polynomial-accuracy Nash equilibrium is \PPAD-hard even with only five players. The constructed instances have unrestricted integer allocations, nonnegative player-specific rational values, polynomially bounded numerical budgets, and polynomial encoding length.
\end{enumerate}
\end{theorem}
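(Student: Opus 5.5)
Part (a) calls for an explicit construction. Part (b) calls for a reduction from Theorem~\ref{thm:s-general-source} (or Theorem~\ref{thm:s-source}) with five players. PPAD membership is already given by Corollary~\ref{cor:wta-split-membership}, so part (b) needs only hardness.

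\emph{Part (a), all-zero co-winner convention.} Under $\tau^{\mathrm{full}}$, player $i$ collects $v_{if}$ on battlefield $f$ exactly when her bid is at least the largest opposing bid. So against a fixed $a_{-i}$, winning $f$ costs $\max_{j\ne i}a_{jf}$ soldiers, and ties are never harmful. I would exhibit the profile in which every player places her entire budget on one common battlefield $f^\star$ (say $f^\star=1$).
\begin{itemize}
\item On every $f\ne f^\star$ all bids are zero, so every player is a co-winner there.
\item On $f^\star$ the winners are exactly the players of maximum budget $B_{\max}$.
\item A maximum-budget player receives $\sum_f v_{if}$, which is the largest possible payoff, so she cannot gain.
\item A player with $B_i<B_{\max}$ can never win $f^\star$, since that needs $B_{\max}$ soldiers. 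She already wins every other battlefield, so her payoff $\sum_{f\ne f^\star}v_{if}$ is also optimal.
\end{itemize}
Writing this profile down takes linear time in the binary input and needs no enumeration of allocations, which is why arbitrary binary budgets are allowed.

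\emph{Part (a), zero-award convention.} Now empty battlefields pay nobody, and this is the step I expect to be the main obstacle. Against the profile above, a player can deviate by moving one soldier to an empty battlefield. A general best response is a knapsack problem: item $f$ has cost $\max(1,\max_{j\ne i}a_{jf})$. So we cannot simply iterate best responses, and we must design a profile whose costs are all $0$ or $1$ apart from a dump battlefield. The order of attempts is:
\begin{itemize}
\item First, a profile in which each player places one soldier on every battlefield she values positively and dumps the remainder on $f^\star$. Winning any $f\ne f^\star$ then costs one soldier, which she already pays.
\item When $B_i$ is smaller than the number of battlefields she values, she chooses her $B_i-1$ (or $B_i$) highest-value battlefields. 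The choice is sequential, processing players in order of decreasing budget.
\end{itemize}
It then remains to verify that, because every opposing bid off $f^\star$ is at most $1$, her greedy choice is optimal. The special role of $f^\star$ must be checked separately; placing $f^\star$ on a battlefield of value zero to small players, if one exists, would simplify this.

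\emph{Part (b).} Following the overview in the introduction, I would start from a $d\times d$ source game $(A,B)$ and let players $1$ and $2$ be primary players.
\begin{itemize}
\item Player $1$'s represented allocations $x^i$ and player $2$'s represented allocations $y^j$ reuse the threshold code of Section~\ref{sec:single-threshold-hardness}. The pair battlefields $(i,j)$ get thresholds $c_{ij}$ as in Lemma~\ref{lem:threshold-sums-main}, with balancing battlefields $\rho_i,\kappa_j$.
\item The single-threshold payoffs $\one\{s\ge c\}(a+z\one\{t<c\})$ are simulated by WTA competition. A dedicated auxiliary player holds a fixed bid $c-1$ on each coded battlefield, so reaching $c$ is exactly what wins it. Equal splitting between players $1$ and $2$ on the intersection battlefield $(i,j)$ halves the source-payoff coefficient, which reproduces the term $\one\{t<c\}$ up to a known affine factor.
\item The three auxiliary players are given player-specific values that make their intended allocations strictly dominant (large values on their own reserve battlefields). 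This pins them down in any approximate equilibrium, up to inverse-polynomial mass.
\end{itemize}
With the auxiliaries fixed, I would then reprove analogues of Lemma~\ref{lem:code-main} and Lemma~\ref{lem:dom-main}. These give pointwise dominance of represented allocations by a margin of $1$ and the affine identity $U_1(x^i,y^j)=\mathrm{const}+\tfrac12 A_{ij}$. Finally I would rerun the conditioning argument of Proposition~\ref{lem:conditioning-main}, now also charging for the auxiliaries' stray mass, with all constants polynomial in $d$. The delicate points are that equal splitting among up to five co-winners must not leak unintended payoff, and that auxiliary deviations must be controlled; I would handle both with strict value gaps and parity or offset thresholds.
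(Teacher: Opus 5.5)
\textbf{Part (a).} Your argument for the all-zero co-winner convention is correct and is exactly the paper's. Under the zero-award convention, however, the profile you propose is not an equilibrium in general, and your suggested fixes do not close the gap.

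When every player dumps her leftover troops on a common $f^\star$, the winner of $f^\star$ is just whoever has the largest leftover. A richer player who values $f^\star$ can therefore lose it. Take $k=3$, $f^\star=1$, $B_1=10$, $B_2=9$, $v_1=(100,1,1)$, $v_2=(1,0,0)$. Your rule gives $a_1=(8,1,1)$ and $a_2=(9,0,0)$, and player~$1$ gains by switching to $(9,1,0)$. Player~$1$ already has the larger budget, so processing players by decreasing budget does not help. A field worthless to all smaller players need not exist.

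The missing idea is threshold pinning by a single maximum-budget player $\ell$:
\begin{itemize}
\item If $B^*<k$, everyone bids one on her $B_i$ most valuable fields. This is optimal because every positive bid is one and no player can bid positively on more than $B_i$ fields.
\item If $B^*\ge k$, player $\ell$ (and any other maximum-budget player) plays $p=(q,1,\ldots,1)$ with $q=B^*-k+1$ on \emph{every} field, not only on valued ones. All other players are kept componentwise at most $p$. Full co-winner rewards then give $\ell$ everything, while each other player faces a fixed knapsack with costs $q$ and $1$. She solves it by comparing the best set that skips field~$1$ with the best set that takes it, after one sort.
\item Leftover troops can always be parked below the thresholds, because $\sum_f p_f=B^*\ge B_i$. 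A player with $B_i<B^*$ who skips field~$1$ leaves fewer than $q$ troops there.
\end{itemize}

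\textbf{Part (b).} The plan fails at its foundation. No choice of player-specific values can make an auxiliary's deterministic bid of $c_f-1$ strictly dominant. Under WTA, an auxiliary's payoff on $f$ depends only on whether it is a top bidder. Against opponents who bid $0$ on $f$, a bid of $1$ wins $f$ exactly as well as $c_f-1$. The $c_f-2$ freed troops can only help, and strictly so once its reserve fields carry value, as you propose. A pure ``threshold holder'' therefore does not exist, and the gadget of Definition~\ref{def:single-threshold} cannot be simulated this way.

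There is a second, independent problem. A tie on $(i,j)$ halves the whole battlefield value, not only $w_{ij}$. In your simulation of $\mathbf{1}\{s\ge c\}(a+z\mathbf{1}\{t<c\})$, that field also carries the activation reward, so the tie costs order $Hc_{ij}$. Pointwise dominance of represented allocations then fails. Against $y^j$, player~$1$ can re-spend the $c_{ij}$ troops that $x^i$ places on $(i,j)$ to complete an uncontested field $(i',j-1)$ with $i'\ne i$. With your thresholds, $c_{i',j-1}>c_{ij}/2$ once $j\ge 4$, so this beats $x^i$. Hence the conditioning argument of Proposition~\ref{lem:conditioning-main} cannot be rerun.

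The paper's five-player construction is built around exactly these two obstacles:
\begin{itemize}
\item Two stabilizers with private buffer banks (opportunity cost one per troop) play the unique, regret-stable mixed equilibrium of a local all-pay contest of value $c_f-3/8$. A primary bid of $j$ then wins with probability only about $(j/c_f)^2$, so partial codes lose linearly.
\item An inspector on participation fields makes primary activation rare and calibrates it to $h$, so the stabilizer contests are barely perturbed.
\item A moment code decodes rounded endpoint patterns.
\item The tie loss on $f_{ab}$ is cancelled through $c_{ab}/2=\theta+u_a+v_b$ together with the $h\lambda u_a$ term in the balancing values, leaving the source term $x_C\eta(A\tau)_a$.
\end{itemize}
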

\end{statementbox}

By Corollary~\ref{cor:wta-split-membership}, the approximation problem in part~(b) belongs to \PPAD. Thus part~(b) is, in fact, \PPAD-complete.

The contrast is structural. Under full rewards, joining a tie never reduces the incumbent winner's payoff. This allows a maximum-budget player to pin down the bid required to obtain each battlefield. Once these thresholds are fixed, the remaining players no longer interact strategically: each of them simply chooses which thresholds to meet subject to her budget. Under equal splitting, this reduction is no longer possible, because an additional co-winner reduces the payoff of every player already tied at the top. The five-player hardness construction developed below exploits this dilution effect together with two stabilizers and an inspector to enforce a discrete encoding of a bimatrix game.

\subsection{Full co-winner rewards admit an exact pure equilibrium}
\label{subsec:full-reward-easy}

We first prove the positive side of the frontier. Under the full co-winner rule, every player whose bid is maximal and positive on a battlefield receives her entire player-specific value for that battlefield, independently of how many other players submit the same maximal bid. We treat separately the two possible conventions for battlefields on which every player bids zero.

\begin{theorem}[Exact equilibrium under full co-winner rewards]
\label{thm:full-reward-easy}
Under $\tau^{\mathrm{full}}$, an exact pure Nash equilibrium can be computed using $\mathcal{O}(Kk\log(k+1))$ comparisons and $\mathcal{O}(Kk)$ additional arithmetic operations. The bit complexity is polynomial in the complete binary input length, with no restriction on the magnitudes of the budgets.
\end{theorem}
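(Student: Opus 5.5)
The plan is to write down an explicit pure profile in which one player of maximum budget fixes a threshold vector $t\in\mathbb Z_{\ge0}^k$. Every other player then faces a decoupled problem, ``which thresholds do I meet?'', that is solvable by sorting. The starting observation is specific to $\tau^{\mathrm{full}}$. If the other players' maximum bid on $f$ is $t_f$, player $i$ obtains $v_{if}$ by bidding exactly $t_f$, or by bidding $1$ when $t_f=0$ under the empty convention, or by bidding $0$ when $t_f=0$ under the co-winner convention. Matching a bid exactly never lowers the payoff of the incumbent highest bidders. So a profile in which every non-pinning player bids either $0$ or exactly the current threshold on each battlefield leaves the thresholds unchanged. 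Player $i$'s best response against such a profile is a knapsack with item costs $c_f=\max\{t_f,1\}$ (or $t_f$) and values $v_{if}$. Any leftover budget can be parked on a battlefield she already wins, as long as the parked amount does not exceed that battlefield's threshold.

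Knapsack is hard in general, so the key design step is to choose the threshold vector so that every cost is either $0$ or $1$, or more generally so that all costs are equal. Then a best response is obtained by sorting $v_{if}$ and taking the top $\min\{B_i,\cdot\}$ battlefields; this accounts for the $\mathcal O(Kk\log(k+1))$ comparisons. Concretely, I would let each player bid $1$ on her $\min\{B_i,k\}$ highest-valued battlefields. Ties are broken by a fixed order, and values are taken with the all-zero convention in mind. I would then verify, under each convention separately, that unit thresholds are self-consistent. Every player wins every battlefield she bids on, no player can gain a battlefield by overbidding, because ties already give her the full value, and the greedy set is optimal for unit costs.

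The main obstacle is leftover budget when $B_i>k$. Such a player must place extra soldiers somewhere, which raises thresholds above $1$ and breaks the unit-cost structure for the others. Here I would use the maximum-budget player $i^\star$ as the pin. Every player with $B_j>k$ covers all $k$ battlefields anyway, so the only issue is whether her excess creates a threshold that others then want to beat or can no longer meet. I would try first to have all excess of every player $j\ne i^\star$ dumped on one common battlefield $f_0$. Player $i^\star$ then bids $\max_{j\ne i^\star}(B_j-k+1)$ or more on $f_0$, with her own excess placed there too, so that $i^\star$ is a highest bidder on $f_0$ and every player with a large budget either matches $f_0$ or is dominated there. Since every player with $B_j\ge k$ already wins all battlefields except possibly $f_0$, I need to check that no unilateral reallocation wins strictly more. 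This requires a short case analysis: a player with $B_j$ large enough to reach the pinned bid on $f_0$ must be made to co-win it, which I would arrange by setting the $f_0$ threshold to a value all such players can afford while still covering the other $k-1$ battlefields. Checking this case analysis, including players with budgets between $k$ and the pinned level and both all-zero conventions, is where I expect the real work. The remaining steps are to confirm that all bids are $\mathcal O(Kk)$ arithmetic expressions in the binary-encoded budgets, which gives polynomial bit complexity without magnitude restrictions, and to verify the Nash condition player by player.
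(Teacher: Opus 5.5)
Your pinning idea is the right one when all-zero battlefields reward nobody. In that case the thresholds you reach, with $i^\star$ bidding one unit on each of the other $k-1$ battlefields and $q=B^*-k+1$ on $f_0$, are exactly the paper's $p=(q,1,\ldots,1)$. The proposal still has two genuine gaps.

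\textbf{The co-winner convention.} Your unit-bid profile is not an equilibrium under the all-zero co-winner convention. By your own knapsack, a battlefield no opponent contests has cost $0$ there. A player should therefore leave her exclusive battlefields empty and spend her troops on contested ones, so greedy top-$B_i$ is not a best response. Concretely, take $K=k=2$, $B_1=B_2=1$, $(v_{11},v_{12})=(1,0.9)$ and $(v_{21},v_{22})=(0.9,1)$. Your profile gives player~1 payoff $1$. If she moves her troop to battlefield~2, battlefield~1 becomes all-zero and battlefield~2 becomes a co-win, so she gets $1.9$. This convention needs a different, and in fact trivial, profile: every player puts her whole budget on battlefield~1. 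Every other battlefield then stays all-zero and is received by everyone. Maximum-budget players co-win battlefield~1, and nobody else can reach the bid $B^*$ needed there.

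\textbf{The case $B^*>k$ when all-zero battlefields reward nobody.} You never pin down a best response for the non-pinning players, and both options you sketch fail.
\begin{itemize}
\item Dumping $j$'s excess on $f_0$ below $q$ forces her to skip $f_0$. With $k=2$, $f_0=1$, $B_{i^\star}=10$ and $B_j=9$, suppose $i^\star$ plays $(9,1)$ and $j$ plays $(8,1)$. Then $j$ gains by switching to $(9,0)$ whenever $v_{j1}>v_{j2}$.
\item Making every such player co-win $f_0$ while covering the rest is impossible. The thresholds dominate $i^\star$'s allocation, so they sum to at least $B^*>B_j$.
\item For the same reason, your claim that players with $B_j>k$ ``cover all $k$ battlefields anyway'' fails once $f_0$ is pinned.
\item Small budgets break too. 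If $q\ge2$ and $f_0$ is among $j$'s top $B_j$ battlefields, her unit bid there is wasted.
\end{itemize}

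\textbf{The missing step.} Each player needs an explicit choice between two options. Players with $B_i=B^*$ simply copy $p$. For $B_i<B^*$, let $H_i(t)$ be the sum of her $t$ largest values off $f_0$. She takes the better of:
\begin{itemize}
\item skipping $f_0$, worth $H_i(\min\{B_i,k-1\})$, with any leftover $B_i-(k-1)<q$ parked on $f_0$;
\item taking $f_0$ when $B_i\ge q$, worth $v_{if_0}+H_i(B_i-q)$; this is well defined because $B_i<B^*$ gives $B_i-q<k-1$.
\end{itemize}
Both allocations stay componentwise below $p$. Hence the thresholds remain exactly $p$, and $i^\star$ co-wins everything. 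Each other player's best-response problem is then precisely this two-case knapsack, solved with one sort per player.
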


The proof relies on a simple threshold-pinning argument. When ties give every co-winner her full value, matching another player's maximal bid is sufficient to obtain a battlefield and does not harm the incumbent. We exploit this property by letting a maximum-budget player place a positive bid on every battlefield. Her allocation then determines a vector of winning thresholds. Every other player's best-response problem becomes a selection problem in which battlefield $f$ can be obtained at cost equal to its threshold. With the particular threshold vector used below, all but one battlefield have cost one, so this selection problem can be solved by sorting.

\begin{proof}
Let
\[
B^*:=\max_{i\in[K]} B_i
\]
denote the largest budget.

\medskip
\noindent
\emph{All-zero battlefields reward every player.}
We fix an arbitrary battlefield, say battlefield~$1$, and let every player allocate her entire budget there:
\[
a_i=(B_i,0,\ldots,0).
\]
Consider first a player $i$ with $B_i=B^*$. She wins or co-wins battlefield~$1$. Every other battlefield receives only zero bids, so under the present convention she also receives her full value on all of them. Hence her payoff equals
\[
\sum_{f=1}^k v_{if},
\]
which is the largest payoff she could obtain under any allocation.

Now consider a player $i$ with $B_i<B^*$. She cannot win or co-win battlefield~$1$ under any deviation, because doing so would require a bid of at least $B^*$ while her entire budget is smaller than $B^*$. On every other battlefield she already receives her full value from the all-zero convention. Consequently no deviation can increase her payoff. The displayed profile is therefore a pure Nash equilibrium.

If $B^*=0$, every player has zero budget and there is a unique feasible allocation, so the claim is immediate.

\medskip
\noindent
\emph{All-zero battlefields reward nobody.}
We first consider the case $B^*<k$. For each player $i$, let $S_i$ consist of her $B_i$ highest-valued battlefields, breaking ties arbitrarily, and set
\[
a_{if}=
\begin{cases}
1, & f\in S_i,\\
0, & f\notin S_i.
\end{cases}
\]
Since $B_i\le B^*<k$, this allocation uses exactly $B_i$ units and is feasible.

Every positive bid in the resulting profile equals one. Hence, on every battlefield $f\in S_i$, player $i$ is a winner or co-winner and receives the entire value $v_{if}$. Her payoff is therefore
\[
\sum_{f\in S_i}v_{if},
\]
the sum of her $B_i$ largest battlefield values.

To see that this is a best response, observe that every positive bid is an integer of at least one. Thus any feasible allocation of player $i$ can place positive bids on at most $B_i$ battlefields. Since a battlefield contributes at most its full value $v_{if}$ to player $i$'s payoff, no deviation can give her more than the sum of her $B_i$ largest values. The constructed allocation attains this upper bound. Hence every player is best responding, and the profile is a pure Nash equilibrium.

It remains to treat the case $B^*\ge k$. We fix a player $\ell$ with budget $B^*$. This player will place a positive bid on every battlefield and thereby determine the bid that another player must match in order to obtain it.

We reserve one unit for each of battlefields $2,\ldots,k$ and place the remaining budget on battlefield~$1$. Define
\[
q:=B^*-(k-1)=B^*-k+1\ge1
\]
and
\[
p:=(q,1,\ldots,1).
\]
The vector $p$ uses exactly
\[
q+(k-1)=B^*
\]
units and is therefore feasible for every maximum-budget player.

The next lemma formalizes the role of $p$. Once one maximum-budget player uses this allocation, the coordinates of $p$ become independent winning thresholds for all remaining players.

\begin{lemma}[A maximum-budget player fixes all winning thresholds]
\label{lem:full-reward-thresholds}
Suppose player $\ell$ uses $p$ and every other player $i\ne\ell$ uses an allocation satisfying $a_{if}\le p_f$ for every battlefield $f$. Then player $\ell$ receives $\sum_f v_{\ell f}$ and cannot improve. Moreover, for every other player $i$, the maximum payoff attainable against the resulting opponent profile is
\begin{equation}
\max\left\{
\sum_{f\in S}v_{if}:
S\subseteq[k],\ 
\sum_{f\in S}p_f\le B_i
\right\}.
\label{eq:full-reward-knapsack}
\end{equation}
Every maximizing set in \eqref{eq:full-reward-knapsack} can be realized by a full-budget allocation that is componentwise at most $p$.
\end{lemma}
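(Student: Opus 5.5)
Fix the opponent profile as in the hypotheses: player $\ell$ bids $p=(q,1,\ldots,1)$ with $q=B^*-k+1\ge1$, and every $i\ne\ell$ satisfies $a_{if}\le p_f$. The plan is to first determine, for each battlefield $f$, the highest bid $m_f$ in this profile, and then use that maximum to reduce each player's best-response problem to the knapsack \eqref{eq:full-reward-knapsack}.

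\emph{Player $\ell$.} Since $a_{if}\le p_f$ for all $i\ne\ell$, we have $m_f=p_f\ge1$ for every $f$. Hence player $\ell$ is a highest bidder on every battlefield, and the maximum is positive. Under $\tau^{\mathrm{full}}$ she therefore receives $v_{\ell f}$ on each battlefield, for a total of $\sum_f v_{\ell f}$. No allocation can earn more, because each battlefield contributes at most $v_{\ell f}\ge0$. So player $\ell$ cannot improve.

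\emph{Upper bound for $i\ne\ell$.} Fix $i\ne\ell$ and any deviation $a_i'$. Player $\ell$'s bid is unchanged, so on each battlefield the maximum bid is at least $p_f\ge1$, and in particular it is positive. Thus all-zero conventions never apply. Player $i$ receives value on $f$ only if $a'_{if}$ equals the maximum, which requires $a'_{if}\ge p_f$. Let $S=\{f: a'_{if}\ge p_f\}$. Then $\sum_{f\in S}p_f\le\sum_f a'_{if}=B_i$. Player $i$'s payoff is at most $\sum_{f\in S}v_{if}$, which is bounded by \eqref{eq:full-reward-knapsack}.

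\emph{Attainment and realization.} Take a maximizing set $S$ and first set $a_{if}=p_f$ for $f\in S$. This uses $\sum_{f\in S}p_f\le B_i$ units, leaving a leftover $B_i-\sum_{f\in S}p_f\ge0$. The main obstacle is placing this leftover without exceeding $p$ componentwise. I would argue that the leftover fits below the remaining coordinates of $p$, i.e.\ that $B_i\le B^*=\sum_f p_f$, which holds since $B^*$ is the maximum budget. Concretely, distribute the leftover greedily into the slack $p_f-a_{if}$ of the battlefields $f\notin S$. The total slack is $B^*-\sum_{f\in S}p_f\ge B_i-\sum_{f\in S}p_f$, so the greedy placement succeeds.

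It remains to check that the resulting allocation earns exactly $\sum_{f\in S}v_{if}$. On $f\in S$, player $i$ bids $p_f=m_f$ and so co-wins, receiving her full value. On any $f\notin S$ where the greedy step raised her bid all the way to $p_f$, she also co-wins; this only adds nonnegative value, and by maximality of $S$ the total stays at the optimum. So the allocation is a full-budget allocation, componentwise at most $p$, and attains the optimum of \eqref{eq:full-reward-knapsack}. Because it stays componentwise below $p$, the hypotheses of the lemma remain valid when every such $i$ adopts it simultaneously.
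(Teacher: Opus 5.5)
Your proposal is correct and follows essentially the same route as the paper. Player $\ell$ co-wins every field because every coordinate of $p$ is positive. The upper bound comes from the set of fields on which player $i$ meets $p_f$. Realization fills the leftover budget into the slack $\sum_{f\notin S}p_f=B^*-\sum_{f\in S}p_f$ using $B_i\le B^*$. Any additional thresholds met during filling only add nonnegative value, and optimality forces equality; the paper phrases this last step by noting that the enlarged set $S'\supseteq S$ is itself knapsack-feasible, which amounts to your appeal to the upper bound.
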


The lemma isolates the essential effect of full co-winner rewards. Player $\ell$ makes $p_f$ the minimum bid needed to obtain battlefield $f$, while matching $p_f$ is already sufficient. Thus the other players face a deterministic collection of battlefield ``prices.''

\begin{proof}
Every coordinate of $p$ is positive. Since every other player bids at most $p_f$ on battlefield $f$, player $\ell$ wins or co-wins every battlefield. Under the full co-winner rule, ties do not reduce her reward, so she receives
\[
\sum_{f=1}^k v_{\ell f}.
\]
This is the largest payoff she can possibly obtain, and therefore player $\ell$ is best responding.

We next fix another player $i\ne\ell$. Because player $\ell$ bids exactly $p_f$ on battlefield $f$, player $i$ obtains positive value from $f$ only if she bids at least $p_f$. Conversely, bidding exactly $p_f$ is sufficient to win or co-win the battlefield and receive the full value $v_{if}$.

Suppose a deviation of player $i$ obtains a collection $S$ of battlefields. The deviation must spend at least $p_f$ on every $f\in S$, and hence
\[
\sum_{f\in S}p_f\le B_i.
\]
Its payoff is at most $\sum_{f\in S}v_{if}$. Maximizing over all possible collections $S$ gives the upper bound in \eqref{eq:full-reward-knapsack}.

It remains to show that this upper bound is attainable without violating the coordinatewise condition $a_i\le p$. Let $S$ be a maximizing set in \eqref{eq:full-reward-knapsack}. We first assign
\[
a_{if}=p_f
\qquad\text{for every } f\in S.
\]
The budget that remains unallocated is
\[
B_i-\sum_{f\in S}p_f.
\]
On the other hand, the total unused coordinate capacity below $p$ is
\[
\sum_{f\notin S}p_f
=
B^*-\sum_{f\in S}p_f,
\]
because $\sum_f p_f=B^*$. Since $B_i\le B^*$, we have
\[
B_i-\sum_{f\in S}p_f
\le
B^*-\sum_{f\in S}p_f.
\]
Thus all remaining budget can be distributed among the coordinates outside $S$ without making any bid exceed the corresponding threshold $p_f$.

This filling operation may cause player $i$ to meet the threshold on some additional battlefields. Let $S'\supseteq S$ denote the final set of battlefields on which $a_{if}=p_f$. Since the final allocation uses budget $B_i$,
\[
\sum_{f\in S'}p_f\le B_i,
\]
so $S'$ is itself feasible for \eqref{eq:full-reward-knapsack}. Because $S$ is optimal and all values are nonnegative,
\[
\sum_{f\in S}v_{if}
\le
\sum_{f\in S'}v_{if}
\le
\max_{\substack{T\subseteq[k]\\ \sum_{f\in T}p_f\le B_i}}
\sum_{f\in T}v_{if}
=
\sum_{f\in S}v_{if}.
\]
Hence equality holds throughout. The completed allocation therefore attains the optimum in \eqref{eq:full-reward-knapsack}, uses the entire budget, and remains componentwise bounded by $p$.
\end{proof}

It remains to solve \eqref{eq:full-reward-knapsack}. Although it is written as a $0$--$1$ knapsack problem, its cost structure is particularly simple. Battlefield~$1$ has threshold cost $q$, whereas every battlefield in $\{2,\ldots,k\}$ has threshold cost one. Thus there are only two possibilities to consider: an optimal set either skips battlefield~$1$ or contains it.

For each player $i$, we sort battlefields $2,\ldots,k$ in nonincreasing order of $v_{if}$. Let $H_i(t)$ denote the sum of the $t$ largest values among these $k-1$ battlefields, with $H_i(0)=0$.

If $B_i=B^*$, we assign player $i$ the allocation $p$. She then co-wins every battlefield and receives her maximum possible payoff. We therefore consider a player with $B_i<B^*$.

Suppose first that battlefield~$1$ is skipped. Since every remaining battlefield has cost one, the player should select as many of the most valuable fields among $2,\ldots,k$ as her budget permits. The resulting value is
\begin{equation}
V_i^{\mathrm{skip}}
=
H_i\bigl(\min\{B_i,k-1\}\bigr).
\label{eq:full-skip}
\end{equation}
This value is attained by placing one unit on the corresponding highest-valued fields. If $B_i\le k-1$, these bids use the entire budget. If $B_i>k-1$, all $k-1$ unit-cost fields are selected and the remaining
\[
B_i-(k-1)
\]
units are placed on battlefield~$1$. Since
\[
B_i<B^*=q+k-1,
\]
the leftover satisfies
\[
B_i-(k-1)<q.
\]
Thus the additional bid on battlefield~$1$ remains strictly below its threshold and does not alter the set of battlefields obtained.

We next consider solutions that contain battlefield~$1$. Such a solution is possible only when $B_i\ge q$. Meeting the threshold on battlefield~$1$ consumes $q$ units, leaving budget $B_i-q$. Every remaining battlefield costs one, so the optimal completion consists of the $B_i-q$ highest-valued battlefields among $2,\ldots,k$. Hence
\begin{equation}
V_i^{\mathrm{take}}
=
v_{i1}+H_i(B_i-q).
\label{eq:full-take}
\end{equation}
The assumption $B_i<B^*$ ensures that
\[
B_i-q<k-1,
\]
so the argument of $H_i$ is always valid. The corresponding allocation places $q$ units on battlefield~$1$ and one unit on each of the selected $B_i-q$ unit-cost battlefields, using the entire budget exactly.

These two cases exhaust every feasible winning set in \eqref{eq:full-reward-knapsack}. Indeed, a set either excludes battlefield~$1$, in which case all selected items have unit cost, or includes battlefield~$1$, in which case the remaining budget is spent on unit-cost items. Therefore the optimal value for player $i$ is
\[
\begin{cases}
V_i^{\mathrm{skip}},
& B_i<q,\\[2mm]
\max\{V_i^{\mathrm{skip}},V_i^{\mathrm{take}}\},
& B_i\ge q.
\end{cases}
\]
We assign player $i$ the allocation corresponding to a maximizing case, breaking ties arbitrarily. In either case the resulting allocation is componentwise bounded by $p$ and attains the optimum in \eqref{eq:full-reward-knapsack}. Lemma~\ref{lem:full-reward-thresholds} therefore implies that player $i$ is best responding.

Applying this construction independently to every player other than $\ell$ yields a profile in which all allocations are componentwise bounded by $p$. The same lemma then shows simultaneously that $\ell$ is best responding and that every other player is best responding. The resulting profile is therefore an exact pure Nash equilibrium.

It remains to verify the running time. For each player, the only nontrivial operation is sorting the $k-1$ values associated with battlefields $2,\ldots,k$. This requires
\[
\mathcal{O}(k\log(k+1))
\]
comparisons per player. Once the values are sorted, all prefix sums $H_i(t)$ are computed in $\mathcal{O}(k)$ arithmetic operations, after which \eqref{eq:full-skip} and, when applicable, \eqref{eq:full-take} are evaluated in constant time. Over all $K$ players, this gives
\[
\mathcal{O}(Kk\log(k+1))
\]
comparisons and $\mathcal{O}(Kk)$ additional arithmetic operations.

Finally, the construction never iterates once per unit of budget. The quantities $B^*$, $q$, and the coordinates of the output allocations are manipulated directly in binary. Their bit lengths are polynomial in the bit lengths of the input budgets, and the same holds for all arithmetic involving the input values and their prefix sums. Hence the overall bit complexity is polynomial in the complete binary input length, even when the numerical budgets are exponentially large.
\end{proof}

The proof can be viewed as a particularly simple instance of a threshold- or price-pinning argument. A maximum-budget player establishes the effective cost $p_f$ of obtaining each battlefield. Full co-winner rewards are essential here: another player may match one of these thresholds without reducing the payoff of the player who created it. Once the thresholds are fixed, strategic interaction disappears from the remaining best-response problems.

In the present construction, this reduction is especially strong because the threshold vector has only two distinct coordinates. Battlefield~$1$ has cost $q$, while every other battlefield has cost one. The induced $0$--$1$ knapsack therefore reduces to comparing the best solution that skips the exceptional battlefield with the best solution that includes it, both of which are obtained from a single sorting of the unit-cost battlefields.

Theorem~\ref{thm:full-reward-easy} proves part~(a) of Theorem~\ref{thm:wta-frontier}. The proof also identifies exactly what fails under equal splitting. If a second player matches the threshold set by a maximum-budget player, the incumbent no longer retains her entire value; instead, her payoff is divided among the co-winners. Thus the threshold setter's incentives depend on which other players match her bids, and the independent-selection interpretation breaks down. We next show that this seemingly small change in the treatment of ties is sufficient to reverse the computational picture.

%% file: wta_equal_splitting_hardness_only.tex
\subsection{Equal splitting is \PPAD-hard with five players}
\label{subsec:equal-splitting-hardness}

% Local notation for the five-player reduction.
\newcommand{\wtaFiveR}{\mathcal R}
\newcommand{\wtaFiveC}{\mathcal C}
\newcommand{\wtaFiveReg}{\operatorname{Reg}}
\newcommand{\wtaFiveOut}{\mathrm{out}}
\newcommand{\wtaFiveBalR}[1]{\mathrm{bal}^{R}_{#1}}
\newcommand{\wtaFiveBalC}[1]{\mathrm{bal}^{C}_{#1}}
\newcommand{\wtaFivePartR}{\mathrm{part}_{R}}
\newcommand{\wtaFivePartC}{\mathrm{part}_{C}}
\newcommand{\wtaFiveCodeR}[1]{\mathrm{code}_{R}(#1)}
\newcommand{\wtaFiveCodeC}[1]{\mathrm{code}_{C}(#1)}
\newcommand{\wtaFiveLead}[1]{\par\medskip\noindent\textbf{#1}\par\smallskip\noindent}
\definecolor{wtafivepipeone}{RGB}{232,234,249}
\definecolor{wtafivepipetwo}{RGB}{237,231,248}
\definecolor{wtafivepipethree}{RGB}{238,237,246}
\definecolor{wtafivepipefour}{RGB}{231,239,249}

We next prove the hardness side of Theorem~\ref{thm:wta-frontier}.  Throughout this subsection we use the
multiplayer Blotto conventions from Section~\ref{sec:preliminaries} and the equal-splitting rule
$\tau^{\mathrm{split}}$ from Definition~\ref{def:wta-sharing}.  For the reduction we use the all-zero
co-winner convention, so an all-zero battlefield is split equally among all five players.

By Theorem~\ref{thm:s-general-source}, it is \PPAD-hard to compute a $1/q$-Nash equilibrium of a
normalized $q\times q$ bimatrix game.  Let
\[
(A,B)\in[0,1]^{q\times q}\times[0,1]^{q\times q}
\]
be such a source instance, with row and column action sets $[q]_0$.  We restrict to the hard family with
$q\ge2$, which is without loss for the asymptotic hardness statement.

The reduction uses five Blotto players
\[
R,\ C,\ S,\ T,\ J.
\]
Players $R$ and $C$, the two \emph{primaries}, carry the row and column strategies of the source game.
Players $S$ and $T$, the \emph{stabilizers}, create the local competition that makes incomplete encodings
unprofitable.  Player $J$, the \emph{inspector}, regulates how often the primaries use their encodings.
Their precise actions, budgets, and battlefield values are introduced with the construction below.

Let $n_{\rm inst}$ denote the binary encoding length of the constructed Blotto instance.

\begin{theorem}[Five-player hardness under equal splitting]\label{thm:wta-five-player-hardness}
Computing an inverse-polynomial-accuracy Nash equilibrium of discrete equal-splitting WTA Blotto with
unrestricted integer allocations, nonnegative player-specific rational values, polynomially bounded numerical
budgets, and the all-zero co-winner convention is \PPAD-hard, even with only five players.
\end{theorem}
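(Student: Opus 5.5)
We will give a polynomial-time approximation-preserving reduction from the normalized bimatrix problem of Theorem~\ref{thm:s-general-source}, following the template of Section~\ref{sec:single-threshold-hardness}. The plan has three parts. First, embed the source actions as a small family of \emph{represented} allocations of the primaries $R$ and $C$. Second, show that every other allocation is pointwise dominated by a margin that is inverse polynomial, or better. Third, condition an approximate equilibrium on the represented allocations, as in Proposition~\ref{lem:conditioning-main}, and read off an approximate source equilibrium.

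\textbf{Step 1: encoding.} The battlefields will be pair battlefields $(i,j)\in[q]_0\times[q]_0$, balancing battlefields $\wtaFiveBalR{i}$ and $\wtaFiveBalC{j}$, participation battlefields $\wtaFivePartR$ and $\wtaFivePartC$, and an outside battlefield $\wtaFiveOut$. The row code $\wtaFiveCodeR{i}$ will place a prescribed bid on every pair battlefield of row $i$ and on $\wtaFiveBalR{i}$. The column code $\wtaFiveCodeC{j}$ does the same along column $j$.

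The bid levels will be chosen with the arithmetic separation of Lemma~\ref{lem:threshold-sums-main}, so that only complete rows or columns exhaust the budget exactly. At the unique intersection battlefield $(i,j)$ of $\wtaFiveCodeR{i}$ and $\wtaFiveCodeC{j}$, the two primaries tie and equal splitting halves their values. All other code battlefields are won outright.

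Values are set as $v_{R,(i,j)}\propto 1-A_{ij}$ plus row-balancing terms on $\wtaFiveBalR{i}$, mirroring the $w^1_{ij},\lambda_i$ construction, and similarly for $C$ with $B$. With these choices, $R$'s payoff at $(\wtaFiveCodeR{i},\wtaFiveCodeC{j})$ equals a constant plus $\tfrac12 A_{ij}$. The all-zero co-winner convention contributes only a constant $1/5$ share on untouched battlefields, and we will keep those battlefields' values independent of which code is played.

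\textbf{Step 2: enforcing codes.} The single-threshold activation term is not available, because WTA payoffs depend on rival bids. The stabilizers $S$ and $T$ substitute for it. They hold budgets that let them bid exactly one unit below each code level on the code battlefields. A primary that underbids a level therefore loses that battlefield to a stabilizer, and the stabilizers are indifferent otherwise, because their own values are concentrated on reserve battlefields where they never compete.

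The inspector $J$ bids on $\wtaFivePartR$ and $\wtaFivePartC$. Its role is to keep $R$ and $C$ from abandoning the codes in favour of $\wtaFiveOut$. Concretely, $J$'s value makes its bid mix regulate the expected share that a primary gets from committing its participation units, so that the total mass on codes is pinned near one. We will then prove a lemma analogous to Lemma~\ref{lem:dom-main}, stated against any mixed strategies of $S,T,J$ within an equilibrium: every non-code allocation of $R$ earns at least $\eta=1/\mathrm{poly}(q)$ less than the best code.

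\textbf{Step 3: transfer.} Normalize all payoffs. Take $\eps$ inverse polynomial and small relative to $\eta$ and the payoff scale. Exactly as in Proposition~\ref{lem:conditioning-main}, the mass on non-code allocations is at most $\eps/\eta$. Conditioning on codes changes a primary's payoff by $O(\eps/\eta)$ times the value scale. Rescaling by the factor $\tfrac12$ from the split then yields a $1/q$-equilibrium of $(A,B)$. Membership follows from Corollary~\ref{cor:wta-split-membership}.

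\textbf{Main obstacle.} The hard part is Step 2. We must show that the auxiliary players' equilibrium behaviour, which is itself only approximate and may be mixed, cannot fall into a configuration where non-code allocations become competitive. The dangerous deviations are splitting bids across two rows, or exploiting stabilizer ties through equal splitting. I would first try to make $S$, $T$, and $J$ have strictly dominant actions up to $O(\eps)$, by giving them large private reserve values. Their behaviour would then be essentially fixed, and the domination lemma would become a deterministic counting argument over bid levels.
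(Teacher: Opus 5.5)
\textbf{There is a genuine gap, and it sits in Step 2.} The failure is structural, not technical: incentives in WTA Blotto cannot produce a deterministic ``wall'' of stabilizer bids at $c_f-1$, and the reduction has to work for \emph{every} approximate equilibrium.
\begin{itemize}
\item If $S$ and $T$ have value only on uncontested reserves, nothing makes them bid $c_f-1$ on code fields. With positive reserve value they strictly prefer to move those troops to reserves; otherwise they are indifferent and may bid anything. Your proposed fix (large private reserve values, strictly dominant actions) therefore makes them abandon the code fields and removes the threshold altogether.
\item If instead they value the code fields, each field becomes an all-pay-type contest between them, since troops cost buffer payoff whether or not the field is won. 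A pure bid at $c_f-1$ is then not self-enforcing. Against a lower rival it wastes troops. A prize large enough to justify tying at $c_f-1$ also makes overbidding profitable, or, under a budget cap, shifting troops between fields.
\end{itemize}

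\textbf{The paper's missing idea is to exploit the mixed equilibrium instead.} With prize $c_f-3/8$ and one unit of buffer payoff per troop, the local contest has the unique full-support equilibrium \eqref{wta5:eq:pi}. Proposition~\ref{wta5:clm:local} shows that approximate-equilibrium stabilizers are coordinatewise close to it. Because a primary must beat \emph{two independent} near-uniform bids, a bid of $j$ wins with probability about $(j/c_f)^2$; see \eqref{wta5:eq:shape1}--\eqref{wta5:eq:shape3}. This convexity is combined with primary values $w_f\approx\lambda c_f$, with the source entering only as the correction $-\eta A_{ab}$ in \eqref{wta5:eq:pairvalue}. Together they make every defect cost linearly, and the moment-code gap \eqref{wta5:eq:codegap} then rounds near-complete allocations to a unique code.

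Your values $\propto 1-A_{ij}$ plus balancing terms supply no such per-troop incentive. Even against fixed thresholds, a primary would face a knapsack whose optimum need not be a code. The single-threshold construction needed the activation term $Hc_{ij}$ for exactly this reason; stabilizers supply thresholds, not this incentive.

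\textbf{This forces a second change that your plan has backwards: $J$ must make activation \emph{rare}, not pin code mass near one.}
\begin{itemize}
\item Primary visits shrink the stabilizers' effective prize to $(c_f-3/8)(1-\rho^R_f)(1-\rho^C_f)$. Proposition~\ref{wta5:clm:local} applies only if this stays within $1/8$ of $c_f-3/8$, i.e.\ only if visit probabilities are $O(1/c_f)$. Frequent code play would make the return functions depend on the unknown source strategies.
\item The paper therefore first proves $x_R,x_C\le 2h$ by comparing $I_i$ with $I_{\mathrm{out}}$, before anything is known about the shape of active allocations. Only then does it analyze the stabilizers.
\item After canonicalization, $J$'s near-indifference between $I_i$ and $I_{\mathrm{out}}$ calibrates $x_i=h\pm o(h)$. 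This calibration is essential. The tie at $f_{ab}$ also halves the structural value $\lambda c_{ab}$, producing an $(a,b)$-dependent term of order $x_C\lambda c_{ab}/2$ that swamps $A_{ab}$. That term is cancelled only through $c_{ab}/2=\theta+u_a+v_b$ and the balancing terms $h\lambda u_a$, which requires $x_C\approx h$.
\end{itemize}

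\textbf{Consequence for your Step 3.} The source game appears at scale $x_Rx_C\eta$, and the decoded strategies are the code distributions conditional on activation. The transfer is therefore not the conditioning argument of Proposition~\ref{lem:conditioning-main}. There is no pointwise domination against arbitrary play of $S,T,J$; there is only an expected gain of order $\lambda$ against near-equilibrium stabilizer mixtures.
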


\subsubsection{Roadmap and the logic of the reduction}\label{subsec:wta5-roadmap}

\paragraph{Encoding structure.}
For every $(a,b)\in[q]_0^2$ there is a \emph{pair field} $f_{ab}$.
For every row label $a$ there is a row \emph{balancing field} $\wtaFiveBalR{a}$, and for every column label $b$
there is a column balancing field $\wtaFiveBalC{b}$.  The row-side and column-side encoding families are
\[
\mathcal F_R=\{f_{ab}:a,b\in[q]_0\}\cup\{\wtaFiveBalR{a}:a\in[q]_0\},
\qquad
\mathcal F_C=\{f_{ab}:a,b\in[q]_0\}\cup\{\wtaFiveBalC{b}:b\in[q]_0\},
\]
and $\mathcal F_{\rm enc}=\mathcal F_R\cup\mathcal F_C$ is the set of all \emph{encoding fields}.
The pair fields are shared by the two sides; the balancing fields are side-specific.

Each encoding field $f$ will be assigned a positive integer \emph{completion bid} $c_f$.  The support of
row action $a$ and column action $b$ is, respectively,
\[
\wtaFiveR_a=\{f_{ab}:b\in[q]_0\}\cup\{\wtaFiveBalR{a}\},
\qquad
\wtaFiveC_b=\{f_{ab}:a\in[q]_0\}\cup\{\wtaFiveBalC{b}\}.
\]

The completion bids are chosen so that
\[
\sum_{f\in\wtaFiveR_a}c_f=\sum_{f\in\wtaFiveC_b}c_f=D
\]
for every $a,b$; $D$ is therefore the common \emph{encoding cost} of one source action.

An \emph{endpoint bid} on $f$ is a bid in $\{0,c_f\}$.  An \emph{endpoint pattern} is an encoding allocation
in which every encoding-field bid is an endpoint bid.  Its \emph{completed-field set} is the set of encoding
fields on which the primary bids $c_f$.  A \emph{complete row codeword} is an endpoint pattern whose
completed-field set is $\wtaFiveR_a$ for some $a$; a complete column codeword is defined analogously.  A
\emph{partial row codeword for $a$} is a nonzero encoding allocation supported inside $\wtaFiveR_a$, with every
bid on a field $f\in\wtaFiveR_a$ lying between $0$ and $c_f$, but which is not the complete row codeword; partial
column codewords are defined symmetrically.  Thus a partial codeword may omit a field or use an intermediate
bid strictly between $0$ and $c_f$.  More generally, whenever a primary places troops on encoding fields
without reproducing one complete codeword at the prescribed completion bids, we call that nonzero encoding
allocation \emph{malformed}.  Thus partial codewords, mixtures of fields from different codes, and overbids are all
malformed.  The zero encoding vector is not malformed when it occurs as part of the neutral action.  To avoid a
terminological gap later, we call a \emph{noncanonical nonneutral allocation} any nonneutral primary allocation
that is not one of the complete code allocations.  This includes nonneutral allocations whose encoding part is
zero.  The canonicalization argument rules out all noncanonical nonneutral allocations, not only those whose
encoding part is nonzero.

The common primary budget will be $B_*=D+1$.  Primary $i\in\{R,C\}$ has a \emph{participation field}
$\mathrm{part}_i$.  Its \emph{neutral action} $N_i$ puts all $B_*$ troops on $\mathrm{part}_i$.  Its canonical active allocation
$\wtaFiveCodeR{a}$ (respectively $\wtaFiveCodeC{b}$) puts one troop on its participation field and the remaining $D$ troops
at the completion bids of $\wtaFiveR_a$ (respectively $\wtaFiveC_b$).  For a mixed profile $\mu$, let $\mu_i$ denote
player $i$'s mixed strategy.  We say that primary $i$ is \emph{active}, or \emph{activated}, whenever it does not
play $N_i$, and write
\[
x_i=1-\mu_i(N_i)
\]
for its activation probability.  The symbol $h$ denotes the small target activation probability; its
explicit inverse-polynomial value is fixed later in \eqref{wta5:eq:scales}.

A \emph{buffer field} is an auxiliary battlefield whose purpose is only to park unused troops at a controlled
marginal payoff.  We denote the private buffer banks of the stabilizers by $\mathcal B_S$ and $\mathcal B_T$,
and the shared low-value primary-buffer bank of $R$ and $C$ by $\mathcal B_P$.  The inspector also has its
own private buffer bank.  The symbols $I_R$ and $I_C$ denote the inspector's two
\emph{full-inspection actions}, which place its whole budget on $\wtaFivePartR$ and $\wtaFivePartC$, respectively;
$I_{\wtaFiveOut}$ denotes its canonical outside action in its private buffer bank.  The exact buffer values and
the precise outside allocation are specified in Section~\ref{subsec:wta5-instance}.

Finally, \emph{canonicalization} means replacing negligible malformed probability mass so that every primary
action in support is either $N_i$ or one of its complete codewords.

The proof proceeds through four logical layers, each supplying the structural fact required by the next.
Figure~\ref{wta5:fig:pipeline} summarizes this progression.

\begin{figure}[ht]
\centering
\begin{tikzpicture}[
  >=stealth,
  node distance=0.42cm,
  stage/.style={draw=black!65, line width=0.55pt, minimum height=1.05cm,
                text width=3.05cm, align=center, inner sep=4pt, font=\small}
]
\node[stage,fill=wtafivepipeone] (code) {arithmetic\\moment code};
\node[stage,fill=wtafivepipetwo,right=of code] (rare) {rare\\activation};
\node[stage,fill=wtafivepipethree,right=of rare] (canon) {endpoint\\enforcement};
\node[stage,fill=wtafivepipefour,right=of canon] (sim) {source-payoff\\simulation};
\draw[->,line width=0.65pt] (code) -- (rare);
\draw[->,line width=0.65pt] (rare) -- (canon);
\draw[->,line width=0.65pt] (canon) -- (sim);
\end{tikzpicture}
\caption{Logical pipeline of the reduction.}
\label{wta5:fig:pipeline}
\end{figure}

\wtaFiveLead{Layer I: the arithmetic code embeds source actions and makes them uniquely decodable.}
Each source action is assigned one canonical Blotto allocation.  Conversely, once an arbitrary Blotto allocation
has been rounded to endpoint bids, the arithmetic fingerprint identifies at most one source action.  This converse
decoding property is formalized in Section~\ref{subsec:wta5-construction}, in particular Lemma~\ref{wta5:lem:code}, and is the
part of the arithmetic gadget used in the soundness argument.

The completion costs are chosen so that every complete codeword has total encoding cost $D$, whereas every other endpoint
pattern has total completion cost separated from $D$ by a large gap.  This is the decoding property:
after arbitrary bids have been rounded to endpoints, the budget identifies one and only one source action.
The second structural property is
\[
\wtaFiveR_a\cap\wtaFiveC_b=\{f_{ab}\}.
\]
Thus a row codeword and a column codeword intersect in exactly one pair field, namely $f_{ab}$.  This unique
common field is what later carries the source payoff entry $(A_{ab},B_{ab})$ in Lemma~\ref{wta5:lem:simulation}.

\wtaFiveLead{The code uses a deterministic moment fingerprint.}
The arithmetic code is best understood as a tiny ``method of moments'' argument.  It is not statistical:
there is no sampling or estimation.  We simply encode three algebraic moments of the row labels into
separate base-$Q$ digits, where the base will later be fixed as $Q=q^4+1$.  If a candidate subset of pair fields has row labels $a_1,\ldots,a_t$, define
\[
M_0=t,\qquad M_1=\sum_j a_j,\qquad M_2=\sum_j a_j^2.
\]
A complete row-$a$ code has the fingerprint
\[
(M_0,M_1,M_2)=(q,qa,qa^2).
\]
If another subset has the same three numbers, then
\[
\sum_j(a_j-a)^2=M_2-2aM_1+a^2M_0=0,
\]
so every selected row label equals $a$.  Thus the zeroth moment fixes the number of selected pair fields,
the first moment fixes their mean, and the second moment forces zero variance.  Lemma~\ref{wta5:lem:code}
implements exactly this fingerprint using base-$Q$ digits.

\wtaFiveLead{Layer II: the inspector makes arbitrary nonneutral behavior rare.}
The common primary budget is
\[
B_*=D+1.
\]
Primary $i$ has a participation field $\mathrm{part}_i$.  Its neutral action puts all $B_*$ troops on $\mathrm{part}_i$; a
represented source action puts one troop on $\mathrm{part}_i$ and the remaining $D$ troops on one code.  Before we
know anything about the shape of a nonneutral allocation, the inspector compares full inspection of $\mathrm{part}_i$
with an outside reserve action.  The first part of the canonicalization argument in
Section~\ref{subsec:wta5-canonical} (Lemma~\ref{wta5:lem:canonical}) then implies
\[
x_R,x_C=O(h).
\]
This preliminary estimate is deliberately one-sided: it rules out too much activation, which is all that is needed
before analyzing the stabilizers.  Consequently even completely malformed primary allocations disturb a local
stabilizer contest only on an $O(h)$-probability event.

\wtaFiveLead{Layer III: two stabilizers convert a linear bid scale into a quadratic completion incentive.}
The construction assigns the stabilizer buffer fields values such that placing one troop on an empty private
buffer yields one unit of payoff relative to the all-zero baseline.  Consequently, moving $j$ troops from these
buffers to an encoding field has opportunity cost exactly $j$.  The stabilizers' battlefield value on the encoding
field is set slightly below $c_f$, and small stabilizer regret forces consecutive bid probabilities to have almost
constant pair-sums.  Consequently, one
stabilizer lies below level $j$ with probability roughly proportional to $j/c_f$.

The second stabilizer creates the required curvature: a primary wins the field only by beating both independent
stabilizer bids.  A partial bid of $j$ troops therefore succeeds with probability roughly
\[
\left(\frac{j}{c_f}\right)^2.
\]
Spending half the completion cost yields only about one quarter of the full winning probability, making partial
completion inefficient.  This local calculation is carried out in the proof of Lemma~\ref{wta5:lem:canonical} and is summarized by
\[
c_fg_f(j)\lesssim \frac{j^2}{c_f},\qquad
g_f(j)\lesssim \frac{j^2}{c_f^2},\qquad
c_f[1-g_f(c_f-a)]\gtrsim a,
\]
where $g_f(j)$ is the expected share won by a primary bid of $j$ troops.

These inequalities are then used only once, in the global rounding argument of
Section~\ref{subsec:wta5-canonical} (Lemma~\ref{wta5:lem:canonical}).  Every encoding bid is rounded to $0$ or $c_f$.
If the total distance from these endpoints is large, the allocation loses a linear amount of payoff.  If the
distance is small, the rounded endpoint set has cost close to $D$, so Lemma~\ref{wta5:lem:code} forces it to be
one unique row or column.  Troop conservation then repairs the remaining small defects.

A tiny shared primary buffer is retained solely as a repair device.  It gives a primary somewhere useful
to move a troop that currently sits on a field of zero value.  Its payoff scale is much smaller than the
code-enforcement scale, so it never participates in the simulation itself.

\wtaFiveLead{Layer IV: one common row--column field reproduces one source payoff entry.}
After canonicalization and calibration (Lemmas~\ref{wta5:lem:canonical} and~\ref{wta5:lem:simulation}), the source strategies
are the primary distributions conditional on activation:
\[
\sigma_a=\frac{\mu_R(\wtaFiveCodeR{a})}{x_R},
\qquad
\tau_b=\frac{\mu_C(\wtaFiveCodeC{b})}{x_C}.
\]

Thus $\sigma=(\sigma_a)_{a\in[q]_0}$ and $\tau=(\tau_b)_{b\in[q]_0}$ are the decoded row and column mixed strategies.
The earlier inspector argument only rules out too much activation; by itself it does not prevent $x_i$
from being much smaller than $h$.  After canonicalization, however, every active primary allocation is a complete
code, and such a code has a guaranteed payoff advantage over neutral play unless the inspector assigns
nonnegligible probability to the corresponding full-inspection action $I_i$.  Thus $I_i$ must receive
nonnegligible probability, while the preliminary rarity bound leaves substantial probability on $I_{\wtaFiveOut}$.
Their payoffs to the inspector are $(1+x_i)/2$ and $(1+h)/2$, respectively, so approximate equilibrium
forces them to be nearly indifferent and yields $x_i=h\pm o(h)$.  This calibration is proved in
Section~\ref{subsec:wta5-simulation}, Lemma~\ref{wta5:lem:simulation}.

If $R$ uses $\wtaFiveCodeR{a}$ and $C$ uses $\wtaFiveCodeC{b}$, they meet only on $f_{ab}$, where both bid
$c_{ab}$ and split the field.  Pair-field values are chosen so that this single tie contributes the
source-dependent term $A_{ab}$ to the row player and $B_{ab}$ to the column player.  The balancing fields
remove the deterministic row/column terms introduced by the arithmetic code.  Lemma~\ref{wta5:lem:simulation}
establishes the resulting payoff identity
\[
U_R(\wtaFiveCodeR{a})=C_R+x_C\eta(A\tau)_a+\text{small error},
\]
where $C_R$ is independent of the row label $a$, and $\eta$ is the source-payoff scale defined later.
There is a symmetric expression for $C$.  Section~\ref{subsec:wta5-decoding} then turns a source best response into
a Blotto deviation and transfers the small Blotto regret to the decoded source profile.

\begin{center}
\renewcommand{\arraystretch}{1.08}
\begin{tabular}{@{}p{0.25\textwidth}p{0.65\textwidth}@{}}
\toprule
Object & Role in the proof \\
\midrule
$\wtaFivePartR,\wtaFivePartC$ & \emph{participation fields}: distinguish neutral from active play before the active allocation has been canonicalized \\
$J$ & \emph{inspector}: makes activation rare and later calibrates its probability to $h$ \\
$S,T$ & \emph{stabilizers}: create the quadratic penalty for partial completion \\
$\mathcal B_S,\mathcal B_T$ & \emph{stabilizer buffer fields}: turn a troop moved to an encoding field into a linear cost of one per troop \\
$\mathcal B_P$ & \emph{primary buffer fields}: remove wrong-field troops without introducing another enforcement argument \\
$\wtaFiveBalR{a},\wtaFiveBalC{b}$ & \emph{balancing fields}: equalize the total cost of all source-action codes \\
$f_{ab}$ & \emph{pair field}: unique common field carrying source payoff entry $(A_{ab},B_{ab})$ \\
\bottomrule
\end{tabular}
\end{center}

\subsubsection{Construction and the moment code}\label{subsec:wta5-construction}

We now instantiate the arithmetic code described above.  To make the base-$Q$ interpretation completely literal, take
\[
Q=q^4+1.
\]
For a row label $a$ and column label $b$, define the row and column tags
\[
r_a=Qa+Q^2a^2,
\qquad
s_b=Q^3b+Q^4b^2.
\]

The unscaled completion cost of pair field $f_{ab}$ is
\begin{equation}\label{wta5:eq:oldpair}
\widehat c_{ab}=2(1+r_a+s_b).
\end{equation}
Choose the common unscaled code cost
\[
\widehat D=2\max\left\{\max_a\sum_b\widehat c_{ab},\;\max_b\sum_a\widehat c_{ab}\right\}+1.
\]
The balancing-field costs are exactly the amounts missing from the corresponding row or column:
\[
\widehat c_{\wtaFiveBalR{a}}=\widehat D-\sum_b\widehat c_{ab},
\qquad
\widehat c_{\wtaFiveBalC{b}}=\widehat D-\sum_a\widehat c_{ab}.
\]

Thus every row and column code has unscaled cost exactly $\widehat D$.

There are $m=q^2+2q$ encoding fields.  Put
\[
H=256(q+1)q,
\qquad
L=2^{20}mH,
\]
and scale the completion bids by
\begin{equation}\label{wta5:eq:completion}
c_f=L\widehat c_f+1,
\qquad
D=L\widehat D+q+1.
\end{equation}
The added $+1$ on every completed field is useful because a code contains exactly $q+1$ fields, while
the large factor $L$ turns exact uniqueness in the unscaled code into a robust additive gap.

\begin{lemma}[Moment-code geometry]\label{wta5:lem:code}
Every row code $\wtaFiveR_a$ and every column code $\wtaFiveC_b$ has completion cost $D$.  If
$E\subseteq\mathcal F_R$ is not one of the row codes $\wtaFiveR_a$, or if
$E\subseteq\mathcal F_C$ is not one of the column codes $\wtaFiveC_b$, then
\begin{equation}\label{wta5:eq:codegap}
\left|\sum_{f\in E}c_f-D\right|\ge L-m.
\end{equation}
Moreover
\begin{equation}\label{wta5:eq:additive}
\wtaFiveR_a\cap\wtaFiveC_b=\{f_{ab}\},
\qquad
\frac{c_{ab}}2=\theta+u_a+v_b,
\end{equation}
where
\[
\theta=L+\frac12,
\qquad
u_a=Lr_a,
\qquad
v_b=Ls_b.
\]
\end{lemma}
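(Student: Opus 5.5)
The plan is to reduce every claim to the unscaled costs $\widehat c_f$ and then use the scaling \eqref{wta5:eq:completion}. The argument follows the proof of Lemma~\ref{lem:threshold-sums-main}.

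\emph{Code cost.} Since $c_f=L\widehat c_f+1$, for any $E$ we have
\[
\sum_{f\in E}c_f-D=L\Bigl(\sum_{f\in E}\widehat c_f-\widehat D\Bigr)+\bigl(|E|-(q+1)\bigr).
\]
For $E=\wtaFiveR_a$ the unscaled sum is $\sum_b\widehat c_{ab}+\widehat c_{\wtaFiveBalR{a}}=\widehat D$ by the definition of the balancing costs. Also $|E|=q+1$, so the total is exactly $D$; columns are identical.

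\emph{Gap.} For an arbitrary $E$ we have $\bigl||E|-(q+1)\bigr|\le m$. Hence \eqref{wta5:eq:codegap} follows once we show that $\sum_{E}\widehat c_f\neq\widehat D$ for every non-code $E\subseteq\mathcal F_R$, because a nonzero integer multiplied by $L$ has absolute value at least $L$. The unscaled uniqueness is therefore the heart of the proof, and I would argue it as follows for $\mathcal F_R$.

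First, parity and size. Every $\widehat c_{ab}$ is even and $\widehat D$ is odd, so every $\widehat c_{\wtaFiveBalR{a}}$ is odd. Moreover $\widehat c_{\wtaFiveBalR{a}}\ge \widehat D-\tfrac{\widehat D-1}{2}>\widehat D/2$. So if the sum equals $\widehat D$, then $E$ contains exactly one row balancing field, say for $a$. Its pair fields $S$ then satisfy $\sum_{(h,r)\in S}\widehat c_{hr}=\sum_b\widehat c_{ab}$.

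Second, the moment fingerprint. Writing $t=|S|$, $M_1=\sum h$, $M_2=\sum h^2$, $N_1=\sum r$, $N_2=\sum r^2$ over $S$, the pair sum is
\[
2\bigl(t+QM_1+Q^2M_2+Q^3N_1+Q^4N_2\bigr).
\]
The main point to check is that there are no carries: $t\le q^2$, $M_1\le q^2(q-1)$ and $M_2,N_1\le q^2(q-1)^2$ are all $<Q=q^4+1$, so the base-$Q$ representation is unique. The same holds for the full row $a$, whose digits are $(q,qa,qa^2,\cdot,\cdot)$. Comparing digits gives $t=q$, $M_1=qa$ and $M_2=qa^2$. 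Then $\sum_j(h_j-a)^2=M_2-2aM_1+a^2t=0$, so every $h_j=a$. Since the $q$ pairs in $S$ are distinct, $S$ is the full row $a$, and hence $E=\wtaFiveR_a$. The column case is symmetric, using the top two digits $N_1,N_2$ instead; there $N_2\le q^2(q-1)^2$ sits in the leading digit, so no bound is needed for it.

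\emph{Remaining claims.} The intersection $\wtaFiveR_a\cap\wtaFiveC_b=\{f_{ab}\}$ is immediate, because balancing fields are side-specific and the only pair field in both row $a$ and column $b$ is $f_{ab}$. Finally,
\[
\frac{c_{ab}}{2}=\frac{2L(1+r_a+s_b)+1}{2}=L+\tfrac12+Lr_a+Ls_b,
\]
which is \eqref{wta5:eq:additive}. The only delicate step is the digit-bound verification above; everything else is bookkeeping.
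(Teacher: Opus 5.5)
Your proposal is correct and follows essentially the same route as the paper. You reduce to the unscaled costs via $\sum_{f\in E}c_f-D=L\bigl(\sum_{f\in E}\widehat c_f-\widehat D\bigr)+|E|-(q+1)$, isolate exactly one balancing field by parity and the half-budget bound, read off the zeroth, first and second moments from carry-free base-$Q$ digits, and conclude with the vanishing-variance identity; your explicit digit bounds, with the leading digit left unbounded, just make the paper's no-carry remark more precise.
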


\begin{proof}
The equal-cost and intersection statements are immediate from the definitions.  We focus on uniqueness,
because this is where the moment fingerprint appears.

Consider a subset of row-side encoding fields having unscaled cost $\widehat D$.  Every pair-field cost is
even, while $\widehat D$ and every row-balancer cost are odd.  Also every balancer exceeds
$\widehat D/2$.  Hence the subset contains exactly one row balancer, say $\wtaFiveBalR{a}$, together with some
pair fields $E$.  Those pair fields must have total unscaled cost $\sum_b\widehat c_{ab}$.

Divide this equality by two and look at its base-$Q$ digits.  Because $Q=q^4+1$ and at most $q^2$ pair
fields can be selected, none of the quantities
\[
|E|,\qquad \sum_E a',\qquad \sum_E(a')^2,\qquad \sum_E b',\qquad \sum_E(b')^2
\]
reaches $Q$.  Thus there is no carry between the five relevant digits.  Equality of the first three digits
therefore gives exactly
\begin{equation}\label{wta5:eq:moments}
|E|=q,
\qquad
\sum_{(a',b')\in E}a'=qa,
\qquad
\sum_{(a',b')\in E}(a')^2=qa^2.
\end{equation}
These are the zeroth, first, and second moments of the selected row labels.  Substituting
\eqref{wta5:eq:moments} into the variance identity gives
\[
\sum_{(a',b')\in E}(a'-a)^2
=
\sum_E(a')^2-2a\sum_Ea'+a^2|E|
=0.
\]
Every term is nonnegative, hence every selected pair field has row label $a$.  Since there are exactly
$q$ such fields and row $a$ contains exactly $q$ pair fields, $E$ is the complete row.  The column proof
is identical, now using the zeroth, first, and second moments of the column labels.

Finally, for any endpoint set $E$,
\[
\sum_{f\in E}c_f-D
=
L\left(\sum_{f\in E}\widehat c_f-\widehat D\right)+|E|-(q+1).
\]
If $E$ is not a code, the integer in parentheses is nonzero; since $|E|\le m$, the absolute value is at
least $L-m$.  This proves \eqref{wta5:eq:codegap}.  The additive decomposition in \eqref{wta5:eq:additive} follows
from \eqref{wta5:eq:oldpair} and \eqref{wta5:eq:completion}.
\end{proof}

\paragraph{Interpretation.}
The lemma is doing two different jobs.  Equation~\eqref{wta5:eq:codegap} is the \emph{enforcement} property:
any endpoint pattern whose budget is close to $D$ must be one code.  Equation~\eqref{wta5:eq:additive} is the
\emph{simulation} property: the completion bid of the unique common field separates additively into a row term and
a column term, which can later be canceled by balancing-field values.

\subsubsection{The Blotto instance and payoff scales}\label{subsec:wta5-instance}

The primaries and $J$ have budget
\[
B_*=D+1,
\]
and each stabilizer has budget
\[
B_{\rm st}=\sum_{f\in\mathcal F_{\rm enc}}(c_f-1).
\]
Let $T_B=3B_*+2B_{\rm st}$ and $M=T_B+1$.  We use four banks of $M$ ordinary buffer fields: the
shared primary bank $\mathcal B_P$, the private stabilizer banks $\mathcal B_S$ and $\mathcal B_T$, and
one private bank for $J$.  Since the entire game contains only $T_B$ troops, every pure profile leaves at
least one field of each bank globally empty.  We next use this pigeonhole property to eliminate allocations that waste troops on zero-valued fields: whenever a player places a troop on such a field, there is an empty field in the corresponding buffer bank to which that troop can be moved.

Let $\ell$ be the source bit length. Recall that $H=256(q+1)q$ is the polynomial amplification factor introduced in Section~\ref{subsec:wta5-construction}. We define
\begin{equation}\label{wta5:eq:scales}
P=100(1+\ell+M),
\qquad
h=P^{-20},
\qquad
\lambda=P^{-44},
\qquad
\eta=H\lambda,
\qquad
\kappa=P^{-48}.
\end{equation}
Only the ordering of these scales matters:
\[
1\gg h\gg \lambda\gg\kappa,
\]
while $\eta=H\lambda$ is the amplified source-payoff scale.  The inspector operates at the $1$ and $h$
scales, code enforcement at the $\lambda$ scale, and $\mathcal B_P$ at the much smaller
$\kappa$ scale.

The construction assigns battlefield value $5\kappa/4$ to each primary on every field of $\mathcal B_P$,
value $5(1+h)/(8B_*)$ to $J$ on each of its private buffer fields, and value $5/4$ to a stabilizer on each
field of its own private buffer bank.  Relative to the all-zero baseline, placing one troop on an otherwise
empty buffer field therefore gives the corresponding player an incremental payoff of, respectively,
\[
\kappa,
\qquad
\frac{1+h}{2B_*},
\qquad
1.
\]
Thus $\mathcal B_S$ and $\mathcal B_T$ literally turn an unused troop into one unit of opportunity cost.

The construction defines two participation fields, $\wtaFivePartR$ and $\wtaFivePartC$, and assigns battlefield value one on $\wtaFivePartR$ to $R$ and $J$, and value one on $\wtaFivePartC$ to $C$ and $J$.  The neutral action $N_R$ puts all $B_*$ troops on $\wtaFivePartR$, while the canonical active
allocation $\wtaFiveCodeR{a}$ puts one troop on $\wtaFivePartR$ and $c_f$ troops on every $f\in\wtaFiveR_a$; $N_C$ and $\wtaFiveCodeC{b}$
are defined symmetrically.  These are named feasible allocations, not restrictions on the Blotto strategy spaces.

The inspector has three named actions.  In $I_R$ it places all $B_*$ troops on $\wtaFivePartR$, and in $I_C$ it
places all $B_*$ troops on $\wtaFivePartC$.  Its outside action $I_{\wtaFiveOut}$ places one troop on each of $B_*$
fixed distinct fields of its private buffer bank; this is feasible because $M>B_*$.  As shown in the first part of Lemma~\ref{wta5:lem:canonical}, allocations that place troops on fields of zero
value can be repaired away by a profitable-deviation argument.  After this repair no other player bids on
$J$'s private buffer fields.  Relative to the common all-zero baseline on this private bank, the \emph{buffer}
contribution of $I_{\wtaFiveOut}$ is therefore exactly $(1+h)/2$.  Before positive participation of active primaries has
been enforced, $I_{\wtaFiveOut}$ may in addition receive an all-zero share on a participation field, so its total adjusted
payoff is at least $(1+h)/2$ at that stage.  After the positive-participation repair below, those extra shares
vanish and its total adjusted payoff is exactly $(1+h)/2$.

\paragraph{Induced local stabilizer game.}
Fix an encoding field $f$.  The construction assigns battlefield value $c_f-3/8$ to each stabilizer on this
field.  Although the stabilizers choose full Blotto allocations rather than independent fieldwise bids, their
private buffer banks make the marginal comparison on $f$ one-dimensional: increasing a stabilizer's bid on
$f$ by $j$ troops can be financed by removing $j$ troops from empty private buffer fields, and therefore has
opportunity cost $j$.  If we temporarily ignore the rare visits of the primaries to $f$, this induces a two-player
local game between $S$ and $T$ on bids $\{0,\ldots,c_f-1\}$.  Against an opponent bid $k$, a positive bid $j$
receives battlefield value $c_f-3/8$ when $j>k$, half of this value when $j=k$, and zero when $j<k$, minus
the opportunity cost $j$.  If both stabilizers bid zero, then all players bid zero on $f$, so equal splitting gives
each stabilizer one fifth of its battlefield value.  Consider the distribution
\begin{equation}\label{wta5:eq:pi}
\pi_f(0)=\frac{5}{8(c_f-3/8)},
\qquad
\pi_f(j)=\frac1{c_f-3/8}\quad(1\le j\le c_f-1).
\end{equation}
Its probabilities sum to one because
\[
\frac{5/8+c_f-1}{c_f-3/8}=1.
\]
Against this distribution, bid zero gives expected payoff
\[
\frac{c_f-3/8}{5}\,\pi_f(0)=\frac18.
\]
For every $1\le j\le c_f-1$,
\[
\Pr[k<j]+\frac12\Pr[k=j]
=\frac{j+1/8}{c_f-3/8},
\]
so bid $j$ gives expected payoff
\[
(c_f-3/8)\left(\Pr[k<j]+\frac12\Pr[k=j]\right)-j=\frac18.
\]
Thus every feasible bid is a best response, and \eqref{wta5:eq:pi} is a symmetric full-support Nash equilibrium of this
induced local stabilizer game.

\paragraph{All-zero payoff fraction.}
The induced local game also determines the all-zero contribution used below.  If a primary is assigned any
battlefield value on $f$ but does not bid there, it receives a positive fraction of that value only when both
stabilizers bid zero.  Since their mixed strategies are independent, this event has probability $\pi_f(0)^2$;
under the all-zero convention the primary then receives one fifth of its battlefield value.  We therefore define
the corresponding all-zero payoff fraction
\begin{equation}\label{wta5:eq:b0}
\beta_f^0=\frac15\pi_f(0)^2
=\frac15\left(\frac{5}{8(c_f-3/8)}\right)^2
=\frac{5}{64(c_f-3/8)^2}.
\end{equation}

\paragraph{Primary battlefield values.}
We now choose the values of the encoding fields for the two primaries.  On a pair field $f_{ab}$, the row
player's value $w_{ab}^R$ is defined by
\begin{equation}\label{wta5:eq:pairvalue}
w_{ab}^R\left(\frac12-\beta_{ab}^0\right)
=\frac{\lambda c_{ab}}2-\eta A_{ab}.
\end{equation}
The left-hand side is the difference between receiving one half of the field value in a tie and receiving
only the reference all-zero fraction $\beta_{ab}^0$ when the row player does not visit the field.  Thus
\eqref{wta5:eq:pairvalue} separates this difference into a structural term $\lambda c_{ab}/2$ and the
source-dependent correction $-\eta A_{ab}$.

The value of the row balancing field is then chosen so that the total reference gain from completing row
code $a$ has a prescribed form:
\begin{equation}\label{wta5:eq:balvalue}
\sum_{f\in\wtaFiveR_a}w_f^R(1-\beta_f^0)=\lambda D+h\lambda u_a.
\end{equation}
The column values are defined symmetrically, with $B_{ab}$ and $v_b$ in place of $A_{ab}$ and $u_a$,
and each primary has value zero on the other primary's balancing fields.  The reason for the extra
$h\lambda u_a$ and $h\lambda v_b$ terms is the additive decomposition
$c_{ab}/2=\theta+u_a+v_b$ from Lemma~\ref{wta5:lem:code}: after the participation probabilities are calibrated
to $h$, these terms cancel the row- and column-dependent part of the completion bids.  Lemma~\ref{wta5:lem:simulation}
performs this cancellation explicitly and leaves the source terms $A_{ab}$ and $B_{ab}$.

\paragraph{Bounds on the primary values.}
The construction keeps every primary value close to the corresponding scale $\lambda c_f$.  Since every
unscaled completion cost is at least two,
\begin{equation}\label{wta5:eq:cmin}
c_f\ge2L+1=2^{21}mH+1.
\end{equation}
Moreover,
\[
\beta_f^0=\frac{5}{64(c_f-3/8)^2}\le\frac{1}{5c_f^2}.
\]
Rewriting \eqref{wta5:eq:pairvalue} as
\[
w_{ab}^R=\lambda\frac{c_{ab}-2HA_{ab}}{1-2\beta_{ab}^0}
\]
and using $0\le A_{ab}\le1$ gives
\begin{equation}\label{wta5:eq:pairvaluebound}
|w_{ab}^R-\lambda c_{ab}|\le3\lambda H,
\end{equation}
with the same bound for $w_{ab}^C$.

We next derive the corresponding estimate for the row balancing field.  Subtracting
$\lambda\sum_{f\in\wtaFiveR_a}c_f=\lambda D$ from \eqref{wta5:eq:balvalue} gives
\[
\begin{aligned}
&w_{\wtaFiveBalR{a}}^R(1-\beta_{\wtaFiveBalR{a}}^0)-\lambda c_{\wtaFiveBalR{a}}\\
&\qquad =h\lambda u_a
-\sum_b\left(w_{ab}^R(1-\beta_{ab}^0)-\lambda c_{ab}\right).
\end{aligned}
\]
For every pair field, \eqref{wta5:eq:pairvaluebound} and the bound
$\beta_{ab}^0\le1/(5c_{ab}^2)$ imply
\[
\begin{aligned}
\left|w_{ab}^R(1-\beta_{ab}^0)-\lambda c_{ab}\right|
&\le |w_{ab}^R-\lambda c_{ab}|+\beta_{ab}^0w_{ab}^R\\
&\le 4\lambda H.
\end{aligned}
\]
Indeed, \eqref{wta5:eq:cmin} gives $3H<c_{ab}/2$, so
\eqref{wta5:eq:pairvaluebound} yields $w_{ab}^R\le3\lambda c_{ab}/2$, and hence
$\beta_{ab}^0w_{ab}^R\le3\lambda/(10c_{ab})<\lambda H$.
Moreover, $u_a<D/(2q)$ and $D<B_*<P/100$ give
$hu_a<P^{-19}/(200q)<H$.  Therefore
\[
\left|w_{\wtaFiveBalR{a}}^R(1-\beta_{\wtaFiveBalR{a}}^0)-\lambda c_{\wtaFiveBalR{a}}\right|
\le(4q+1)\lambda H\le5q\lambda H.
\]
Using again $\beta_f^0\le1/(5c_f^2)$, we obtain
\[
\begin{aligned}
|w_{\wtaFiveBalR{a}}^R-\lambda c_{\wtaFiveBalR{a}}|
&\le
\frac{5q\lambda H+\lambda c_{\wtaFiveBalR{a}}\beta_{\wtaFiveBalR{a}}^0}
     {1-\beta_{\wtaFiveBalR{a}}^0}\\
&\le 8q\lambda H.
\end{aligned}
\]
The same argument applies to the column balancing fields.

These additive estimates imply the two bounds used later.  Every row-side encoding field is either a pair
field, for which the deviation from $\lambda c_f$ is at most $3\lambda H$, or a row balancing field, for
which it is at most $8q\lambda H$.  Since $m=q^2+2q\ge q$, Equation~\eqref{wta5:eq:cmin} gives
\[
8qH\le8mH<\frac{c_f}{2}.
\]
Hence
\[
|w_f^R-\lambda c_f|\le\frac{\lambda c_f}{2},
\]
which in particular implies
$\lambda c_f/2\le w_f^R\le2\lambda c_f$.  Finally, $\mathcal F_R$ contains $q^2$ pair fields and $q$
row balancing fields, so
\[
\begin{aligned}
\sum_{f\in\mathcal F_R}|w_f^R-\lambda c_f|
&\le 3q^2\lambda H+8q^2\lambda H\\
&=11q^2\lambda H
\le11m\lambda H
<100\lambda mH.
\end{aligned}
\]
Consequently
\begin{equation}\label{wta5:eq:valuebounds}
\frac{\lambda c_f}{2}\le w_f^R\le2\lambda c_f
\quad(f\in\mathcal F_R),
\qquad
\sum_{f\in\mathcal F_R}|w_f^R-\lambda c_f|\le100\lambda mH,
\end{equation}
with the same bounds for $C$.

These are the only quantitative properties of the primary values used in the canonicalization argument.
We next record a uniform bound on the raw payoff that either primary can obtain away from its participation
field.  Such payoff can come only from the primary's encoding fields and from the shared buffer bank
$\mathcal B_P$, so it is bounded by
\begin{equation}\label{wta5:eq:outsidevalue}
\max\left\{\sum_{f\in\mathcal F_R} w_f^R,\;\sum_{f\in\mathcal F_C} w_f^C\right\}
+\frac{5\kappa M}{4}
<3\lambda M<3\lambda P.
\end{equation}
Indeed, Equation~\eqref{wta5:eq:valuebounds} gives
\[
\sum_{f\in\mathcal F_R}w_f^R
\le 2\lambda\sum_{f\in\mathcal F_R}c_f
\le 2\lambda\sum_{f\in\mathcal F_{\rm enc}}c_f
=2\lambda(B_{\rm st}+m),
\]
and the same bound holds for the column primary.  By the definition of $B_{\rm st}$ and
Equation~\eqref{wta5:eq:cmin}, every encoding field contributes at least one unit to
$\sum_{f\in\mathcal F_{\rm enc}}(c_f-1)$, so $m\le B_{\rm st}$.  Since
$M=3B_*+2B_{\rm st}+1$, it follows that $B_{\rm st}+m\le2B_{\rm st}<M$, and hence the first term in
\eqref{wta5:eq:outsidevalue} is smaller than $2\lambda M$.

For the second term, the construction assigns value $5\kappa/4$ to a primary on each of the $M$ fields of
$\mathcal B_P$.  Thus $5\kappa M/4$ bounds its entire raw payoff from that bank, including all-zero shares.
Finally, $\kappa/\lambda=P^{-4}\le10^{-8}$ gives
\[
2\lambda M+\frac{5\kappa M}{4}<3\lambda M,
\]
while $M<P/100<P$ gives the last inequality in \eqref{wta5:eq:outsidevalue}.

\paragraph{Normalization and accuracy.}
The preceding values are the \emph{raw} battlefield values used in the proof.  The total raw battlefield
value accessible to any one player is smaller than $P$: for a primary this follows from
\eqref{wta5:eq:outsidevalue} and its participation-field value one; for a stabilizer and for $J$ it follows from
$B_{\rm st}<M/2$ and $M<P/100$.  Dividing every battlefield value by $P$ therefore normalizes all utilities
to $[0,1]$.  We fix
\[
\eps=P^{-1000}.
\]

Thus the initial raw regret is at most $P\eps=P^{-999}$.  Moreover the explicit instance contains all $M$
buffer fields and the source payoff data, so after harmless padding $n_{\rm inst}\ge300$ and
\[
P=100(1+\ell+M)\le300n_{\rm inst}\le n_{\rm inst}^2.
\]
Consequently $n_{\rm inst}^{-2000}\le P^{-1000}$, so the target accuracy used in the reduction is inverse-polynomial in $n_{\rm inst}$, as required by Theorem~\ref{thm:wta-five-player-hardness}.

\subsubsection{Canonicalization: arbitrary Blotto allocations become codes}\label{subsec:wta5-canonical}

For a Blotto player $i$ and a pure allocation $\alpha_i$, let $U_i(\alpha_i)$ denote its full expected raw
payoff from $\alpha_i$ against the current mixed strategies of the other players.

The next lemma is the structural heart of the proof.  Its job is to start from the full unrestricted Blotto
strategy space and show that, after changing at most $P^{-800}$ total probability mass in the regret-based
repairs (together with a harmless payoff-monotone normalization inside private buffer banks), each primary is
either neutral or is playing one exact source-action code.

\begin{lemma}[Canonicalization]\label{wta5:lem:canonical}
From every normalized $P^{-1000}$-Nash equilibrium one can obtain a profile of raw regret at most
$\eps_{\rm raw}:=P^{-800}$ with the following properties.  In the regret-based repairs below, the total probability
mass that is replaced is at most $P^{-800}$.  In addition, after zero-valued fields have been removed, we apply a
payoff-monotone normalization inside the private buffer banks of $S,T$, and $J$; this auxiliary normalization may
act on arbitrary probability mass, but it leaves every other player's payoff unchanged and does not increase any
regret.
\[
\begin{aligned}
\operatorname{supp}(\mu_R)&\subseteq\{N_R,\wtaFiveCodeR{0},\ldots,\wtaFiveCodeR{q-1}\},\\
\operatorname{supp}(\mu_C)&\subseteq\{N_C,\wtaFiveCodeC{0},\ldots,\wtaFiveCodeC{q-1}\}.
\end{aligned}
\]
Moreover, if $x_i$ is primary $i$'s probability of nonneutral play, then
\[
x_R,x_C\le2h,
\]

$J$ uses only $I_R,I_C,I_{\wtaFiveOut}$, and every stabilizer bid on $f$ lies in $\{0,\ldots,c_f-1\}$.
\end{lemma}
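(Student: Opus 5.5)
The plan is to run a sequence of regret-based repairs, each using only the slack $P\eps=P^{-999}$ in raw regret, with each repair costing a polynomial factor of accuracy. In total the repairs should fit into the budget $P^{-800}$.

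\textbf{Step 1: zero-valued fields and private-bank normalization.} Any troop on a field of zero value to its owner is moved to a globally empty field in that player's buffer bank. By the pigeonhole property $M>T_B$, such a field exists. Moving the troop there gains at least $\kappa$, $(1+h)/(2B_*)$ or $1$. Hence the mass on allocations with such troops is at most $P^{-999}/\kappa\le P^{-950}$, and we reassign that mass. Inside the private banks of $S,T,J$ we then move troops so that each bank field holds at most one troop. This is payoff-monotone and invisible to the other players. For a stabilizer, a bid $\ge c_f$ on $f$ costs more than the value $c_f-3/8$, so mass on such bids is removed by the same argument. Similarly, $J$'s only valued fields are $\mathrm{part}_R$, $\mathrm{part}_C$ and its bank. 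Concentrating on one participation field dominates splitting, because equal splitting is piecewise linear in troops. This leaves $I_R,I_C,I_{\mathrm{out}}$ after a small repair.

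\textbf{Step 2: rarity.} Compare $J$'s actions. $I_{\mathrm{out}}$ yields at least $(1+h)/2$. $I_R$ yields about $\tfrac12+\tfrac12 x_R$, up to all-zero shares, since a primary that is not neutral has at most $B_*-1$ troops on $\mathrm{part}_R$ while $J$ bids $B_*$. So $J$ puts weight on $I_R$ only if $x_R\gtrsim h$. Conversely, if $\mu_J(I_R)$ is small, an active primary gains roughly $\mu_J(I_R)$-free access to the participation value. The neutral action, however, ties $J$ and loses half of its participation value exactly on the event $I_R$. We then balance these incentives. Activation costs at most the outside value $<3\lambda P$ per \eqref{wta5:eq:outsidevalue}. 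Neutral play loses $\tfrac12\mu_J(I_R)$ relative to the active share. From this we derive that either $J$ is nearly indifferent between $I_R$ and $I_{\mathrm{out}}$, which forces $x_R\le h+O(P^{-900})$, or $J$ plays $I_{\mathrm{out}}$ almost surely. In the latter case the neutral payoff exceeds any active payoff by roughly $\tfrac12-3\lambda P$, which forces $x_R$ to be tiny. Either way $x_R,x_C\le 2h$.

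\textbf{Step 3: endpoint enforcement (main obstacle).} We condition on the low probability $O(h)$ of primary visits. From this, the stabilizers' local games on each $f$ are $O(h)$-perturbations of the induced game. Near-indifference of consecutive bids then gives the pair-sum identities, and hence the CDF bounds $\Pr[k<j]\approx j/c_f$ up to $O(h+P^{-900})$ errors. These yield the three inequalities $c_fg_f(j)\lesssim j^2/c_f$, $g_f(j)\lesssim j^2/c_f^2$ and $c_f[1-g_f(c_f-a)]\gtrsim a$.

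For an arbitrary nonneutral allocation, each encoding bid is rounded to the nearer endpoint, and $\Delta$ denotes the total distance from the endpoints. The quadratic curvature plus the value bounds \eqref{wta5:eq:valuebounds} show that the allocation is worse than a suitably repaired one by $\Omega(\lambda\Delta)$, minus error terms. If $\Delta\ge L/4$, the loss is linear and large. If $\Delta<L/4$, the rounded endpoint set has cost within $\Delta+1$ of $D$, so by \eqref{wta5:eq:codegap} it is exactly one code $\mathcal R_a$ or $\mathcal C_b$. Troop conservation (budget $D+1$, one troop on $\mathrm{part}_i$) then shows that the remaining defects can be moved onto incomplete fields or into $\mathcal B_P$ with a strict gain of at least $\kappa$ per troop. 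The same repair handles nonneutral allocations with zero encoding part, since sending $D$ troops to a code gains $\Omega(\lambda)$ on average, or else neutral play beats them.

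\textbf{Bookkeeping.} Each noncanonical allocation is thus beaten by a canonical one by at least $\kappa$. The replaced mass is therefore at most $P^{-999}/\kappa\le P^{-900}$ per repair, and summing over the finitely many repair types stays below $P^{-800}$. Changing that much mass changes any payoff by at most $P\cdot P^{-900}$, so the resulting raw regret stays below $P^{-800}$.

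The delicate point is Step 3. The stabilizer CDF estimates must be derived from approximate regret while the primaries themselves may be malformed. The circularity is broken by the one-sided bound of Step 2, which is proved before any shape information is available.
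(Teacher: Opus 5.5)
Your overall architecture matches the paper's: zero-value repair via buffers, private-buffer normalization, overbid capping, inspector-based rarity, a local stabilizer estimate yielding the three shape inequalities, then moment-code rounding and troop conservation. However, several steps fail as written, starting with the inspector.

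The claim in Step~1 that concentrating on one participation field dominates splitting is false. $J$'s payoff from $\mathrm{part}_R$ is the step function $\Pr[b<j]+\tfrac12\Pr[b=j]$ of its bid $j$, where $b$ is $R$'s participation bid. Suppose $R$ were active with participation bid $1$ most of the time. Then bidding $2$ on $\mathrm{part}_R$ and parking $B_*-2$ troops in the private bank earns about $3/2$, beating both $I_R$ (about $1$) and $I_{\mathrm{out}}$ (about $1/2$). Partial inspections can only be removed after rarity is known. The paper's one-troop move gains $(1-2h)\frac{1+h}{2B_*}-2h$ precisely because $R$ is neutral with probability at least $1-2h$. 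So your Step~2 may not assume $J$ is supported on $\{I_R,I_C,I_{\mathrm{out}}\}$.

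Step~2 also has its mechanisms reversed. Neutral play beats every active allocation by about $\iota_R/2-3\lambda P$, which is large only when $J$ inspects often. When $J$ plays $I_{\mathrm{out}}$ almost surely, activation is not penalized at all. In that case the bound on $x_R$ must come from $J$'s side, since $I_R$ would earn $(1+x_R)/2$ against $(1+h)/2$.

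The paper gets both facts at once, for arbitrary $J$ strategies. A neutral deviation secures primary $i$ participation payoff $1-\iota_i/2$, and $i$ and $J$ share at most one unit on $\mathrm{part}_i$. Hence $U_J\le\frac{1+h}{2}(1-\iota_R-\iota_C)+\frac{\iota_R+\iota_C}{2}+6\lambda P+2P^{-945}$. Comparing this with the guarantees $(1+h)/2$ of $I_{\mathrm{out}}$ and $(1+x_i)/2$ of $I_i$ yields $\iota_R+\iota_C\le 1/8$ and $x_i<2h$. Only then are partial inspections removed.

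In Step~3, small regret does not give near-indifference of consecutive bids. It only controls bids carrying non-negligible probability, and nothing guarantees full support of the approximate stabilizer marginals. This is the missing idea behind Proposition~\ref{wta5:clm:local}:
\begin{enumerate}[label=(\roman*)]
\item the paper rewrites the local contest as the zero-sum game $Z=S-te_0\mathbf{1}^\top+s\mathbf{1}e_0^\top$ with skew-symmetric $S$, and writes down explicit equalizers $\chi+s\zeta$ and $\chi+t\zeta$;
\item a three-case argument proves $s,t<2/3$, so these equalizers put mass at least $1/(24c)$ on every bid;
\item only then is the duality gap $2r$ converted into a bound $96cr$ on the spread of \emph{all} pure payoffs, which gives the pair-sum relations with coordinate error $10^4c^2r$ (a polynomial-in-$c$ loss, not $O(h)$).
\end{enumerate}

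Separately, your defect $\Delta$ counts only distances of encoding bids from endpoints. So the step ``$\Delta<L/4$ implies the rounded cost is within $\Delta+1$ of $D$'' fails, e.g.\ for endpoint encoding bids together with many troops on $\mathrm{part}_i$ or in $\mathcal B_P$. The defect must include $t_{\rm buf}+(t_{\rm part}-1)$, as in the paper's $\Phi$, with the large-defect bound charging such troops at least $\lambda/2$ each. Nonneutral allocations with zero participation bid also need their own repair, which gains at least $1/2$ once $\iota_i\le 1/8$.

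A smaller point concerns Step~1: the globally empty bank field depends on the opponents' realized allocations. The deviation must therefore pick a uniformly random bank field, gaining $\kappa/M$ rather than $\kappa$. This is harmless, since $M/\kappa<P^{49}$.
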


\begin{proof}
The proof first eliminates allocations that waste troops on fields of zero value and derives a preliminary
activation bound.  It then uses one local estimate for the stabilizer contests and a global defect argument
to round every relevant nonneutral allocation to a unique code.  Before any repair, the normalized
$P^{-1000}$-equilibrium has raw regret at most $P^{-999}$.  We will update this bound explicitly after each
repair.  Since every raw utility lies in $[0,P]$, changing total probability mass $\delta$ in the players'
mixed strategies changes any raw regret by at most $2P\delta$.  The private-buffer normalization used below
is different: because it weakly increases the normalized player's payoff and leaves every other player's payoff
unchanged, it cannot increase any player's regret and therefore requires no small-mass estimate.

\wtaFiveLead{Removing troops from zero-valued fields.}
We first eliminate allocations that place a troop on a field of zero value.  Fix such an allocation and move
one zero-value troop to a field chosen uniformly from the player's corresponding buffer bank.  Removing the
troop from its original field causes no payoff loss.  Consider first a primary.  Each field in $\mathcal B_P$
has battlefield value $5\kappa/4$.  If the selected buffer field is globally empty, then before the move the
primary receives only the all-zero share
\[
\frac15\cdot\frac{5\kappa}{4}=\frac{\kappa}{4},
\]
whereas after placing one troop there it is the unique positive bidder and receives $5\kappa/4$.  The gain is
therefore exactly $\kappa$.  If the selected buffer field is already occupied, adding the troop cannot reduce
the primary's payoff on that field: before the move it bids zero and receives zero, while afterwards it either
still loses, ties, or wins.  The analogous gain from an empty buffer field is larger for $J,S,$ and $T$.
Since every pure profile leaves at least one of the $M$ fields in the relevant buffer bank globally empty, the
uniformly random move has expected gain at least $\kappa/M$.

For each player, let $\delta$ be the probability assigned to allocations containing a zero-value troop.
At the initial profile, define a unilateral deviation that, on each such realization, selects one zero-value troop
and moves that troop to a uniformly random field of the corresponding buffer bank, while leaving all other
realizations unchanged.  The deviation depends only on the player's own realized allocation and its own fresh
randomization, so it is a valid mixed-strategy deviation.  By the preceding calculation its expected gain is at
least $\delta\kappa/M$.  Since the raw regret is at most $P^{-999}$ and
$M/\kappa< P^{49}/100$, we obtain
\[
\delta\le \frac{MP^{-999}}{\kappa}<\frac{P^{-950}}{100}.
\]
This deviation is used only to prove that the bad mass is small.  Having obtained the bound on $\delta$, we
replace every pure allocation containing at least one zero-value troop by any fixed feasible allocation for that
player using only fields that it values or its designated buffer bank.  Thus all zero-value troops are removed in
one repair, even if a bad allocation contained several of them.  Repairing all five players therefore changes
total probability mass by less than $P^{-950}$.  The raw regret after this repair is consequently smaller than
\[
P^{-999}+2P\cdot P^{-950}<P^{-948}.
\]
We may henceforth assume that every player uses only fields it values or its corresponding buffer bank.

\wtaFiveLead{Payoff-monotone normalization inside private buffer banks.}
We now make precise a harmless normalization that is used implicitly by the local stabilizer argument and by
the inspector canonicalization.  After the preceding repair, no player other than the owner of a private buffer
bank places a positive bid on that bank.  Consider first a stabilizer.  If one of its pure allocations places at
least two troops on one private-buffer field, one of those troops can be moved to an unused field of the same
private bank.  The original field remains positively occupied, so the stabilizer loses no payoff there; on the
new field the move changes the all-zero share into a unique positive win and increases the stabilizer's raw
payoff by one.  Repeating this operation yields an allocation in which every occupied private-buffer field
contains exactly one troop.  The same operation for $J$ increases its payoff by
$(1+h)/(2B_*)$ whenever a duplicate is split.  Once all occupied private-buffer fields are singletons, fix an
ordering of each private bank and move the $k$ occupied singleton bids to the first $k$ fields of that ordering.
Because all private-buffer fields have the same value and no other player bids there, this relocation does not
change the owner's payoff.

Apply this deterministic normalization to every pure allocation in the mixed strategies of $S,T$, and $J$.
It does not alter any bid on an encoding or participation field.  It also does not alter any other player's
payoff, because the other players assign value zero to these private-buffer fields and, after the preceding
repair, never bid on them.  The normalized player's expected payoff weakly increases while its best-response
value against the unchanged opponents is the same.  Hence its regret weakly decreases, and every other
player's regret is unchanged.  This normalization may be applied to arbitrary probability mass and is therefore
not included in the $P^{-800}$ small-mass accounting in the statement of the lemma.  From this point onward,
every stabilizer and inspector private-buffer bid is either zero or one, and for each possible number $k$ of
private-buffer troops the canonical locations are the first $k$ fields of the fixed ordering.

\wtaFiveLead{Capping stabilizer bids.}
The same reasoning removes stabilizer overbids.  After the preceding repair, a stabilizer's private buffers
are not visited by other players.  If it bids $j\ge c_f$ on encoding field $f$, moving those $j$ troops to
distinct unused private-buffer fields gains at least
\[
j-(c_f-3/8)\ge\frac38.
\]
For an allocation containing one or more overbids, the stabilizer can select one such field and perform the
preceding move; this defines a valid deviation with gain at least $3/8$ on every bad realization.  Hence, since
the current raw regret is below $P^{-948}$, each stabilizer assigns probability at most
$(8/3)P^{-948}$ to allocations containing at least one overbid.  After this probability bound has been
established, we repair every such pure allocation completely, for example by moving all troops involved in
overbids to distinct unused private-buffer fields.  If there are several overbids, the gains from removing them
are nonnegative field by field, and there are enough unused private-buffer fields because $M$ exceeds the total
number of troops in the game.  Repairing both stabilizers changes total mass by less than $P^{-947}$, so the
raw regret afterwards is smaller than
\[
P^{-948}+2P\cdot P^{-947}<P^{-945}.
\]
Every stabilizer bid on $f$ now lies in $\{0,\ldots,c_f-1\}$.  For any capped pure allocation, the number of
private-buffer troops is
\[
B_{\rm st}-\sum_{g\in\mathcal F_{\rm enc}}t_g
=\sum_{g\in\mathcal F_{\rm enc}}(c_g-1-t_g),
\]
so it is at least the slack $c_f-1-t_f$ on any one encoding field $f$.  Because the private-buffer troops are
singletons after the normalization above, increasing the bid on $f$ by one costs exactly one unit of buffer
payoff per transferred troop; decreasing the bid can be financed in the reverse direction using unused private
buffer fields.  Applying this modification to each realized pure allocation defines a valid unilateral mixed
strategy.  Hence the stabilizer's global regret bounds the corresponding local regret on each encoding field.

\wtaFiveLead{A preliminary bound on primary activation.}
For the inspector calculations from this point onward, subtract from every payoff of $J$ the constant
\[
B_J^0:=M\frac{1+h}{8B_*},
\]
which is the sum of the all-zero shares of all fields in $J$'s private buffer bank after the zero-value repair.
This is the same constant for every action of $J$, so the subtraction does not change any best response or
regret.  We continue to write $U_J$ for this adjusted payoff.  In this convention, each singleton troop placed
on an otherwise empty private-buffer field contributes exactly $(1+h)/(2B_*)$ above the baseline.

Let $\iota_i$ be the probability that $J$ fully inspects $\mathrm{part}_i$.  A neutral deviation by primary $i$ gets
participation payoff $1-\iota_i/2$, so its current participation payoff is at least
$1-\iota_i/2-3\lambda P-P^{-945}$.  Since the primary and $J$ together receive at most one unit from $\mathrm{part}_i$, while $J$'s
adjusted reserve payoff is at most $(1+h)/2$ whenever it is not fully inspecting, we obtain
\begin{equation}\label{wta5:eq:Jupper}
U_J\le
\frac{1+h}{2}(1-\iota_R-\iota_C)+\frac{\iota_R+\iota_C}{2}+6\lambda P+2P^{-945}.
\end{equation}
The outside action $I_{\wtaFiveOut}$ guarantees payoff $(1+h)/2$.  Since the current raw regret is at most
$P^{-945}$, the inspector's equilibrium payoff must therefore satisfy
\[
U_J\ge \frac{1+h}{2}-P^{-945}.
\]
Combining this lower bound with \eqref{wta5:eq:Jupper} yields
\[
\frac{1+h}{2}-P^{-945}
\le
\frac{1+h}{2}(1-\iota_R-\iota_C)
+\frac{\iota_R+\iota_C}{2}
+6\lambda P+2P^{-945}.
\]
Expanding the first term on the right and cancelling $(1+h)/2$ from both sides gives
\[
\frac{h}{2}(\iota_R+\iota_C)
\le 6\lambda P+3P^{-945}.
\]
Hence
\[
\iota_R+\iota_C
\le \frac{12\lambda P+6P^{-945}}{h}.
\]
Using $h=P^{-20}$ and $\lambda=P^{-44}$, this becomes
\[
\iota_R+\iota_C
\le 12P^{-23}+6P^{-925}
<\frac18,
\]
where the final inequality follows from $P\ge100$.  Fully inspecting primary $i$ guarantees
$(1+x_i)/2$: it ties the neutral bid and beats every nonneutral participation bid.  Similarly,
\[
x_i\le h+12\lambda P+6P^{-945}
\le h\bigl(1+12P^{-23}+6P^{-925}\bigr)<2h.
\]
Thus
\begin{equation}\label{wta5:eq:early}
\iota_R+\iota_C\le\frac18,
\qquad
x_R,x_C\le2h.
\end{equation}
This is the crucial ordering fact: arbitrary nonneutral primary behavior is already rare before we invoke
any code structure.

\wtaFiveLead{Removing partial inspector allocations.}
We next remove partial inspector allocations.  A \emph{partial inspection} sends a positive but non-full
amount of $J$'s budget to one participation field.  Move one of those troops to the next unused field in the fixed
ordering of $J$'s private buffer bank.  When the corresponding primary is neutral, which occurs with probability at least
$1-2h$ by \eqref{wta5:eq:early}, the partial bid already loses the participation field, so the move loses nothing
there and gains $(1+h)/(2B_*)$ from the buffer.  On the exceptional nonneutral event the loss is at most
one.  Hence the expected gain is at least
\[
(1-2h)\frac{1+h}{2B_*}-2h.
\]
Because $M=3B_*+2B_{\rm st}+1>B_*$ and $P=100(1+\ell+M)$, we have $B_*<P/100$ and $P\ge100$.  Therefore
\[
\frac12-h-2h^2-4hB_*
>\frac12-P^{-20}-2P^{-40}-\frac1{25}P^{-19}>0,
\]
so
\[
(1-2h)\frac{1+h}{2B_*}-2h>\frac1{4B_*}.
\]
Consequently, if $J$ assigns probability $\delta_J$ to partial inspections, the unilateral deviation that
performs the preceding one-troop move on each partial-inspection realization has expected gain at least
$\delta_J/(4B_*)$.  Using the current raw-regret bound $P^{-945}$ and $B_*<P/100$,
\[
\delta_J\le4B_*P^{-945}<P^{-944}.
\]
We now use this probability bound to eliminate all partial inspections at once: every partial-inspection pure
allocation is replaced by $I_{\wtaFiveOut}$.  This changes only $\delta_J$ probability mass, irrespective of how many
participation troops the original allocation contained.  After this repair the raw regret is smaller than
\[
P^{-945}+2P\cdot P^{-944}<P^{-942}.
\]
Any inspector allocation that was not repaired has no partial participation bid.  Since its budget is $B_*$,
it is therefore either $I_R$, $I_C$, or an allocation placing no troop on either participation field.  In the last
case all $B_*$ troops lie in the private buffer bank, and the private-buffer normalization above makes that
allocation exactly $I_{\wtaFiveOut}$.  Hence we may thereafter assume that $J$ uses only $I_R,I_C,I_{\wtaFiveOut}$.

\wtaFiveLead{Positive participation by active primaries.}
Likewise, consider a nonneutral allocation of primary $i\in\{R,C\}$ that places zero troops on its own
participation field $\mathrm{part}_i$.  Fix any canonical code allocation of that primary.  By construction,
the code places one troop on $\mathrm{part}_i$.
Suppose first that $J$ does not inspect $\mathrm{part}_i$.  After the preceding elimination of bids on
zero-valued fields, no other player makes a positive bid on this participation field.  Under the original
allocation all bids on $\mathrm{part}_i$ are therefore zero, so equal splitting gives primary $i$ only the
all-zero share $1/5$ of the field's value one.  Under the code deviation, primary $i$ bids one troop and is
the unique positive bidder, so it receives the full value one.  The gain on the participation field is thus
exactly
\[
1-\frac15=\frac45.
\]
If instead $J$ uses $I_i$, then $J$ bids all $B_*$ troops on $\mathrm{part}_i$.  The primary loses the field
both with its original bid zero and with the code's bid one, so the deviation causes no loss on the
participation field in this event.  Consequently the expected gain from the participation field alone is
\[
(1-\iota_i)\frac45.
\]
By \eqref{wta5:eq:early}, $\iota_i\le1/8$, so this gain is at least $\frac78\cdot\frac45=\frac7{10}$.

Changing to the fixed code also changes the bids on the other battlefields.  In the worst case the primary
could lose all of its payoff outside its participation field; Equation~\eqref{wta5:eq:outsidevalue} bounds this
possible loss by $3\lambda P$.  The deviation therefore improves its expected raw payoff by at least
\[
\frac7{10}-3\lambda P
=\frac7{10}-3P^{-43}
>\frac12,
\]
since $P\ge100$.  If such zero-participation nonneutral allocations have total probability $\delta_i$, replacing only those
realizations by the fixed code improves expected payoff by at least $\delta_i/2$.  The current raw regret is
below $P^{-942}$, so, for each primary,
\begin{equation}\label{wta5:eq:zero-participation-mass}
\delta_i<2P^{-942}.
\end{equation}
By Equation~\eqref{wta5:eq:zero-participation-mass}, repairing both primaries changes total mass by less than
$P^{-941}$ and leaves raw regret below
\[
P^{-942}+2P\cdot P^{-941}<P^{-939}.
\]
After this repair, every relevant nonneutral allocation bids at least one troop on its participation field.

\wtaFiveLead{Local quadratic return.}
First we introduce a local stability estimate, which is used in two places.  In the present section, it yields the
return bounds~\eqref{wta5:eq:shape1}--\eqref{wta5:eq:shape3} that drive the canonicalization argument.  It is used
again in Section~\ref{subsec:wta5-simulation} to control the stabilizers' zero-bid probabilities under the
effective prize induced by primary visits.

\begin{proposition}[Local stabilizer estimate]\label{wta5:clm:local}
Let $c\ge3$ be odd, let $V=c-1+\delta$ with $1/2\le\delta\le5/8$, and consider the induced two-player
stabilizer game on bids $\{0,\ldots,c-1\}$ with opportunity cost one per troop and all-zero payoff $V/5$.
Its symmetric equilibrium is
\begin{equation}\label{wta5:eq:localexact}
q_0^\delta=\frac{\delta}{V},
\qquad
q_j^\delta=
\begin{cases}
(2-8\delta/5)/V,&j\text{ odd},\\
(8\delta/5)/V,&j\ge2\text{ even}.
\end{cases}
\end{equation}
If both marginal bid distributions have local regret at most $r$ and $10^8c^4r\le1$, then each differs
coordinatewise from $q^\delta$ by at most $10^4c^2r$.
\end{proposition}

\begin{proof}
We first identify the exact symmetric equilibrium.  The construction-specific case $\delta=5/8$ was already
verified in Section~\ref{subsec:wta5-instance}: Equation~\eqref{wta5:eq:pi} is a symmetric full-support Nash equilibrium
of the induced local stabilizer game.  The present calculation extends the same full-support indifference argument
to every $\delta\in[1/2,5/8]$.  Let $q=(q_0,\ldots,q_{c-1})$ be the opponent's marginal bid
distribution.  A positive bid $j$ has payoff
\[
V\left(\Pr[W<j]+\frac12\Pr[W=j]\right)-j,
\]
whereas bid zero has payoff $Vq_0/5$.  Since the candidate equilibrium in \eqref{wta5:eq:localexact} has
positive probability on every bid, all pure bids in its support must be indifferent.  Thus, for $j\ge2$,
we compare the utilities of bids $j$ and $j-1$:
\[
V\left(\Pr[W<j]+\frac12\Pr[W=j]\right)-j
=
V\left(\Pr[W<j-1]+\frac12\Pr[W=j-1]\right)-(j-1),
\]
which gives
\[
q_{j-1}+q_j=\frac2V.
\]
Similarly, comparing the utility of bid $1$ with the all-zero payoff of bid $0$ gives
\[
V\left(q_0+\frac12q_1\right)-1=\frac{Vq_0}{5},
\]
and therefore
\[
\frac45q_0+\frac12q_1=\frac1V.
\]
The first relation makes the positive probabilities alternate.  Since $c-1$ is even, normalization then
gives $q_0=\delta/V$, and substitution yields exactly \eqref{wta5:eq:localexact}.  For
$\delta\in[1/2,5/8]$ every coordinate is positive; in particular every coordinate of $q^\delta$ is at
least $1/(2c)$.

We next prove stability without assuming that the approximate marginals have full support.  Let $p$ and
$q$ be the two marginal bid distributions.  Adding the opponent's bid and subtracting $V/2$ from each
player's payoff does not change any unilateral payoff difference.  Away from the all-zero outcome, the
resulting row payoff is the skew-symmetric matrix
\[
S_{zw}=\frac V2\operatorname{sgn}(z-w)-z+w.
\]
The all-zero outcome contributes the additional term $-3V/10$.  If $e_0$ is the unit vector of bid zero,
set
\[
s=\frac{3Vp_0}{10},
\qquad
 t=\frac{3Vq_0}{10}.
\]
For the fixed profile $(p,q)$, both players' regrets are therefore the same as in the zero-sum game with
row-payoff matrix
\[
Z=S-te_0\one^\top+s\one e_0^\top.
\]
Indeed, the extra terms that distinguish $Z$ from the transformed local game are constant across the
player's own pure bids and hence do not affect best-response differences.

The corresponding zero-sum game has explicit full-support equalizers.  Define
\[
\chi_j=
\begin{cases}
\delta/V,&j\text{ even},\\
(2-\delta)/V,&j\text{ odd},
\end{cases}
\qquad
\zeta_j=
\begin{cases}
0,&j=0,\\
2/V,&j>0\text{ even},\\
-2/V,&j\text{ odd}.
\end{cases}
\]
A direct summation gives
\[
S\chi=0,
\qquad
S\zeta=e_0-\frac{\delta}{V}\one,
\qquad
\one^\top\chi=1,
\qquad
\one^\top\zeta=0.
\]
Consequently $\chi+s\zeta$ and $\chi+t\zeta$ equalize the two sides of $Z$.

To use these equalizers in the duality-gap argument below, they must be valid mixed strategies and, more
importantly, must assign a nonnegligible probability to every bid.  Since $\zeta$ has both positive and
negative coordinates, this reduces to showing that the coefficients $s$ and $t$ are bounded away from the
values at which some coordinate of $\chi+s\zeta$ or $\chi+t\zeta$ could vanish.  We show
$s,t<2/3$.  By symmetry it is enough to prove the bound for $t$.

Put $\alpha=\delta/(4V)$.  This threshold separates the case in which bid zero is played often enough
that the regret bound controls its payoff directly from the case in which its probability is already small.
If bid zero is played with probability $p_0\ge\alpha$, then its payoff can be at most $r/p_0$ below a
best response; otherwise the contribution of bid zero alone to the row player's regret would exceed $r$.
We therefore distinguish three cases.

\paragraph{Case 1: $p_0\ge\alpha$.}
Comparing the row player's bids zero and one gives
\[
\frac{4Vq_0}{5}-1\le\frac{r}{p_0}
\le\frac{4Vr}{\delta},
\]
which implies $t=3Vq_0/10<2/3$ under $10^8c^4r\le1$.

\paragraph{Case 2: $p_0<\alpha$ and $q_0<\alpha$.}
Here the desired bound follows directly from the definition of $t$:
\[
t=\frac{3Vq_0}{10}<\frac{3V\alpha}{10}=\frac{3\delta}{40}<\frac23.
\]

\paragraph{Case 3: $p_0<\alpha\le q_0$.}
Now the column player uses bid zero with probability at least $\alpha$.  Since its total regret is at most
$r$, its zero bid is within $r/q_0\le r/\alpha=4Vr/\delta$ of a best response.  Against the row
marginal $p$, the column player's top bid $c-1$ has payoff
\[
\delta-\frac{V}{2}p_{c-1},
\]
whereas bid zero has payoff $Vp_0/5$.  Comparing these two bids therefore gives
\[
\delta-\frac{V}{2}p_{c-1}-\frac{Vp_0}{5}
\le \frac{4Vr}{\delta}.
\]
Because $p_0<\alpha=\delta/(4V)$, we have $Vp_0/5<\delta/20$; moreover
$10^8c^4r\le1$, together with $V<c$ and $\delta\ge1/2$, gives
$4Vr/\delta<\delta/20$.  Hence
\[
\frac{V}{2}p_{c-1}
>\delta-\frac{\delta}{20}-\frac{\delta}{20}
=\frac{9\delta}{10},
\]
so in particular $p_{c-1}\ge\delta/V$.

The row player therefore uses bid $c-1$ with probability at least $\delta/V$.  Its regret bound implies
that this top bid is within $r/p_{c-1}\le Vr/\delta$ of a best response.  Comparing it with bid $1$ gives
\[
V\left(q_0+\frac12q_1\right)-1
-\left(\delta-\frac{V}{2}q_{c-1}\right)
\le\frac{Vr}{\delta}.
\]
Dropping the nonnegative terms involving $q_1$ and $q_{c-1}$ yields
\[
Vq_0\le1+\delta+\frac{Vr}{\delta}.
\]
The assumed regret bound also gives $Vr/\delta<1/100$, and since $\delta\le5/8$,
\[
t=\frac{3Vq_0}{10}
\le\frac3{10}\left(1+\delta+\frac{Vr}{\delta}\right)
<\frac12<\frac23.
\]

These three cases cover all possibilities for $t$.  Interchanging the two players gives the symmetric bound
$s<2/3$.  Hence every coordinate of both $\chi+s\zeta$ and $\chi+t\zeta$ is positive and, in fact, at
least
\[
\frac1{24V}>\frac1{24c}.
\]

The two regrets are at most $r$, so the zero-sum duality gap is at most $2r$.  Let $\nu$ denote the value
of $Z$.  Because the equalizers above put probability at least $1/(24c)$ on every pure bid, the duality-gap
bound controls every pure payoff, not only their average.  Concretely, every row pure-bid payoff against
$q$ lies in
\[
[\nu-48cr,\,\nu+2r],
\]
and the analogous statement holds for the column player.  Thus any two pure bids of the same player have
payoffs differing by at most $96cr$.

We finally translate this payoff closeness into coordinate closeness.  Let
$e_j=q_j-q_j^\delta$.  For $j\ge2$,
\[
\left|\frac V2(q_{j-1}+q_j)-1\right|\le96cr,
\]
while comparison of bids one and zero gives
\[
\left|V\left(\frac45q_0+\frac12q_1\right)-1\right|\le96cr.
\]
Subtracting the corresponding equalities for $q^\delta$ and using $V\ge c/2$ yields
\[
|e_{j-1}+e_j|\le384r\qquad(j\ge2),
\qquad
\left|\frac45e_0+\frac12e_1\right|\le192r.
\]
Since $c-1$ is even and $\sum_j e_j=0$, summing the pairs
$(e_1+e_2),(e_3+e_4),\ldots$ gives $|e_0|\le192cr$.  The second displayed inequality then gives
$|e_1|\le692cr$, and iterating the adjacent-pair recurrence gives
\[
|e_j|\le1076cr<10^4c^2r
\qquad(0\le j<c).
\]
The same argument applies to $p$, proving the claim.  Notice that the stability statement does not
assume that the approximate marginals have full support.  In particular, setting $r=0$ shows that
$q^\delta$ is the unique Nash equilibrium of the induced local game; hence possible non-full-support
equilibria are ruled out by the stability argument itself, rather than by the preceding full-support
construction.
\end{proof}

Return to an encoding field $f$ and write $c=c_f$.  By \eqref{wta5:eq:completion}, $c$ is odd.  Primary visits
occur with probability at most $x_R+x_C\le4h$, so replacing the actual contest by the induced game with
$\delta=5/8$ changes the regret by at most $8hc$.  The current raw regret is below $P^{-939}$ and
$c-1\le B_{\rm st}<M<P/100$, hence the induced local regret is at most
\[
r_f:=P^{-939}+8hc<P^{-19}.
\]
Therefore
\[
10^8c^4r_f
<10^8\left(\frac{P}{100}\right)^4P^{-19}
=P^{-15}<1,
\qquad
10^4c^2r_f\le\frac{1}{10^4c^2}.
\]
We next extract the three consequences of Proposition~\ref{wta5:clm:local} that are used in the rounding argument.
For the present application $\delta=5/8$, so the exact distribution in \eqref{wta5:eq:localexact} has
probability $5/[8(c_f-3/8)]$ on bid zero and probability $1/(c_f-3/8)$ on every positive bid.  The
coordinatewise error above is at most $1/(10^4c_f^2)$.  Hence, for $0\le j<c_f$,
\[
\Pr[\text{stabilizer bid}\le j]
\le
\frac{j+5/8}{c_f-3/8}+\frac{j+1}{10^4c_f^2}
\le\frac{j+1.0001}{c_f},
\]
where the second inequality uses
$(j+5/8)/(c_f-3/8)\le(j+1)/c_f$ for $j\le c_f-1$.  Similarly, every
positive bid $\ell$ satisfies
\[
\Pr[\text{stabilizer bid}=\ell]
\ge\frac1{c_f-3/8}-\frac1{10^4c_f^2}
>\frac1{c_f}.
\]

Let $g_f(j)$ be the expected share of the field won by a primary that bids $j$ against the two independent
stabilizers.  To obtain any positive share, both stabilizers must bid at most $j$, so
\[
g_f(j)
\le
\Pr[S_f\le j]\Pr[T_f\le j]
\le\left(\frac{j+1.0001}{c_f}\right)^2.
\]
Consequently, for $1\le j<c_f$,
\[
c_fg_f(j)
\le\frac{(j+1.0001)^2}{c_f}
\le\frac{j^2}{c_f}+4,
\]
and, if $1\le j\le c_f/4$,
\[
g_f(j)
\le\left(\frac{j+1.0001}{c_f}\right)^2
\le5\frac{j^2}{c_f^2}.
\]
For the lower-tail estimate, put $j=c_f-\Delta$ with $1\le\Delta\le c_f/4$.  If one fixed stabilizer,
say $S$, bids at least $j$, then the primary either loses the field or ties for the highest bid, and hence
loses at least one half of the full-field payoff.  The bids $j,j+1,\ldots,c_f-1$ are all positive and there
are exactly $\Delta$ of them, each with probability at least $1/c_f$.  Therefore
\[
1-g_f(c_f-\Delta)
\ge\frac12\Pr[S_f\ge c_f-\Delta]
\ge\frac{\Delta}{2c_f}.
\]
We have thus proved
\begin{align}
c_fg_f(j)&\le\frac{j^2}{c_f}+4 &&(1\le j<c_f),\label{wta5:eq:shape1}\\
g_f(j)&\le5\frac{j^2}{c_f^2} &&(1\le j\le c_f/4),\label{wta5:eq:shape2}\\
c_f[1-g_f(c_f-\Delta)]&\ge\frac{\Delta}{2}
&& (1\le \Delta\le c_f/4).\label{wta5:eq:shape3}
\end{align}
These three inequalities are exactly the properties of stabilizer play used in the remainder of the
canonicalization argument.  The global rounding proof below does not use the explicit equilibrium formula
~\eqref{wta5:eq:localexact}: Equation~\eqref{wta5:eq:shape1} controls the return from an arbitrary intermediate bid,
Equation~\eqref{wta5:eq:shape2} makes small stray bids quadratically negligible, and
Equation~\eqref{wta5:eq:shape3} gives a linear loss when a bid falls short of its completion level.  Thus
Proposition~\ref{wta5:clm:local} enters the rest of Lemma~\ref{wta5:lem:canonical} only through
~\eqref{wta5:eq:shape1}--\eqref{wta5:eq:shape3}.

\wtaFiveLead{Global rounding to a code.}
We prove the row case.  We first condition on $C$ playing its neutral allocation; under this conditioning,
the argument splits into two cases according to the size of the defect introduced below.  For each positively
valued row-side encoding field $f\in\mathcal F_R$, let $t_f$ be $R$'s bid on $f$.  Let $t_{\rm buf}$ be the
number of troops that $R$ places in $\mathcal B_P$, and let $t_{\rm part}\ge1$ be its bid on $\wtaFivePartR$.
After the preceding repair there are no other troops, so
\begin{equation}\label{wta5:eq:budget}
\sum_{f\in\mathcal F_R}t_f+t_{\rm buf}+(t_{\rm part}-1)=D.
\end{equation}
For every encoding field, round $t_f$ to the nearer endpoint
$\bar t_f\in\{0,c_f\}$, breaking an exact tie arbitrarily.  Define the total defect
\begin{equation}\label{wta5:eq:defect}
\Phi=t_{\rm buf}+(t_{\rm part}-1)+\sum_{f\in\mathcal F_R}|t_f-\bar t_f|.
\end{equation}
This single number measures every way in which the allocation fails to be a complete codeword: troops
left in the tiny buffer, excess participation troops, partial completion, stray bids, and overbids.

Let $U_R^0(t)$ denote $R$'s expected payoff from its encoding fields and $\mathcal B_P$, conditional on
$C$ being neutral, after subtracting from each field the payoff that $R$ would receive there with bid zero.
We now derive the defect bound field by field.  If $0<t_f<c_f$, then the adjusted contribution of field $f$
is at most $w_f^Rg_f(t_f)$.  By \eqref{wta5:eq:shape1},
\[
w_f^Rg_f(t_f)
\le \lambda c_fg_f(t_f)+|w_f^R-\lambda c_f|
\le \lambda\frac{t_f^2}{c_f}+4\lambda+|w_f^R-\lambda c_f|.
\]
Because $\bar t_f$ is the nearer endpoint in $\{0,c_f\}$, the elementary inequality
\[
\frac{t_f^2}{c_f}
\le t_f-\frac12|t_f-\bar t_f|
\qquad(0\le t_f\le c_f)
\]
holds: for $t_f\le c_f/2$ it is $t_f^2/c_f\le t_f/2$, while for
$t_f\ge c_f/2$ it is
$t_f^2/c_f\le 3t_f/2-c_f/2$.  If $t_f=c_f$, the primary wins the field outright, and if $t_f>c_f$ it
still wins outright; in both cases the same upper bound below follows directly from
$w_f^R\le\lambda c_f+|w_f^R-\lambda c_f|$.  Thus for every $t_f\ge0$ the adjusted contribution of field
$f$ is at most
\[
\lambda t_f-\frac{\lambda}{2}|t_f-\bar t_f|
+4\lambda+|w_f^R-\lambda c_f|.
\]
Summing over $f\in\mathcal F_R$, using $|\mathcal F_R|\le m$ and
\eqref{wta5:eq:valuebounds}, gives an encoding-field contribution at most
\[
\lambda\sum_{f\in\mathcal F_R}t_f
-\frac{\lambda}{2}\sum_{f\in\mathcal F_R}|t_f-\bar t_f|
+4\lambda m+100\lambda mH.
\]
Conditional on $C$ being neutral, an occupied field of $\mathcal B_P$ contributes at most $\kappa$ above
its all-zero baseline, so the entire buffer contribution is at most $\kappa t_{\rm buf}$.  Substituting the
budget identity \eqref{wta5:eq:budget} and using $\kappa<\lambda/2$ therefore yields
\begin{align*}
U_R^0(t)
&\le \lambda\bigl(D-t_{\rm buf}-(t_{\rm part}-1)\bigr)
-\frac{\lambda}{2}\sum_{f\in\mathcal F_R}|t_f-\bar t_f|
+\kappa t_{\rm buf}+4\lambda m+100\lambda mH\\
&\le
\lambda D-\frac{\lambda}{2}\Phi+4\lambda m+100\lambda mH.
\end{align*}
Hence
\begin{equation}\label{wta5:eq:defectbound}
U_R^0(t)
\le
\lambda D-\frac{\lambda}{2}\Phi+4\lambda m+100\lambda mH.
\end{equation}

For comparison, a complete code bids $c_f$ on each of its selected fields and therefore wins each such
field outright.  Proposition~\ref{wta5:clm:local} gives zero-bid probability below $1/c_f$ for each stabilizer in the
present $\delta=5/8$ contest, so the bid-zero baseline fraction on field $f$ is below $1/(5c_f^2)$.
Consequently the baseline loss on a selected field is at most
$2\lambda/(5c_f)$ by \eqref{wta5:eq:valuebounds}.  Since the selected completion costs sum to $D$ and the
value deviations over all row-side fields sum to at most $100\lambda mH$, the adjusted payoff of a
complete code is at least
\[
\lambda D-100\lambda mH-\frac{2\lambda m}{5}
\ge \lambda D-101\lambda mH.
\]

\paragraph{Case 1: $\Phi\ge1000mH$.}
By \eqref{wta5:eq:defectbound}, in this case the gain from replacing the allocation by a complete code is at least
\[
\bigl(\lambda D-101\lambda mH\bigr)
-\bigl(\lambda D-400\lambda mH+4\lambda m\bigr)
=\lambda m(299H-4),
\]
Since $H\ge512$ and $m\ge1$, this is larger than $\lambda/10$.

\paragraph{Case 2: $\Phi<1000mH$.}
Let
\[
E=\{f\in\mathcal F_R:\bar t_f=c_f\}
\]
be the completed-field set of the rounded endpoint pattern.  Equations \eqref{wta5:eq:budget}--\eqref{wta5:eq:defect} imply
\[
\left|\sum_{f\in E}c_f-D\right|<1000mH<L-m,
\]
where the last inequality follows from $L=2^{20}mH$.  The moment-code lemma now does all the combinatorial work: $E$ must be one unique row code, say
$\wtaFiveR_a$.

It remains only to eliminate the small residual defects.  Writing $(x)_+=\max\{x,0\}$, troop conservation gives
the exact accounting identity
\begin{equation}\label{wta5:eq:accounting}
\sum_{f\in E}(c_f-t_f)_+
=\sum_{f\in E}(t_f-c_f)_+
 +\sum_{f\in\mathcal F_R\setminus E}t_f
 +t_{\rm buf}+(t_{\rm part}-1).
\end{equation}
The left-hand side is the missing mass on selected code fields; the terms on the right are overbids, stray
bids, buffer troops, and excess participation troops.  By \eqref{wta5:eq:cmin},
$1000mH/\min_fc_f<1000/2^{21}<1/2000$, so every deficit and stray bid is at most $c_f/2000$ on its
field.  Using the weaker consequence
$c_f[1-g_f(c_f-\Delta)]\ge\Delta/4$ of \eqref{wta5:eq:shape3}, filling a deficit of $\Delta$ troops gains at
least $\lambda\Delta/8$.  By \eqref{wta5:eq:shape2} and $w_f^R\le2\lambda c_f$, deleting a stray bid of $j$
troops costs less than $\lambda j/200$.  Overbids cost nothing to remove, a troop removed from
$\mathcal B_P$ loses at most $\kappa<\lambda/200$, and lowering $t_{\rm part}$ to one is payoff-neutral.
If the allocation is not already $\wtaFiveCodeR{a}$, Equation~\eqref{wta5:eq:accounting} has at least one missing troop and the
net improvement is therefore more than
$(1/8-1/200)\lambda>\lambda/10$.

Both cases therefore show that, conditional on $C$ being neutral, every noncanonical nonneutral row allocation
has a code deviation of gain at least $\lambda/10$.

\paragraph{Removing the conditioning on $C$.}
When $C$ is nonneutral, the payoff comparison above may change, but this event has probability at most
$x_C\le2h$ by Equation~\eqref{wta5:eq:early}.  For either fixed row allocation, changing $C$ from neutral to a
nonneutral allocation can only weakly decrease $R$'s payoff on the shared encoding and primary-buffer fields.
Therefore, in the difference ``payoff of the code deviation minus payoff of the current allocation'', the bad
effect of removing the conditioning is at most the payoff loss of the code itself.  Equation~\eqref{wta5:eq:outsidevalue}
bounds that raw payoff away from the participation field by $3\lambda P$.  Multiplying by
$x_C\le2h$, removing the conditioning changes the gain by at most $6h\lambda P$.  Since $h=P^{-20}$ and $P\ge100$,
\[
6h\lambda P=6P^{-19}\lambda<\frac{\lambda}{20}
\qquad\text{because}\qquad 120P^{-19}<1.
\]
Thus every noncanonical nonneutral row allocation has an unconditional code deviation of gain at least
$\lambda/20$.

Let $\delta_R$ and $\delta_C$ denote the probabilities that the two primaries assign to noncanonical
nonneutral allocations before this final repair.  Replacing each such realization by its corresponding
profitable code gives expected gain at least $\delta_i\lambda/20$.  Since the current raw regret is below
$P^{-939}$,
\[
\delta_R,\delta_C
\le \frac{20P^{-939}}{\lambda}
=20P^{-895}<P^{-894}.
\]
Repairing both primaries therefore changes total mass by less than $P^{-893}$ and leaves raw regret below
\[
P^{-939}+2P\cdot P^{-893}<P^{-891}.
\]
Together with the preceding regret-based repairs, the total probability mass replaced is below $P^{-892}$.
As stated in the lemma, this accounting excludes only the payoff-monotone private-buffer normalization, which
does not increase regret.  In particular, both $P^{-800}$ bounds stated in the lemma hold.  For the remainder
of the proof we use the convenient common upper bound
\begin{equation}\label{wta5:eq:rawregret}
\eps_{\rm raw}:=P^{-800}
\end{equation}
on the raw regret of the repaired profile.  Both primaries are now supported only on their neutral
allocation and exact codewords, completing the proof.
\end{proof}

\subsubsection{Calibration, payoff transfer, and decoding}\label{subsec:wta5-simulation}

At the repaired profile every active primary allocation is now an exact codeword, and
\eqref{wta5:eq:rawregret} gives raw regret at most $\eps_{\rm raw}=P^{-800}$.  The following lemma first calibrates
the activation probabilities and then transfers code payoffs to the source game.

\begin{lemma}[Calibration and single-intersection payoff transfer]\label{wta5:lem:simulation}
For each primary $i$,
\begin{equation}\label{wta5:eq:calibration}
|x_i-h|\le\frac{16\eps_{\rm raw}}{\lambda D},
\qquad
x_i\ge\frac h2.
\end{equation}
Define the decoded source strategies by
\[
\sigma_a=\frac{\mu_R(\wtaFiveCodeR{a})}{x_R},
\qquad
\tau_b=\frac{\mu_C(\wtaFiveCodeC{b})}{x_C}.
\]
Then, for every $a\in[q]_0$,
\begin{equation}\label{wta5:eq:rowsim}
U_R(\wtaFiveCodeR{a})=C_R+x_C\eta(A\tau)_a+\operatorname{err}_a^R,
\end{equation}
where $C_R$ is independent of $a$ and
\begin{equation}\label{wta5:eq:error}
|\operatorname{err}_a^R|\le20\lambda h(q+1)+4P\eps_{\rm raw}.
\end{equation}
The symmetric formula holds for every $b\in[q]_0$, with $x_R\eta(\sigma^\top B)_b$.
\end{lemma}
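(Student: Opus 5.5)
The plan is to compute the payoff of each canonical code exactly against the repaired profile of Lemma~\ref{wta5:lem:canonical}, obtain calibration from two approximate-indifference conditions, and then read off the source payoffs from the single intersection field $f_{ab}$.

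\emph{Step 1: exact participation and encoding payoffs.} After canonicalization, $J$ plays only $I_R,I_C,I_{\mathrm{out}}$, and no other player bids on $\mathrm{part}_R$.

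\begin{itemize}
\item Participation field under $N_R$: the primary receives $1-\iota_R/2$, since it ties $I_R$ at $B_*$.
\item Participation field under $\mathrm{code}_R(a)$: the primary receives $1-\iota_R$, since it wins unless $I_R$ is played.
\item Encoding fields under $\mathrm{code}_R(a)$: the bid $c_f$ strictly beats every stabilizer bid, which is capped at $c_f-1$. The column primary meets $\mathrm{code}_R(a)$ only on $f_{ab}$, when it plays $\mathrm{code}_C(b)$, and there the two tie at $c_{ab}$. The encoding payoff is therefore
\[
\sum_{f\in\mathcal R_a} w_f^R \;-\; x_C\sum_b \tau_b\,\frac{w^R_{ab}}{2}.
\]
\item Encoding fields under $N_R$: the primary collects only the all-zero shares. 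These differ from $\sum_f w^R_f\beta_f^0$ by terms controlled by Proposition~\ref{wta5:clm:local}, applied with the effective prize perturbed by rare primary visits.
\end{itemize}

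\emph{Step 2: calibration, which I expect to be the main obstacle.} By \eqref{wta5:eq:balvalue} and \eqref{wta5:eq:valuebounds}, the gain of any code over $N_R$ on the encoding fields is $\lambda D+O(\lambda h u_a+\lambda m H)$, and this is $\ge \lambda D/2$.

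\begin{itemize}
\item Neutral play must be near-optimal. Lemma~\ref{wta5:lem:canonical} gives $\mu_R(N_R)\ge 1-2h$, so $N_R$ is within $2\eps_{\rm raw}$ of a best response. Comparing it with a code gives $\iota_R/2\ge \lambda D/2-O(\eps_{\rm raw})$, hence $\iota_R\ge\lambda D/2$.
\item $I_R$ must be near-optimal. Since $I_R$ now has probability at least $\lambda D/2$, it is within $2\eps_{\rm raw}/(\lambda D)$ of a best response for $J$. Its payoff is $(1+x_R)/2$, while $I_{\mathrm{out}}$ yields $(1+h)/2$. This gives $h-x_R\le 4\eps_{\rm raw}/(\lambda D)$.
\item $I_{\mathrm{out}}$ must be near-optimal. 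It has probability at least $3/4$ by \eqref{wta5:eq:early}, so it is within $4\eps_{\rm raw}/3$ of a best response. This gives $x_R-h\le 8\eps_{\rm raw}/3$.
\end{itemize}

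Together these give $|x_R-h|\le 16\eps_{\rm raw}/(\lambda D)$, and $x_R\ge h/2$ then follows because $\eps_{\rm raw}=P^{-800}$ while $\lambda D h\gg P^{-800}$. The delicate points are the following.
\begin{itemize}
\item The gain of a code over $N_R$ must be bounded uniformly in $a$, including the $h\lambda u_a$ term, which is at most $\lambda H$ by the estimate preceding \eqref{wta5:eq:valuebounds}.
\item The stabilizers' zero-bid probabilities must be kept under control while primaries visit with probability about $h$.
\item The regret-to-near-optimality conversion must divide by the correct support masses.
\end{itemize}
The column primary is handled identically.

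\emph{Step 3: payoff transfer.} Write $\frac{w^R_{ab}}{2}=w^R_{ab}\bigl(\frac12-\beta^0_{ab}\bigr)+w^R_{ab}\beta^0_{ab}$. By \eqref{wta5:eq:pairvalue} and Lemma~\ref{wta5:lem:code},
\[
w^R_{ab}\Bigl(\tfrac12-\beta^0_{ab}\Bigr)=\lambda(\theta+u_a+v_b)-\eta A_{ab}.
\]
Likewise write $\sum_{f\in\mathcal R_a}w^R_f=\lambda D+h\lambda u_a+\sum_{f\in\mathcal R_a}w^R_f\beta^0_f$. Substituting both into the encoding payoff from Step~1 gives
\[
U_R(\mathrm{code}_R(a))=\Bigl[\,1-\iota_R+\lambda D-x_C\lambda\theta-x_C\lambda\textstyle\sum_b\tau_b v_b\Bigr]+(h-x_C)\lambda u_a+x_C\eta(A\tau)_a+(\beta\text{-terms}).
\]
The bracket is independent of $a$ and defines $C_R$. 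The remaining terms are bounded as follows.
\begin{itemize}
\item The $u_a$ term: by \eqref{wta5:eq:calibration} and $u_a<D$, it is at most $16\eps_{\rm raw}$.
\item The $\beta$-terms: each $w_f\beta_f^0$ is at most $2\lambda/(5c_f)$, summed over $q+1$ fields, and the $x_C$-weighted ones are smaller still. Together with the discrepancy between the actual stabilizer zero-bid probabilities and $\pi_f(0)$, these stay within $20\lambda h(q+1)$.
\item Any residual from the repaired-profile bookkeeping is absorbed into the $4P\eps_{\rm raw}$ slack.
\end{itemize}
This yields \eqref{wta5:eq:rowsim}--\eqref{wta5:eq:error}. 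The column formula follows by the symmetric computation, with $B$, $v_b$ and $x_R\eta(\sigma^\top B)_b$ in place of $A$, $u_a$ and $x_C\eta(A\tau)_a$.
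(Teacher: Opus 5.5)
Your calibration step is sound and is essentially the paper's argument. You lower-bound $\iota_i$ by comparing neutral play with a code, then use approximate indifference between $I_i$ and $I_{\mathrm{out}}$, weighted by their probabilities. Using the whole code gain $\approx\lambda D$ instead of the balancing field alone is a harmless variation.

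\textbf{The gap is in the payoff transfer.} In Step 1 you take the encoding payoff of $\mathrm{code}_R(a)$ to be $\sum_{f\in\mathcal R_a}w_f^R-x_C\sum_b\tau_b w^R_{ab}/2$. This ignores the all-zero shares $w_f^R(1-\rho_f^C)\beta_f$ that the row primary still collects on every row-valued field $f\in\mathcal F_R\setminus\mathcal R_a$ where it bids zero. Here $\beta_f=\frac15\Pr[S_f=0]\Pr[T_f=0]$, and $\rho_f^C$ is the column primary's visit probability. As a result, Step 3 leaves you with the residual $\sum_{f\in\mathcal R_a}w_f^R\beta_f^0-x_C\sum_b\tau_b w^R_{ab}\beta^0_{ab}$. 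Your claim that this stays within $20\lambda h(q+1)$ is false:
\begin{itemize}
\item Since $w_f^R\ge\lambda c_f/2$ and $\beta_f^0\ge 5/(64c_f^2)$, each term is at least $5\lambda/(128c_f)$.
\item Since $c_f<P/100$ while $h=P^{-20}$, a single term exceeds the entire error budget by a factor of order $P^{19}/q$.
\item It cannot be moved into $C_R$ either. It depends on $a$ through the completion costs of the fields in $\mathcal R_a$; for example, row $0$ contains $f_{00}$ with $c_{00}=2L+1$. Its variation across rows, of order $\lambda/L$, dwarfs the source signal $x_C\eta\approx hH\lambda$.
\end{itemize}

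\textbf{The missing idea is the baseline-plus-increment decomposition.} Write the code's encoding payoff as the $a$-independent zero-bid baseline $\sum_{f\in\mathcal F_R}w_f^R(1-\rho_f^C)\beta_f$. Add, on each selected field $f\in\mathcal R_a$, the increment $w_f^R(1-\beta_f)-\rho_f^C w_f^R(\tfrac12-\beta_f)$. The factors $(1-\beta_f^0)$ in \eqref{wta5:eq:balvalue} and $(\tfrac12-\beta^0_{ab})$ in \eqref{wta5:eq:pairvalue} are chosen exactly so that these increments, with $\beta_f$ replaced by $\beta_f^0$, sum to $\lambda D+h\lambda u_a-x_C\sum_b\tau_b(\lambda c_{ab}/2-\eta A_{ab})$. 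So the all-zero shares cancel; they are not small.

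The only remaining error is $\sum_{f\in\mathcal R_a}w_f^R|\beta_f-\beta_f^0|$. Bringing this down to $10\lambda h+\varepsilon_{\rm raw}$ per field requires the quantitative estimate $|\beta_f-\beta_f^0|\le 5h/c_f+6000c_f\varepsilon_{\rm raw}$. The paper obtains it as follows:
\begin{itemize}
\item It applies Proposition~\ref{wta5:clm:local} to the actual contest with effective prize $c_f-1+\delta_f^{\rm eff}$, where $\tfrac12<\delta_f^{\rm eff}\le\tfrac58$ and $\tfrac58-\delta_f^{\rm eff}<4hc_f$.
\item It uses the slope bound $2/c_f$ for $\delta\mapsto\delta/(c_f-1+\delta)$.
\end{itemize}
This step is essential. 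Treating primary visits as extra regret $8hc_f$ in the $\delta=\tfrac58$ game, as in the canonicalization lemma, gives a zero-bid error of order $c_f^3h$ rather than the $O(h)$ needed here. Your Step 1 mentions the effective prize, but Step 3 never derives this bound.
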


\begin{proof}
\wtaFiveLead{Calibration of activation.}
We first show why full inspection must occur with nonnegligible probability.  Fix a primary $i$ and any one
of its complete codes.  The balancing field of that code has completion bid larger than $D/2$: this follows
from the proof of Lemma~\ref{wta5:lem:code}, where every unscaled balancer exceeds $\widehat D/2$, together with
\eqref{wta5:eq:completion}.  By \eqref{wta5:eq:valuebounds}, its battlefield value is therefore larger than
$\lambda D/4$.  When the primary is neutral it bids zero on this field and receives at most the all-zero
share $1/5$ of that value, whereas the complete code bids the completion amount and, because stabilizer bids
are at most $c_f-1$, wins the field outright.  Thus the balancing field alone gives the code an advantage
larger than
\[
\frac45\cdot\frac{\lambda D}{4}
=\frac{\lambda D}{5}
>\frac{\lambda D}{8}.
\]
The code cannot lose payoff on its other encoding fields by changing its bid from zero to the completion bid.
On the participation field, the neutral allocation and the code receive the same payoff unless $J$ uses the
corresponding full-inspection action $I_i$; in that event neutral play ties the inspector while the code loses,
so the code's expected participation loss relative to neutral play is exactly $\iota_i/2$.  Hence the complete
code improves on the neutral allocation by at least
$\frac{\lambda D}{8}-\frac{\iota_i}{2}$.

By Lemma~\ref{wta5:lem:canonical}, $x_i\le2h<1/2$, so the neutral allocation has probability
$\mu_i(N_i)=1-x_i>1/2$.  If a pure action used with probability at least $1/2$ were more than
$2\eps_{\rm raw}$ below a best response, its contribution alone would exceed the player's raw regret
$\eps_{\rm raw}$.  Therefore no deviation from $N_i$ can improve payoff by more than
$2\eps_{\rm raw}$.  Applying this to the complete-code deviation gives
\[
\frac{\lambda D}{8}-\frac{\iota_i}{2}
\le2\eps_{\rm raw},
\]
and hence
\[
\iota_i\ge\frac{\lambda D}{4}-4\eps_{\rm raw}.
\]
Since $\eps_{\rm raw}=P^{-800}$, $\lambda=P^{-44}$, and $D\ge1$, we have
$4\eps_{\rm raw}<\lambda D/8$.  Consequently
\begin{equation}\label{wta5:eq:qlower}
\iota_i\ge\frac{\lambda D}{8}.
\end{equation}
Lemma~\ref{wta5:lem:canonical} already gives $\iota_R+\iota_C\le1/8$, and after canonicalization
$J$ uses only $I_R,I_C,I_{\wtaFiveOut}$.  Hence $I_{\wtaFiveOut}$ has probability at least $7/8$.  By the positive-participation
repair in Lemma~\ref{wta5:lem:canonical}, every primary now bids positively on its participation field in every
nonneutral realization; neutral play also bids positively there.  Thus $J$ receives no participation-field
all-zero share under $I_{\wtaFiveOut}$.  At the canonical profile, after subtracting the common all-zero private-buffer
baseline, the relevant inspector payoffs are
\[
U_J(I_{\wtaFiveOut})=\frac{1+h}{2},
\qquad
U_J(I_i)=\frac{1+x_i}{2}.
\]
Suppose first that $x_i\ge h$.  Replacing every realization of $I_{\wtaFiveOut}$ by $I_i$ is a feasible unilateral
deviation for the inspector, and on each replaced realization it gains $(x_i-h)/2$.  Since $I_{\wtaFiveOut}$ has
probability at least $7/8$ and the inspector's raw regret is at most $\eps_{\rm raw}$,
\[
\frac78\cdot\frac{x_i-h}{2}\le\eps_{\rm raw}.
\]
Thus $x_i-h\le16\eps_{\rm raw}/7<3\eps_{\rm raw}$.

Suppose instead that $x_i<h$.  Replacing every realization of $I_i$ by $I_{\wtaFiveOut}$ gains
$(h-x_i)/2$ on each replaced realization.  By \eqref{wta5:eq:qlower}, $I_i$ has probability at least
$\lambda D/8$, so the regret bound gives
\[
\frac{\lambda D}{8}\cdot\frac{h-x_i}{2}
\le\iota_i\frac{h-x_i}{2}
\le\eps_{\rm raw}.
\]
Therefore $h-x_i\le16\eps_{\rm raw}/(\lambda D)$.  Combining the two cases yields
$|x_i-h|\le16\eps_{\rm raw}/(\lambda D)$.  Since
$16\eps_{\rm raw}/(\lambda D)\le16P^{-756}<h/2$, we also have $x_i\ge h/2$.  This proves
\eqref{wta5:eq:calibration}.

\wtaFiveLead{Decoded source strategies.}
The definitions of $\sigma$ and $\tau$ in the lemma indeed give probability distributions on $[q]_0$.
By Lemma~\ref{wta5:lem:canonical}, all nonneutral mass of $R$ is carried by the allocations
$\wtaFiveCodeR{a}$, $a\in[q]_0$, and all nonneutral mass of $C$ is carried by the allocations
$\wtaFiveCodeC{b}$, $b\in[q]_0$.  Moreover \eqref{wta5:eq:calibration} gives $x_R,x_C\ge h/2>0$.  Hence
\[
\begin{aligned}
\sum_{a\in[q]_0}\sigma_a
&=\frac{\sum_{a\in[q]_0}\mu_R(\wtaFiveCodeR{a})}{x_R}
=\frac{1-\mu_R(N_R)}{x_R}=1,\\
\sum_{b\in[q]_0}\tau_b
&=\frac{\sum_{b\in[q]_0}\mu_C(\wtaFiveCodeC{b})}{x_C}
=\frac{1-\mu_C(N_C)}{x_C}=1.
\end{aligned}
\]

\wtaFiveLead{Localization of the source payoff to the pair field $f_{ab}$.}
Fix $a\in[q]_0$ and consider the canonical row allocation $\wtaFiveCodeR{a}$.  Let $\rho_f^R,\rho_f^C$ denote the unconditional probabilities that the two primaries
visit encoding field $f$.  Whenever a primary visits $f$, it bids $c_f$ and therefore beats both
stabilizers.  The stabilizers' effective local prize is
\[
(c_f-3/8)(1-\rho_f^R)(1-\rho_f^C)=c_f-1+\delta_f^{\rm eff}.
\]
Expanding the definition of the effective prize gives
\[
\delta_f^{\rm eff}
=\frac58-(c_f-3/8)
\bigl(\rho_f^R+\rho_f^C-\rho_f^R\rho_f^C\bigr).
\]
The subtracted term is nonnegative, so immediately $\delta_f^{\rm eff}\le5/8$.  For the lower bound,
$\rho_f^R+\rho_f^C\le x_R+x_C\le4h$ by Lemma~\ref{wta5:lem:canonical}, and therefore
\[
0\le\frac58-\delta_f^{\rm eff}
\le(c_f-3/8)(\rho_f^R+\rho_f^C)
<4hc_f.
\]
Since $h=P^{-20}$ and $c_f<P/100$, we have
\[
4hc_f<\frac{P^{-19}}{25}<\frac18.
\]
Thus the effective prize remains in the interval required by Proposition~\ref{wta5:clm:local}:
\begin{equation}\label{wta5:eq:deltaeff}
\frac12<\delta_f^{\rm eff}\le\frac58.
\end{equation}

We now verify that Proposition~\ref{wta5:clm:local} applies to the actual stabilizer contest at $f$.  After
canonicalization, a primary that visits $f$ bids $c_f$, while every stabilizer bid is at most $c_f-1$.
Thus a stabilizer can receive battlefield payoff on $f$ only when neither primary visits.  Since the players
randomize independently, multiplying the original battlefield value $c_f-3/8$ by the probability of this
event gives exactly the effective prize $c_f-1+\delta_f^{\rm eff}$ above.  Against the other stabilizer's
marginal, a positive bid $j$ therefore has payoff
\[
(c_f-1+\delta_f^{\rm eff})
\left(\Pr[\text{other bid}<j]+\frac12\Pr[\text{other bid}=j]\right)-j,
\]
while bid zero has payoff
\[
\frac{c_f-1+\delta_f^{\rm eff}}5\Pr[\text{other bid}=0].
\]
These are precisely the payoffs in Proposition~\ref{wta5:clm:local} with $c=c_f$ and
$\delta=\delta_f^{\rm eff}$.  Moreover, after the stabilizer-bid repair in Lemma~\ref{wta5:lem:canonical}, a
stabilizer can change its bid on one encoding field and compensate the troop difference through its private
buffers.  Hence its local regret on this one-field game is at most the global raw regret
$\eps_{\rm raw}$.  Finally, $c_f<P/100$ gives
\[
10^8c_f^4\eps_{\rm raw}
<10^8\left(\frac{P}{100}\right)^4P^{-800}
=P^{-796}<1,
\]
so all hypotheses of Proposition~\ref{wta5:clm:local} hold.  Using \eqref{wta5:eq:deltaeff}, Proposition~\ref{wta5:clm:local}
therefore gives, for each stabilizer,
\[
\left|\Pr[\text{bid }0]-\frac{\delta_f^{\rm eff}}{c_f-1+\delta_f^{\rm eff}}\right|
\le10^4c_f^2\eps_{\rm raw}.
\]

We next compare this zero-bid probability with the reference value used in the construction.  Direct
differentiation gives
\[
\frac{d}{d\delta}\left(\frac{\delta}{c_f-1+\delta}\right)
=\frac{c_f-1}{(c_f-1+\delta)^2}
\le\frac1{c_f-1}
\le\frac2{c_f}
\qquad\left(\frac12\le\delta\le\frac58\right).
\]
The mean-value theorem and the bound
$0\le5/8-\delta_f^{\rm eff}<4hc_f$ therefore imply
\[
\left|
\frac{\delta_f^{\rm eff}}{c_f-1+\delta_f^{\rm eff}}
-\frac{5}{8(c_f-3/8)}
\right|
\le\frac2{c_f}\left|\delta_f^{\rm eff}-\frac58\right|
<8h.
\]
Combining this estimate with Proposition~\ref{wta5:clm:local} by the triangle inequality gives, for each stabilizer,
\[
\left|\Pr[\text{bid }0]-\frac{5}{8(c_f-3/8)}\right|
\le8h+10^4c_f^2\eps_{\rm raw}.
\]

Consequently, if
\[
\beta_f=\frac15\Pr[S_f=0]\Pr[T_f=0]
\]
is the actual all-zero payoff fraction of a nonvisiting primary, then \eqref{wta5:eq:b0} and
$5/[8(c_f-3/8)]\le1/c_f$ give a quantitative comparison with $\beta_f^0$.  The preceding bound and the parameter choices also imply
\[
c_f\left(8h+10^4c_f^2\eps_{\rm raw}\right)
<\frac{2}{25}P^{-19}+\frac{1}{100}P^{-797}<1,
\]
and hence $8h+10^4c_f^2\eps_{\rm raw}<1/c_f$.  Each actual stabilizer zero-bid probability is
therefore at most $2/c_f$.  Hence
\[
\begin{aligned}
|\beta_f-\beta_f^0|
&\le\frac15\left|\Pr[S_f=0]-\frac{5}{8(c_f-3/8)}\right|\Pr[T_f=0]\\
&\quad+\frac{1}{5}\frac{5}{8(c_f-3/8)}
\left|\Pr[T_f=0]-\frac{5}{8(c_f-3/8)}\right|\\
&\le\frac{3}{5c_f}
\left(8h+10^4c_f^2\eps_{\rm raw}\right)
\le\frac{5h}{c_f}+6000c_f\eps_{\rm raw}.
\end{aligned}
\]
Thus
\begin{equation}\label{wta5:eq:bclose}
|\beta_f-\beta_f^0|\le\frac{5h}{c_f}+6000c_f\eps_{\rm raw}.
\end{equation}

We now compute the payoff of the fixed allocation $\wtaFiveCodeR{a}$.  The payoff that $R$ would obtain by
bidding zero on every row-valued encoding field is independent of $a$ and will be absorbed into the
constant $C_R$.  If $f\in\wtaFiveR_a$ is selected by $\wtaFiveCodeR{a}$, then relative to this zero-bid baseline its
incremental contribution is
\[
w_f^R\left[
1-\frac{\rho_f^C}{2}-(1-\rho_f^C)\beta_f
\right].
\]
Replacing $\beta_f$ by $\beta_f^0$ changes this contribution by at most
$w_f^R|\beta_f-\beta_f^0|$.  By \eqref{wta5:eq:valuebounds} and \eqref{wta5:eq:bclose},
\[
w_f^R|\beta_f-\beta_f^0|
\le2\lambda c_f\left(\frac{5h}{c_f}+6000c_f\eps_{\rm raw}\right)
\le10\lambda h+\eps_{\rm raw},
\]
where the last inequality uses
$12000\lambda c_f^2\eps_{\rm raw}<1.2P^{-42}\eps_{\rm raw}<\eps_{\rm raw}$.
Since $q+1\le m<M<P/100$, the set $\wtaFiveR_a$ contains fewer than $P/100$ fields, so replacing
$\beta_f$ by $\beta_f^0$ on all selected fields creates total error at most
\[
10\lambda h(q+1)+\frac{P\eps_{\rm raw}}{100}.
\]
With $\beta_f^0$ in place of $\beta_f$, the incremental contribution of a selected field becomes
\[
w_f^R(1-\beta_f^0)
-\rho_f^C w_f^R\left(\frac12-\beta_f^0\right).
\]
The column primary never visits the row balancing field, whereas on the pair field $f_{ab}$ its visit
probability is
\[
\rho_{ab}^C=\mu_C(\wtaFiveCodeC{b})=x_C\tau_b.
\]
At the repaired profile, $\wtaFiveCodeR{a}$ receives participation payoff $1-\iota_R$ and, because neither
primary uses $\mathcal B_P$ after canonicalization, the all-zero payoff from that bank is $M\kappa/4$.
Moreover, if $R$ bid zero on every field of $\mathcal F_R$, its encoding-field payoff would be
\[
\sum_{f\in\mathcal F_R}w_f^R(1-\rho_f^C)\beta_f,
\]
which is independent of $a$.  Adding the incremental contributions of the selected fields, replacing
$\beta_f$ by $\beta_f^0$ there, and then using \eqref{wta5:eq:balvalue} and \eqref{wta5:eq:pairvalue} therefore gives
\begin{equation}\label{wta5:eq:pretransfer}
\begin{aligned}
\Bigg|U_R(\wtaFiveCodeR{a})
&-\Bigg[1-\iota_R+\frac{M\kappa}{4}
+\sum_{f\in\mathcal F_R}w_f^R(1-\rho_f^C)\beta_f\\
&\qquad\quad+\lambda D+h\lambda u_a
-x_C\sum_b\tau_b\left(\frac{\lambda c_{ab}}2-\eta A_{ab}\right)\Bigg]\Bigg|\\
&\le10\lambda h(q+1)+\frac{P\eps_{\rm raw}}{100}.
\end{aligned}
\end{equation}

It remains to separate the source-dependent term from the deterministic code terms.  By
\eqref{wta5:eq:additive},
\[
\sum_b\tau_b\frac{\lambda c_{ab}}2
=\lambda\theta+\lambda u_a+\lambda\sum_b\tau_bv_b.
\]
Substituting this identity into \eqref{wta5:eq:pretransfer}, define the quantity independent of $a$
\[
C_R=
1-\iota_R+\frac{M\kappa}{4}
+\sum_{f\in\mathcal F_R}w_f^R(1-\rho_f^C)\beta_f
+\lambda D-x_C\lambda\theta-x_C\lambda\sum_b\tau_bv_b.
\]
Equation~\eqref{wta5:eq:pretransfer} then becomes
\[
\left|U_R(\wtaFiveCodeR{a})
-\left[C_R+(h-x_C)\lambda u_a+x_C\eta(A\tau)_a\right]\right|
\le10\lambda h(q+1)+\frac{P\eps_{\rm raw}}{100}.
\]
The remaining structural term satisfies
\[
|(h-x_C)\lambda u_a|
\le\frac{16\eps_{\rm raw}}{\lambda D}\lambda\frac{D}{2q}
\le8\eps_{\rm raw}
\]
by \eqref{wta5:eq:calibration} and $u_a<D/(2q)$.  Therefore, defining $\operatorname{err}_a^R$ through the
payoff identity \eqref{wta5:eq:rowsim}, we obtain
\[
|\operatorname{err}_a^R|\le20\lambda h(q+1)+4P\eps_{\rm raw},
\]
as claimed.  The column calculation is symmetric.
\end{proof}

\subsubsection{Decoding the source equilibrium}\label{subsec:wta5-decoding}

For decoded source strategies $\sigma,\tau$, define the two source-game regrets by
\[
\wtaFiveReg_R(\sigma,\tau)=\max_{a\in[q]_0}(A\tau)_a-\sigma^\top A\tau,
\qquad
\wtaFiveReg_C(\sigma,\tau)=\max_{b\in[q]_0}(\sigma^\top B)_b-\sigma^\top B\tau.
\]
Thus $\wtaFiveReg_R$ and $\wtaFiveReg_C$ are the ordinary additive Nash regrets in the original bimatrix game.

Let $a^*$ be a best response to $\tau$ in the source game.  Player $R$ can leave its neutral mass
unchanged and, whenever it would play an active allocation $\wtaFiveCodeR{a}$, replace that allocation by
$\wtaFiveCodeR{a^*}$.  Conditional on active row play, the current distribution over indices is $\sigma$.
Hence the expected gain of this deviation on the active realizations is
\[
x_R\left(
U_R(\wtaFiveCodeR{a^*})-\sum_a\sigma_aU_R(\wtaFiveCodeR{a})
\right).
\]
Using the payoff identity \eqref{wta5:eq:rowsim}, its source-dependent part is exactly
\[
x_Rx_C\eta\left((A\tau)_{a^*}-\sum_a\sigma_a(A\tau)_a\right)
=x_Rx_C\eta\,\wtaFiveReg_R(\sigma,\tau).
\]
By \eqref{wta5:eq:error}, the difference between the error term of $a^*$ and the $\sigma$-average of the
current error terms has absolute value at most
\[
2\bigl(20\lambda h(q+1)+4P\eps_{\rm raw}\bigr).
\]
Therefore the deviation improves the row player's raw expected payoff by at least
\[
x_Rx_C\eta\,\wtaFiveReg_R(\sigma,\tau)
-2x_R\bigl(20\lambda h(q+1)+4P\eps_{\rm raw}\bigr).
\]
The repaired profile has raw regret at most $\eps_{\rm raw}$, so this gain cannot exceed
$\eps_{\rm raw}$.  Rearranging gives
\[
x_Rx_C\eta\,\wtaFiveReg_R(\sigma,\tau)
\le
2x_R\bigl(20\lambda h(q+1)+4P\eps_{\rm raw}\bigr)+\eps_{\rm raw}.
\]
Now $x_R\le2h$, $x_R,x_C\ge h/2$, and $\eta=H\lambda$.  Dividing by
$x_Rx_C\eta$ gives
\[
\wtaFiveReg_R(\sigma,\tau)
\le
\frac{80(q+1)}{H}
+\frac{16P\eps_{\rm raw}}{h\eta}
+\frac{4\eps_{\rm raw}}{h^2\eta}.
\]
Since $16Ph<1$, the last two terms are at most
$8\eps_{\rm raw}/(h^2\eta)$, and therefore
\[
\wtaFiveReg_R(\sigma,\tau)
\le\frac{80(q+1)}{H}+\frac{8\eps_{\rm raw}}{h^2\eta}
=\frac{5}{16}q^{-1}+\frac{8P^{-716}}{H}.
\]
Since $H=256(q+1)q$, $q\ge2$, and $P\ge100$,
\[
\frac{8P^{-716}}{H}\le\frac1{96}q^{-1}.
\]
Hence
\[
\wtaFiveReg_R(\sigma,\tau)\le\frac{31}{96}q^{-1}<\frac12q^{-1}.
\]
The same calculation gives $\wtaFiveReg_C(\sigma,\tau)<q^{-1}/2$.  Thus $(\sigma,\tau)$ is a
$q^{-1}$-Nash equilibrium of the source game.  This completes the reduction and proves
Theorem~\ref{thm:wta-five-player-hardness}.

%% file: conclusion.tex
\section{Conclusions and discussion}
We studied several computational boundaries for Nash equilibrium in discrete Colonel Blotto. A general flow-based reduction to concave games first establishes \PPAD membership for multiplayer Colonel Blotto with arbitrary local reward functions under standard polynomial representation assumptions. For two-player general-sum games, we established \PPAD-completeness for monotone piecewise-constant battlefield payoffs, including instances with exactly two battlefields and, with more battlefields, single-threshold payoffs that are $2$-piecewise constant in either player's allocation. Complementing these hardness results, we identified intrinsic social interaction rank as a tractability parameter and showed that, for every fixed rank, an $\eps$-Nash equilibrium can be computed in time polynomial in the number of battlefields, the numerical budgets, and $1/\eps$. Together, these results delineate a frontier within two-player Colonel Blotto between structured low-rank interactions and the higher-rank general-sum regime.

We then turned to literal multiplayer winner-takes-all Colonel Blotto and identified a qualitatively different frontier driven by tie resolution: full reward to every co-winner admits an exact polynomial-time pure-equilibrium algorithm, whereas equal splitting among co-winners makes inverse-polynomial approximate Nash computation \PPAD-complete even with only five players. This settles the constant-player hardness question left open in the concurrent work of Bichler and Ghosh~\cite{bichlerghosh2026}.

An immediate open question is the complexity of \emph{discrete two-player general-sum WTA Colonel Blotto}. Non-zero-sum and heterogeneous-value WTA variants have a substantial equilibrium-theory literature: equilibrium characterizations are known for continuous asymmetric-value models and related relaxations \cite{kovenock2021generalizations,thomas2018heterogeneous,robersonkvasov2012,perchet2022algorithmic}. What remains unresolved by these results is the computational complexity of finding an inverse-polynomial approximate \emph{mixed} Nash equilibrium in the general discrete model with player-specific battlefield values. Is such an equilibrium computable in polynomial time, or does the classical WTA local structure already suffice for \PPAD-hardness once constant-sum symmetry is lost? Resolving this question would further sharpen the boundary between the tractable classical two-player WTA model and the general-sum hardness phenomena established in this paper.

%% file: appendix.tex
\appendix

\section{Missing Proofs of Section \ref{sec:low-rank}}

\subsection{Proof of Proposition \ref{prop:s-path}}\label{app: 5.1}

\begin{proof}
A source--sink path uses exactly one edge between consecutive layers. If its labels are $z_1,\ldots,z_k$, then the second coordinate after layer $b$ is $\sum_{j\le b}z_j$; reaching $(k,B)$ is equivalent to $\sum_bz_b=B$. Conversely, an allocation determines all cumulative sums and hence a unique path.

A source--sink path \(P\) is identified with its edge-incidence vector
\(\mathbf{1}_P\in\{0,1\}^{E}\), where
\[
(\mathbf{1}_P)_e=
\begin{cases}
1, & e\in P,\\
0, & e\notin P.
\end{cases}
\]
By the definition of the reduced incidence matrix \(D\), every
source--sink path satisfies
\[
D\mathbf{1}_P=d.
\]
Indeed, exactly one edge of \(P\) leaves the source, so the source
coordinate of \(D\mathbf{1}_P\) is \(1\); at every retained
intermediate node of \(P\), one path edge enters and one path edge
leaves, so the corresponding coordinate is \(0\). The sink equation
is absent because the sink row has been deleted.

Now let \(\pi\) be a probability distribution over source--sink
paths. The induced edge flow is
\[
f=\sum_{P}\pi(P)\mathbf{1}_P.
\]
Clearly \(f\ge 0\), and by linearity,
\[
Df
=
\sum_P \pi(P)D\mathbf{1}_P
=
\sum_P \pi(P)d
=
d.
\]
Hence \(f\in F(k,B)\).

Conversely, let \(f\in F(k,B)\). We show that \(f\) can be written as
a convex combination of source--sink path-incidence vectors.

We use the following elementary observation. Suppose that
\(r\in\mathbb{R}_{\ge 0}^{E}\) satisfies
\[
Dr=\alpha d
\]
for some \(\alpha>0\). 
Then the support of \(r\) contains a
source--sink path. 
Let $\delta^+(v)$ denote the set of edges leaving the node \(v\) and $\delta^-(v)$ denote the set of edges pointing to the node \(v\).
Since the source $s$ has no incoming edges and its
flow-balance equation is
\[
\sum_{e\in\delta^+(s)} r_e=\alpha>0,
\]
there is an outgoing edge from the source with positive \(r\)-flow.
Suppose we have reached an intermediate node \(v\) along a positive-flow
edge. Then \(v\) has positive inflow. Since its right-hand side in
\(Dr=\alpha d\) is zero, flow conservation gives
\[
\sum_{e\in\delta^+(v)}r_e
=
\sum_{e\in\delta^-(v)}r_e
>0.
\]
Thus some outgoing edge of \(v\) also has positive flow. Repeating this
argument produces a directed path consisting entirely of positive-flow
edges. Because \(G(k,B)\) is a finite layered DAG and every edge moves
from layer \(b-1\) to layer \(b\), this process cannot revisit a node
and must eventually reach the sink. Hence the support of \(r\)
contains a source--sink path.

We now construct the decomposition iteratively. Set
\[
r^{(0)}=f,
\qquad
\alpha_0=1.
\]
As long as \(r^{(q)}\neq 0\), choose a source--sink path
\(P_q\) contained in the support of \(r^{(q)}\), and define
\[
\gamma_q
=
\min_{e\in P_q} r^{(q)}_e
>0.
\]
Set
\[
r^{(q+1)}
=
r^{(q)}-\gamma_q\mathbf{1}_{P_q},
\qquad
\alpha_{q+1}
=
\alpha_q-\gamma_q.
\]
By the definition of \(\gamma_q\), we have
\(r^{(q+1)}\ge 0\). Moreover,
\[
Dr^{(q+1)}
=
Dr^{(q)}
-
\gamma_qD\mathbf{1}_{P_q}
=
\alpha_qd-\gamma_qd
=
\alpha_{q+1}d.
\]
Thus the invariant
\[
r^{(q)}\ge 0,
\qquad
Dr^{(q)}=\alpha_q d
\]
is preserved at every iteration.

Furthermore, at least one edge \(e\in P_q\) satisfies
\(r^{(q)}_e=\gamma_q\), and hence
\(r^{(q+1)}_e=0\). Since all residual flows are nonnegative, an edge
that has become zero can never become positive again. Therefore the
number of positive-flow edges decreases by at least one at each
iteration. Consequently, after at most \(|E|\) iterations the procedure
terminates with
\[
r^{(m)}=0
\]
for some \(m\le |E|\).

Unrolling the recursion gives
\[
f
=
\sum_{q=0}^{m-1}\gamma_q\mathbf{1}_{P_q}.
\]
Applying \(D\) to both sides yields
\[
d
=
Df
=
\sum_{q=0}^{m-1}\gamma_qD\mathbf{1}_{P_q}
=
\left(\sum_{q=0}^{m-1}\gamma_q\right)d.
\]
Since \(d\neq 0\), it follows that
\[
\sum_{q=0}^{m-1}\gamma_q=1.
\]
Hence the coefficients \(\gamma_q\) form a probability distribution
over at most \(|E|\) source--sink paths, and \(f\) is the corresponding
convex combination of their incidence vectors.

Finally, if \(f\) is rational, then every residual \(r^{(q)}\) is
rational. Therefore
\[
\gamma_q=\min_{e\in P_q}r^{(q)}_e
\]
is rational at every iteration, so the resulting path decomposition
has rational weights.
\end{proof}

\subsection{Proof of Proposition \ref{prop:s-br}}\label{app: 5.2}

\begin{proof}
Induction on $b$ shows that $D_1(b,h)$ is the maximum payoff obtained by distributing exactly $h$ units among the first $b$ battlefields. The recurrence enumerates the allocation on battlefield $b$ and combines it with an optimal prefix. Summing the $h+1$ choices over all states gives $\mathcal{O}(kB_1^2)$ operations after $\mathcal{O}(kB_1B_2)$ operations to form the expected battlefield payoffs. The argument for player $2$ is symmetric.
\end{proof}

\subsection{Proof of Lemma \ref{lem: feasible_dir}}\label{app: 5.3}

\begin{proof}
We first prove that
\[
\operatorname{lin}(F_p-F_p)=\ker D_p.
\]

For the inclusion from left to right, let \(f,f'\in F_p\). Since both
flows satisfy the same flow-conservation equations,
\[
D_pf=d_p,
\qquad
D_pf'=d_p,
\]
and therefore
\[
D_p(f-f')=0.
\]
Hence every element of \(F_p-F_p\) belongs to \(\ker D_p\). Since
\(\ker D_p\) is a linear subspace, it follows that
\[
\operatorname{lin}(F_p-F_p)\subseteq \ker D_p.
\]

For the reverse inclusion, we first construct a feasible flow that is
strictly positive on every edge. Every edge \(e\in E_p\) lies on at
least one source--sink path; choose one such path \(P_e\), and let
\(\mathbf{1}_{P_e}\in\{0,1\}^{E_p}\) denote its edge-incidence vector.
Define
\[
\bar f
:=
\frac{1}{|E_p|}
\sum_{e\in E_p}\mathbf{1}_{P_e}.
\]
Each \(\mathbf{1}_{P_e}\) is a unit source--sink flow and therefore
satisfies
\[
D_p\mathbf{1}_{P_e}=d_p.
\]
Consequently,
\[
D_p\bar f
=
\frac{1}{|E_p|}
\sum_{e\in E_p}D_p\mathbf{1}_{P_e}
=
d_p,
\]
so \(\bar f\in F_p\). Moreover, since \(e\in P_e\),
\[
\bar f_e\ge \frac{1}{|E_p|}>0
\qquad
\text{for every }e\in E_p.
\]

Now let \(z\in\ker D_p\). Since \(\bar f\) is strictly positive
coordinatewise, there exists \(\varepsilon>0\) sufficiently small such
that
\[
\bar f+\varepsilon z\ge 0.
\]
Furthermore,
\[
D_p(\bar f+\varepsilon z)
=
D_p\bar f+\varepsilon D_pz
=
d_p,
\]
and hence
$\bar f+\varepsilon z\in F_p$. 
Therefore
\[
z
=
\frac{1}{\varepsilon}
\bigl((\bar f+\varepsilon z)-\bar f\bigr)
\in
\operatorname{lin}(F_p-F_p).
\]
This proves
\[
\operatorname{lin}(F_p-F_p)=\ker D_p.
\]

We next pass from flow directions to marginal directions. Recall that
\[
\mathcal M_p=T_p(F_p)
\qquad\text{and}\qquad
\mathcal L_p
=
\operatorname{lin}(\mathcal M_p-\mathcal M_p).
\]
Since \(T_p\) is linear,
\[
\mathcal M_p-\mathcal M_p
=
\{T_pf-T_pf':f,f'\in F_p\}
=
T_p(F_p-F_p).
\]
Hence
\[
\begin{aligned}
\mathcal L_p
&=
\operatorname{lin}\bigl(T_p(F_p-F_p)\bigr)\\
&=
T_p\bigl(\operatorname{lin}(F_p-F_p)\bigr)\\
&=
T_p(\ker D_p).
\end{aligned}
\]

Finally, a rational basis of \(\mathcal L_p\) can be computed in
polynomial time. Using rational Gaussian elimination, compute a matrix
\(Z_p\) whose columns form a basis of \(\ker D_p\). Then the columns of
$T_pZ_p$
span \(T_p(\ker D_p)=\mathcal L_p\). A second rational Gaussian
elimination extracts a maximal linearly independent subset of these
columns, yielding a full-column-rank rational matrix \(L_p\) whose
columns form a basis of \(\mathcal L_p\). All these operations are
polynomial time in the dimensions and binary encoding lengths of
\(D_p\) and \(T_p\).

\end{proof}